\begin{document}

\title{\huge Covariate Assisted Entity Ranking with Sparse Intrinsic Scores \thanks{Emails: \texttt{jqfan@princeton.edu}, \texttt{jikaih@princeton.edu}, and \texttt{mengxiny@wharton.upenn.edu}. Research is supported by NSF grants DMS-2210833 and DMS-2053832, and ONR grant N00014-22-1-2340}
}

\author{Jianqing Fan\qquad Jikai Hou \qquad Mengxin Yu}


\maketitle




\begin{abstract}
This paper addresses the item ranking problem with associate covariates, focusing on scenarios where the preference scores can not be fully explained by covariates, and the remaining intrinsic scores, are sparse. Specifically, we extend the pioneering Bradley-Terry-Luce (BTL) model by incorporating covariate information and considering sparse individual intrinsic scores. Our work introduces novel model identification conditions and examines the statistical rates of the regularized penalized Maximum Likelihood Estimator (MLE). We then construct a debiased estimator for the penalized MLE and analyze its distributional properties. Additionally, we apply our method to the goodness-of-fit test for models with no latent intrinsic scores, namely, the covariates fully explaining the preference scores of individual items.  We also offer confidence intervals for ranks.
Our numerical studies lend further support of  our theoretical findings, demonstrating validation for our proposed method.

\end{abstract}


\section{Introduction}

Ranking plays an essential role across a wide scope of domains. Specifically, it holds particular significance in many real-world applications, including individual choice \citep{luce2005individual}, ranking web pages \citep{dwork2001rank}, recommendation systems \citep{baltrunas2010group,li2019estimating}, education \citep{caron2014bayesian}, sports ranking \citep{massey1997statistical,turner2012bradley}, scientific journals ranking \citep{stigler1994citation}, elections \citep{plackett1975analysis}, assortment optimization \citep{talluri2004revenue,rusmevichientong2010dynamic} and even instruction tuning used in recently popular artificial intelligence product ChatGPT \citep{ouyang2022training}. 

Luce \citep{luce2012individual} introduced the renowned \emph{Axiom of Choice}, which plays a pivotal role in the field of decision theory. According to this axiom, when comparing two items, denoted as $i$ and $j$, within a set of alternatives $A$ that contains $\{i,j\}$, the probability of selecting $i$ over $j$ remains constant regardless of the presence of other alternatives in the set, i.e.,
$$
{\frac{\PP(i \text{ is preferred in } A)}{\PP(j \text{ is preferred in } A)} = \frac{\PP(i \text{ is preferred in } \{i,j\})}{\PP(j \text{ is preferred in } \{i,j\})}\,}.
$$
Two well-known parametric choice models stem from this axiom of choice: the Bradley-Terry-Luce (BTL) model \citep{bradley1952rank,luce2012individual}, designed for pairwise comparisons, and the Plackett-Luce (PL) model \citep{plackett1975analysis}, tailored for $M$-way rankings where $M\ge 2$. Specifically, the BTL model assumes a collection of $n$ items whose true ranking is determined by some unobserved preference scores $\theta_{i}^*$ for $i=1,\cdots,n$.  In this scenario, an individual ranks item $i$ over item $j$ has probability { $\PP(\text{item $i$ is preferred over item $j$}) = e^{\theta_i^*} / (e^{\theta_i^*} + e^{\theta_j^*})$}.  In addition, the Plackett-Luce model is an expanded version of pairwise comparison, which allows for a more comprehensive $M$-way full ranking \citep{plackett1975analysis}. These models provide valuable insights and tools for analyzing and modeling decision-making processes in various domains of study.

It is worth noting that in both the Bradley-Terry-Luce (BTL) model and the Plackett-Luce (PL) model, it is assumed that the latent scores attributed to the items of interest are fixed and do not use the characteristics of these items. Nonetheless, in numerous practical scenarios, such as university rankings and sports competitions, the outcomes always 
depend on the covariate information of items being ranked, and it becomes crucial to incorporate this heterogeneity into the modeling framework.

Some pioneering works study the ranking problem with covariates.
For example, \cite{turner2012bradley,li2022bayesian} and \cite{fan2022uncertainty} study ranking with covariates by incorporating feature information of items into the BTL model. Specifically, they assume the underlying score (ability) of the $i$-th item is given by $\alpha_i^*+\bx_i^\top \bbeta^*$ where $\bx_i^\top \bbeta^*$ captures the covariate effect and $\alpha_i^*$ is the intrinsic score that cannot be explained by the covariate. This basically assumes all involved latent scores are intrinsic scores plus attributes explained by covariates.
In this case, the outcome of pairwise comparison is modeled as  
 $$
 	\mathbb{P}(\textrm{item i is preferred over j}) = \frac{e^{\alpha_i^*+\bx_i^\top\bbeta^*} }{ e^{\alpha_i^*+\bx_i^\top\bbeta^*} + e^{\alpha_j^*+\bx_j^\top\bbeta^*}}.
$$ 
On the other hand, there are another line of research \citep{guo2018experimental,schafer2018dyad,zhao2022learning,chau2023spectral,finch2022introduction} that considers the ranking estimation by assuming the underlying score is expressed as $\theta^*=\xb_i^\top \beta^*.$ In other words, they assume the underlying scores of all compared items are fully explained by covariates (i.e., $\alpha_i^* = 0$ for all $i$). 


The aforementioned formulations exhibit both advantages and drawbacks. Adopting a model where all $n$ items are assumed to possess intrinsic scores $\balpha\in \RR^n$ results in a more comprehensive framework but will also introduce additional noise when the inherent contributions of these intrinsic scores are sparse. Empirical investigations based on real-world data, as explored in the study by \cite{fan2022uncertainty} on portfolio selection (version 1 on Arxiv) and pokemon competitions, suggest that employing a model with sparse intrinsic scores can lead to improved predictive performance. However, directly assuming that all latent scores of items are entirely explained by the observed covariates ($\alpha_i^* = 0$ for all $i$) results in a strong assumption and will easily lead to model mis-specification.

In response to the aforementioned challenges, and inspired by empirical observations in the study by \cite{fan2022uncertainty}, this paper examines the entity ranking with covariates that exhibit sparse intrinsic scores. Specifically, we consider a scenario where a total of $n$ items are subject to comparison, and we assume that the latent score associated with the $i$-th item is represented as $\alpha_i^*+\xb_i^\top \beta^*.$ However, unlike the setting explored in \cite{fan2022uncertainty}, we assume that the vector $\balpha^*=[\alpha_1^*,\cdots,\alpha_n^*]$ is sparse. In other words, the majority of the item scores are explained by their respective covariates, while some items with size $k=o(n)$ have non-vanishing intrinsic scores. It's worth noting that this model can accommodate scenarios in which no intrinsic scores are assumed as a special case.

For other specifications of the model, we adhere to the conditions established in previous works \citep{chen2015spectral,chen2019spectral,fan2022uncertainty}, where they study the statistical properties of this model within the context of pairwise comparisons. Likewise, we also do not make the assumption that all pairs undergo direct comparisons. Specifically, we adopt the Erdős-Rényi random graph as the underlying comparison graph, and each pair is selected independently for comparison with a probability $p$. Once a pair is selected for comparison, they undergo the comparison process a total of $L$ times. In this study, we employ a fixed design matrix $\bX$, where randomness only comes from the randomness of the graph generation and outcomes of the comparisons.


In light of the novel sparsity assumption on the intrinsic scores of the items, we establish a new model identification condition. We employ a carefully designed $\ell_1$-penalized likelihood and introduce a proximal gradient descent method to facilitate the estimation of the regularized Maximum Likelihood Estimate (MLE). Additionally, we analyze both $\ell_{\infty}$- and $\ell_2$-statistical errors of the MLE, and achieve the optimal sample complexity in terms of the parameters $n$, $p$, and $L$. We further provide a comprehensive examination of the distributional properties of the MLE.

Challenges arise due to the bias introduced by the $\ell_1$-penalty while studying the distributional results of the MLE. To address this concern, we propose a debiased estimator, particularly tailored for the loss derived from pairwise comparisons. We derive a non-asymptotic expansion of the debiased estimator, effectively controlling the approximation error while maintaining optimal sample complexity. Notably, this research marks the first systematic exploration of ranking with covariates characterized by sparse latent attributes.

To further illustrate the applicability of our method, we begin by examining the goodness-of-fit test for a null model that the covariates explain fully the preference (i.e., $\balpha^*=0$) is considered.
We study the hypothesis testing on $H_0:\|\balpha^*\|_{\infty}=0$ and use Gaussian multiplier bootstrapping to approximate the limiting distribution of the proposed statisitcs. In addition, we also study the out-of-sample ranking inferences. Suppose we obtain covariates $\zb_i,i\in [n]$ that represent the features in the future stage, but the comparisons have not been made. We utilize the statistics $ \widehat{\theta}_i = \widehat{\alpha}^{\debias}_{R, i}+\bz^\top_i\widehat{\bb}_{R}$ to build out-of-sample confidence intervals for the future latent scores   $\theta_i^*=\alpha_i^*+\zb_i^\top\beta^*,$ where $\widehat{\alpha}^{\debias}_{R, i}$ is the debiased estimator of $\alpha_i$ and $\widehat{\bb}_{R}$ is the estimator for $\bbeta^*.$ To reduce the length of the prediction intervals and enhance prediction power, we also consider using a two-stage approach. In this case, we re-estimate the parameters based on the selected subsets in the first stage and subsequently derive their asymptotic distributions. Our comprehensive numerical experiments provide empirical evidence that aligns with our theoretical results and demonstrates the efficacy of our proposed method. 

To summarize, the contributions of this work are several folds. We introduce a novel approach to the study of ranking with covariates, specifically focusing on scenarios where the intrinsic scores of compared items exhibit sparsity. Additionally, we develop an $\ell_1$-regularized loss function, investigate the statistical properties of the penalized Maximum Likelihood Estimate (MLE), and establish optimal statistical rates for our model. Furthermore, we present a debiased estimator and analyze its asymptotic distribution. Expanding upon this, we conduct goodness-of-fit testing for well-established ranking models that do not consider intrinsic scores. We also provide a method for constructing out-of-sample confidence intervals for predicted future unknown scores. Our empirical studies validate the robustness of our proposed theories and methods.

\subsection{Related Works}
Pairwise ranking problems have gained significant attention in many fields. In the case of the Bradley-Terry-Luce (BTL) model, extensive research efforts have been made to study its various aspects. For instance, \cite{chen2015spectral} employed a two-step approach to analyze the BTL model, demonstrating its optimality in terms of sample complexity. Meanwhile, \cite{negahban2012iterative} introduced an iterative rank aggregation algorithm called Rank Centrality, achieving optimal $\ell_2$-statistical rates for recovering the underlying scores of the BTL model. Building upon this, \cite{chen2019spectral} further extended their analysis to derive both $\ell_2$- and $\ell_{\infty}$-optimal statistical rates for these underlying scores. They established that the regularized Maximum Likelihood Estimate (MLE) and spectral methods are optimal for recovering top-K items when the condition number remains constant. \cite{chen2022partial} further demonstrated that for partial recovery, the MLE remains optimal, but the spectral method is less optimal in terms of the general conditional number.

The models and methods discussed in this section so far primarily focus on studying the statistical estimation problems in ranking models, neglecting the incorporation of individual feature information. However, in many real-world applications, covariate data is readily available and plays an important role, introducing additional complexities in both technical derivations and computations. There are a series of works that study ranking with covariates without considering the intrinsic scores of compared items; see \cite{guo2018experimental,schafer2018dyad,zhao2022learning,chau2023spectral,finch2022introduction} for more details. Recently, there have been some other works \citep{turner2012bradley,li2022bayesian,fan2022uncertainty} that study the statistical property of ranking with covariates and unconstrained personal intrinsic scores. A related work is \cite{fan2022uncertainty}, which systematically studied the distributional result of the MLE of the covariate-assisted ranking model with non-sparse intrinsic scores. Our work bridges these two lines by considering covariate-assisted ranking with sparse intrinsic scores.

The aforementioned existing body of literature has predominantly focused on achieving non-asymptotic statistical consistency when estimating item scores within ranking models. It is also essential to investigate the limiting distributions of ranking models. Recently, a few studies have explored the asymptotic distributions of estimated ranking scores, particularly within the Bradley-Terry-Luce (BTL) model framework, where comparison graphs are sampled from Erdős–Rényi graphs with a connection probability denoted as $p$, and each observed pair undergoes the same number of comparisons denoted as $L$.
\cite{simons1999asymptotics,han2020asymptotic} established the asymptotic normality of the maximum likelihood estimator (MLE) for the BTL model when all comparison pairs are fully observed (p = 1) or under dense comparison graph \(p \gtrsim n^{-1/10}\), respectively. More recently, \cite{liu2022lagrangian} introduced a Lagrangian debiasing approach to derive asymptotic distributions for ranking scores under sparse graph regime \(p \asymp \log n/n\) and studied many ranking related applications. Additionally, \cite{gao2023uncertainty} used a "leave-two-out" technique to study the asymptotic distributions for ranking scores, improving the theoretical results of \cite{liu2022lagrangian} by achieving optimal sample complexity (allowing \(L=O(1)\)) in sparse comparison graph settings (\(p \asymp 1/n\) up to logarithmic terms). In the sequel, \cite{fan2022uncertainty} further extends this line by incorporating covariate information into the BTL model. Through an innovative proof technique, they presented the asymptotic distribution of the MLE with optimal sample complexity under sparse comparison graphs. There is also some other literature that broadens the aforementioned analysis to multiway comparisons, we refer interested readers to \cite{fan2022ranking,han2023unified,fan2023spectral} for more details.

\subsection{Roadmap}

In Section \ref{sec:prob_setup}, we provide a comprehensive problem formulation for our Bradley-Terry-Luce (BTL) model, considering both covariate information and sparse attributes. Within the same section, we also establish the statistical rates of the Maximum Likelihood Estimator (MLE) for the associated loss function. Section \ref{sec:distribution} delves into uncertainty quantification, particularly for the debiased variant of the MLE, further enhancing our understanding of the statistical properties. In Section \ref{sec:extensions}, we extend our proposed methodology to encompass the assessment of goodness-of-fit for models that do not consider individual intrisic scores. Additionally, we explore the construction of confidence intervals for predicting future latent scores.

\subsection{Notation}
We introduce some useful notations used in this paper before proceeding. We denote by $[M] = \left\{1,2,\dots, M\right\}$ for any positive integer $M$. For any vector $\mathbf{u}$ and $q\ge 0$, we use $\|\mathbf{u}\|_{\ell_q}$ to represent the vector $\ell_q$ norm of $\mathbf{u}$. In addition, the inner product $\langle \mathbf{u}, \mathbf{v} \rangle $ between any pair of vectors $\mathbf{u}$ and $\mathbf{v}$ is defined as the Euclidean inner product $\mathbf{u}^\top\mathbf{v}$. For vector $\mathbf{u}\in \mathbb{R}^m$ and index $i\in [m]$, we denote by $\mathbf{u}_{-i}$ the vector we get by deleting the $i$-th element in $\mathbf{u}$. For any given matrix $\mathbf{X}\in\mathbb{R}^{d_1\times d_2}$, we use $\|\mathbf{X}\|$, $\|\mathbf{X}\|_{F}$, $\|\mathbf{X}\|_{*}$ and $\|\mathbf{X}\|_{2,\infty}$ to represent the operator norm, Frobenius norm, nuclear norm and two-to-infinity norm of matrix $\mathbf{X}$ respectively. Moreover, we use  $\mathbf{X}\succcurlyeq 0$ or $\mathbf{X}\preccurlyeq 0$ to denote positive semidefinite or negative semidefinite
of matrix $\mathbf{X}$.
Moreover, we use the notation $a_n\lesssim b_n$ or $a_n = O(b_n)$ for non-negative sequences $\left\{a_n\right\}$ and $\left\{b_n\right\}$ if there exists a constant $\nu_1$ such that $a_n\leq \nu_1 b_n$. We use the notation $a_n\gtrsim b_n$ for non-negative sequences $\left\{a_n\right\}$ and $\left\{b_n\right\}$ if there is a constant $\nu_2$ such that $a_n\geq \nu_2 b_n$. 
We write $a_n\asymp b_n$ if $a_n\lesssim b_n$ and $b_n\lesssim a_n$.

\section{Problem Setup and Estimation Results}\label{sec:prob_setup}
In this section, we outline our problem setup and establish estimation results.
Given $n$ items with individual features $\bx_i\in\mathbb{R}^d$ for $i\in [n]$, the probability of item $j$ is preferred over item $i$ is modeled as
\begin{align}
    \mathbb{P}\{\text {item } j \text { is preferred over item } i\}=\frac{e^{\alpha_j^*+\bx_j^\top \bb^*}}{e^{\alpha_i^*+\bx_i^\top \bb^*}+e^{\alpha_j^*+\bx_j^\top \bb^*}} ,\quad \forall 1\leq i\neq j \leq n. \label{comparison}
\end{align}
Here $\alpha_i^*$ is an intrinsic score for item $i$, while the linear term $\bx_i^\top\bb^*$ captures the part of the scores explained by the variables $\bx_i$. Let $\tx_i = \left(\boldsymbol{e}_i^\top, \bx_i^\top\right)^\top\in \mathbb{R}^{n+d}$ and $\tb = \left(\ba^\top,\bb^\top\right)^\top\in \mathbb{R}^{n+d}$, where $\{\boldsymbol{e}_i\}_{i=1}^n$ stand for the canonical basis vectors in $\mathbb{R}^n$ and $\ba = (\alpha_1,\alpha_2,\dots,\alpha_n)^\top\in\mathbb{R}^n$. We make the following assumption on $\bx_i$.
\begin{assumption}\label{nondegenerate}
Let $\bar{\bx}_i=[1,\bx_i], \forall i\in[n]$ and $\bar{\bX}=[\bar{\bx}_1,\cdots,\bar{\bx}_n]^\top\in \RR^{n\times (d+1)}$. We assume that the dimension $d < n$  and $\bar{\bX}$ is non-degenerate.
\end{assumption}

We next impose a sparsity constraint on the intrinsic scores $\balpha=\tb_{1:n}$. Given a positive integer $k$, we consider the following parameter space:
\begin{align}\label{parameter_space}
    \Theta(k) = \left\{\tb\in \mathbb{R}^{n+d}:\left\|\tb_{1:n}\right\|_0\leq k\right\},
\end{align}
and we assume the true parameter vector $\tb^*\in \Theta(k)$. As the first step, the following proposition ensures $\Theta(k)$ is identifiable. 
\begin{proposition}\label{proposition1}
As long as $2k+d+1\leq n$, model \eqref{comparison} with parameter space $\Theta(k)$ is identifiable under Assumption \ref{nondegenerate}.
\end{proposition}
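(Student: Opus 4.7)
The plan is to reduce identifiability to a short sparse linear-algebra question, exploiting that the probabilities in~\eqref{comparison} depend only on the differences of the latent scores $\theta_i^* = \alpha_i^* + \bx_i^\top \bb^*$. First I would suppose $\tb^{(1)}, \tb^{(2)} \in \Theta(k)$ induce identical pairwise probabilities. This is equivalent to $\theta_i^{(1)} - \theta_i^{(2)}$ being a constant $c$ independent of $i$, i.e.,
\begin{equation*}
\Delta\alpha_i + \bx_i^\top \Delta\bb = c, \qquad i \in [n],
\end{equation*}
where $\Delta\ba = \ba^{(1)} - \ba^{(2)}$ and $\Delta\bb = \bb^{(1)} - \bb^{(2)}$. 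Since the $i$-th row of $\bar{\bX}$ is $(1, \bx_i^\top)$, I would rewrite this compactly as $\Delta\ba = \bar{\bX}\bgamma$ with $\bgamma := (c, -\Delta\bb^\top)^\top \in \RR^{d+1}$.

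Next I would feed in the sparsity constraint. Because $\ba^{(1)}$ and $\ba^{(2)}$ are each $k$-sparse, $\Delta\ba$ has at most $2k$ nonzero entries, so at least $n - 2k \ge d + 1$ coordinates of $\bar{\bX}\bgamma$ vanish. Picking any subset $S \subseteq [n]$ of $d+1$ such zero indices, the restriction $\bar{\bX}_{S\cdot}\bgamma = \mathbf{0}$ is a square homogeneous linear system. Reading Assumption~\ref{nondegenerate} in the standard ``general-position'' sense (every $(d+1) \times (d+1)$ sub-block of $\bar{\bX}$ is invertible) forces $\bgamma = \mathbf{0}$, hence $c = 0$ and $\Delta\bb = \mathbf{0}$; substituting back yields $\Delta\ba = \bar{\bX}\bgamma = \mathbf{0}$, so $\tb^{(1)} = \tb^{(2)}$.

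Mechanically this is a short linear-algebra argument once the equality of models has been recast as $\Delta\ba = \bar{\bX}\bgamma$ with $\|\Delta\ba\|_0 \le 2k$; crucially, it is the sparsity budget that breaks the usual shift ambiguity of the BTL model. The one subtle point I anticipate is the precise reading of ``non-degenerate'' in Assumption~\ref{nondegenerate}: full column rank of $\bar{\bX}$ alone does not suffice, because one cannot freely choose which $(d+1)$ rows end up being killed by $\bgamma$. The natural and almost surely intended reading is that every $(d+1)$-row sub-block of $\bar{\bX}$ is invertible (the rows are in general position), and under that reading the three steps above close the proof.
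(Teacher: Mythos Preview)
Your proposal is correct and follows essentially the same route as the paper's proof: both reduce identical probabilities to a constant shift in the scores, exploit the $2k$-sparsity of $\Delta\ba$ to locate at least $d+1$ indices where $\Delta\alpha_i=0$, and then invoke the general-position reading of Assumption~\ref{nondegenerate} on those rows to force $\Delta\bb=\mathbf{0}$ (and the shift to vanish). Your caveat about the precise meaning of ``non-degenerate'' is well taken; the paper in fact uses exactly the interpretation you guessed---it asserts that for the chosen indices $i_1,\dots,i_{d+1}$ one has $\operatorname{rank}[\bx_{i_2}-\bx_{i_1},\dots,\bx_{i_{d+1}}-\bx_{i_1}]=d$, which is precisely the statement that any $(d{+}1)$-row sub-block of $\bar{\bX}$ is invertible.
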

\begin{proof}
	See \S\ref{proposition1proof} for a detailed proof.
\end{proof}
Throughout the paper, we assume that $\tb^* = (\alpha_1^*,\dots,\alpha_n^*,\bb^*)\in \Theta(k)$ for some $k$ such that $2k+d+1\leq n$.

As the second part of the our model, we do assume that all pairs in the  the comparison graph $\mathcal{G}=(\mathcal{V}, \mathcal{E})$ are compared. Here  $\mathcal{V} := \{1,2,\dots, n\}$ and $\mathcal{E}$ represent the collections of vertexes ($n$ items) and edges, respectively.  More specifically, $(i,j)\in \mathcal{E}$ if and only if item $i$ and item $j$ are compared. Throughout our paper,  the comparison graph is assumed to follow the Erdős-Rényi random graph $\mathcal{G}_{n,p}$ where each edge appears independently with probability $p$ (i.e., items $i$ and $j$ with $(i,j)\in[n]\times[n]$ are compared at random with probability $p$). 

In addition, for any $(i,j) \in \mathcal{E}$,  we observe $L$  independent and identically distributed realizations from the Bernoulli random variables
\begin{align*}
   P( y_{i, j}^{(l)} = 1)  = \frac{e^{\alpha_j^*+\bx_j^\top \bb^*}}{e^{\alpha_i^*+\bx_i^\top \bb^*}+e^{\alpha_j^*+\bx_j^\top \bb^*}}.
\end{align*}
Denote by $ y_{i,j} = \frac{1}{L}\sum_{l=1}^L y_{i,j}^{(l)}$, a sufficient statistic.

With these settings in hand, we consider the following loss function, which is the negative log-likelihood conditioned on comparison graph $\mathcal{G}$ and scaled by $1/L$
\begin{align}
	\mathcal{L}(\tb): 
	&=\sum_{(i, j) \in \mathcal{E}, i>j}\left\{-y_{j, i}\left(\tx_i^\top\tb-\tx_j^\top\tb\right)+\log \left(1+e^{\tx_i^\top\tb-\tx_j^\top\tb}\right)\right\}.\label{nll}
\end{align}
In the following contents, we rescale $\bx_i$ to $\bx_i / K$, where $K>0$ is a positive number such that $\| \bx_i\|_2 \leq \sqrt{(d+1)/n}$ for all $\bx_i$ after the transformation.
The likelihood function, prediction and the column space spanned by $\bar{\bX}$ are not affected by the scaling. However, this normalization facilitates scaling issues in the technical derivations.  

Furthermore, we consider the following regularized estimator
\begin{align}
    \tb_R = \argmin_{\tb\in \mathbb{R}^{n+d}}\cL_R(\tb),\label{regularizedMLE}
\end{align}
where $\cL_R(\cdot)$ is defined as
\begin{align*}
    \cL_R(\tb):=\cL(\tb)+\lambda \left\|\ba\right\|_1+ \frac{\tau}{2}\left\|\tb\right\|_2^2.
\end{align*}
In this context, we introduce the parameters $\lambda>0$ and $\tau>0$ as regularization coefficients aimed at ensuring both a sparse solution and the strong convexity of the loss function, respectively. To derive the results for $\tb_R$, we make the following two key assumption on the covariates.
\begin{assumption}\label{assump0}[Incoherence Condition]
We assume that there exists a positive constant $c_0$ such that 
\begin{align*}
   \|\bar{\bX}(\bar{\bX}^\top\bar{\bX})^{-1}\bar{\bX}^\top\|_{2,\infty}\le c_0\sqrt{(d+1)/n}.
\end{align*}
\end{assumption}
To explain the rationale behind Assumption \ref{assump0}, we begin by observing that $\|\bar{\bX}(\bar{\bX}^\top\bar{\bX})^{-1}\bar{\bX}^\top\|^2_{F}\le d+1$. Consequently, a condition sufficient for the validity of this assumption is that the rows of the projection $\cP_{\bar{\bX}}:=\bar{\bX}(\bar{\bX}^\top\bar{\bX})^{-1}\bar{\bX}^\top$ exhibit nearly balanced characteristics, with the sum of squares of row elements all on the order of $(d+1)/n$ or smaller. Note that when there is an absence of covariates (i.e., $\bar{\bX}=\mathbf{1}$), we have $\cP_{\bar{\bX}}=\mathbf{1}\mathbf{1}^\top/n.$ Under this scenario, the assumption automatically holds with $c_0=1.$

Next, we impose the same assumption on $\boldsymbol{\Sigma} = \sum_{i>j}(\tx_i-\tx_j)(\tx_i-\tx_j)^\top$ as \cite{fan2022uncertainty}, which guarantees that the loss function will behave well and that the Maximum Likelihood Estimator (MLE) will have good statistical properties. 
\begin{assumption}\label{assump1}
Consider $\boldsymbol{\Sigma} := \sum_{i>j}(\tx_i-\tx_j)(\tx_i-\tx_j)^\top$. Assume that there exists positive constants $c_1$ and $c_2$ such that 
\begin{align*}
    c_2n\leq \lambda_{\text{min}, \perp}(\boldsymbol{\Sigma})\leq \Vert \boldsymbol{\Sigma}\Vert\leq c_1n,
\end{align*}
where $\Vert \boldsymbol{\Sigma}\Vert$ is the operator norm of $\boldsymbol{\Sigma}$ and  
\begin{align*}
    \lambda_{\text{min}, \perp}(\boldsymbol{\Sigma}):=\min \left\{\mu:  \tb^\top \boldsymbol{\Sigma}\tb\geq \mu\Vert \tb \Vert_2^2\text{ for all }\tb\in \RR^{n+d} \text{ s.t. }\bar{\bX}^\top \tb_{1:n} = \boldsymbol{0}_{d+1}\right\}.
\end{align*}
\end{assumption}
In Assumption \ref{assump1}, we assume that $\displaystyle\boldsymbol{\Sigma}$ exhibits favorable characteristics of being positive definite in directions orthogonal to the columns of $\bar{\bX}$. This assumption aligns with the corresponding assumption presented in \cite{fan2022uncertainty}. We note that the upper bound presented in Assumption \ref{assump1} is implicitly satisfied based on the rescaled $\bx_i$ (such that $\| \bx_i
\|_2 \leq \sqrt{(d+1)/n}$) when concatenated with the vector $\be_i$. 
When no covariates $\bX$ are included in the model, Assumption \ref{assump1} simplifies to the condition that $\mathbf{\Sigma}=\sum_{i>j}(\mathbf{e}_i-\mathbf{e}_j)(\mathbf{e}_i-\mathbf{e}_j)^\top$ is positive definite within the subspace defined by $\mathbf{1}^\top \bx=0$. This condition is inherently satisfied by its original definition \citep{chen2019spectral,chen2022partial}.

Consequently, for any $(k+d)$-sparse parameter $\tb\in \Theta(k)$, after projecting it onto the space $\bar{\bX}^\top\tb_{1:n}=\mathbf{0}_{d+1}$, we are able to ensure the non-degeneracy of $\mathbf{\Sigma}$ on our parameter space of interest, defined in \eqref{parameter_space}.

We now present the theoretical guarantees of the estimator presented in \eqref{regularizedMLE} on its statistical rate of convergence. Prior to unveiling the results, we introduce three quantities of conditional numbers, depicting the difficulty associated with the recovery of $\tb^*$.
\begin{align*}
	\kappa_1 := \exp\left( \max_{i,j\in [n]}\left(\alpha_i^*+\bx_i^\top \bb^*-\alpha_j^*-\bx_j^\top \bb^* \right)\right), \quad \kappa_2 := \max_{i\in [n]}|\alpha_i^*|, \quad \kappa_3 := \frac{\left\Vert \tb^*\right\Vert_2}{\sqrt{n}}
\end{align*}
For a vector $\bx$, we use $\cS(\bx)$ to represent its support. 
\begin{theorem}\label{estimationthm}
\em
Suppose $k(d+1)< n, d\log n\lesssim np$. We consider $L \leq  c_4\cdot n^{c_5}$ for any absolute constants $c_4,c_5>0$ and
\begin{align}
    \lambda= c_{\lambda}\kappa_1\sqrt{\frac{(d+1)np\log n}{L}},\quad \tau = c_{\tau}\min\left\{\frac{\kappa_1}{\kappa_2},\frac{1}{\kappa_3\sqrt{d+1}}\right\}\sqrt{\frac{\log n}{nL}}\label{lambdaandtau}
\end{align}
for some constants $c_\tau, c_\lambda>0$. Let $\tb_R = (\widehat\ba_R^\top,\widehat\bb_R^\top)^\top$ be the solution of the regularized MLE Eq.~\eqref{regularizedMLE}. Then with probability at least $1-O(n^{-6})$, we have
\begin{align*}
\Vert \widehat\ba_R-\ba^*\Vert_\infty\lesssim \kappa_1^2\sqrt{\frac{(d+1)\log n}{npL}}, \quad
\left\Vert \tb_R-\tb^*\right\Vert_2 \lesssim\kappa_1\sqrt{\frac{\log n}{pL}}, \quad \text{and}\quad  \cS(\widehat\ba_R)\subset\cS(\ba^*).
\end{align*}
 If the signal strength satisfies
\begin{align*}
    \min_{i\in \cS(\ba^*)} |\ba_i^*|\gg \kappa_1^2\sqrt{\frac{(d+1)\log n}{npL}},
\end{align*}
the support of $\ba^*$ is exactly recovered, i.e., $\cS(\widehat\ba_R)=\cS(\ba^*)$.

 If we further have $npL\geq C \kappa_1^6(k+d)^3(d+1)\log n$, $np\geq C\kappa_1^2(k+d)$ and $n\geq C\kappa_1^2(k+d)(d+1)$ for some constant $C>0$, then with probability exceeding $1-O(n^{-6})$, it holds that
    \begin{align*}
        \left\|\tb_R-\tb^*\right\|_2\lesssim \kappa_1^2\sqrt{\frac{(k+d)(d+1)\log n}{npL}}.
    \end{align*}
\end{theorem}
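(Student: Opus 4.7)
The plan is to combine three standard ingredients---(i) a bound on the score $\nabla\mathcal{L}(\tb^*)$ that motivates the choice of $\lambda$, (ii) restricted strong convexity of $\mathcal{L}_R$ that the ridge term upgrades to full strong convexity, and (iii) a leave-one-out coupling for the entrywise $\ell_\infty$ rate---and then to run a primal-dual witness to obtain support selection. The improved $\ell_2$ rate then follows by re-applying the curvature bound on the $(k+d)$-dimensional subspace picked out by the recovered support.

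First I would compute $\nabla\mathcal{L}(\tb^*)=\sum_{i>j,\,(i,j)\in\mathcal{E}}\bigl(\sigma(\tx_j^\top\tb^*-\tx_i^\top\tb^*)-y_{j,i}\bigr)(\tx_i-\tx_j)$, which is a sum of zero-mean, bounded, independent vectors over edges of the Erd\H{o}s--R\'enyi graph and over Bernoulli outcomes. Matrix Bernstein then gives $\|\nabla_{\ba}\mathcal{L}(\tb^*)\|_\infty\lesssim\kappa_1\sqrt{np\log n/L}$ and $\|\nabla_{\bb}\mathcal{L}(\tb^*)\|_\infty\lesssim\kappa_1\sqrt{(d+1)p\log n/L}$ with probability $1-O(n^{-6})$, and Assumption~\ref{assump0} is used in the second bound to control per-coordinate variances along covariate directions. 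This matches the order of $\lambda$ in \eqref{lambdaandtau}. For the Hessian $\nabla^2\mathcal{L}(\tb)=\sum_{i>j,\,(i,j)\in\mathcal{E}}\sigma'(\cdot)(\tx_i-\tx_j)(\tx_i-\tx_j)^\top$, Assumption~\ref{assump1} combined with matrix Bernstein over the random graph yields $\nabla^2\mathcal{L}(\tb)\succcurlyeq cp\kappa_1^{-1}\boldsymbol{\Sigma}$ on directions with $\bar{\bX}^\top \mathbf{v}_{1:n}=\mathbf{0}$, while the ridge $\tau I$ supplies curvature along the null direction; the prescribed $\tau$ is small enough not to inflate the rate yet large enough to make $\mathcal{L}_R$ strongly convex on the whole space.

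For the $\ell_\infty$ rate and support selection I would run a leave-one-out coupling together with a primal-dual witness. For each $m\in[n]$, introduce an auxiliary loss $\mathcal{L}_R^{(m)}$ in which every comparison involving item $m$ is replaced by its population mean; its minimizer $\tb^{(m)}$ is independent of the data touching item $m$. The strong convexity established above yields $\|\tb_R-\tb^{(m)}\|_2\lesssim\kappa_1^2\sqrt{(d+1)\log n/(npL)}$, and a Hoeffding bound on the $m$-th coordinate of $\tb^{(m)}-\tb^*$ then delivers the claimed entrywise rate via the triangle inequality. For support selection, let $\tb^{\mathrm{res}}$ minimize $\mathcal{L}_R$ subject to $\alpha_i=0$ for $i\notin \mathcal{S}(\ba^*)$; the same leave-one-out argument controls $\|\tb^{\mathrm{res}}-\tb^*\|_\infty$ to the same order, and the KKT conditions combined with the gradient bound certify $|\nabla_{\alpha_i}\mathcal{L}(\tb^{\mathrm{res}})|<\lambda$ for every $i\notin\mathcal{S}(\ba^*)$. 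Convexity then forces $\tb^{\mathrm{res}}=\tb_R$, so $\mathcal{S}(\widehat{\ba}_R)\subseteq\mathcal{S}(\ba^*)$ and exact recovery under the minimum-signal condition is immediate. The coarse $\ell_2$ bound $\kappa_1\sqrt{\log n/(pL)}$ follows directly from the optimality inequality $\mathcal{L}_R(\tb_R)\le\mathcal{L}_R(\tb^*)$, the curvature $\lambda_{\min,\perp}(\boldsymbol{\Sigma})\asymp n$, and an $\ell_2$ bound on $\nabla\mathcal{L}(\tb^*)$, and does not yet exploit sparsity.

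For the refined $\ell_2$ rate under the additional sample-size hypotheses, support recovery plus the null-direction ridge control confines $\tb_R-\tb^*$ effectively to the $(k+d)$-dimensional subspace indexed by $\mathcal{S}(\ba^*)\cup\{n+1,\dots,n+d\}$; on this subspace the restricted score has $\ell_2$ norm $\lesssim\kappa_1\sqrt{(k+d)p\log n/L}$ while the Hessian retains curvature $\asymp np/\kappa_1$ by Assumption~\ref{assump1}, producing $\|\tb_R-\tb^*\|_2\lesssim\kappa_1^2\sqrt{(k+d)(d+1)\log n/(npL)}$. The extra hypotheses $npL\gtrsim\kappa_1^6(k+d)^3(d+1)\log n$, $np\gtrsim\kappa_1^2(k+d)$ and $n\gtrsim\kappa_1^2(k+d)(d+1)$ enter to guarantee that the third-order remainder in the Bregman-divergence expansion of $\mathcal{L}$ is dominated by the quadratic part on this low-dimensional subspace. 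The main obstacle is the leave-one-out coupling itself: the argument must track how the $\ell_1$ and $\ell_2$ penalties interact with the null-space direction of $\boldsymbol{\Sigma}$ and must preserve sparsity coherently across the $n$ auxiliary problems. Since verifying strict dual feasibility for the primal-dual witness uses an $\ell_\infty$ control on $\tb^{\mathrm{res}}$ that is itself produced by leave-one-out, the good event must be constructed simultaneously rather than sequentially; a naive ordering would inflate the rate by a spurious $\kappa_1$ or $\sqrt{d+1}$ factor that has to be avoided to match the stated bounds.
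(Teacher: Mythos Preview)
Your approach is structurally different from the paper's, and as written it has a genuine gap in how curvature is obtained. The paper never works directly with the penalized minimizers. It instead analyzes a proximal-gradient-descent trajectory $\tb^0=\tb^*$, $\tb^{t+1}=\textsf{SOFT}_{\eta\lambda}(\tb^t-\eta\nabla\mathcal{L}_\tau(\tb^t))$ together with leave-one-out trajectories $\{\tb^{t,(m)}\}$, and proves by induction on $t$ that all iterates simultaneously satisfy the $\ell_2$, $\ell_\infty$, and support bounds (hypotheses (A)--(D) in the appendix). After $T=n^5$ steps the global $\tau$-strong convexity---tiny but nonzero---makes $\|\tb^T-\tb_R\|_2$ negligible, and the bounds transfer to $\tb_R$. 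The refined $\ell_2$ rate is obtained separately by comparing $\tb_R$ to the minimizer $\ob_R$ of the quadratic approximation $\ocL_R$ and invoking an irrepresentable-condition argument for $\ob_R$.

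The gap in your plan is the claim that the coarse bound $\|\tb_R-\tb^*\|_2\lesssim\kappa_1\sqrt{\log n/(pL)}$ and the leave-one-out coupling $\|\tb_R-\tb^{(m)}\|_2\lesssim\kappa_1^2\sqrt{(d+1)\log n/(npL)}$ follow from ``the strong convexity established above.'' The only \emph{global} strong convexity available is the ridge curvature $\tau\asymp\sqrt{\log n/(nL)}$, and pushing the optimality inequality through that gives only the crude $O(\sqrt n)$ bound of Lemma~\ref{lem3}. The useful curvature $np/\kappa_1$ comes from Lemma~\ref{lb}, but that lemma applies only between two points of $\Theta(k)$, with the Hessian evaluated at a point $\tb''$ already close to $\tb^*$ in $\|\ba''-\ba^*\|_\infty$ and $\|\bb''-\bb^*\|_2$. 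You have neither: you do not yet know $\tb_R$ (or $\tb^{(m)}$) is sparse---that is the support conclusion you are trying to prove---and you do not yet know it is close to $\tb^*$. This is precisely the circularity you allude to at the end, but it bites earlier than you indicate: it blocks both your coarse $\ell_2$ bound and your leave-one-out step, not merely the ordering of good events. The paper's inductive PGD argument is engineered to break this loop: the trajectory starts at $\tb^*$, which is trivially sparse and trivially close, and one step of $\textsf{SOFT}$ cannot leave the good region if the previous iterate was in it. Your primal-dual witness via the oracle-restricted minimizer $\tb^{\mathrm{res}}$ is a legitimate alternative route, but to make it rigorous you would have to carry out the entire $\ell_\infty$ and leave-one-out analysis on the $(k+d)$-dimensional constrained problem first (where Lemma~\ref{lb} does supply curvature after a localization step), verify strict dual feasibility off the support, and only then identify $\tb_R=\tb^{\mathrm{res}}$; the proposal as written invokes restricted curvature for the unconstrained $\tb_R$ before that identification is in hand.
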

In Theorem \ref{estimationthm}, we present the $\ell_2$-statistical error of $\tb_R$ to $\tb^*$ as well as the $\ell_{\infty}$-error of intrinsic scores $\ba_R$ to $\ba^*.$ When the sample complexity $npL$ is sufficiently large, these statistical rates are optimal in terms of $n,p,L$ from the perspective of the information-theoretic principle \citep{chen2019spectral,chen2022partial,chen2022optimal,fan2022uncertainty}. Compared to \cite{fan2022uncertainty}, we improve an order of $\sqrt{n}$ in the statistical error of $\|\tb_R-\tb^*\|_2$ due to the sparsity assumption on $\balpha^*$ and its exploration in the estimation.

We next address the selection of tuning parameters, denoted as $\lambda$ and $\tau$, within the loss function $\cL_{R}(\tb)$. To ensure that the Karush-Kuhn-Tucker (KKT) condition of the optimal solution $\tb^*$ is satisfied, we choose a value of $\lambda$ on the same order of magnitude as $\|\nabla \cL(\tb^*)\|_{\infty}$. In addition, the primary purpose in introducing the $\ell_2$-regularizer in loss $\cL_{R}(\tb)$ with tuning parameter $\tau$ is to guarantee the strong convexity of $\cL_{R}(\tb)$. However, we do not intend to introduce any additional bias through this $\ell_2$-regularization term. In fact, $\tau$ can be chosen as any non-negative real number such that $\tau \leq c_{\tau}\min\left\{\kappa_1 / \kappa_2, 1 / \kappa_3\sqrt{d+1}\right\}\sqrt{\log n /nL}$ for some fixed constant $c_{\tau}>0$.


\section{Debiased Estimator and Distributional Results} \label{sec:distribution}
This section presents the distributional results pertaining to the regularized maximum likelihood estimator  (MLE) $\tb_{R}$ in \eqref{regularizedMLE}. Note that the inclusion of the $\ell_1$- and $\ell_2$-regularization term within the loss function $\mathcal{L}_{R}(\cdot)$ introduces additional bias into the estimator. Therefore, as an initial step, we present detailed procedures for mitigating this bias in the MLE.

For a given index $i\in [n]$, we introduce the following univariate function: 
\begin{align*}
    \cL_{R, \tb_{R, -i}}(x) = \cL_R(\tb)\bigg|_{\tb_i=x, \tb_{-i}=\tb_{R, -i}}.
\end{align*}
Since $\tb_R$ is the minimizer of $\cL_R(\cdot)$, it holds that $\widehat{\alpha}_{R,i}$ (the $i$-th entry of $\hat\alpha_{R}\in \RR^n$) is the minimizer of $\cL_{R, \tb_{R, -i}}(x)$. As a result, we have
\begin{align}
    0=\cL_{R, \tb_{R, -i}}'(\widehat{\alpha}_{R,i}) = \cL_{\tb_{R, -i}}'(\widehat{\alpha}_{R,i}) +\tau \widehat{\alpha}_{R,i}+\lambda\partial |\widehat{\alpha}_{R,i}|,\label{uqeq1}
\end{align}
where $\cL_{\tb_{R, -i}}(x)$ is defined similarly to $\cL_{R, \tb_{R, -i}}(x)$ and
$\partial |\cdot|$ is a subgradient of the absolute value function. By the mean value theorem, there exists a real number $b$ between $\alpha_i^*$
 and $\widehat{\alpha}_{R,i},$ such that 
 \begin{align}
      \cL_{\tb_{R, -i}}'(\widehat{\alpha}_{R,i}) = \cL_{\tb_{R, -i}}'(\alpha_i^*) + \cL_{\tb_{R, -i}}''(b)(\widehat{\alpha}_{R,i} - \alpha_i^*).\label{uqeq2}
 \end{align}

After combining \eqref{uqeq1} and \eqref{uqeq2} together, it holds that
\begin{align*}
    0
    &\approx \cL_{\tb_{R, -i}}'(\alpha_i^*) + \left(\nabla^2\cL(\tb_R)\right)_{i,i}(\widehat{\alpha}_{R,i} - \alpha_i^*) +\tau \widehat{\alpha}_{R,i}+\lambda\partial |\widehat{\alpha}_{R,i}|.
\end{align*}
After re-organizing the terms, we have
\begin{align}\label{expansion_alpha}
    \widehat{\alpha}_{R,i} +\frac{\tau \widehat{\alpha}_{R,i}+\lambda\partial |\widehat{\alpha}_{R,i}|}{\left(\nabla^2\cL(\tb_R)\right)_{i,i}} \approx \alpha_i^*-\frac{\cL_{\tb_{R, -i}}'(\alpha_i^*)}{\left(\nabla^2\cL(\tb_R)\right)_{i,i}}.
\end{align}
Note that the right-hand side is asymptotically unbiased.  This leads us to 
define the debiased estimator of $\widehat{\alpha}_{R, i}$ as the left-hand side of \eqref{expansion_alpha}, while the subgradient can be found by the optimality condition in \eqref{uqeq1}:
\begin{align}
    \widehat{\alpha}_{R,i}^{\debias} := \widehat{\alpha}_{R,i}+\frac{\tau \widehat{\alpha}_{R,i}+\lambda\partial |\widehat{\alpha}_{R,i}|}{\left(\nabla^2\cL(\tb_R)\right)_{i,i}} = \widehat{\alpha}_{R,i}-\frac{\left (\nabla\cL(\tb_R)\right)_i}{\left(\nabla^2\cL(\tb_R)\right)_{i,i}},\label{debiasedestimator}
\end{align}
where we used $\cL_{\tb_{R, -i}}'(\widehat{\alpha}_{R,i}) = \left (\nabla\cL(\tb_R)\right)_i$
To assess the uncertainty associated with $\hat\bb_R$, given that we do not impose sparsity regularization on it and the $\ell_2$-regularization parameter $\tau$ is relatively small, there is no need to perform debiasing on $\widehat{\bb}_R$ to obtain the distributional result. 

The following Theorem \ref{uqmainthm} establishes the distributional results for $\widehat{\alpha}_{R,i}^{\debias}$ and $\widehat{\beta}_{R,k}$.
\begin{theorem}\label{uqmainthm}
Suppose $npL\geq C \kappa_1^6(k+d)^3(d+1)\log n$, $np\geq C\kappa_1^2(k+d)$ and $n\geq C\kappa_1^2(k+d)(d+1)$ for some constant $C>0$. Given any $i\in [n]$ and $k\in [d]$, with probability at least $1-O(n^{-6})$ we have
\begin{align*}
    & \sup_{x\in \mathbb{R}}\left|\mathbb{P}\left(\sqrt{\left(\nabla^2\cL(\tb^*)\right)_{i,i}L}(\widehat{\alpha}_{R,i}^{\debias}-\alpha_i^*)\leq x\right) - \mathbb{P}(\mathcal{N}(0,1)\leq x)\right| \\
    \lesssim &\frac{\kappa_1^3(d+1)}{\sqrt{np}}\left(\frac{\kappa_1^3\log n}{\sqrt{L}}+\sqrt{(k+d)\log n}\right), \\
    &\sup_{x\in \mathbb{R}}\left|\mathbb{P}\left(\frac{\sqrt{L}\left(\hat{\beta}_{R, k}-\beta_k^*\right)}{\sqrt{(\bA^{-1})_{k,k}}}\leq x\right) - \mathbb{P}(\mathcal{N}(0,1)\leq x)\right|\\
    \lesssim & \frac{\kappa_1^3(d+1)\sqrt{kd(k+d)\log n}}{\sqrt{np}} + \frac{\kappa_1^{4.5}((k+d)(d+1)\log n)^{3/4}}{(npL)^{1/4}},
\end{align*}    
where $\bA := (\nabla^2\cL(\tb^*))_{n+1:n+d,n+1:n+d}$.
\end{theorem}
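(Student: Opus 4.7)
The plan is to reduce each coordinate of interest to a one-step linearization plus a controlled remainder, then apply a quantitative Berry--Esseen theorem to the linear part. Starting with $\widehat{\alpha}_{R,i}^{\debias}$, I begin from the exact identity \eqref{debiasedestimator} and Taylor-expand $\nabla\cL(\tb_R)$ around $\tb^*$: writing $\bH(\tb):=\nabla^2\cL(\tb)$,
$$(\nabla\cL(\tb_R))_i = (\nabla\cL(\tb^*))_i + \bH_{i,i}(\tb^*)(\widehat\alpha_{R,i}-\alpha_i^*) + \bH_{i,-i}(\tb^*)^\top(\tb_R-\tb^*)_{-i} + R_i,$$
with $R_i$ a second-order remainder. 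Substituted into \eqref{debiasedestimator}, the $\bH_{i,i}(\widehat\alpha_{R,i}-\alpha_i^*)$ piece cancels by the design of debiasing, producing
$$\widehat{\alpha}_{R,i}^{\debias}-\alpha_i^* = -\frac{(\nabla\cL(\tb^*))_i}{\bH_{i,i}(\tb^*)} + E_i,$$
where $E_i$ aggregates the cross-Hessian term, the Taylor remainder, and the plug-in discrepancy from using $\bH_{i,i}(\tb_R)$ in place of $\bH_{i,i}(\tb^*)$.

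Conditional on the Erd\H{o}s--R\'enyi graph $\cG$, $(\nabla\cL(\tb^*))_i$ is a sum of at most $|\{j:(i,j)\in\cE\}|\cdot L$ bounded centered Bernoulli deviations; classical Berry--Esseen then yields Kolmogorov distance $O((npL)^{-1/2})$ for the standardized leading term, and Bernstein concentration of $\bH_{i,i}(\tb^*)\gtrsim np/\kappa_1$ together with an anti-concentration step for the normal law transfers this to an unconditional bound. To absorb $E_i$, I use the sandwich $\sup_x|\mathbb{P}(X\le x)-\Phi(x)|\le \sup_x|\mathbb{P}(X_0\le x)-\Phi(x)|+\mathbb{P}(|X-X_0|>\epsilon)+O(\epsilon)$, with $\epsilon$ tuned to the magnitude of $\sqrt{\bH_{i,i}(\tb^*)L}\,|E_i|$. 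The termwise bounds on $E_i$ combine Theorem~\ref{estimationthm} (controls on $\|\tb_R-\tb^*\|_2$, $\|\widehat\ba_R-\ba^*\|_\infty$, and $\cS(\widehat\ba_R)\subseteq \cS(\ba^*)$) with Hessian-row estimates; the key subtlety is that a naive Cauchy--Schwarz on $\bH_{i,-i}(\tb^*)^\top(\tb_R-\tb^*)_{-i}$ overshoots, so one must split the row into an $\ba$-block (whose effective dimension collapses to $k$ via support containment) and a $\bb$-block of dimension $d$ with entries of order $\sqrt{p(d+1)}$ under Assumption~\ref{assump0}.

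For $\widehat\beta_{R,k}$ no explicit debiasing is needed, since the only $\bb$-penalty is the small ridge. Starting from the KKT condition $(\nabla\cL(\tb_R))_{n+k}+\tau\widehat\beta_{R,k}=0$, I Taylor-expand $\nabla\cL$ at $\tb^*$ and block-partition
$$\bH(\tb^*)=\begin{pmatrix}\bC & \bB\\ \bB^\top & \bA\end{pmatrix},\qquad \bA=(\bH(\tb^*))_{\bb,\bb},$$
whereupon the $\bb$-block of the linearized system gives
$$\widehat\bb_R-\bb^*\approx -\bA^{-1}\bigl[(\nabla\cL(\tb^*))_{\bb}+\bB^\top(\widehat\ba_R-\ba^*)\bigr]+(\text{higher-order terms}).$$
The leading piece $-\be_k^\top\bA^{-1}(\nabla\cL(\tb^*))_{\bb}$ has variance $(\bA^{-1})_{k,k}/L$ and admits a Berry--Esseen bound by the same conditional argument as above.

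The hard part will be controlling the contamination $\be_k^\top\bA^{-1}\bB^\top(\widehat\ba_R-\ba^*)$. Because $\widehat\ba_R-\ba^*$ is (effectively) supported on $\cS(\ba^*)$ of size $\le k$, this reduces to bounding a $k$-dimensional linear form in $\widehat\ba_R-\ba^*$ by $\|\be_k^\top\bA^{-1}\bB^\top\|_{2,\cS(\ba^*)}\cdot\|\widehat\ba_R-\ba^*\|_2$. Assumptions~\ref{assump0}--\ref{assump1} are then used to show this cross-block operator norm is small enough to yield the first term in the stated rate, while the second term emerges from the Taylor quadratic remainder $\|\tb_R-\tb^*\|_2^2$ after square-rooting inside a Cauchy--Schwarz step. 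Precise handling of this Schur-complement-type interaction between the sparse $\ba$-block and the dense $\bb$-block is the technically deepest step of the proof.
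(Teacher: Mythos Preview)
Your plan is correct and matches the paper's proof at the top level: reduce each coordinate to the oracle linear term $-(\nabla\cL(\tb^*))_i/(\nabla^2\cL(\tb^*))_{i,i}$ (resp.\ $-\be_k^\top\bA^{-1}(\nabla\cL(\tb^*))_{n+1:n+d}$), apply Berry--Esseen conditionally on $\cG$, and absorb the remainder via a sandwich argument with Gaussian anti-concentration. The cross-Hessian splitting you describe for the $\ba$- and $\bb$-blocks, exploiting $\cS(\widehat\ba_R)\subset\cS(\ba^*)$, is exactly the content of Lemma~\ref{alphadebiasapprox3} and Lemma~\ref{betadebiasapprox}.

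The paper organizes the remainder differently, and this is worth noting. Instead of Taylor-expanding $\nabla\cL(\tb_R)$ directly, it introduces the quadratic surrogate $\ocL_R$ and its minimizer $\ob_R$ (Lemma~\ref{oaestimation}, Theorem~\ref{residue2norm}), then interposes two auxiliary debiased quantities $\dot\alpha_{R,i}^{\debias}$ and $\overline\alpha_{R,i}^{\debias}$ and splits the error as $(\widehat\alpha^{\debias}-\dot\alpha^{\debias})+(\dot\alpha^{\debias}-\overline\alpha^{\debias})+(\overline\alpha^{\debias}-\text{oracle})$ via Lemmas~\ref{alphadebiasapprox1}--\ref{alphadebiasapprox3}. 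For $\bb$, the $3/4$-power term in the statement comes specifically from Theorem~\ref{residue2norm}'s bound $\|\tb_R-\ob_R\|_2\lesssim \kappa_1^{3.5}((k+d)(d+1)\log n/(npL))^{3/4}$, not from a bare Taylor quadratic remainder as you write. Your direct route would instead yield a term of order $\kappa_1\|\tb_R-\tb^*\|_c\|\tb_R-\tb^*\|_2$, which under the assumed sample-size conditions is no larger, so the argument still closes; but your explanation ``$\|\tb_R-\tb^*\|_2^2$ after square-rooting inside a Cauchy--Schwarz step'' does not actually produce the stated $3/4$ exponent. The $\ob_R$ device buys modularity---it cleanly separates the nonlinearity of $\cL$ (captured by $\|\tb_R-\ob_R\|_2$) from the cross-block interaction (captured by the exact linear KKT for $\ob_R$)---whereas your direct expansion is shorter but conflates these two sources of error.
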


The proof of Theorem \ref{uqmainthm} is deferred to  \S\ref{uqmainthmproof}. Next, we comment on a two-stage method when the signal of $\ba^*$ is sufficiently strong, leading to the recovery of true support $\cS(\widehat{\ba}_R) = \cS(\ba^*)$.
Specifically, under this assumption, the problem becomes a low-dimensional problem.  We refit the model to get the two-stage estimator via the negative log-likelihood function $\tilde \cL:\mathbb{R}^{|\cS(\ba^*)|+d}\to \mathbb{R}$ defined as 
\begin{align*}
    \tilde \cL(\bgamma) = \cL(\tb)\bigg|_{\tb_{[n]\backslash\cS(\ba^*) }= \boldsymbol{0}, \tb_{([n]\backslash\cS(\ba^*))^c} = \bgamma},\quad  \forall \bgamma\in \mathbb{R}^{|\cS(\ba^*)|+d}.
\end{align*}
We let 
\begin{align*}
    \widehat{\bgamma} = \argmin_{ \bgamma\in \mathbb{R}^{|\cS(\ba^*)|+d}} \tilde \cL(\bgamma),
\end{align*}
be the two-stage (or re-fitted) estimator, and we establish the distributional results for $\widehat{\bgamma}$, presented in the following Theorem \ref{thm:dist_two}.

\begin{theorem}\label{thm:dist_two}
Given $\cS(\widehat{\ba}_R) = \cS(\ba^*)$ and the aforementioned two-stage estimator $\widehat{\bgamma}$, as long as $npL\gtrsim \kappa_1^2(k+d)\log n$, for any convex set $\mathcal{D}\subset \mathbb{R}^{|\cS(\ba^*)|+d}$, we have
\begin{align*}
    \left|\mathbb{P}(\widehat{\bgamma}-\bgamma^* \in \mathcal{D}) - \mathbb{P}(\mathcal{N}(\boldsymbol{0}, (\nabla^2 \tilde \cL (\bgamma^*))^{-1}) \in \mathcal{D})\right|\lesssim \kappa_1^3 \frac{(k+d)^{5/4}\log n}{\sqrt{npL}}+\frac{1}{n^{10}}.
\end{align*}    
\end{theorem}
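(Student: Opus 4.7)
The plan is to treat Theorem \ref{thm:dist_two} as a classical low-dimensional M-estimator problem. On the event $\cS(\widehat{\ba}_R)=\cS(\ba^*)$, the refit $\widehat{\bgamma}$ is an unpenalized MLE in the reduced parameter space $\RR^q$ with effective dimension $q:=|\cS(\ba^*)|+d\leq k+d$, and Assumptions \ref{assump0}--\ref{assump1} restrict gracefully to this subspace. I would execute a three-step strategy: (i) a consistency bound on $\|\widehat{\bgamma}-\bgamma^*\|_2$; (ii) a quadratic Taylor expansion isolating a linear, Gaussian-type leading term; and (iii) Bentkus' multivariate Berry--Esseen bound on convex sets applied to this leading term.

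First, specializing the proof of Theorem \ref{estimationthm} to the unpenalized loss $\tilde{\cL}$ (for which no $\ell_1$-bias arises), standard M-estimator arguments---restricted strong convexity from Assumption \ref{assump1} plus Bernstein concentration of $\nabla\tilde{\cL}(\bgamma^*)$---give, with probability at least $1-O(n^{-10})$, the rate $\|\widehat{\bgamma}-\bgamma^*\|_2\lesssim \kappa_1\sqrt{q\log n/(npL)}$. Second, combining the first-order condition $\nabla\tilde{\cL}(\widehat{\bgamma})=\mathbf{0}$ with the integral form of Taylor's theorem produces the one-step expansion
\begin{align*}
\widehat{\bgamma}-\bgamma^*=-H^{-1}\nabla\tilde{\cL}(\bgamma^*)+H^{-1}\mathcal{R},\qquad H:=\nabla^2\tilde{\cL}(\bgamma^*),
\end{align*}
where $\|\mathcal{R}\|_2\lesssim np\,\|\widehat{\bgamma}-\bgamma^*\|_2^2$ by the uniform boundedness of the third derivative of the logistic loss and the operator-norm bound $\|\nabla^2\tilde{\cL}\|\lesssim np$ inherited from Assumption \ref{assump1}. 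Plugging in the consistency rate yields $\|H^{-1}\mathcal{R}\|_2\lesssim \kappa_1^2 q\log n/(npL)$, strictly smaller than the target Gaussian fluctuation of order $1/\sqrt{npL}$.

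Third, the leading term $-H^{-1}\nabla\tilde{\cL}(\bgamma^*)$ is a sum of independent mean-zero vectors in $\RR^q$ indexed by edge--comparison triples $(i,j,l)$, with each increment $\xi_{ijl}=L^{-1}(p^*_{ij}-y_{j,i}^{(l)})(\tilde{\bx}_i-\tilde{\bx}_j)$ and total covariance $\Sigma:=\mathrm{Cov}(\nabla\tilde{\cL}(\bgamma^*))=H/L$. Bentkus' Berry--Esseen inequality on convex sets gives
\begin{align*}
\sup_{A\,\mathrm{convex}}\left|\PP\bigl(\Sigma^{-1/2}\nabla\tilde{\cL}(\bgamma^*)\in A\bigr)-\PP(Z\in A)\right|\lesssim q^{1/4}\sum_{i,j,l}\EE\|\Sigma^{-1/2}\xi_{ijl}\|_2^3.
\end{align*}
Using the uniform bound $\|H^{-1/2}(\tilde{\bx}_i-\tilde{\bx}_j)\|_2\lesssim 1/\sqrt{np}$ that follows from Assumption \ref{assump1}, together with the elementary estimate $\EE|y_{j,i}^{(l)}-p^*_{ij}|^3\le p^*_{ij}(1-p^*_{ij})$, one splits $\|\cdot\|^3=\|\cdot\|\cdot\|\cdot\|^2$, bounds the first factor uniformly, and invokes the trace identity $\sum_{(i,j)\in\cE}p^*_{ij}(1-p^*_{ij})\|H^{-1/2}(\tilde{\bx}_i-\tilde{\bx}_j)\|_2^2=q$. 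This yields a Lyapunov sum of order $q/\sqrt{npL}$, so Bentkus delivers $q^{5/4}/\sqrt{npL}=(k+d)^{5/4}/\sqrt{npL}$, matching the claimed rate; the logarithmic factor and the $n^{-10}$ term absorb the high-probability concentration of $H$ around its population version and of the Erd\H{o}s--R\'enyi graph around its expectation.

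The main obstacle is arranging the Lyapunov sum at the $q/\sqrt{npL}$ rate. Crucial is the uniform upper bound $1/\sqrt{np}$ on $\|H^{-1/2}(\tilde{\bx}_i-\tilde{\bx}_j)\|_2$, which lets one trade one power of the norm inside the cube for a deterministic factor and convert the remaining $\sum\|\cdot\|_2^2$ into the trace identity equal to $q$; without this, a naive sum-of-cubes bound loses a full power of $(k+d)$. Transferring the Bentkus bound from the whitened sum (with covariance $I_q$) to the covariance $(\nabla^2\tilde{\cL}(\bgamma^*))^{-1}$ displayed in the theorem is a routine change of normalization.
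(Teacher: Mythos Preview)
Your three-step plan (consistency, one-step Taylor expansion, multivariate Bentkus) is exactly the route the paper takes, and your Lyapunov-sum calculation via the trace identity is correct and in fact slightly cleaner than the paper's write-up. The one substantive omission is the mechanism for converting the small remainder $H^{-1}\mathcal{R}$ into a bound on probabilities of \emph{convex} events. Saying the remainder is ``strictly smaller than the target Gaussian fluctuation'' is not enough: you need a Gaussian anti-concentration inequality on convex sets (the $\varepsilon$-enlargement trick). Concretely, the paper introduces $\mathcal{D}^{\pm\varepsilon}$, uses $\|\sqrt{L}H^{1/2}(\widehat{\bgamma}-\overline{\bgamma})\|_2\le\varepsilon$ with $\varepsilon\asymp\kappa_1^3(k+d)\log n/\sqrt{npL}$ to sandwich $\widehat{\bgamma}$ between the $\pm\varepsilon$ events for $\overline{\bgamma}$, and then invokes the bound $|\mathbb{P}(\mathcal{N}(\mathbf{0},I_q)\in\mathcal{D}^{-\varepsilon})-\mathbb{P}(\mathcal{N}(\mathbf{0},I_q)\in\mathcal{D})|\lesssim q^{1/4}\varepsilon$ from Rai\v{c} (2019). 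This step produces the dominant term $\kappa_1^3(k+d)^{5/4}\log n/\sqrt{npL}$; Bentkus by itself only yields $(k+d)^{5/4}\sqrt{\kappa_1/(npL)}$, which carries neither the $\log n$ nor the $\kappa_1^3$. Your explanation that the logarithm ``absorbs the high-probability concentration of $H$'' is therefore a misattribution: the $\log n$ and the $\kappa_1^3$ both come from the quadratic remainder through anti-concentration, not from conditioning on a good event for $H$ or the graph.

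Two minor corrections: the remainder bound should read $\|H^{-1}\mathcal{R}\|_2\lesssim\kappa_1^3 q\log n/(npL)$ (you lose one more $\kappa_1$ through $\|H^{-1}\|\lesssim\kappa_1/(np)$), and the uniform bound on $\|H^{-1/2}(\tilde{\bx}_i-\tilde{\bx}_j)\|_2$ is $\sqrt{\kappa_1/(np)}$ rather than $1/\sqrt{np}$, via the restricted strong convexity of $\tilde{\cL}$ (Lemma~\ref{lb}), not directly from Assumption~\ref{assump1}.
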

The proof of Theorem \ref{thm:dist_two} is deferred to \S\ref{proof_twostage}.

\section{Applications}\label{sec:extensions}
In this section, we study two practical applications of our distributional results. First, we perform a goodness-of-fit test of a special case of our model \eqref{comparison}, in order to substantiate the necessity of introducing the sparsity-inducing intrinsic scores $\alpha_i, i \in [n]$. Second, we establish out-of-sample rank confidence intervals utilizing future covariates $\{\bz_1, \cdots, \bz_n\}$ as a demonstration of the predictive capability inherent in our model.

\subsection{Goodness-of-Fit Test}\label{gofsection}
In this section, we test whether the covariates can fully capture the preference scores of all items.  Specifically, we are interested in the following hypothesis testing problem:
\begin{align*}
     H_0:\ba^* = \boldsymbol{0} \quad \text{ v.s. }\quad H_a:  \ba^* \neq \boldsymbol{0}.
\end{align*}
Therefore, it is natural to consider the following test statistic:
\begin{align*}
    \cT_1 = \max_{i\in [n]}\left|\sqrt{\left(\nabla^2\cL(\tb_R)\right)_{i,i}L}\widehat{\alpha}_{R,i}^{\debias}\right|.
\end{align*}

Leveraging Theorem \ref{estimationthm} and the linear expansions of $\widehat{\alpha}_{R,i}, i \in [n]$, we deduce that
\begin{align*}
    \left|\cT_1 - \max_{i\in [n]}\left|\sqrt{L}\left(\nabla\cL(\tb^*)\right)_i/\sqrt{\left(\nabla^2\cL(\tb_R)\right)_{i,i}}\right|\right| = o_p(1).
\end{align*}
According to the definition of $\cL(\cdot),$ the linear approximation of $\cT_1$ can be derived as independent sums, as presented below
\begin{align*}  \frac{\sqrt{L}\left(\nabla\cL(\tb^*)\right)_i}{\sqrt{\left(\nabla^2\cL(\tb_R)\right)_{i,i}}} = \sum_{l=1}^L\sum_{(i,j)\in\mathcal{E}}\sqrt{\frac{1}{\left(\nabla^2\cL(\tb_R)\right)_{i,i}L}}\left(\phi(\tx_i^\top\tb^*-\tx_j^\top\tb^*)-y_{j, i}^{(l)}\right),
\end{align*}
where $\phi(t) = e^t /(1+e^t)$.
We employ the Gaussian multiplier bootstrap technique with the Gaussian approximation theory outlined in \cite{chernozhuokov2022improved} to derive the asymptotic distribution of $\cT_1$.

Specifically, let $\omega_{j, i}^{(l)}$, $1\leq j<i\leq n, 1\leq l\leq L$ be i.i.d $\mathcal{N}(0,1)$ random variables, we define the Gaussian multiplier bootstrap counterpart of $\cT_1$ as 
\begin{align}
    \cG_1 = \max_{i\in [n]}\left|\sum_{l=1}^L\sum_{(i,j)\in\mathcal{E}}\sqrt{\frac{1}{\left(\nabla^2\cL(\tb_R)\right)_{i,i}L}}\left(\phi(\tx_i^\top\tb_R-\tx_j^\top\tb_R)-y_{j, i}^{(l)}\right)\omega_{j, i}^{(l)}\right|. \label{Bootstrap1}
\end{align}
Given any $\alpha\in (0,1)$, let $c_{1, 1-\alpha}$ be the $(1-\alpha)$-th quantile of $\cG_1$ conditioned on $\mathcal{E}$ and $\{y_{j, i}:1\leq j<i\leq n\}$, which yields
\begin{align*}
    c_{1,1-\alpha} = \inf\{z\in \mathbb{R}:P(\cG_1\leq z|\mathcal{E}, \{y_{j,i}\})\geq 1-\alpha\}.
\end{align*}
Then we have the following theorem for the test statistics $\cT_1$.
\begin{theorem}\label{T1thm}
    Under the conditions of Theorem \ref{uqmainthm}, we have
    \begin{align*}
        \left|P(\cT_1>c_{1, 1-\alpha})-\alpha\right| \lesssim \left(\frac{\log^5 n}{np}\right)^{1/4}  + \frac{\kappa_1^3(d+1)\log n}{\sqrt{np}}\left(\kappa_1^3\sqrt{\frac{\log n}{L}}+\sqrt{k+d}\right).
    \end{align*}
\end{theorem}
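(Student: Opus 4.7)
\medskip

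\noindent\textbf{Proof Proposal for Theorem \ref{T1thm}.}

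The plan is to reduce $\cT_1$ to a maximum over $i\in[n]$ of normalized linear sums of independent Bernoulli increments, invoke the high-dimensional Gaussian approximation of \cite{chernozhuokov2022improved} on both $\cT_1$ and its bootstrap surrogate $\cG_1$, and conclude via Gaussian anti-concentration. The rate in the statement already hints at this decomposition: the first term $(\log^5 n/np)^{1/4}$ comes from the Gaussian/bootstrap approximation of a maximum of $n$ sums of $\asymp n^2 p L$ independent bounded variables, while the second term inherits the non-asymptotic expansion error established in the proof of Theorem \ref{uqmainthm}.

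\emph{Step 1: Linearization under $H_0$.} Under $H_0$ we have $\alpha_i^*=0$ for all $i$, so the argument leading to \eqref{expansion_alpha} in the derivation of Theorem \ref{uqmainthm} yields, with probability at least $1-O(n^{-6})$,
\begin{align*}
\max_{i\in[n]}\left|\sqrt{(\nabla^2\cL(\tb_R))_{i,i}L}\,\widehat{\alpha}_{R,i}^{\debias}- S_i\right|\;\lesssim\;\frac{\kappa_1^3(d+1)\log n}{\sqrt{np}}\Bigl(\kappa_1^3\sqrt{\tfrac{\log n}{L}}+\sqrt{k+d}\Bigr),
\end{align*}
where $S_i := \sqrt{L}\,(\nabla\cL(\tb^*))_i/\sqrt{(\nabla^2\cL(\tb_R))_{i,i}}$. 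Replacing the data-dependent normalizer $(\nabla^2\cL(\tb_R))_{i,i}$ by its population analogue $(\nabla^2\cL(\tb^*))_{i,i}$ introduces an additional multiplicative factor $1+o(1)$ that is dominated by the same bound, using the already established estimation rate in Theorem \ref{estimationthm} together with the Lipschitzness of $\phi'$. Writing $\bar S_i$ for the version with the population normalizer, we obtain $|\cT_1-\max_i|\bar S_i||$ controlled by the second term in the target rate.

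\emph{Step 2: Gaussian approximation of $\max_i|\bar S_i|$.} Each $\bar S_i$ is a sum of $\asymp n^2 pL$ independent, mean-zero, bounded random variables of magnitude $\lesssim 1/\sqrt{npL}$. I would apply the non-asymptotic maximal coupling of \cite{chernozhuokov2022improved} to the $2n$-vector $(\bar S_i,-\bar S_i)_i$, conditional on $\mathcal{E}$. Checking their envelope and variance conditions is routine because Assumption \ref{assump1} gives $(\nabla^2\cL(\tb^*))_{i,i}\asymp np/\kappa_1$, so the $\bar S_i$ have variances of constant order and fourth moments controllable by $1/\sqrt{npL}$. The resulting Kolmogorov distance between $\max_i|\bar S_i|$ and $\max_i|Z_i|$ for a centered Gaussian vector $\bZ$ with the same covariance is $\lesssim(\log^5 n/np)^{1/4}$, giving the first term of the rate.

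\emph{Step 3: Bootstrap consistency.} The multiplier bootstrap statistic $\cG_1$ in \eqref{Bootstrap1} is, conditional on $\mathcal{E}$ and the data, a Gaussian maximum with covariance built from the residuals $\phi(\tx_i^\top\tb_R-\tx_j^\top\tb_R)-y_{j,i}^{(l)}$ and the plug-in normalizer $(\nabla^2\cL(\tb_R))_{i,i}$. I would show that this empirical covariance is close (in max-entry norm) to the population covariance used in Step 2. The contribution from replacing $\tb^*$ by $\tb_R$ is controlled via the $\ell_\infty$-bound on $\widehat\ba_R-\ba^*$ and the $\ell_2$-bound on $\tb_R-\tb^*$ from Theorem \ref{estimationthm}, together with the incoherence Assumption \ref{assump0} and Hoeffding-type concentration for the residual cross-products. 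A Gaussian comparison inequality (e.g.\ Lemma in \cite{chernozhuokov2022improved}) then converts this entrywise covariance gap into a Kolmogorov-distance gap between the laws of $\max_i|Z_i|$ and $\cG_1$, again of order $(\log^5 n/np)^{1/4}$ up to the estimation-error contribution already absorbed in Step 1.

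\emph{Step 4: Conclusion via anti-concentration.} Combining Steps 1--3 yields that both $\mathbb{P}(\cT_1\le x)$ and $\mathbb{P}(\cG_1\le x\mid\mathcal{E},\{y_{j,i}\})$ are within the target rate of the common Gaussian CDF $\mathbb{P}(\max_i|Z_i|\le x)$ uniformly in $x$. Because the Gaussian maximum admits the anti-concentration bound $\mathbb{P}(|\max_i|Z_i|-x|\le\varepsilon)\lesssim\varepsilon\sqrt{\log n}$, the quantile $c_{1,1-\alpha}$ of $\cG_1$ pulls back to an $\alpha$-quantile of $\cT_1$ with the same error, delivering $|\mathbb{P}(\cT_1>c_{1,1-\alpha})-\alpha|$ bounded by the sum of both contributions.

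\emph{Main obstacle.} The delicate point is Step 3: the bootstrap uses the plug-in residuals and plug-in normalizers, and one must show that the resulting $n\times n$ empirical covariance matrix agrees with the population one entry-wise up to $O((np)^{-1/2}\cdot\text{polylog})$ despite the penalization bias in $\tb_R$. This requires carefully leveraging $\cS(\widehat\ba_R)\subset\cS(\ba^*)=\emptyset$ under $H_0$ (so most entries of $\widehat\ba_R$ are exactly zero), the Lipschitzness of $\phi$ and $\phi'$, and a union-bound concentration argument for quadratic forms in the Bernoulli residuals, ideally reusing machinery from the proof of Theorem \ref{uqmainthm}.
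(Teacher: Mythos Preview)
Your proposal is essentially correct and follows the same overall architecture as the paper: define population surrogates $\cT_1^\sharp$ and $\cG_1^\sharp$ with $\tb^*$ in place of $\tb_R$, bridge each to a common Gaussian maximum via \cite{chernozhuokov2022improved}, control $|\cT_1-\cT_1^\sharp|$ via the expansion in Theorem~\ref{approxthm}, and finish with Gaussian anti-concentration. Steps 1, 2, and 4 match the paper's Lemmas~\ref{Bootstraplemma2}, \ref{Bootstraplemma1}, and the final assembly almost exactly.

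The one notable difference is your Step 3, which you flag as the main obstacle. You propose to control the entrywise gap between the $n\times n$ conditional covariance of $\cG_1$ and the population covariance, then invoke a Gaussian comparison inequality. The paper sidesteps this covariance comparison entirely: it bounds the random variable $|\cG_1-\cG_1^\sharp|$ directly by writing the difference as $\max_i|\sum_{(i,j)\in\cE}\sum_l \Delta_{j,i}^{(l)}\omega_{j,i}^{(l)}|$, bounding each coefficient $|\Delta_{j,i}^{(l)}|\lesssim \kappa_1^{3.5}\sqrt{(d+1)\log n}/(npL)$ via the estimation error $\|\tb_R-\tb^*\|_c$ and Lipschitzness of $\phi,\phi'$, and then applying a Gaussian tail bound over the multipliers $\omega$. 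This yields $|\cG_1-\cG_1^\sharp|\lesssim \kappa_1^{3.5}\sqrt{d+1}\log n/\sqrt{npL}$ with high probability, which is then converted to a Kolmogorov-distance bound via anti-concentration. Your covariance-comparison route would also work, but the paper's pointwise bound is more elementary and avoids exactly the ``quadratic forms in Bernoulli residuals'' concentration you anticipated needing.
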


We defer the proof of Theorem \ref{T1thm} to \S\ref{T1thmproof}.

\subsection{Out-of-Sample Ranking Inferences}\label{oosRCIsection}
In this section, we turn to constructing both two-sided confidence intervals for out-of-sample ranks based on the information from observed covariates. 

Recall that, in our model, we divide the ranking score into two parts: the part of the intrinsic score $\alpha_i^*$ and the part $\tx_i^\top \tb^*$ explained by the covariates. Therefore, when a new set of covariates $\{\bz_1,\bz_2,\dots,\bz_n\}$ are observed, the out-of-sample unknown ranking scores are given by $\tilde{\theta}_i^* := \alpha_i^*+\bz_i^\top \bb^*.$ Note that these $\bz_i$'s can be the same as $\bx_i$'s, as both are non-random. Let $\tilde r_m$ be the rank of $\tilde \theta_m$ among $\{\tilde \theta_i\}_{i=1}^n$.  

Let $\mathcal{M}$ be the set of items among $n$ items of interest. We aim to construct the $(1 - \alpha) \times 100\%$ confidence interval for the out-of-sample population rank $\tilde{r}_{m},m\in \cM$ simultaneously, where $\alpha \in (0, 1)$ denotes a pre-specified significance level. We deduce this problem to a simultaneous pairwise comparison problem as follows.

Let ${[\mathcal{C}_{L}(k, m), \mathcal{C}_{U}(k, m)]},{k \neq m}, m\in \cM, k \in[n]$ represent the simultaneous confidence intervals of the pairwise differences ${\tilde{\theta}_{k}^{*} - \tilde{\theta}_{m}^{*}},{k \neq m}, m\in \cM$, with the following property:
\begin{align}
\label{eq_SCIs_theta}
\PP\Big({\mathcal{C}_{L}(k, m) \leq
\tilde{\theta}_{k}^{*} - \tilde{\theta}_{m}^{*} \leq \mathcal{C}_{U}(k, m), \forall k\neq m}, \forall m\in \cM\Big) \geq 1 - \alpha.
\end{align}
One observes that if $\mathcal{C}_{U}(k, m) < 0$ (respectively, $\mathcal{C}_{L}(k, m) > 0$), it implies that $\tilde{\theta}_{k}^{*} < \tilde{\theta}_{m}^{*}$ (respectively, $\tilde{\theta}_{k}^{*} > \tilde{\theta}_{m}^{*}$). Enumerating the number of items whose scores are higher than item $m,$ for each $m\in \cM,$ we obtain the lower bounds of the confidence intervals for rank $r_m$, $m\in \cM,$ and vice versa. 
In other words, we deduce from \eqref{eq_SCIs_theta} that
\begin{align}
\label{eq_confidence_interval_m_probability}
    \PP\left(1 + \sum_{k \neq m} 1\{\cC_{L}(k, m) > 0\} \leq \tilde{r}_{m} \leq n - \sum_{k \neq m} 1\{\cC_{U}(k, m) < 0\},\forall m\in \cM\right) \geq 1 - \alpha. 
\end{align}
This yields a $(1 - \alpha) \times 100\%$ two-sided confidence interval for $\tilde{r}_{m}$.

To this end, we next construct simultaneous confidence intervals for the pairwise differences ${\tilde{\theta}_{k}^{*} - \tilde{\theta}_{m}^{*}},{k \neq m}, m\in \cM$. This motivate us to consider
\begin{align*}
    \cT_2 := \max_{m\in \cM}\max_{k\neq m}\left|\frac{\widehat{\theta}_k - \widehat{\theta}_m - (\tilde{\theta}_k^* - \tilde{\theta}_m^*)}{\widehat{\sigma}_{m,k}}\right|,
\end{align*}
where  with $\tz_i:=\left(\boldsymbol{e}_i^\top, \bz_i^\top\right)^\top$, 
\begin{align}
    \widehat{\sigma}_{m,k}^2 = (\tz_m-\tz_k)^\top (\nabla^2\cL(\tb_R))^\diamond \nabla^2\cL(\tb_R) (\nabla^2\cL(\tb_R))^\diamond    (\tz_m-\tz_k) / L. \label{defi:widehatsigmamk}
\end{align}
Similar to \eqref{Bootstrap1}, we consider the bootstrap counterparts of $\cT_2$ as 
\begin{align*}
    \cG_2 :=& \max_{m\in \cM}\max_{k\neq m}\Bigg|\sum_{l=1}^L \sum_{(i,j)\in \cE,i>j} \frac{(\tz_m-\tz_k)^\top(\nabla^2\cL(\tb_R))^\diamond (\tx_i-\tx_j)}{\widehat{\sigma}_{m,k}L}(\phi(\tx_i^\top\tb_R-\tx_j^\top\tb_R)-y_{j, i}^{(l)})\omega_{j, i}^{(l)} \Bigg| 
\end{align*}
Let $c_{2,1-\alpha}$ be the $(1-\alpha)$-th quantile of $\cG_2$, we have the following theorem for the test statistics $\cT_2$.
\begin{theorem}\label{T23thm}
Assume $k\geq 2$. Under the conditions of Theorem \ref{uqmainthm}, as long as $n\gtrsim (d+1)^2 k$, we have
    \begin{align*}
        &\left|P(\cT_2>c_{2,1-\alpha})-\alpha\right|\lesssim \left(\frac{\kappa_1^3\log^5 n}{np}\right)^{1/4} + \frac{\kappa_1^{3.5}(d+1)\log n}{\sqrt{np}}\left(\kappa_1^3\sqrt{\frac{\log n}{L}}+\kappa_1^4\sqrt{\frac{\log n}{L(d+1)}}+\sqrt{k+d}\right).
    \end{align*}
\end{theorem}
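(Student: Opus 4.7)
The plan is to mimic the strategy behind Theorem \ref{T1thm}, reducing the bootstrap validity statement for $\cT_2$ to a high-dimensional Gaussian approximation (via \cite{chernozhuokov2022improved}) plus a Gaussian multiplier bootstrap consistency argument. The new wrinkle relative to Theorem \ref{T1thm} is that $\widehat{\theta}_i - \tilde{\theta}_i^* = (\widehat{\alpha}_{R,i}^{\debias} - \alpha_i^*) + \bz_i^\top(\widehat{\bb}_R - \bb^*)$, so I must handle the debiased intrinsic-score error and the regression-coefficient error in a single expression, uniformly over the $O(n|\cM|)$ pairs $(m,k)$.

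First, I would establish a unified linear expansion
\begin{align*}
\widehat{\theta}_k - \widehat{\theta}_m - (\tilde{\theta}_k^* - \tilde{\theta}_m^*) = (\tz_m - \tz_k)^\top (\nabla^2 \cL(\tb_R))^\diamond \nabla \cL(\tb^*) + r_{m,k},
\end{align*}
by combining \eqref{expansion_alpha} for the $\ba$-block with the analogous expansion $\widehat{\bb}_R - \bb^* \approx -\bA^{-1}(\nabla \cL(\tb^*))_{n+1:n+d}$ extracted from the proof of Theorem \ref{uqmainthm}. Theorem \ref{estimationthm} then gives a uniform bound
\begin{align*}
\max_{m\in\cM, k\neq m}\frac{|r_{m,k}|}{\widehat{\sigma}_{m,k}} \lesssim \frac{\kappa_1^{3.5}(d+1)\log n}{\sqrt{np}}\left(\kappa_1^3\sqrt{\tfrac{\log n}{L}} + \kappa_1^4\sqrt{\tfrac{\log n}{L(d+1)}} + \sqrt{k+d}\right),
\end{align*}
which is exactly the non-Gaussian-CLT part of the claimed bound; the three summands correspond respectively to the $\alpha$-debias remainder, the cross term between the $\ba$- and $\bb$-block approximations, and the linearization remainder of $\widehat{\bb}_R-\bb^*$.

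Second, I would rewrite the leading term as a sum of independent (conditional on $\cE$) mean-zero contributions indexed by edges and replications, so that
\begin{align*}
\cT_2 \approx \max_{m\in\cM}\max_{k\neq m}\left|\frac{1}{\widehat{\sigma}_{m,k}L}\sum_{l=1}^L\sum_{(i,j)\in\cE, i>j} w_{m,k,i,j}\,\xi_{i,j}^{(l)}\right|,
\end{align*}
with weights $w_{m,k,i,j} = (\tz_m - \tz_k)^\top (\nabla^2 \cL(\tb_R))^\diamond (\tx_i - \tx_j)$ and bounded innovations $\xi_{i,j}^{(l)} = \phi(\tx_i^\top\tb^* - \tx_j^\top\tb^*) - y_{j,i}^{(l)}$. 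After verifying the third-moment and truncation conditions, the high-dimensional CLT for maxima of independent sums yields the first term $(\kappa_1^3 \log^5 n/(np))^{1/4}$ in the stated bound. The Gaussian multiplier bootstrap validity for $\cG_2$ then follows from a Gaussian comparison inequality once I control (i) $\|(\nabla^2\cL(\tb_R))^\diamond - (\nabla^2\cL(\tb^*))^\diamond\|$ via Theorem \ref{estimationthm}, (ii) the residual discrepancy $\phi(\tx_i^\top\tb_R - \tx_j^\top\tb_R) - \phi(\tx_i^\top\tb^* - \tx_j^\top\tb^*)$ via Bernstein concentration, and (iii) the uniform consistency of the sandwich estimator $\widehat{\sigma}_{m,k}^2$ for the true variance of the leading term.

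The main obstacle will be the simultaneous control of the linearization remainder $r_{m,k}$ and the variance estimator $\widehat{\sigma}_{m,k}$ uniformly across the $O(n|\cM|)$ pairs, particularly the cross term between the $\ba$-block debias error and the $\bb$-block expansion error. The block-diagonal structure of the $\diamond$-operation must be exploited: the $\ba$-block of $(\nabla^2\cL(\tb_R))^\diamond (\tz_m - \tz_k)$ has only two nonzero coordinates (at $m$ and $k$), whereas the $\bb$-block contribution is dense but lives in the low-dimensional direction $\bA^{-1}(\bz_m-\bz_k)$. The condition $n \gtrsim (d+1)^2 k$ enters precisely here to ensure the plug-in variance concentrates tightly enough and the Hessian perturbation in the $\bb$-block is $o(1)$ uniformly, which are both required to apply the Gaussian comparison bounds at the claimed rate.
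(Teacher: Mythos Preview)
Your proposal is correct and follows essentially the same route as the paper's proof (Lemmas \ref{Bootstraplemma4}--\ref{Bootstraplemma6} together with Lemma \ref{diamondlemma}): linearize $\widehat{\theta}_k - \widehat{\theta}_m - (\tilde{\theta}_k^* - \tilde{\theta}_m^*)$ via Theorem \ref{approxthm}, apply the high-dimensional CLT and multiplier bootstrap of \cite{chernozhuokov2022improved} conditional on $\cE$, and control the plug-in perturbations. One minor bookkeeping correction: the middle summand $\kappa_1^4\sqrt{\log n/(L(d+1))}$ in the remainder does not arise from a cross term between the $\ba$- and $\bb$-block expansions but from the variance-ratio perturbation $|1 - \sigma_{m,k}/\widehat{\sigma}_{m,k}|$ (your item (iii), handled in the paper via Lemma \ref{diamondlemma}) multiplied by the leading-term magnitude $O(\sqrt{\kappa_1^3\log n})$.
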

\begin{proof}
    See \S \ref{T23thmproof} for a detailed proof.
\end{proof}
\begin{remark}
    The above results can be adapted to two-stage estimator. To apply the two-stage estimator, we replace $(\widehat{\alpha}^{\debias}_{R, 1}, \dots,\widehat{\alpha}^{\debias}_{R, n}, \widehat{\bb}_R)$ with $\tb$ such that $\tb_{[n]\backslash\cS(\ab^*)} = \boldsymbol{0}$ and $\tb_{([n]\backslash\cS(\ab^*))^c} = \widehat{\bgamma}$. Also, the definition of $\tz_i$ is changed to be
    \begin{align*}
         \tz_i =
    \begin{cases}
      \left(\boldsymbol{e}_i^\top, \bz_i^\top\right)^\top\quad \quad \quad & i\in \cS(\ba^*), \\
      \left(\boldsymbol{0}_n^\top, \bz_i^\top\right)^\top \quad \quad \quad& i\notin \cS(\ba^*).
    \end{cases}    
    \end{align*}
    Then Theorem \ref{T23thm} also holds for the two-stage estimator under the same assumption of Theorem \ref{thm:dist_two}.  
\end{remark}
We next present an application of Theorem \ref{T23thm} regarding two-sided out-of-sample ranking confidence intervals in Example \ref{RCI}, and we also discuss the applications to top-K candidate selection and screening in \S\ref{sec:add_app}.
\begin{example}\label{RCI}
Let $\cM$ be the set of items of interest. Given $\alpha\in (0,1)$, let $\widehat{c}_{2,1-\alpha}$ be the estimated $(1-\alpha)$-th quantile of $\cT_2$ from the bootstrap samples. According to Theorem \ref{T23thm}, we can construct simultaneous confidence intervals for $\tilde{\theta}_k^* - \tilde{\theta}_m^*, k\neq m,m\in \cM,$ as
    \begin{align}
        [\cC_L(k,m),\cC_U(k,m)] = [\widehat{\theta}_k-\widehat{\theta}_m\pm \widehat{c}_{2,1-\alpha}\widehat{\sigma}_{m,k}],\label{RCIeq1}
    \end{align}
    where $\widehat{\sigma}_{m,k}$ is defined in \eqref{defi:widehatsigmamk}. Since 
    \begin{align*}
        \PP\left(\tilde{\theta}_k^* - \tilde{\theta}_m^*\in [\cC_L(k,m),\cC_U(k,m)], \forall k\neq m,\forall m\in \cM\right)\geq 1-\alpha,
    \end{align*}
    we know that
    \begin{align*}
        \PP\left(\tilde r_m \in \left[1+\sum_{k\neq m}\textbf{1}(\cC_L(k,m)> 0), n - \sum_{k\neq m}\textbf{1}(\cC_U(k,m)< 0)\right], \forall m\in \cM\right)\geq 1-\alpha.
    \end{align*}
    In this way, we construct a $(1-\alpha)\times 100\%$ confidence interval $[\cR_L(m),\cR_U(m)]$ for all $r(m),m\in \cM$.
\end{example}
We next discuss the advantages and drawbacks of applying the two-stage method to construct confidence intervals for ranks. When the signal strength of $\balpha^*$ is strong, employing this method for ranking inference leads to narrow confidence intervals for items whose scores are fully explained by the covariates (i.e., $\alpha^*_i=0$). On the other hand, however, in cases where some signals of $\balpha^*$ are relatively weak, choosing a larger value of the tuning parameter $\lambda$ may result in false negatives in the estimation stage.  For practical reasons, we recommend initially employing a relatively small value for the tuning parameter $c_{\lambda}$ in equation \eqref{lambdaandtau} with the aim of reducing the dimensionality. Additionally, we advocate the integration of element-wise distributional results of $\balpha$ to facilitate the selection of an appropriate support $\mathcal{S}(\hat{\alpha}_R)$ for the second stage.

In the forthcoming numerical experiments, we will demonstrate that the confidence intervals we construct for true ranks, using both one-stage and two-stage methods, exhibit stability and match the predefined level of $1-\alpha.$

\section{Numerical Experiments}
This section is dedicated to illustrating the efficacy of the proposed methodology and validating its theoretical underpinnings through numerical studies. Specifically, we will validate the asymptotic normality of the de-biased estimator and study the applications discussed in \S\ref{sec:extensions}. 
\subsection{Asymptotic Normality}
In this subsection, we validate the asymptotic normality of the debiased estimator, present in Theorem \ref{uqmainthm}. 
We let the number of compared items be $n = 200$, covariate dimension be $d = 3$ and the sparsity level of $\balpha^*\in \RR^n$ (size of $|\cS(\ba^*)|$) be $k = 5$.


We generate $\alpha_i^*$ by sampling $|\alpha_i^*|\sim \text{Uniform}[0.3, 0.3\times \log(5)]$ with a random sign. 
For $\bb^*$,  it is generated uniformly from the hypersphere $\displaystyle\{\bb:\Vert \bb\Vert_2 = 0.5\sqrt{{n}/{(d+1)}}\}$. In addition, entries of the covariate matrix $\bX = [\bx_1,\bx_2,\dots,\bx_n]^\top\in \RR^{n\times d}$ are sampled independently from $\text{Uniform}[-0.5, 0.5]$, and are normalized to have mean $0$ and scaled with $ \max_{i\in [n]}\Vert\bx_i\Vert_2 = \sqrt{(d+1)/{n}}$. We choose $(p, L)$ from $\{(0.5, 25), (0.1, 10)\}$ and adjust $\lambda$ correspondingly. When $(p, L) = (0.5, 25)$, we choose $\lambda = 3$ and $1$. When $(p, L) = (0.1, 10)$, we choose $\lambda = 1.2$ and $0.4$. This results in $4$ combinations. For each setting, we generate the comparison graph $\mathcal{E}$ and data $\{y_{i,j}^{(\ell)},l\in[L],(i,j)\in\mathcal{E}\}$ $500$ times and record $\widehat{\alpha}_{R,1}^{\debias}$ and $\widehat{\beta}_{R,1}$. In Figures \ref{normalityhistogram} and {\red 2}, we report the histograms and Q-Q plots of the following two normalized random variables:
\begin{align*}
    RV_1 = \sqrt{\left(\nabla^2\cL(\tb_R)\right)_{1,1}L}(\widehat{\alpha}_{R,1}^{\debias}-\alpha_1^*), \quad RV_2 = \frac{\sqrt{L}\left(\hat{\beta}_{R, 1}-\beta_1^*\right)}{\sqrt{(\bA^{-1})_{1,1}}},
\end{align*}
respectively, where $\boldsymbol{A}$ is defined in Theorem \ref{uqmainthm}. 

We conclude from Figures \ref{normalityhistogram} and {\red 2} that the empirical distributions of $RV_1$ and $RV_2$ follow closely the standard Gaussian distribution. These results validate our theoretical guarantee of normal approximation in Theorem \ref{uqmainthm} with the right asymptotic variance.

\begin{figure}[h]
\begin{center}
\includegraphics[width=16cm]{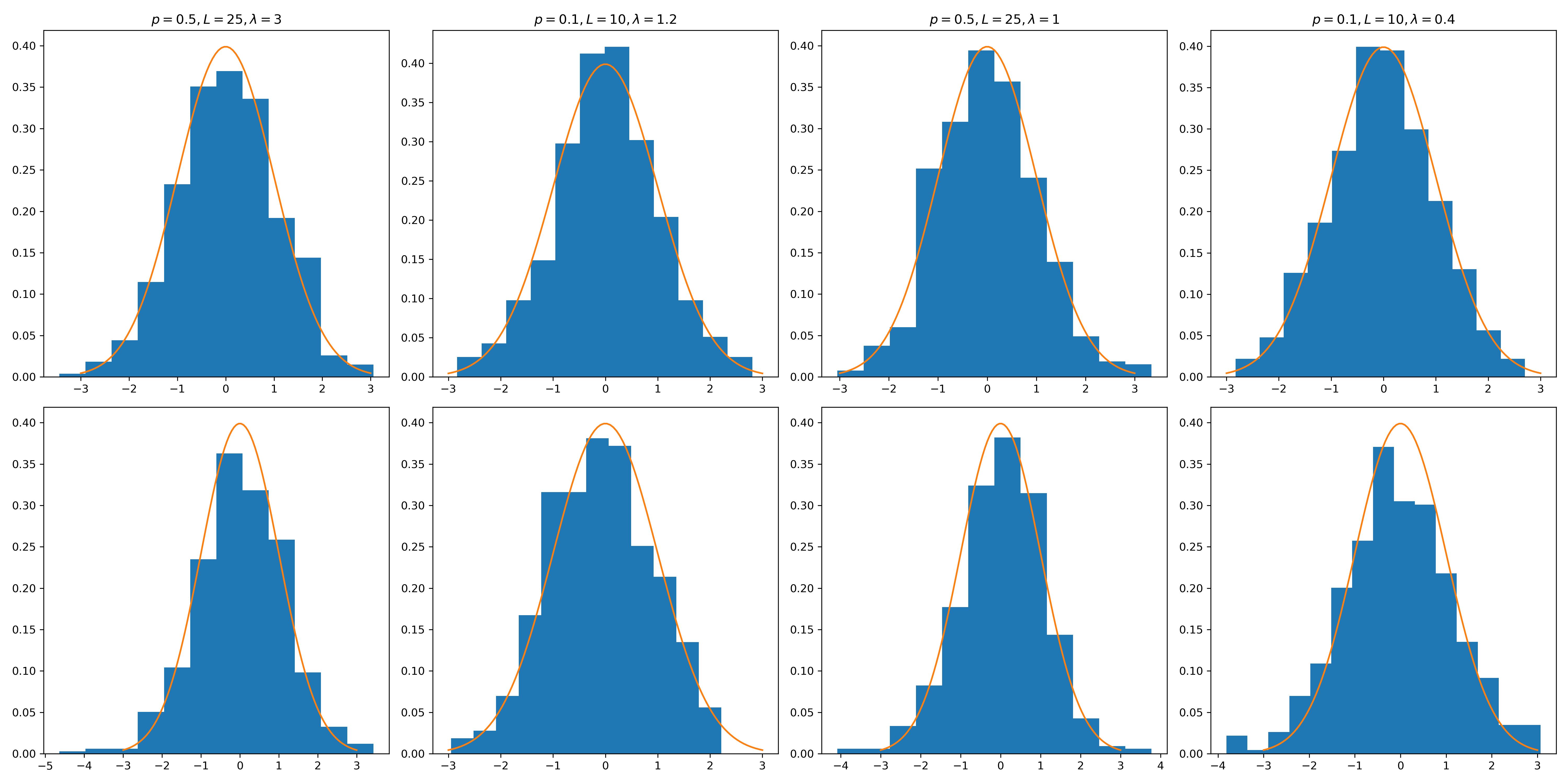}
\end{center}
\caption{Histograms of $RV_1$ and $RV_2$ against the standard Gaussian distribution under different parameter combinations. Each column represents one of the aforementioned combinations. The top and bottom panels are the histograms for $RV_1$ and $RV_2$, respectively, based on 500 simulations.}
\label{normalityhistogram}
\end{figure}

\begin{figure}[h]
\label{normalityQQplot2}
\begin{center}
\includegraphics[width=16cm]{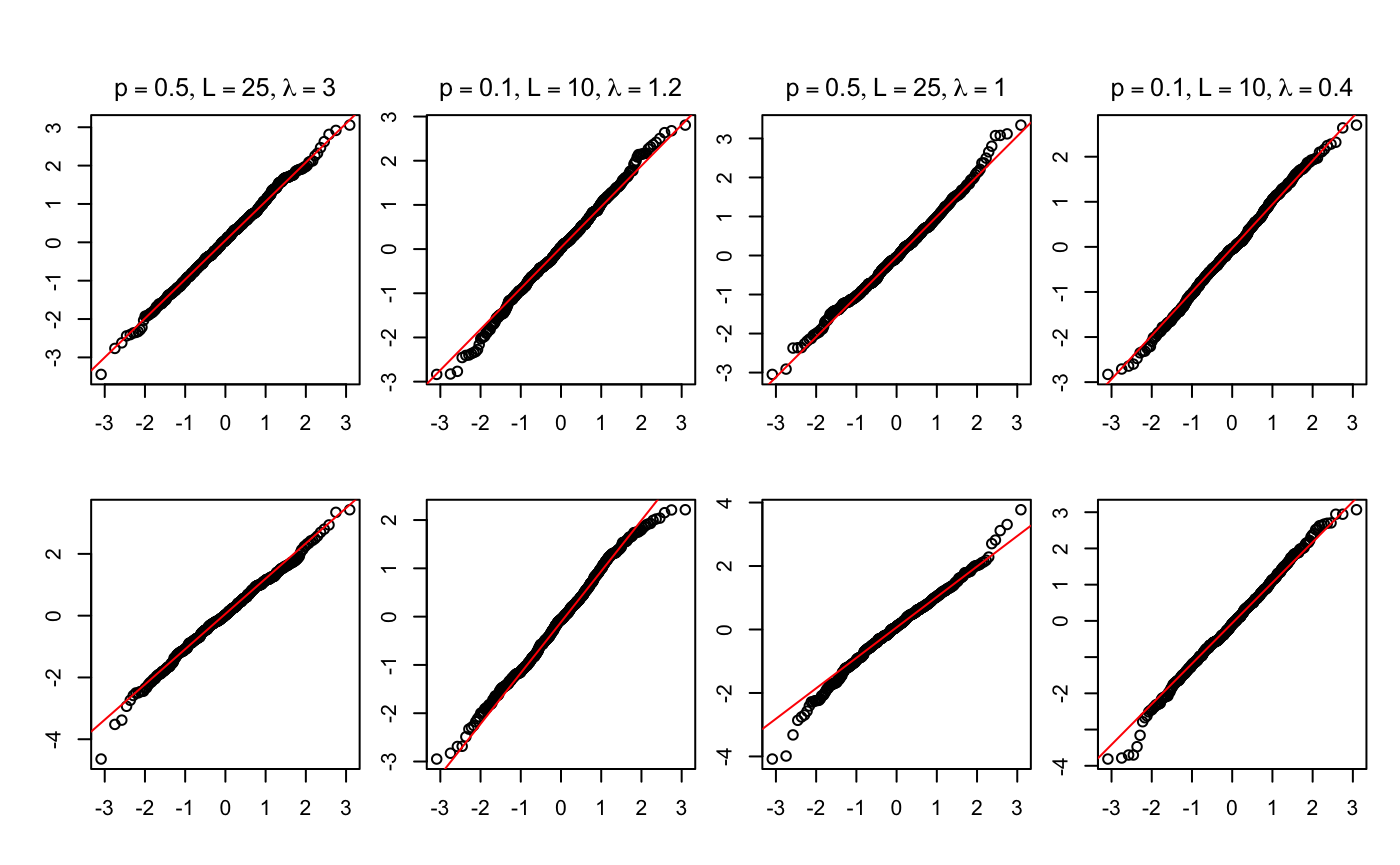}
\end{center}
\caption{Q-Q plots for checking the normality of $RV_1$ and $RV_2$. The results are reported in the same order as Figure \ref{normalityhistogram}. Each column represents one of the combination of the parameter. The top and bottom panels are Q-Q plots for $RV_1$ and $RV_2$, respectively, based on 500 simulations.} 
\end{figure}

\subsection{Goodness-of-Fit Test}
In this section, we validate the theoretical guarantee of the goodness-of-fit test presented in Section \ref{gofsection} through a synthetic dataset.

We keep $n,d,k$ in the same way as those in the previous section. In order to set the signal strength at different levels, we generate $\balpha_{\cS}$ and $\balpha_{\cS^c}$ separately. Specifically, we first generate two vectors, $\bomega_1\in \mathbb{R}^5$ and $\bomega_2\in \mathbb{R}^{195}$. The entries of $\bomega_1$ are sampled independently from $\text{Uniform}\left([-\log(5), -1]\cup[1,  \log(5)]\right)$, while all entries of $\bomega_2$ are fixed to be $0$. We set $\balpha^*$ at different signal levels as 
\begin{align*}
    \balpha^*(\rho) = \frac{3\rho}{100} \left[\bomega_1^\top, \bm{0}^\top\right]^\top,\quad \rho = 0,1,\dots, 5,
\end{align*}
where $\rho\in [0,5]$ controls the signal strength.
Additionally, $\bbeta^*$ and the covariate matrix are generated in a similar way as in the previous section. 

The comparison graph and results are generated with $p = 0.5$ and $L = 160$, and we fix $\lambda = 0.5$. Applying the approach presented in Section \ref{gofsection} \footnote{we let $\alpha = 0.05$ and estimate the critical value $c_{2, 1-\alpha}$ using $200$ bootstraps. Given $\balpha^*(\rho)$ at each signal level, the comparison graph $\mathcal{E}$ and data $\{y_{i,j}^{(\ell)},l\in[L],(i,j)\in\mathcal{E}\}$ are generated for $100$ times to calculate the power function $\widehat{P}(\mathcal{T}_1> c_{1, 1-\alpha})$}, we present the power functions in Figure {\red 3}. We conclude  from Figure {\red 3} that the Type I error is well controlled when the null hypothesis holds ($\rho=0$). When the alternative holds, as the signal level increases, the power of the test increases rapidly, and the empirical probability reaches $1$ when $\rho = 3$. This demonstrates the efficacy of our proposed method for conducting the good-of-fit test.

\begin{figure}[h]
\begin{center}
\includegraphics[width=12cm]{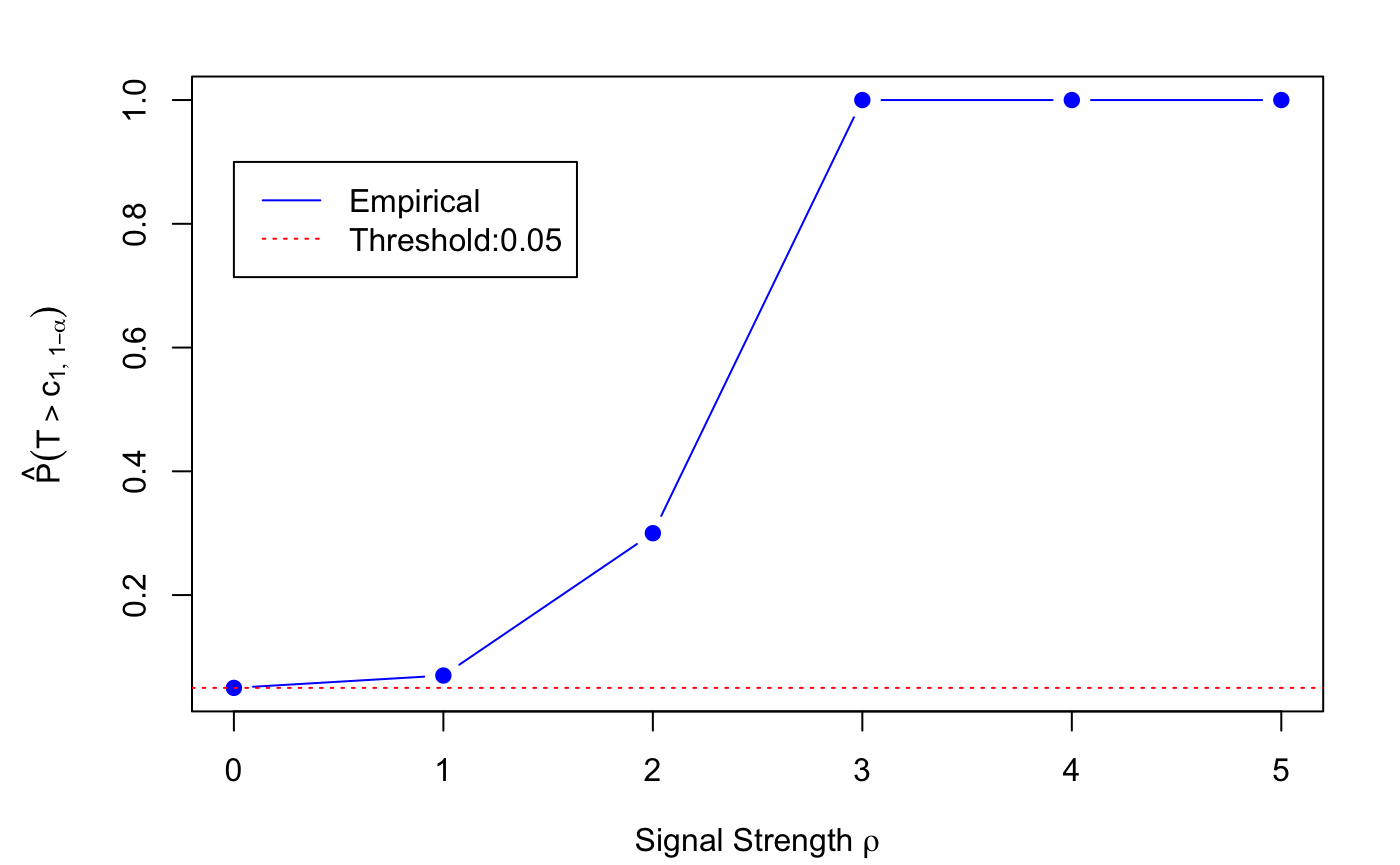}
\end{center}
\label{fig: GoFTest}
\caption{Empirical probability $\widehat{P}(\mathcal{T}_1> c_{1, 1-\alpha})$ at different signal level $\rho$ when $\alpha$ is fixed to be $0.05$. The empirical probability is calculated over $100$ repetitions for each signal level.} 
\end{figure}

\subsection{Rank Confidence Interval}
In this section, we study our out-of-sample ranking inferences application in Example \ref{RCI} and the corresponding bootstrap theory from Theorem \ref{T23thm} in detail using synthetic data. We let $n = 100$, $d = 3$ and let $k=|\cS(\balpha^*)|= 5$. For $i\in \cS(\ba^*)$, we generate $\alpha_i^*$ by sampling $|\alpha_i^*|\sim \text{Uniform}[0.3, 0.3\times \log(5)]$ with a random sign. For $\bbeta^*$ and the covariate matrix, they are generated in a similar way as in the previous sections.


The comparison graph is generated with $p = 0.5$ and $L = 160$. We are interested in six items with indices $T = \{1,2,3, 6,7,8\}$ as representatives to validate our method, containing $3$ items from $\cS(\ba^*)$ and $3$ items from $(\cS(\ba^*))^c$. For each $m\in T$, we apply our method in Example \ref{RCI} with $\cM = \{m\}$. The regularized estimator $\tb_\lambda$ is fitted with $\lambda=1$, and refitting with fixed support given by $\tb_\lambda$ yields the two-stage estimator $\widehat{\bgamma}$. We let $\alpha = 0.05$, and the critical value $c_{2,1-\alpha}$ is estimated by $200$ bootstrap samples.

The following Table {\red 1} summarizes the experiment results, where for both one-step and two-step estimators, we report: (i). $\text{EC}(r):$ the empirical coverage of the rank confidence interval $\widehat{P}(r(m)\in [\cR_L(m),\cR_U(m)])$, (ii). $\text{EC}(\theta):$ the empirical coverage of the simultaneous confidence intervals $\widehat{P}\left(\tilde{\theta}_k^* - \tilde{\theta}_m^*\in [\cC_L(k,m),\cC_U(k,m)], \forall k\neq m\right)$ (iii). the average length $\cR_U(m) - \cR_L(m)$ of the rank confidence interval, and the associated standard deviations. The empirical coverage  proportion and the mean and standard deviation of the length are calculated from $100$ repeated experiments.

\begin{table}[h!]
\begin{center}
\begin{tabular}{c||c|c|c||c|c|c}
\hline \multicolumn{1}{c||}{ } & \multicolumn{3}{c||}{\text{One-stage}} & \multicolumn{3}{c}{\text{Two-stage}}  \\
\hline item $m$ & $\mathrm{EC}(r)$ & $\mathrm{EC}(\theta)$ & Length & $\mathrm{EC}(r)$ & $\mathrm{EC}(\theta)$ & Length \\
\hline
$m = 1\; (r = 75)$ & 1 & 0.96 & 27.61 $\pm$ 2.97 & 1 & 0.94 & 13.0 $\pm$ 1.51 \\
$m = 2\; (r = 3)$ & 1 & 0.95 & 9.41 $\pm$ 2.67 & 1 & 0.93 & 4.23 $\pm$ 2.23 \\
$m = 3\; (r = 5)$ & 1 & 0.97 & 14.06 $\pm$ 2.44 & 1 & 0.94 & 8.35 $\pm$ 2.05 \\
$m = 6\; (r = 80)$ & 1 & 0.95 & 25.22 $\pm$ 2.03 & 1 & 0.88 & 3.68 $\pm$ 0.78 \\
$m = 7\; (r = 58)$ & 1 & 0.98 & 38.95 $\pm$ 4.01 & 1 & 0.89 & 6.03 $\pm$ 0.96 \\
$m = 8\; (r = 82)$ & 1 & 0.97 & 24.4 $\pm$ 2.03 & 1 & 0.89 & 3.56 $\pm$ 1.29 \\
\hline
\end{tabular}

\end{center}
\caption{Rank confidence interval for selected items  using one-stage and two-stage (post-selection MLE) approaches. $r$ represents the true rank of the selected items in this simulation experiment. $\text{EC}(r)$ is the the empirical coverage of the rank confidence interval and $\text{EC}(\theta)$ is the the empirical coverage of the score differences. The table also includes the average length of the rank confidence intervals and the associated standard deviation. Reported results are calculated over $100$ repetitions.}
\end{table}

\subsection{Application to Pokemon Challenge Dataset}

In this section, we apply our approaches to the Pokemon challenge dataset \url{https://www.kaggle.com/c/intelygenz-pokemon-challenge/data}. The dataset records $50000$ pairwise competitions among $800$ pokemons. Each pokemon is accompanied by a set of covariate information, and  each competition takes place between two pokemons and has one winner. We begin by utilizing our goodness-of-fit test in \S\ref{gofsection} to examine if covariates along can explain individual ability. Subsequently, we employ our predictive rank confidence interval approach to rank specific mega-evolved pokemon.

We assume that pokemons share the same intrinsic scores  $\balpha^*$ before and after mega evolution. Mega evolution only alters the covariates, thereby affecting the overall abilities of the pokemon. Consequently, it is natural to predict the abilities of mega-evolved pokemon using the combat results of their pre-evolutionary forms and their current covariates.

We randomly select $28$ mega evolved pokemons (out of a total of $48$ mega evolved pokemons) as our target. The remaining $800-28 = 772$ pokemons are left for training purpose. We constructed the comparison graph via the existing pairwise competitions. Since the graph was not connected, we selected the largest connected component as our training set to obtain a valid ranking result instead. Therefore, after this pre-processing step, we have $757$ pokemons left for training. For each pokemon, we consider a 3-dimensional covariates $\bx_i$ consisting of $\log(\emph{Attack})$, $\log(\emph{HP})$ and $\emph{Mega or not}$. \emph{Attack} and \emph{HP} represent the ability to attack and durability, respectively, while \emph{Mega or not} is a binary variable that denotes whether this pokemon is mega evolved or not. 

As the first step, we conduct the goodness-of-fit test in \S\ref{gofsection} to verify the existence of non-zero intrinsic scores. We consider five values of the regularization parameter $\lambda$: $\{10, 20, 30, 40, 50\}$ and conduct the test for each $\lambda$. In Table {\red 2}, we report the test statistic $\mathcal{T}_1$, the $\alpha = 0.05$ critical value $c_{1, 0.95}$, as well as the estimated support size $|\{i\in [757]:\widehat{\alpha}_i\neq 0\}|$ under each choice of $\lambda$. The critical value $c_{1, 0.95}$ is estimated by $200$ bootstrapping samples. From Table {\red 2} we can see that the null hypothesis is consistently rejected, and our approach is robust to the choice of regularization parameter $\lambda$. Note that in practice, one can choose $\lambda$ via cross-validation. 

\begin{table}[h]
	\begin{center}
		\begin{tabular}[!htbp]{ p{2cm}|p{2.5cm}p{2.5cm}p{5cm}}
			$\lambda$ & $\mathcal{T}_1$ &  $c_{1, 0.95}$ &  $|\{i\in [757]:\widehat{\alpha}_i\neq 0\}|$  \\
			\hline
			$10$ &$11.011$ &  $3.864$ & $505$\\ 
			$20$ &$9.571$ &  $4.374$ &  $277 $\\ 
			$30$ &$9.289$ &  $4.498$ & $86$ \\ 
            $40$ &$9.423$ &  $4.653$ & $4$\\ 
            $50$ &$9.464$ &  $4.552$ & $0$\\ 
		\end{tabular}
	\end{center}
 \label{goftable}
 \caption{Goodness-of-fit test statistics $\mathcal{T}_1$ and critical value $c_{1, 0.95}$, as well as support size under difference choice of regularization parameter $\lambda$.}
\end{table}

Next, we apply our approach to construct out-of-sample rank confidence intervals for the mega evolved pokemons. We fix the regularization parameter $\lambda = 30$ and fit your model to get $\widehat{\ba}_\lambda$. Next, we refit our model with support constrained on the support of $\widehat{\ba}_\lambda$ and obtain the two-stage estimator $\widehat{\bgamma}:=(\hat\balpha,\hat\bbeta)\in \mathbb{R}^{|\cS(\widehat{\ba}_\lambda)|+d}$. For each of the $28$ mega evolved pokemons with index $i\in  [28]$, we let pokemon with index $pe(i)$ be its pre-evolutionary version. If $pe(i)\in \cS(\widehat{\ba}_\lambda)$, the score of evolved pokemon $i \in [28]$ is predicted as $\widehat{\gamma}_{pe(i)}+ \boldsymbol{z}_i^\top\widehat{\bgamma}_{|\cS(\widehat{\ba}_\lambda)|+1:|\cS(\widehat{\ba}_\lambda)|+d}$, otherwise, the score of pokemon $i$ is predicted as $ \boldsymbol{z}_i^\top\widehat{\bgamma}_{|\cS(\widehat{\ba}_\lambda)|+1:|\cS(\widehat{\ba}_\lambda)|+d}$. On the other hand, we construct joint rank confidence interval with $\alpha = 0.05$ using the approach in \S\ref{oosRCIsection}. In Table {\red 3} we show the rank of the predicted scores and the associated confidence intervals. We only present the results for pokemons whose IDs are multiples of $5$ as representatives. The results regarding the lengths of the confidence intervals are consistent with our conclusions from the simulation results. 

\begin{table}[h]
	\begin{center}
		\begin{tabular}[!htbp]{p{3cm}|p{3cm}p{3cm}}
			Pokemon ID & One-Stage method  & Two-Stage method  \\
			\hline
			$20$ & $[3, 19]$ &$[7, 10]$\\ 
            $230$ & $[4, 23]$ &$[19, 19]$ \\
            $280$ & $[6, 23]$ &$[20, 21]$\\  
            $330$ & $[2, 17]$ &$[13, 13]$\\
            $340$ & $[10, 28]$ &$[12, 12]$\\
            $350$ & $[18, 28]$ &$[26, 26]$\\ 
            $410$ & $[10, 28]$ &$[23, 25]$\\
		\end{tabular}
	\end{center}
 \label{OfSRCItable}
 \caption{Rank confidence intervals for pokemons whose IDs are multiples of $5$ using One-Stage  and Two-Stage methods.}
\end{table}

\section{Conclusion}
In this paper, we study entity ranking with covariates as well as sparse intrinsic scores. We introduce a novel model identification condition and derive the optimal statistical rates of the regularized maximum likelihood estimator (MLE). We further design a debiased estimator of the MLE and derive its asymptotic distribution. Our proposed method is further applied to studying the goodness-of-fit test of the model with no intrinsic scores, and we construct prediction confidence intervals for future latent scores as well their associated confidence intervals for ranks.

There are several directions for future research that are worth exploring. First, while we focus on a linear model in this paper, it would be valuable to investigate more complex models for explaining the latent scores in the future. Second, we consider only one evaluation criterion (i.e., only a fixed $\beta^*$) for all items. It would be interesting to incorporate heterogeneous evaluation criteria into the models. Third, it would be interesting to see how to combine our ranking framework with online or offline decision making tasks such as connecting with reinforcement learning with human feedback.

\bibliographystyle{ims}
\bibliography{dynamic}

\newpage
\appendix

\appendix
\newpage 
\section{Additional applications}\label{sec:add_app}
We study the distribution of statistics
\begin{align*}
\cT_3 := \max_{m\in \cM}\max_{k\neq m}\frac{\widehat{\theta}_k - \widehat{\theta}_m - (\theta_k^* - \theta_m^*)}{\widehat{\sigma}_{m,k}}
\end{align*}
in order to construct one-sided rank confidence intervals. It's distribution can be approximated by the bootstrap counterpart
\begin{align*}
\cG_3 := \max_{m\in \cM}\max_{k\neq m}\sum_{l=1}^L \sum_{(i,j)\in \cE,i>j} \frac{(\tz_m-\tz_k)^\top(\nabla^2\cL(\tb_R))^\diamond (\tx_i-\tx_j)}{\widehat{\sigma}_{m,k}L}(\phi(\tx_i^\top\tb_R-\tx_j^\top\tb_R)-y_{j, i}^{(l)})\omega_{j, i}^{(l)}.
\end{align*}
We are able to achieve similar results as Theorem \ref{T23thm} for $\cT_3$ and $\cG_3$.
Next, we introduce some applications on constructing (simultaneous) one-sided confidence intervals for out-of-sample ranks via the distribution of $\cT_3$ in the following two examples.
\begin{example}
    For an item $m$ of interest, and let $K$ be the targeted rank threshold, we are interested in the following testing problem
    \begin{align}
        H_0: r(m)\leq K \quad \text{ versus }\quad  H_1:r(m)>K. \label{ranktest}
    \end{align}
    Let $\widehat{c}_{3,1-\alpha}$ be the estimated $(1-\alpha)$-th quantile of $\cT_3$ from the bootstrap samples. As a result, by a similar analysis of Theorem \ref{T23thm}, we have
    \begin{align*}
        P\left(\theta_k^*-\theta_m^*\geq \widehat{\theta}_k - \widehat{\theta}_m-\widehat{c}_{3,1-\alpha} \widehat{\sigma}_{m,k}\right) \geq 1-\alpha.
    \end{align*}
    Similarly, this implies 
    \begin{align*}
        P\left(r(m)\geq 1+\sum_{k\neq m}\mathbf{1}(\widehat{\theta}_k - \widehat{\theta}_m> \widehat{c}_{3,1-\alpha} \widehat{\sigma}_{m,k})\right) \geq 1-\alpha.
    \end{align*}
    This yields a critical region at a significance level of $alpha$ for the test \eqref{ranktest}
    \begin{align*}
        \left\{1+\sum_{k\neq m}\mathbf{1}(\widehat{\theta}_k - \widehat{\theta}_m> \widehat{c}_{3,1-\alpha} \widehat{\sigma}_{m,k})>K\right\}.
    \end{align*}
\end{example}

\begin{example}
    Given a number $K\in [n]$, we are interested in screening the top-$K$ ranked items, i.e., $\cK =\{r^{-1}(1), r^{-1}(2),\dots, r^{-1}(K)\}$. Let $\cM = [n]$ and $\widehat{c}_{3,1-\alpha}$ be the estimated $(1-\alpha)$-th quantile of $\cT_3$ from the bootstrap samples. Again by Theorem \ref{T23thm} we know that 
    \begin{align*}
        P\left(r(m)\geq 1+\sum_{k\neq m}\mathbf{1}(\widehat{\theta}_k - \widehat{\theta}_m> \widehat{c}_{3,1-\alpha} \widehat{\sigma}_{m,k}),\;\forall m\in [n])\right) \geq 1-\alpha.
    \end{align*}
    Therefore, we select the items as 
    \begin{align*}
        \widehat{\cI}_K = \left\{m\in [n]:1+\sum_{k\neq m}\mathbf{1}(\widehat{\theta}_k - \widehat{\theta}_m> \widehat{c}_{3,1-\alpha} \widehat{\sigma}_{m,k})\leq K\right\},
    \end{align*}
    and Theorem \ref{T23thm} ensures that 
    \begin{align*}
        P\left(\cK\subset \widehat{\cI}_K\right)\geq 1-\alpha.
    \end{align*}
\end{example}

\section{Proof Outline of Estimation Results}
\subsection{Preliminaries and Basic Results}
As the first step, let us look into the gradient and Hessian of the functions we are interested in. Except for $\cL(\cdot)$ and $\cL_R(\cdot)$, we also define
\begin{align*}
    \cL_\tau(\tb) := \cL(\tb)+\frac{\tau}{2}\left\|\tb\right\|_2^2.
\end{align*}
The gradient of $\cL_\tau(\cdot)$ is controlled by the following lemma.

\begin{lemma}\label{gradient}
With $\tau$ given by \eqref{lambdaandtau}, the following event 
\begin{align*}
    \mathcal{A}_1 = \left\{\left\|\nabla \mathcal{L}_{\tau}\left(\tb^*\right)\right\|_{2} \leq C_0\sqrt{\frac{n^{2} p \log n}{L}}\right\}
\end{align*}
happens with probability exceeding $1-O(n^{-10})$ for some $C_0>0$ which only depend on $c_\tau$.
\end{lemma}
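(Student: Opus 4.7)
\textbf{Proof proposal for Lemma \ref{gradient}.} The plan is to decompose $\nabla \mathcal{L}_\tau(\tb^*) = \nabla \mathcal{L}(\tb^*) + \tau \tb^*$ and bound the $\ell_2$ norm by separately controlling its first $n$ coordinates (corresponding to $\balpha$) and the last $d$ coordinates (corresponding to $\bb$). By the choice of $\tau \asymp \min\{\kappa_1/\kappa_2,\, 1/(\kappa_3\sqrt{d+1})\}\sqrt{\log n/(nL)}$, the deterministic contribution $\tau\|\tb^*\|_2$ is $O(\sqrt{n\log n/L})$, which is absorbed into the stated bound; so the work is to control $\|\nabla \mathcal{L}(\tb^*)\|_2$.

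First I would handle the $\balpha$-block. For each fixed $m \in [n]$,
\[
    [\nabla \mathcal{L}(\tb^*)]_m = \frac{1}{L}\sum_{j:\,(m,j)\in\mathcal{E}}\sum_{l=1}^L \Big(\phi(\tx_m^\top \tb^* - \tx_j^\top \tb^*) - y_{j,m}^{(l)}\Big),
\]
which, conditional on the comparison graph $\mathcal{G}$, is an average of independent bounded centered random variables. Use Bernstein's inequality together with the fact that the $m$-th vertex has degree at most $2np$ with probability $1-O(n^{-11})$ (Chernoff applied to a Binomial$(n-1, p)$, which is why the condition $np \gtrsim \log n$ is needed) to get $|[\nabla \mathcal{L}(\tb^*)]_m| \lesssim \sqrt{np\log n / L}$ with probability $1-O(n^{-11})$. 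A union bound over $m \in [n]$ yields $\|[\nabla\mathcal{L}(\tb^*)]_{1:n}\|_2^2 \lesssim n \cdot np\log n/L = n^2 p \log n / L$.

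Next I would handle the $\bb$-block via matrix Bernstein. Write
\[
    [\nabla\mathcal{L}(\tb^*)]_{n+1:n+d} = \frac{1}{L}\sum_{(i,j)\in\mathcal{E},\, i>j}\sum_{l=1}^L \underbrace{\bigl(\phi(\tx_i^\top\tb^* - \tx_j^\top\tb^*) - y_{j,i}^{(l)}\bigr)(\bx_i-\bx_j)}_{:=\,\bz_{i,j}^{(l)}}.
\]
Each $\bz_{i,j}^{(l)}$ is mean-zero with $\|\bz_{i,j}^{(l)}\|_2 \leq 2\sqrt{c_3(d+1)/n}$ after the rescaling of $\bx_i$. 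The matrix variance parameter satisfies
\[
    \Big\|\sum_{i>j,(i,j)\in\mathcal{E}}\sum_{l=1}^L \mathbb{E}[\bz_{i,j}^{(l)}\bz_{i,j}^{(l)\top}\mid\mathcal{G}]\Big\| \leq L \|\mathbf{L}_\mathcal{G}\| \lesssim Lnp
\]
using Lemma \ref{eigenlem}, and likewise the trace-type bound is $\lesssim (d+1)Lnp$. Matrix Bernstein then yields $\|[\nabla\mathcal{L}(\tb^*)]_{n+1:n+d}\|_2 \lesssim \sqrt{(d+1)np\log n / L}$ with probability $1-O(n^{-10})$.

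Finally, I would combine the two bounds. Since $d < n$, the $\balpha$-block dominates and $\|\nabla \mathcal{L}(\tb^*)\|_2^2 \lesssim n^2 p \log n / L + (d+1) np\log n/L \lesssim n^2p\log n/L$. Adding the regularization term $\|\tau \tb^*\|_2 \leq \tau \kappa_3 \sqrt{n} \lesssim \sqrt{p\log n/L} \cdot \sqrt{n}$ (by the choice of $\tau$) is of smaller order and gives the claim. The only delicate step is the matrix Bernstein application for the $\bb$-block, which requires both the Chernoff control on vertex degrees and careful use of Assumption \ref{assump0}; but these are standard once one sets up the conditioning on $\mathcal{G}$ cleanly.
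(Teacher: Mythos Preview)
Your proposal is correct and follows essentially the same approach as the paper, which defers to \cite[Lemma 14]{fan2022uncertainty} but whose argument is spelled out in the closely related Lemma~\ref{cLinfty}: coordinate-wise Bernstein for the $\balpha$-block, matrix Bernstein for the $\bb$-block, and absorbing the $\tau\tb^*$ term via the choice of $\tau$. One minor remark: the matrix Bernstein step for the $\bb$-block does not actually require Assumption~\ref{assump0} (incoherence), only the normalization $\|\bx_i\|_2\le\sqrt{(d+1)/n}$ and the spectral bound on $\boldsymbol{L}_{\mathcal{G}}$ from Lemma~\ref{eigenlem}.
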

\begin{proof}
The proof of Lemma \ref{gradient} follows a similar proof of \cite[Lemma 14]{fan2022uncertainty}. Therefore, we omit the details here.
\end{proof}
Let $\boldsymbol{L}_{\mathcal{G}} = \sum_{(i,j)\in\mathcal{E}, i>j}(\tx_i-\tx_j)(\tx_i-\tx_j)^\top$, its eigenvalues is studied in \cite[Lemma 15]{fan2022uncertainty}. We state it in the following lemma.

\begin{lemma}\label{eigenlem}
Suppose $pn > c_p\log n$ for some $c_p>0$. The following event
\begin{align*}
    \mathcal{A}_2 = \left\{\frac{1}{2}c_2pn\leq \lambda_{\text{min}, \perp}(\boldsymbol{L}_{\mathcal{G}})\leq \Vert \boldsymbol{L}_{\mathcal{G}}\Vert\leq 2c_1pn \right\}
\end{align*}
happens with probability exceeding $1-O(n^{-10})$ when $n$ is large enough.
\end{lemma}

In the rest of the content, without loss of generality, we assume the conditions stated in Lemma \ref{eigenlem} hold. 
Moreover, with the help of Lemma \ref{eigenlem}, we next analyze the Hessian $\nabla^2 \cL_\tau(\tb)$ and summarize its theoretical properties in Lemma \ref{ub} and Lemma \ref{lb}, respectively.

\begin{lemma}\label{ub}  Suppose event $\mathcal{A}_2$ holds, we obtain
\begin{align*}
    \lambda_{\text{max}}(\nabla^2 \mathcal{L}_\tau(\tb))\leq \tau +\frac{1}{2}c_1pn,\quad \forall \tb\in\mathbb{R}^{n+d}.
\end{align*}
\end{lemma}
\begin{proof}
Since $\displaystyle\frac{e^{\tx_i^\top \tb}e^{\tx_j^\top \tb}}{\left(e^{\tx_i^\top \tb}+e^{\tx_j^\top \tb}\right)^2}\leq \frac{1}{4}$, we have 
\begin{align*}
    \lambda_{\text{max}}(\nabla^2 \mathcal{L}_\tau(\tb))\leq \tau +\frac{1}{4}\Vert \boldsymbol{L}_{\mathcal{G}} \Vert\leq \tau +\frac{1}{2}c_1pn,\quad \forall \tb\in\mathbb{R}^{n+d}.
\end{align*}
\end{proof}
Lemma \ref{lb} can be viewed as the strong convexity property restricted on $\Theta(k)$.

\begin{lemma}\label{lb}
Suppose event $\mathcal{A}_2$ happens and $4c_0^2(d+1)k\leq n$. Then for all $\tb,\tb'\in\Theta(k)$ and $\tb''$ such that $\Vert\ba''-\ba^*\Vert_\infty\leq C_1$, $\Vert \bb''-\bb^*\Vert_2\leq C_2$, we have
\begin{align*}
    (\tb'-\tb)^\top\nabla^2 \mathcal{L}_\tau(\tb'')(\tb'-\tb)\geq \frac{1}{2}\left(\tau+\frac{c_2pn}{8\kappa_1 e^C}\right)\left\|\tb'-\tb\right\|_2^2,
\end{align*}
where $ C = 2C_1+2\sqrt{\frac{c_3(d+1)}{n}}C_2$.
\end{lemma}
\begin{proof}
	See \S\ref{lbproof} for a detailed proof.
\end{proof}

We then consider the following proximal gradient descent procedure. We set the step size $\eta = \frac{2}{2\tau+c_1 pn}$ and number of iterations $T=n^5$.
\begin{algorithm}[H]
\caption{Proximal Gradient descent for regularized MLE.}
\begin{algorithmic}
\STATE \textbf{Initialize} $\tb^0 = \tb^*$, step size $\eta$, number of iterations $T$

\FOR{$t=0,1,\dots, T-1$ } 
\STATE {$\tb^{t+1} = \textsf{SOFT}_{\eta\lambda}\left(\tb^t-\eta \nabla \mathcal{L}_\tau (\tb^t)\right)$} 
\ENDFOR
\end{algorithmic}
\end{algorithm}
\noindent Here we let $s(x,\gamma) := \text{sign}(x)\cdot\max\left\{|x|-\gamma,0\right\}$ and define 
\begin{align*}
    \textsf{SOFT}_{\gamma}(\tb) = \left[s(\tilde{\beta}_1,\gamma), s(\tilde{\beta}_2,\gamma),\dots, s(\tilde{\beta}_n,\gamma), \tb_{n+1:n+d}\right]^\top
\end{align*}
for any vector $\tb\in\mathbb{R}^{n+d}$. Since $\cL_\tau(\tb)$ is $\tau$-strongly convex, the above proximal gradient descent enjoys exponential convergence. It is formalized in the following results.
\begin{lemma}\label{lem2}
Under event $\mathcal{A}_2$, we have
\begin{align*}
    \left\Vert \tb^t-\tb_R\right\Vert_2\leq \rho^t\left\Vert \tb^0-\tb_R\right\Vert_2,
\end{align*}
where $\displaystyle\rho = 1-\eta\tau$.
\end{lemma}
\begin{proof}
	See \S\ref{lem2proof} for a detailed proof.
\end{proof}

\begin{lemma}\label{lem3}
On the event $\mathcal{A}_1$ happens, it follows that
\begin{align*}
    \left\Vert \tb^0-\tb_R\right\Vert_2 = \left\Vert \tb_R -\tb^*\right\Vert_2 \leq \frac{2C_0\sqrt{n}}{c_\tau}\max\left\{\frac{\kappa_2}{\kappa_1},\kappa_3\right\}+ \sqrt{\frac{2c_\lambda\sqrt{d+1}}{c_\tau}\max\left\{\kappa_2^2,\kappa_1\kappa_2\kappa_3\right\}}.
\end{align*}
\end{lemma}
\begin{proof}
	See \S\ref{lem3proof} for a detailed proof.
\end{proof}

\begin{lemma}\label{lem5}
	On event $\mathcal{A}_1 \cap \mathcal{A}_2$, there exists some constant $C_7$ such that 
\begin{align*}
    \max\left\{\left\Vert \tb^{T-1}-\tb_R\right\Vert_2, \left\Vert \tb^T-\tb_R\right\Vert_2\right\} \leq C_7\kappa_1\sqrt{\frac{(d+1)\log n}{npL}}.
\end{align*}
\end{lemma}
\begin{proof}
	See \S\ref{lem5proof} for a detailed proof.
\end{proof}

Next, we will leverage the leave-one-out technique and use induction to prove that the iterate $\tb^t$ stays close to the initial point $\tb^0=\tb^*$ during all the iterations $t=0, 1, 2,\dots, T-1$. 

\subsection{Analysis of Leave-one-out Sequences}
In this section, we construct the leave-one-out sequences \citep{ma2018implicit,chen2019spectral,chen2020noisy} and bound the statistical error by induction. We consider the following loss function for any $m\in[n]$ to construct the leave-one-out sequence.
\begin{align*}
    \mathcal{L}^{(m)}(\tb) =& \sum_{(i, j) \in \mathcal{E}, i>j, i\neq m, j\neq m}\left\{-y_{j, i}\left(\tx_i^\top\tb-\tx_j^\top\tb\right)+\log \left(1+e^{\tx_i^\top\tb-\tx_j^\top\tb}\right)\right\}\\
    &+p\sum_{i\neq m}\left\{-\frac{e^{\tx_i^\top\tb^*}}{e^{\tx_i^\top\tb^*}+e^{\tx_m^\top\tb^*}}(\tx_i^\top\tb-\tx_m^\top\tb) + \log \left(1+e^{\tx_i^\top\tb-\tx_m^\top\tb}\right)\right\} ;\\
    \mathcal{L}_\tau^{(m)}(\tb)=&\mathcal{L}^{(m)}(\tb)+\frac{\tau}{2}\Vert\tb\Vert_2^2.
\end{align*}
Then for any $m\in[n]$, we construct the leave-one-out sequence $\left\{\tb^{t,(m)}\right\}_{t = 0,1,\dots}$ in the way of Algorithm \ref{alg2}.

\begin{algorithm}[H]
\caption{Construction of leave-one-out sequences.}
	\begin{algorithmic}[1]		
		\STATE \textbf{Initialize} $\tb^{0,(m)} = \tb^*$
        \FOR{$t=0,1,\dots, T-1$ } 
        \STATE {$\tb^{t+1,(m)} = \textsf{SOFT}_{\eta\lambda}\left(\tb^{t,(m)}-\eta \nabla \mathcal{L}^{(m)}_\tau \left(\tb^{t,(m)}\right)\right)$} 
        \ENDFOR
	\end{algorithmic}\label{alg2}
\end{algorithm}
With the help of the leave-one-out sequences, we do induction to demonstrate that the iterate $\tb^T$ will not be far away from $\tb^*$ when $T=n^5.$ With the leave-one-out sequences in hand, we prove the following bounds by induction for $t \leq T$. 
\begin{align}
    \left\Vert \tb^t-\tb^*\right\Vert_2&\leq C_3\kappa_1\sqrt{\frac{\log n}{pL}};\tag{A} \label{inductionA}\\
    \max_{1\leq m\leq n}\left\Vert\tb^t- \tb^{t,(m)}\right\Vert_2&\leq C_4\kappa_1\sqrt{\frac{(d+1)\log n}{npL}}\leq C_4\kappa_1\sqrt{\frac{\log n}{pL}}; \tag{B}\\
    \forall m\in [n],\quad |\alpha_m^{t,(m)}-\alpha_m^*|&\leq C_5\kappa_1^2\sqrt{\frac{(d+1)\log n}{npL}},\;\; \cS(\ba^{t,(m)})\subset \cS(\ba^*);  \tag{C} \\
    \Vert \ba^t-\ba^*\Vert_\infty&\leq C_6 \kappa_1^2\sqrt{\frac{(d+1)\log n}{npL}} ,\;\; \cS(\ba^t)\subset \cS(\ba^*).\tag{D}\label{inductionD}
\end{align}
For $t = 0$, since $\tb^0 = \tb^{0,(1)} = \tb^{0,(2)} = \dots = \tb^{0,(n)} = \tb^*$, the \eqref{inductionA}$\sim$ \eqref{inductionD} hold automatically. In the following lemmas, we prove the conclusions of \eqref{inductionA}-\eqref{inductionD} for the $(t+1)$-th iteration are true when the results hold for the $t$-th iteration.

\begin{lemma}\label{induction1}
Suppose bounds \eqref{inductionA}$\sim$ \eqref{inductionD} hold for the $t$-th iteration. With probability exceeding $1-O(n^{-11})$ we have
\begin{align*}
    \left\Vert \tb^{t+1}-\tb^*\right\Vert_2&\leq C_3\kappa_1\sqrt{\frac{\log n}{pL}},
\end{align*}
as long as $\displaystyle 0<\eta\leq \frac{2}{2\tau+c_1np}$, $\displaystyle C_3\geq \frac{40C_0}{c_2}$, $\displaystyle k(d+1)\leq \frac{c_2^2 C_3^2 n}{1600 c_\lambda^2\kappa_1^2}$ and $n$ is large enough.
\end{lemma}
\begin{proof}
	See \S\ref{induction1proof} for a detailed proof.
\end{proof}

\begin{lemma}\label{induction3}
Suppose bounds \eqref{inductionA}$\sim$ \eqref{inductionD} hold for the $t$-th iteration. With probability exceeding $1-O(n^{-11})$ we have
\begin{align*}
    \max_{1\leq m\leq n}\left\Vert\tb^{t+1}- \tb^{t+1,(m)}\right\Vert_2&\leq C_4\kappa_1\sqrt{\frac{(d+1)\log n}{npL}},
\end{align*}
as long as $\displaystyle 0<\eta\leq \frac{2}{2\lambda+c_1np}$, $\displaystyle C_4\gtrsim \frac{1}{c_2}$ and $np\gtrsim (d+1)\log n$.
\end{lemma}
\begin{proof}
	See \S\ref{induction3proof} for a detailed proof.
\end{proof}

\begin{lemma}\label{induction2}
Suppose bounds \eqref{inductionA}$\sim$ \eqref{inductionD} hold for the $t$-th iteration. With probability exceeding $1-O(n^{-11})$ we have
\begin{align*}
    \forall m\in [n],\quad |\alpha_m^{t+1,(m)}-\alpha_m^*|&\leq C_5\kappa_1^2\sqrt{\frac{(d+1)\log n}{npL}},\;\; \cS(\ba^{t+1,(m)})\subset\cS(\ba^*),
\end{align*}
as long as $C_5\geq 30c_0(C_0+c_1C_3+c_1C_4)$, $C_5\geq 7.5(1+2\sqrt{c_3})(C_3+C_4)$, $C_5\geq 30c_\tau/\sqrt{d+1}$, $c_\lambda\geq \frac{3\sqrt{c_3}}{4}(C_3+C_4)+\frac{2C_4}{\eta np}$ and $n$ is large enough.
\end{lemma}
\begin{proof}
	See \S\ref{induction2proof} for a detailed proof.
\end{proof}

\begin{lemma}\label{induction4}
Suppose bounds \eqref{inductionA}$\sim$ \eqref{inductionD} hold for the $t$-th iteration. With probability exceeding $1-O(n^{-11})$ we have
\begin{align*}
    \Vert \ba^{t+1}-\ba^*\Vert_\infty&\leq C_6\kappa_1^2\sqrt{\frac{(d+1)\log n}{npL}},\;\; \cS(\ba^{t+1})\subset\cS(\ba^*),
\end{align*}
as long as $C_6\geq C_4+C_5$, $c_\lambda\geq \frac{3\sqrt{c_3}}{4}(C_3+C_4)+\frac{2C_4}{\eta np}$ and $n$ is large enough.
\end{lemma}
\begin{proof}
	See \S\ref{induction4proof} for a detailed proof.
\end{proof}

\begin{lemma}\label{supportlemma}
With probability exceeding $1-O(n^{-6})$ we have
\begin{align*}
    \cS(\widehat\ba_R)\subset \cS(\ba^*).
\end{align*}
as long as $c_\lambda\geq \frac{3\sqrt{c_3}}{4}(C_3+C_4)+\frac{C_4+C_7}{\eta np}$.
\end{lemma}
Combine Lemma \ref{lem5}, Lemma \ref{induction4} and Lemma \ref{supportlemma} gives us Theorem \ref{estimationthm}.
\begin{proof}
	See \S\ref{supportlemmaproof} for a detailed proof.
\end{proof}

\section{Proof Outline of Uncertainty Quantification Results}
Let $\ocL(\tb)$ be the quadratic expansion of the loss function $\cL(\tb)$ around $\tb^*$ given by
\begin{align}\label{quadratic_expan}
    \ocL(\tb) = \cL(\tb^*)+\left(\tb-\tb^*\right)^\top\nabla \cL(\tb^*)+\frac{1}{2}\left(\tb-\tb^*\right)^\top\nabla^2 \cL(\tb^*)\left(\tb-\tb^*\right).
\end{align}
Correspondingly, we define
\begin{align*}
    \ocL_R(\tb) = \ocL(\tb)+\lambda \left\|\ba\right\|_1+ \frac{\tau}{2}\left\|\tb\right\|_2^2 \text{ and } \ob_R = \argmin_{\tb\in\mathbb{R}^{n+d}} \ocL_R(\tb).
\end{align*}
First we state the following lemma for $\ob_R$.
\begin{lemma}\label{oaestimation}
For $\lambda$ and $\tau$ defined in Eq.~\eqref{lambdaandtau}, as long as $np\geq C\kappa_1^2(k+d)$ and $n\geq C\kappa_1^2(k+d)(d+1)$ for some constant $C>0$, with probability at least $1-O(n^{-6})$, we have $\cS(\oa_R)\subset \cS(\ba^*)$ and
\begin{align*}
    \left\|\ob_R-\tb^*\right\|_\infty\lesssim \kappa_1^2\sqrt{\frac{(k+d)(d+1)\log n}{npL}}, \quad  \left\|\ob_R-\tb^*\right\|_2\lesssim \kappa_1^2\sqrt{\frac{(k+d)(d+1)\log n}{npL}}.
\end{align*}
\end{lemma}
\begin{proof}
    See \S \ref{oaestimationproof} for a detailed proof.
\end{proof}

With Lemma \ref{oaestimation} in hand, we control the difference $\|\tb_R-\ob_R\|_2$ then.
\begin{theorem}\label{residue2norm}
    For $\lambda$ and $\tau$ defined in Eq.~\eqref{lambdaandtau}, as long as $np\geq C\kappa_1^2(k+d)$ and $n\geq C\kappa_1^2(k+d)(d+1)$ for some constant $C>0$, with probability at least $1-O(n^{-6})$, we have 
\begin{align*}
    \left\|\tb_R-\ob_R\right\|_2&\lesssim \kappa_1^{3.5}\left(\frac{(k+d)(d+1)\log n}{npL}\right)^{3/4}.
\end{align*}
\end{theorem}
\begin{proof}
    See \S \ref{residue2normproof} for a detailed proof.
\end{proof}

Next, we introduce the debiased version of $\tb_R$ and $\ob_R$ and prove the corresponding approximation results. Given a vector $\mathbf{x}\in \mathbb{R}^{n+d}$ and a function $\mathcal{M}:\mathbb{R}^{n+d}\rightarrow \mathbb{R}$, we define 
\begin{align*}
    \mathcal{M}_{\mathbf{x}_{-i}}(x) = \mathcal{M}(\tb)\bigg|_{\tb_i=x, \tb_{-i}=\mathbf{x}_{-i}}.
\end{align*}
With this definition, one can see that given any $i\in [n]$, we have
\begin{align*}
    \widehat{\alpha}_{R,i} = \argmin \cL_{R,\tb_{R, -i}}(x),\quad \overline{\alpha}_{R,i} = \argmin \ocL_{R, \ob_{R, -i}}(x).
\end{align*}
Take $\overline{\alpha}_{R,i}$ as an example first. From the derivative we know that
\begin{align*}
    0 &= \ocL_{R,\ob_{R, -i}}'(\overline{\alpha}_{R,i}) = \ocL_{\ob_{R, -i}}'(\overline{\alpha}_{R,i}) +\tau \overline{\alpha}_{R,i}+\lambda\partial |\overline{\alpha}_{R,i}| \\
    & = \ocL_{\ob_{R, -i}}'(\alpha_{i}^*) + \ocL_{\ob_{R, -i}}''(\alpha_{i}^*)(\overline{\alpha}_{R,i}-\alpha_i^*) +\tau \overline{\alpha}_{R,i}+\lambda\partial |\overline{\alpha}_{R,i}| \\
    & = \ocL_{\ob_{R, -i}}'(\alpha_{i}^*) + \left(\nabla^2\cL(\tb^*)\right)_{i,i}(\overline{\alpha}_{R,i}-\alpha_i^*) +\tau \overline{\alpha}_{R,i}+\lambda\partial |\overline{\alpha}_{R,i}|.
\end{align*}
In other words, we can write
\begin{align}
    \overline{\alpha}_{R,i}+\frac{\tau \overline{\alpha}_{R,i}+\lambda\partial |\overline{\alpha}_{R,i}|}{\left(\nabla^2\cL(\tb^*)\right)_{i,i}} = \alpha^*_i - \ocL_{\ob_{R, -i}}'(\alpha_{i}^*)/\left(\nabla^2\cL(\tb^*)\right)_{i,i}. \label{oadexpansion}
\end{align}
Therefore, we define the debiased estimator as
\begin{align*}
     \overline{\alpha}_{R,i}^{\debias} = \overline{\alpha}_{R,i}+\frac{\tau \overline{\alpha}_{R,i}+\lambda\partial |\overline{\alpha}_{R,i}|}{\left(\nabla^2\cL(\tb^*)\right)_{i,i}},
\end{align*}
similar to $\widehat{\alpha}_{R,i}^{\debias}$ we defined in \eqref{debiasedestimator}. Then in the following content, we focus on controlling the difference $|\widehat{\alpha}_{R,i}^{\debias} - \overline{\alpha}_{R,i}^{\debias}|$. In order to do so, we construct an auxiliary function and consider its minimizer
\begin{align}
    \dot{\alpha}_{R,i} = \argmin \ocL_{R,\tb_{R, -i}}(x). \label{dotalphadefinition}
\end{align}
Again, we define the debiased estimator as
\begin{align*}
    \dot{\alpha}_{R,i}^{\debias} = \dot{\alpha}_{R,i}+\frac{\tau \dot{\alpha}_{R,i}+\lambda\partial |\dot{\alpha}_{R,i}|}{\left(\nabla^2\cL(\tb^*)\right)_{i,i}}.
\end{align*}
Next, we control $|\dot{\alpha}_{R,i}^{\debias} - \overline{\alpha}_{R,i}^{\debias}|$ and $|\widehat{\alpha}_{R,i}^{\debias} - \dot{\alpha}_{R,i}^{\debias}|$ separately. 

\begin{lemma}\label{alphadebiasapprox1}
Under the conditions of Theorem \ref{residue2norm}, with probability at least $1-O(n^{-6})$ we have
\begin{align*}
    | \dot{\alpha}_{R,i}^{\debias} - \overline{\alpha}_{R,i}^{\debias}|\lesssim \frac{\kappa_1^{4.5}(d+1)}{np}\sqrt{\frac{k\log n}{L}}\left(\frac{(k+d)(d+1)\log n}{npL}\right)^{1/4}.
\end{align*}
\end{lemma}
\begin{proof}
See \S \ref{alphadebiasapprox1proof} for a detailed proof.
\end{proof}

\begin{lemma}\label{alphadebiasapprox2}
Under the conditions of Theorem \ref{residue2norm}, with probability at least $1-O(n^{-6})$ we have
\begin{align*}
     |\widehat{\alpha}_{R,i}^{\debias} - \dot{\alpha}_{R,i}^{\debias}|\lesssim \kappa_1^6 \frac{(d+1)\log n}{npL}
\end{align*}
\end{lemma}
\begin{proof}
    See \S \ref{alphadebiasapprox2proof} for a detailed proof.
\end{proof}
From \eqref{oadexpansion} we already know that 
\begin{align*}
    \overline{\alpha}_{R,i}^\debias = \alpha^*_i - \ocL_{\ob_{R, -i}}'(\alpha_{i}^*)/\left(\nabla^2\cL(\tb^*)\right)_{i,i}.
\end{align*}
To get the asymptotic distribution of $\overline{\alpha}_{R,i}^\debias$, we approximate $\overline{\alpha}_{R,i}^\debias$ by $\alpha^*_i - \left(\nabla\cL(\tb^*)\right)_i/\left(\nabla^2\cL(\tb^*)\right)_{i,i}$. The following lemma ensures the approximation error is small.
\begin{lemma}\label{alphadebiasapprox3}
Under the conditions of Theorem \ref{residue2norm}, with probability at least $1-O(n^{-6})$ we have
\begin{align*}
    \left|\overline{\alpha}_{i,R}^\debias - \left(\alpha^*_i - \left(\nabla\cL(\tb^*)\right)_i/\left(\nabla^2\cL(\tb^*)\right)_{i,i}\right)\right|\lesssim \frac{\kappa_1^3(d+1)}{np}\sqrt{\frac{(k+d)\log n}{L}}.
\end{align*}
\end{lemma}
\begin{proof}
    See \S \ref{alphadebiasapprox3proof} for a detailed proof.
\end{proof}

Next, we consider the expansion of $\widehat{\bb}_R$. Given a vector $\ba\in \mathbb{R}^{n}$, a function $\mathcal{M}:\mathbb{R}^{n+d}\rightarrow \mathbb{R}$ and a vector $\bb\in \mathbb{R}^d$, we define 
\begin{align*}
    \mathcal{M}_{\ba}(\bb) = \mathcal{M}(\tb)\bigg|_{\tb_{1:n}=\ba, \tb_{n+1:n+d}=\bb}.
\end{align*}
Therefore, it is easy to see that
\begin{align*}
    \widehat{\bb}_R = \argmin \cL_{R,\widehat{\ba}_R}(\bb),\quad \overline{\bb}_{R,n+1:n+d} = \argmin \ocL_{R,\oa_R}(\bb).
\end{align*}
By the optimality condition we know that
\begin{align*}
    \boldsymbol{0}_{d} &= \nabla\ocL_{R,\oa_R}(\overline{\bb}_{R,n+1:n+d}) = \tau\overline{\bb}_{R,n+1:n+d}+  \nabla \ocL_{\oa_R}(\overline{\bb}_{R,n+1:n+d}) \\
    &=\tau\overline{\bb}_{R,n+1:n+d}+ \left(\nabla\cL(\tb^*)\right)_{n+1:n+d} + \left(\nabla^2\cL(\tb^*)\right)_{n+1:n+d, :}(\overline{\bb}_{R} - \bb^*).
\end{align*}
Reorganizing the terms we get
\begin{align*}
    \overline{\bb}_{R, n+1:n+d} = \left(\boldsymbol{A}+\tau\bI_d\right)^{-1}\left(\boldsymbol{A} \bb^*- \left(\nabla\cL(\tb^*)\right)_{n+1:n+d} - \boldsymbol{B}\left(\oa_R-\ba^*\right)\right),
\end{align*}
 where $\bA:= (\nabla^2\cL(\tb^*))_{n+1:n+d, n+1:n+d}$ and $\bB:= (\nabla^2\cL(\tb^*))_{n+1:n+d, 1:n}$. Inspired by this, we debias $\widehat{\bb}_R$ and $\overline{\bb}_{R, n+1:n+d}$ as
\begin{align}
    \widehat{\bb}_R^\debias = \bA^{-1}\left(\boldsymbol{A}+\tau\bI_d\right)\widehat{\bb}_R,\quad \overline{\bb}_{R, n+1:n+d}^\debias = \bA^{-1}\left(\boldsymbol{A}+\tau\bI_d\right)\overline{\bb}_{R, n+1:n+d}. \label{betadebiasdefinition}
\end{align}
In order to analyze the asymptotic distribution of $$\overline{\bb}_{R, n+1:n+d}^\debias = \bb^*- \bA^{-1}\left(\nabla\cL(\tb^*)\right)_{n+1:n+d} - \bA^{-1}\boldsymbol{B}\left(\oa_R-\ba^*\right),$$ we approximate it by $\bb^*- \bA^{-1}\left(\nabla\cL(\tb^*)\right)_{n+1:n+d}$. The following lemma controls the approximation error.
\begin{lemma}\label{betadebiasapprox}
Under the conditions of Theorem \ref{residue2norm}, with probability at least $1-O(n^{-6})$ we have
\begin{align*}
    \left\|\overline{\bb}_{R, n+1:n+d}^\debias -\left( \bb^*- \bA^{-1}\left(\nabla\cL(\tb^*)\right)_{n+1:n+d}\right)\right\|_2\lesssim \frac{\kappa_1^3(d+1)}{np}\sqrt{\frac{kd(k+d)\log n}{L}}.
\end{align*}
\end{lemma}
\begin{proof}
    See \S\ref{betadebiasapproxproof} for a detailed proof.
\end{proof}

On the other hand, as long as $\tau$ is sufficiently small, the difference between $\widehat{\bb}_R$ and $\widehat{\bb}_R^\debias$ is also very small, which means that there is no need to debias $\widehat{\bb}_R$. We have the following result.
\begin{lemma}\label{betaapprox}
Under the conditions of Theorem \ref{residue2norm}, with probability at least $1-O(n^{-6})$ we have
\begin{align*}
    \left\|\widehat{\bb}_R - \widehat{\bb}_R^\debias\right\|_2 \lesssim \frac{\kappa_1}{np}\sqrt{\frac{\log n}{L}}.
\end{align*}    
\end{lemma}

\begin{proof}
By the definition of the debiased estimator \eqref{betadebiasdefinition} we have
\begin{align*}
    \left\|\widehat{\bb}_R - \widehat{\bb}_R^\debias\right\|_2 = \left\|\widehat{\bb}_R - \bA^{-1}\left(\boldsymbol{A}+\tau\bI_d\right)\widehat{\bb}_R\right\|_2 = \tau\left\|\bA^{-1}\widehat{\bb}_R\right\|_2\lesssim \frac{\tau\kappa_1}{np}\left\|\widehat{\bb}_R\right\|_2\lesssim \frac{\kappa_1}{np}\sqrt{\frac{\log n}{L}}.
\end{align*}
\end{proof}

We combine the aforementioned results in this section and get the following expansion for $\widehat{\alpha}_{R,i}^{\debias}$ and $\hat{\bb}_{R}$.

\begin{theorem}\label{approxthm}
Under the conditions of Theorem \ref{residue2norm}, with probability at least $1-O(n^{-6})$ we have
\begin{align*}
    &\left|\widehat{\alpha}_{R,i}^{\debias} - \left(\alpha^*_i - \left(\nabla\cL(\tb^*)\right)_i/\left(\nabla^2\cL(\tb^*)\right)_{i,i}\right)\right| \\
    \lesssim & \frac{\kappa_1^3(d+1)}{np}\left(\frac{\kappa_1^3\log n}{L}+\sqrt{\frac{(k+d)\log n}{L}}\right), \\
    &\left\|\hat{\bb}_{R} -\left( \bb^*- \bA^{-1}\left(\nabla\cL(\tb^*)\right)_{n+1:n+d}\right)\right\|_2\\
    \lesssim& \frac{\kappa_1^3(d+1)}{np}\sqrt{\frac{kd(k+d)\log n}{L}} + \kappa_1^{4.5}\left(\frac{(k+d)(d+1)\log n}{npL}\right)^{3/4}.
\end{align*}    
\end{theorem}
\begin{proof}
    See \S\ref{approxthmproof} for a detailed proof.
\end{proof}

\subsection{Proof of the two-stage method in Section \ref{sec:distribution}}\label{proof_twostage}

We state the estimation error bounds for $\widehat{\bgamma}$ here. We define $\bgamma^* = \tb^*_{[n+d]\backslash \cS(\ba^*)}$.

\begin{lemma}\label{estimation_eror}
    Given $\cS(\widehat{\ba}_R) = \cS(\ba^*)$ and the aforementioned two stage estimator $\widehat{\bgamma}$, as long as $npL\gtrsim \kappa_1^2(k+d)\log n$, with probability exceeding $1-O(n^{-10})$, we have
    \begin{align*}
        \left\|\widehat{\bgamma}-\bgamma^*\right\|_2\lesssim \kappa_1\sqrt{\frac{(k+d)\log n}{npL}}.
    \end{align*}
\end{lemma}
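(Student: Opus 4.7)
\textbf{Proposal for Lemma \ref{estimation_eror}.} The plan is to treat $\widehat{\bgamma}$ as an unregularized MLE in the reduced $(k+d)$-dimensional problem $\tilde{\cL}$ and combine (i) restricted strong convexity of $\tilde{\cL}$ around $\bgamma^*$ with (ii) a direct high-probability bound on $\|\nabla\tilde{\cL}(\bgamma^*)\|_2$. Since $\widehat{\bgamma}=\argmin\tilde{\cL}$, the first-order condition gives $\nabla\tilde{\cL}(\widehat{\bgamma})=\mathbf{0}$, so by the fundamental theorem of calculus
\begin{align*}
-\nabla\tilde{\cL}(\bgamma^*) \;=\; \Bigl(\int_0^1 \nabla^2\tilde{\cL}(\bgamma^*+t(\widehat{\bgamma}-\bgamma^*))\,dt\Bigr)(\widehat{\bgamma}-\bgamma^*).
\end{align*}
If along the whole path the Hessian satisfies $\nabla^2\tilde{\cL}(\bgamma)\succeq c\,np/\kappa_1\cdot \bI_{k+d}$, then $\|\widehat{\bgamma}-\bgamma^*\|_2\lesssim (\kappa_1/np)\,\|\nabla\tilde{\cL}(\bgamma^*)\|_2$, which plugged into the target rate yields exactly $\kappa_1\sqrt{(k+d)\log n/(npL)}$.

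For the gradient bound, I would observe that $\nabla\tilde{\cL}(\bgamma^*)$ is nothing but the sub-vector of $\nabla\cL(\tb^*)$ indexed by $\cS(\ba^*)\cup\{n+1,\dots,n+d\}$. Applying the coordinate-wise Bernstein estimate of Lemma \ref{cLinfty}/\ref{Lilemma} to the $|\cS(\ba^*)|\le k$ $\alpha$-coordinates, together with the matrix Bernstein bound $\|(\nabla\cL(\tb^*))_{n+1:n+d}\|_2\lesssim \sqrt{(d+1)np\log n/L}$, gives
\begin{align*}
\|\nabla\tilde{\cL}(\bgamma^*)\|_2 \;\lesssim\; \sqrt{k\cdot\frac{np\log n}{L}} + \sqrt{\frac{(d+1)np\log n}{L}} \;\lesssim\; \sqrt{\frac{(k+d)np\log n}{L}}
\end{align*}
with probability at least $1-O(n^{-10})$.

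For the Hessian lower bound, any vector in the reduced parameter space corresponds to a $\tb\in\mathbb{R}^{n+d}$ whose first $n$ coordinates are supported on $\cS(\ba^*)$, so $\tb\in\Theta(k)$; Lemma \ref{lb} therefore applies and delivers $\nabla^2\tilde{\cL}(\bgamma)\succeq (c_2/16\kappa_1)\,np\cdot \bI_{k+d}$ on the restricted space, provided $\bgamma$ stays in the $\ell_\infty$-plus-$\ell_2$ neighborhood of $\bgamma^*$ required there. This last requirement is the main obstacle and the part that needs the most care. To resolve it, my plan is a short bootstrap/continuity argument: under the event $\cS(\widehat{\ba}_R)=\cS(\ba^*)$ the restriction of $\tb_R$ to the selected coordinates is a feasible point of the reduced problem, and Theorem \ref{estimationthm} (and Lemma \ref{oaestimation}) already forces $\|\tb_R-\tb^*\|_\infty$ and $\|\bb_R-\bb^*\|_2$ to be $o(1)$ under $npL\gtrsim\kappa_1^2(k+d)\log n$. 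Using the global convexity of $\tilde{\cL}$ together with this warm start, one shows that $\widehat{\bgamma}$ itself lies in a constant-radius ball around $\bgamma^*$ (e.g., by a basic inequality $\tilde{\cL}(\widehat{\bgamma})\le\tilde{\cL}(\bgamma^*)$ and a quadratic lower bound at the ball boundary), which justifies using the restricted strong convexity along the segment $[\bgamma^*,\widehat{\bgamma}]$.

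Combining these pieces yields the announced rate. The sample-complexity hypothesis $npL\gtrsim \kappa_1^2(k+d)\log n$ enters precisely where we need the final bound $\|\widehat{\bgamma}-\bgamma^*\|_2 \lesssim \kappa_1\sqrt{(k+d)\log n/(npL)}$ to be small enough for the neighborhood argument to close, and to match the conditions $4c_0^2(d+1)k\le n$ implicitly used by Lemma \ref{lb}.
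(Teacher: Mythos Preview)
Your proposal is correct and follows essentially the same route as the paper: both use Lemma \ref{lb} for restricted strong convexity of $\tilde{\cL}$ with parameter $\asymp np/\kappa_1$, and Lemma \ref{cLinfty} to bound $\|\nabla\tilde{\cL}(\bgamma^*)\|_2\lesssim\sqrt{(k+d)np\log n/L}$, then combine. The paper simply packages the combination step by invoking \cite[Corollary A.1]{fan2020factor} with localization radius $A=1$, whereas you unpack that corollary by hand; your detour through the first-stage estimate $\tb_R$ as a warm start is unnecessary---the basic-inequality/boundary argument you mention in the parenthetical already suffices on its own and is exactly what the cited corollary does.
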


\begin{proof}
    We apply the \cite[Corollary A.1]{fan2020factor} directly. Let $A = 1$, by Lemma \ref{lb} we know that the conditions in \cite[Corollary A.1]{fan2020factor} hold with $\kappa\gtrsim np/\kappa_1$. On the other hand, by Lemma \ref{cLinfty} we know that 
    \begin{align*}
        \left\|\nabla \tilde{\cL}(\bgamma^*)\right\|_2\lesssim \sqrt{\frac{(k+d)np\log n}{L}}.
    \end{align*}
    As a result, as long as $npL\gtrsim \kappa_1^2(k+d)\log n$, we have 
    \begin{align*}
        \left\|\widehat{\bgamma}-\bgamma^*\right\|_2\lesssim \kappa_1\sqrt{\frac{(k+d)\log n}{npL}}.
    \end{align*}
    
\end{proof}

We next approximate the estimator $\widehat{\bgamma}$ by the minimizer of the quadratic approximation of $\tilde{\cL}$. Specifically, we define
\begin{align*}
    \tilde{\ocL}(\bgamma) = \tilde{\cL}(\bgamma^*) + \nabla\tilde{\cL}(\bgamma^*)^\top (\bgamma - \bgamma^*)+\frac{1}{2}(\bgamma - \bgamma^*)^\top \nabla^2\tilde{\cL}(\bgamma^*)(\bgamma - \bgamma^*),
\end{align*}
with $\overline{\bgamma} = \argmin \tilde{\ocL}(\bgamma)$. As a result, it holds that
\begin{align*}
    \overline{\bgamma} = \bgamma^* - \left(\nabla^2\tilde{\cL}(\bgamma^*)\right)^{-1}\nabla\tilde{\cL}(\bgamma^*).
\end{align*}
Then the following result controls the difference between $\widehat{\bgamma}$ and $\overline{\bgamma}$.
\begin{proposition}
\label{gammaexpansionthm}
Given $\cS(\widehat{\ba}_R) = \cS(\ba^*)$ and the aforementioned two-stage estimator $\widehat{\bgamma}$, as long as $npL\gtrsim \kappa_1^2(k+d)\log n$, with probability exceeding $1-O(n^{-10})$, we have
\begin{align*}
    \left\|\overline{\bgamma}-\widehat{\bgamma}\right\|_2\lesssim \kappa_1^3 \frac{(k+d)\log n}{npL}.
\end{align*}    
\end{proposition}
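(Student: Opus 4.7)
The plan is to derive Proposition \ref{gammaexpansionthm} purely from the first-order optimality conditions of $\widehat{\bgamma}$ and $\overline{\bgamma}$, combined with (i) the local Lipschitz behavior of $\nabla^2 \tilde{\cL}$ already established in the proof of Theorem \ref{residue2norm}, (ii) the restricted strong-convexity bound from Lemma \ref{lb}, and (iii) the a priori error bound for $\widehat{\bgamma}$ from Lemma \ref{estimation_eror}.

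First, by construction $\overline{\bgamma}$ satisfies
\[
\nabla\tilde{\cL}(\bgamma^*) + \nabla^2\tilde{\cL}(\bgamma^*)\bigl(\overline{\bgamma} - \bgamma^*\bigr) = \boldsymbol{0},
\]
while $\widehat{\bgamma}$ satisfies $\nabla\tilde{\cL}(\widehat{\bgamma}) = \boldsymbol{0}$. Applying the fundamental theorem of calculus to $\nabla\tilde{\cL}(\widehat{\bgamma}) - \nabla\tilde{\cL}(\bgamma^*)$ and subtracting the two displays gives
\[
\nabla^2\tilde{\cL}(\bgamma^*)\bigl(\widehat{\bgamma}-\overline{\bgamma}\bigr)
= \int_0^1 \Bigl[\nabla^2\tilde{\cL}(\bgamma^*) - \nabla^2\tilde{\cL}\bigl(\bgamma^* + t(\widehat{\bgamma}-\bgamma^*)\bigr)\Bigr]\bigl(\widehat{\bgamma}-\bgamma^*\bigr)\, dt =: \boldsymbol{R}.
\]
This is the standard quadratic-remainder identity, and it reduces the proof to bounding $\|\boldsymbol{R}\|_2$ and inverting the Hessian on the left.

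Next I would bound $\|\boldsymbol{R}\|_2$ using the Hessian-perturbation estimate already derived in the proof of Theorem \ref{residue2norm}, namely
$\|\nabla^2\cL(\tb_1) - \nabla^2\cL(\tb_2)\| \lesssim np\,\|\tb_1-\tb_2\|_c$,
which transfers verbatim to $\tilde{\cL}$ since $\tilde{\cL}$ is just $\cL$ restricted to the coordinates indexed by $\cS(\ba^*)\cup\{n+1,\dots,n+d\}$. Embedding $\widehat{\bgamma}-\bgamma^*$ back into $\mathbb{R}^{n+d}$ and using $\|\tb\|_c \lesssim \|\ba\|_\infty + \sqrt{(d+1)/n}\,\|\bb\|_2 \lesssim \|\tb\|_2$ (since $\tb$ has at most $k+d$ nonzero coordinates), Lemma \ref{estimation_eror} yields $\|\widehat{\bgamma}-\bgamma^*\|_c \lesssim \kappa_1\sqrt{(k+d)\log n/(npL)}$. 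Hence
\[
\|\boldsymbol{R}\|_2 \lesssim np\,\|\widehat{\bgamma}-\bgamma^*\|_c\,\|\widehat{\bgamma}-\bgamma^*\|_2 \lesssim \kappa_1^2\,\frac{(k+d)\log n}{L}.
\]
Finally, Lemma \ref{lb} applied to the $(k+d)$-sparse directions provides $\lambda_{\min}\bigl(\nabla^2\tilde{\cL}(\bgamma^*)\bigr)\gtrsim np/\kappa_1$, so inverting the Hessian gives
\[
\bigl\|\widehat{\bgamma}-\overline{\bgamma}\bigr\|_2 \leq \frac{\|\boldsymbol{R}\|_2}{\lambda_{\min}(\nabla^2\tilde{\cL}(\bgamma^*))} \lesssim \frac{\kappa_1}{np}\cdot\kappa_1^2\frac{(k+d)\log n}{L} = \kappa_1^3\,\frac{(k+d)\log n}{npL},
\]
which is the claimed bound.

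The main obstacle is the bookkeeping around the restricted objective $\tilde{\cL}$: I must verify that Lemma \ref{lb} actually applies to the $(|\cS(\ba^*)|+d)$-dimensional Hessian $\nabla^2\tilde{\cL}(\bgamma^*)$ rather than only to the full Hessian $\nabla^2\cL(\tb^*)$. This follows because padding $\widehat{\bgamma}-\bgamma^*$ with zeros outside $\cS(\ba^*)\cup\{n+1,\dots,n+d\}$ yields a $(2k+d)$-sparse vector in $\mathbb{R}^{n+d}$, so it lies in $\Theta(2k)$ and Lemma \ref{lb}'s lower bound applies directly. All the other ingredients — the Hessian Lipschitz estimate, the $\ell_2$ error from Lemma \ref{estimation_eror}, and the operator-norm lower bound — are plug-and-play.
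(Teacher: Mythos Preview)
Your proposal is correct and follows essentially the same approach as the paper: derive the quadratic-remainder identity from the two optimality conditions, bound the Hessian perturbation by $np$ times the estimation error (as in \eqref{residue2normeq7}), and invert using the restricted strong-convexity bound from Lemma \ref{lb}. The paper's proof is slightly terser (it writes $np\|\widehat{\bgamma}-\bgamma^*\|_2$ directly rather than passing through $\|\cdot\|_c$), but the argument and all the ingredients are the same.
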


\begin{proof}
    The optimality conditions tell us
    \begin{align*}
         \boldsymbol{0} = \nabla \tilde{\cL}(\widehat{\bgamma}) &= \nabla \tilde{\cL} (\bgamma^*) +\int_{0}^1\nabla^2\tilde{\cL}(\bgamma^*+t(\widehat{\bgamma}-\bgamma^*))\left(\widehat{\bgamma}-\bgamma^*\right)dt \\
    &= \nabla \tilde{\cL} (\bgamma^*) +\int_{0}^1\nabla^2\tilde{\cL}(\bgamma^*+t(\widehat{\bgamma}-\bgamma^*))dt\left(\widehat{\bgamma}-\bgamma^*\right).  \\
    \boldsymbol{0} = \nabla \tilde{\ocL}(\overline{\bgamma}) &= \nabla \tilde{\cL} (\bgamma^*) +\nabla^2\tilde{\cL}(\bgamma^*)\left(\overline{\bgamma}-\bgamma^*\right).
    \end{align*}
    Combine the above two equations together, we have
    \begin{align}
        \int_{0}^1\nabla^2\tilde{\cL}(\bgamma^*+t(\widehat{\bgamma}-\bgamma^*)) - \nabla^2\tilde{\cL}(\bgamma^*)dt \left(\widehat{\bgamma}-\bgamma^*\right) = \nabla^2\tilde{\cL}(\bgamma^*)\left(\overline{\bgamma}-\widehat{\bgamma}\right).\label{gammaapproxeq1}
    \end{align}
    View $\nabla^2\tilde{\cL}$ as a sub-matrix of corresponding $\nabla^2\cL$, similar to  \eqref{residue2normeq7} we know that
    \begin{align}
        \left\|\nabla^2\tilde{\cL}(\bgamma^*+t(\widehat{\bgamma}-\bgamma^*)) - \nabla^2\tilde{\cL}(\bgamma^*)\right\|\lesssim np\left\|\widehat{\bgamma}-\bgamma^*\right\|_2.\label{gammaapproxeq2}
    \end{align}
    On the other hand, again by the property of sub-matrix as well as Lemma \ref{lb}, we know that 
    \begin{align}
        \left\|\nabla^2\tilde{\cL}(\bgamma^*)\left(\overline{\bgamma}-\widehat{\bgamma}\right)\right\|\gtrsim \frac{np}{\kappa_1}\left\|\overline{\bgamma}-\widehat{\bgamma}\right\|_2.\label{gammaapproxeq3}
    \end{align}
    Plugging \eqref{gammaapproxeq2} and \eqref{gammaapproxeq3} in \eqref{gammaapproxeq1} we get 
    \begin{align*}
        np \left\|\widehat{\bgamma}-\bgamma^*\right\|_2^2&\gtrsim \left\|\int_{0}^1\nabla^2\tilde{\cL}(\bgamma^*+t(\widehat{\bgamma}-\bgamma^*)) - \nabla^2\tilde{\cL}(\bgamma^*)dt \left(\widehat{\bgamma}-\bgamma^*\right)\right\|_2 = \left\|\nabla^2\tilde{\cL}(\bgamma^*)\left(\overline{\bgamma}-\widehat{\bgamma}\right)\right\|_2 \\
        & \gtrsim \frac{np}{\kappa_1}\left\|\overline{\bgamma}-\widehat{\bgamma}\right\|_2.
    \end{align*}
    As a result, we have
    \begin{align*}
        \left\|\overline{\bgamma}-\widehat{\bgamma}\right\|_2\lesssim \kappa_1^3 \frac{(k+d)\log n}{npL}.
    \end{align*}
\end{proof}

\subsubsection{Proof of Theorem \ref{thm:dist_two}}

\begin{proof}
We denote by $\tx_i' = \left((\boldsymbol{e}_i)_{\cS(\ba^*)}^\top, \bx_i^\top\right)^\top\in \mathbb{R}^{|\cS(\ba^*)|+d}$. Then we can write
\begin{align*}
    \overline{\bgamma} -\bgamma^* = - \left(\nabla^2\tilde{\cL}(\bgamma^*)\right)^{-1}\sum_{(i, j)\in \cE, i> j}(\phi(\tx_i^\top\tb^* - \tx_j^\top\tb^*)-y_{j, i})(\tx_i'^\top - \tx_j'^\top).
\end{align*}
Consider the random vector 
\begin{align*}
    \boldsymbol{X}_{i, j}^{(l)} = \frac{1}{L}(y_{j, i}^{(l)} - \phi(\tx_i^\top\tb^* - \tx_j^\top\tb^*)) \left(\nabla^2\tilde{\cL}(\bgamma^*)\right)^{-1} (\tx_i'^\top - \tx_j'^\top).
\end{align*}
Then we have 
\begin{align*}
    &\sum_{(i, j)\in \cE, i> j}\sum_{l=1}^L\mathbb{E}\left[\left\|\frac{1}{\sqrt{L}}(y_{j, i}^{(l)} - \phi(\tx_i^\top\tb^* - \tx_j^\top\tb^*))\left(\nabla^2\tilde{\cL}(\bgamma^*)\right)^{-1/2} (\tx_i'^\top - \tx_j'^\top)\right\|_2^3\right] \\
    = & \sum_{(i, j)\in \cE, i> j}\frac{1}{\sqrt{L}}\mathbb{E}\left[\left\|(y_{j, i}^{(l)} - \phi(\tx_i^\top\tb^* - \tx_j^\top\tb^*))\left(\nabla^2\tilde{\cL}(\bgamma^*)\right)^{-1/2} (\tx_i'^\top - \tx_j'^\top)\right\|_2^3\right] \\
    \lesssim & \sum_{(i, j)\in \cE, i> j}\mathbb{E}\left[\left\|(y_{j, i}^{(l)} - \phi(\tx_i^\top\tb^* - \tx_j^\top\tb^*))\left(\nabla^2\tilde{\cL}(\bgamma^*)\right)^{-1/2} (\tx_i'^\top - \tx_j'^\top)\right\|_2^2\right] \\
    & \quad \cdot \frac{\max_{i, j, l}\left\|\left(\nabla^2\tilde{\cL}(\bgamma^*)\right)^{-1/2} (\tx_i'^\top - \tx_j'^\top)\right\|_2}{\sqrt{L}} \\
    \lesssim & \frac{\max_{i, j, l}\left\|\left(\nabla^2\tilde{\cL}(\bgamma^*)\right)^{-1/2} (\tx_i'^\top - \tx_j'^\top)\right\|_2}{\sqrt{L}} \lesssim \sqrt{\frac{\kappa_1}{npL}}.
\end{align*}
By Berry-Esseen theorem we know that 
\begin{align}
    \left|\mathbb{P}(\overline{\bgamma}-\bgamma^* \in \mathcal{D}) - \mathbb{P}(\mathcal{N}(\boldsymbol{0}, (\nabla^2\tilde{\cL}(\bgamma^*))^{-1}) \in \mathcal{D})\right|\lesssim (k+d)^{1/4}\sqrt{\frac{\kappa_1}{npL}}.\label{jointdistributioneq3}
\end{align}

Next, for any convex set $\mathcal{D}\subset \mathbb{R}^r$ with $r=|\cS(\balpha^*)+d|,$ and point $x\in \mathbb{R}^r$, we define
\begin{align*}
    \delta_{\mathcal{D}}(x):= \begin{cases}-\min_{y\in \mathbb{R}^r \backslash \mathcal{D}}\left\|x- y\right\|_2, & \text { if } x \in \mathcal{D} \\ \min_{y\in\mathcal{D} }\left\|x- y\right\|_2, & \text { if } x \notin \mathcal{D}\end{cases} \text{ and } \mathcal{D}^{\varepsilon}:=\left\{x \in \mathbb{R}^r: \delta_{\mathcal{D}}(x) \leq \varepsilon\right\}.
\end{align*}
Therefore, we know that
\begin{align}
    &\mathbb{P}\left(\sqrt{L}(\nabla^2\tilde{\cL}(\bgamma^*))^{1/2}(\overline{\bgamma} - \bgamma^*)\in \mathcal{D}^{-\varepsilon}\right) \nonumber \\
    = &\mathbb{P}\left(\sqrt{L}(\nabla^2\tilde{\cL}(\bgamma^*))^{1/2}(\overline{\bgamma} - \bgamma^*)\in \mathcal{D}^{-\varepsilon}, \left\|\sqrt{L}(\nabla^2\tilde{\cL}(\bgamma^*))^{1/2}(\overline{\bgamma} - \widehat{\bgamma})\right\|_2\leq \varepsilon\right) \nonumber \\
    & +\mathbb{P}\left(\sqrt{L}(\nabla^2\tilde{\cL}(\bgamma^*))^{1/2}(\overline{\bgamma} - \bgamma^*)\in \mathcal{D}^{-\varepsilon}, \left\|\sqrt{L}(\nabla^2\tilde{\cL}(\bgamma^*))^{1/2}(\overline{\bgamma} - \widehat{\bgamma})\right\|_2> \varepsilon\right) \nonumber \\
    \leq & \mathbb{P}\left(\sqrt{L}(\nabla^2\tilde{\cL}(\bgamma^*))^{1/2}(\widehat{\bgamma} - \bgamma^*)\in \mathcal{D}\right) + \mathbb{P}\left( \left\|\sqrt{L}(\nabla^2\tilde{\cL}(\bgamma^*))^{1/2}(\overline{\bgamma} - \widehat{\bgamma})\right\|_2> \varepsilon\right). \label{jointdistributioneq1}
\end{align}
Taking $\varepsilon = \kappa_1^3 (k+d)\log n/\sqrt{npL}$, by Theorem \ref{gammaexpansionthm} we know that 
\begin{align}
    \mathbb{P}\left( \left\|\sqrt{L}(\nabla^2\tilde{\cL}(\bgamma^*))^{1/2}(\overline{\bgamma} - \widehat{\bgamma})\right\|_2> \varepsilon\right)\lesssim n^{-10}.\label{jointdistributioneq2}
\end{align}

On the other hand, we can write
\begin{align*}
    &\left|\mathbb{P}\left(\sqrt{L}(\nabla^2\tilde{\cL}(\bgamma^*))^{1/2}(\overline{\bgamma} - \bgamma^*)\in \mathcal{D}^{-\varepsilon}\right) - \mathbb{P}\left(\sqrt{L}(\nabla^2\tilde{\cL}(\bgamma^*))^{1/2}(\overline{\bgamma} - \bgamma^*)\in \mathcal{D}\right)\right| \\
    \leq & \left|\mathbb{P}\left(\sqrt{L}(\nabla^2\tilde{\cL}(\bgamma^*))^{1/2}(\overline{\bgamma} - \bgamma^*)\in \mathcal{D}^{-\varepsilon}\right) - \mathbb{P}\left(\mathcal{N}(\boldsymbol{0}, \boldsymbol{I})\in \mathcal{D}^{-\varepsilon}\right)\right|   \\
    & + \left|\mathbb{P}\left(\mathcal{N}(\boldsymbol{0}, \boldsymbol{I})\in \mathcal{D}^{-\varepsilon}\right) - \mathbb{P}\left(\mathcal{N}(\boldsymbol{0}, \boldsymbol{I})\in \mathcal{D}\right)\right| \\
    &+ \left|\mathbb{P}\left(\mathcal{N}(\boldsymbol{0}, \boldsymbol{I})\in \mathcal{D}\right) - \mathbb{P}\left(\sqrt{L}(\nabla^2\tilde{\cL}(\bgamma^*))^{1/2}(\overline{\bgamma} - \bgamma^*)\in \mathcal{D}\right) \right| \\
    \lesssim & (k+d)^{1/4}\sqrt{\frac{\kappa_1}{npL}} + \left|\mathbb{P}\left(\mathcal{N}(\boldsymbol{0}, \boldsymbol{I})\in \mathcal{D}^{-\varepsilon}\right) - \mathbb{P}\left(\mathcal{N}(\boldsymbol{0}, \boldsymbol{I})\in \mathcal{D}\right)\right|.
\end{align*}
By \cite[Theorem 1.2]{raivc2019multivariate}, it holds that
\begin{align*}
    |\mathbb{P}(\mathcal{N}(\boldsymbol{0}_r,\bI_r)\in \mathcal{D}^{-\varepsilon}) - \mathbb{P}(\mathcal{N}(\boldsymbol{0}_r,\bI_r)\in \mathcal{D})|\lesssim (k+d)^{1/4} \varepsilon  \lesssim \kappa_1^3 \frac{(k+d)^{5/4}\log n}{\sqrt{npL}}.
\end{align*}

As a result, we know that 
\begin{align*}
    \left|\mathbb{P}\left(\sqrt{L}(\nabla^2\tilde{\cL}(\bgamma^*))^{1/2}(\overline{\bgamma} - \bgamma^*)\in \mathcal{D}^{-\varepsilon}\right) - \mathbb{P}\left(\sqrt{L}(\nabla^2\tilde{\cL}(\bgamma^*))^{1/2}(\overline{\bgamma} - \bgamma^*)\in \mathcal{D}\right)\right| \lesssim \kappa_1^3 \frac{(k+d)^{5/4}\log n}{\sqrt{npL}}.
\end{align*}
Plugging this as well as \eqref{jointdistributioneq2} in \eqref{jointdistributioneq1}, we obtain 
\begin{align*}
    \mathbb{P}\left(\sqrt{L}(\nabla^2\tilde{\cL}(\bgamma^*))^{1/2}(\overline{\bgamma} - \bgamma^*)\in \mathcal{D}\right)\leq \mathbb{P}\left(\sqrt{L}(\nabla^2\tilde{\cL}(\bgamma^*))^{1/2}(\widehat{\bgamma} - \bgamma^*)\in \mathcal{D}\right) + O\left(\kappa_1^3 \frac{(k+d)^{5/4}\log n}{\sqrt{npL}}+\frac{1}{n^{10}}\right).
\end{align*}
Since this holds for all convex $\mathcal{D}\subset \mathbb{R}^{|\cS(\ba^*)+d|}$, we know that
\begin{align*}
    \mathbb{P}\left(\overline{\bgamma} - \bgamma^*\in \mathcal{D}\right)\leq \mathbb{P}\left(\widehat{\bgamma} - \bgamma^*\in \mathcal{D}\right) + O\left(\kappa_1^3 \frac{(k+d)^{5/4}\log n}{\sqrt{npL}}+\frac{1}{n^{10}}\right).
\end{align*}
Similarly, we can also show that
\begin{align*}
   \mathbb{P}\left(\widehat{\bgamma} - \bgamma^*\in \mathcal{D}\right)\leq  \mathbb{P}\left(\overline{\bgamma} - \bgamma^*\in \mathcal{D}\right) + O\left(\kappa_1^3 \frac{(k+d)^{5/4}\log n}{\sqrt{npL}}+\frac{1}{n^{10}}\right).
\end{align*}
Combine these two aforementioned inequalities with \eqref{jointdistributioneq3}, it holds that
\begin{align*}
    \left|\mathbb{P}(\widehat{\bgamma}-\bgamma^* \in \mathcal{D}) - \mathbb{P}(\mathcal{N}(\boldsymbol{0}_r, (\nabla^2\tilde{\cL}(\bgamma^*))^{-1}) \in \mathcal{D})\right|\lesssim \kappa_1^3 \frac{(k+d)^{5/4}\log n}{\sqrt{npL}}+\frac{1}{n^{10}}.
\end{align*}

\end{proof}

\section{Proof of the Results}
\subsection{Proof of Proposition \ref{proposition1}}\label{proposition1proof}
\begin{proof}
Assume we have two vectors $\tb_1 = (\ba_1^\top,\bb_1^\top)^\top,\tb_2= (\ba_2^\top,\bb_2^\top)^\top\in \Theta(k)$ such that 
\begin{align*}
    \mathbb{P}_{\tb_1}\{\text {item } j \text { is preferred over item } i\}=\mathbb{P}_{\tb_2}\{\text {item } j \text { is preferred over item } i\}  ,\quad \forall 1\leq i\neq j \leq n. 
\end{align*}
By \eqref{comparison} we know that
\begin{align*}
    \frac{e^{\tx_j^\top\tb_1}}{e^{\tx_i^\top\tb_1}+e^{\tx_j^\top\tb_1}} = \frac{1}{e^{\tx_i^\top\tb_1 - \tx_j^\top\tb_1}+1} = \frac{1}{e^{\tx_i^\top\tb_2 - \tx_j^\top\tb_2}+1} = \frac{e^{\tx_j^\top\tb_2}}{e^{\tx_i^\top\tb_2}+e^{\tx_j^\top\tb_2}},\quad \forall 1\leq i\neq j \leq n.
\end{align*}
This tells us that we have $(\tx_i-\tx_j)^\top(\tb_1-\tb_2)=0$ for all $1\leq i\neq j \leq n$. Consider the following index set
\begin{align*}
    A = \left\{i\in [n]: (\tb_1-\tb_2)_i = 0\right\}.
\end{align*}
Since $\tb_1,\tb_2\in \Theta(k)$, we know that $|A|\geq n-2k$. Since $2k+d+1\leq n$, we pick $d+1$ different indices $i_1,i_2,\dots,i_{d+1}$ from $A$. By the construction of $A$ we know that
\begin{align*}
    0=(\tx_{i_j}-\tx_{i_1})^\top(\tb_1-\tb_2) &= (\tb_1-\tb_2)_{i_j}-(\tb_1-\tb_2)_{i_1} +(\bx_{i_j}-\bx_{i_1})^\top(\bb_1-\bb_2) \\
    &= (\bx_{i_j}-\bx_{i_1})^\top(\bb_1-\bb_2)
\end{align*}
for all $j=2,3\dots, d+1$. On the other hand, according to Assumption \ref{nondegenerate}, we know that 
\begin{align*}
    \textbf{rank}[\bx_{i_2}-\bx_{i_1},\bx_{i_3}-\bx_{i_1},\dots,\bx_{i_{d+1}}-\bx_{i_1}] = d.
\end{align*}
As a result, we must that $\bb_1-\bb_2 = \boldsymbol{0}$. This further implies
\begin{align*}
    0 = (\tx_i-\tx_j)^\top(\tb_1-\tb_2) &= (\ba_1-\ba_2)_i-(\ba_1-\ba_2)_j + (\bx_i-\bx_j)^\top(\bb_1-\bb_2) \\
    &= (\ba_1-\ba_2)_i-(\ba_1-\ba_2)_j
\end{align*}
for all $1\leq i\neq j\leq n$. This tells us that all the entries of $\ba_1-\ba_2$ are the same. And, since $|A|\geq n-2k\geq d+1$, we know that at least $d+1$ entries of $\ba_1-\ba_2$ are $0$. As a result, we get $\ba_1-\ba_2 = \boldsymbol{0}$. To sum up, we must have $\tb_1 =\tb_2$.
\end{proof}

\subsection{Proof of Lemma \ref{lb}}\label{lbproof}
\begin{proof}
By \cite[Lemma A.4]{fan2022uncertainty} we know that 
\begin{align*}
    \lambda_{\text{min},\perp}(\nabla^2 \mathcal{L}_\tau(\tb''))\geq\lambda+\frac{c_2pn}{8\kappa_1e^C}.
\end{align*}
As a result, we know that
\begin{align}
    (\tb'-\tb)^\top\nabla^2 \mathcal{L}_\tau(\tb'')(\tb'-\tb)&\geq \left(\tau+\frac{c_2pn}{8\kappa_1 e^C}\right)\left\|\mathcal{P}\left(\tb'-\tb\right)\right\|_2^2 \nonumber \\
    &=\left(\tau+\frac{c_2pn}{8\kappa_1 e^C}\right)\left(\left\|(\boldsymbol{I}-\mathcal{P}_{\bar{\bX}})\left(\ba'-\ba\right)\right\|_2^2 + \left\|\bb'-\bb\right\|_2^2\right).\label{RSCeq1}
\end{align}
Let $S$ be the support of $\ba'-\ba$. Since $\tb,\tb'\in \Theta(k)$, we know that $|S|\leq 2k$. As a result, we have
\begin{align}
    \left\|(\boldsymbol{I}-\mathcal{P}_{\bar{\bX}})\left(\ba'-\ba\right)\right\|_2^2 &= \left\|\ba'-\ba\right\|_2^2 -\left\|\mathcal{P}_{\bar{\bX}}\left(\ba'-\ba\right)\right\|_2^2 =\left\|\ba'-\ba\right\|_2^2 -\sum_{i=1}^n((\mathcal{P}_{\bar{\bX}})_{i, S}(\ba'-\ba))^2 \nonumber \\
    &\geq \left\|\ba'-\ba\right\|_2^2 - \sum_{i=1}^n \left\|(\mathcal{P}_{\bar{\bX}})_{i, S}\right\|_2^2\left\|\ba'-\ba\right\|_2^2  = \left(1-\left\|(\mathcal{P}_{\bar{\bX}})_{\cdot, S}\right\|_F^2\right)\left\|\ba'-\ba\right\|_2^2 \nonumber \\
    &=\left(1-\left\|(\mathcal{P}_{\bar{\bX}})_{S,\cdot}\right\|_F^2\right)\left\|\ba'-\ba\right\|_2^2\geq \left(1-|S|\left\|(\mathcal{P}_{\bar{\bX}})_{S,\cdot}\right\|_{2,\infty}^2\right)\left\|\ba'-\ba\right\|_2^2  \nonumber \\
    &\geq \left(1-\frac{2 c_0^2 (d+1)k}{n}\right)\left\|\ba'-\ba\right\|_2^2\geq \frac{1}{2}\left\|\ba'-\ba\right\|_2^2.\label{RSCeq2}
\end{align}
Combine Eq.\eqref{RSCeq1} and Eq.\eqref{RSCeq2} we get
\begin{align*}
    (\tb'-\tb)^\top\nabla^2 \mathcal{L}_\tau(\tb'')(\tb'-\tb)&\geq \left(\tau+\frac{c_2pn}{8\kappa_1 e^C}\right)\left(\frac{1}{2}\left\|\ba'-\ba\right\|_2^2 + \left\|\bb'-\bb\right\|_2^2\right)  \\
    &\geq \frac{1}{2}\left(\tau+\frac{c_2pn}{8\kappa_1 e^C}\right)\left\|\tb'-\tb\right\|_2^2.
\end{align*}

\end{proof}

\subsection{Proof of Lemma \ref{lem2}}\label{lem2proof}

\begin{proof}
Since $\tb_R = \argmin \cL(\tb)+\lambda\|\ba\|_1$, we know that 
\begin{align*}
    -\left[\nabla\cL_\tau(\tb_R)\right]_{1:n}\in \partial \lambda\left\|\widehat\ba_R\right\|_1, \quad \left[\nabla\cL_\tau(\tb_R)\right]_{n+1:n+d} = 0.
\end{align*}
As a result, we know that $\textsf{SOFT}_{\eta\lambda}(\tb_R-\eta\nabla\cL_\tau(\tb_R)) = \tb_R$. As a result, we know that
\begin{align}
    \left\|\tb^{t+1}-\tb_R\right\|_2 &= \left\|\textsf{SOFT}_{\eta\lambda}\left(\tb^{t}-\eta\nabla\cL_\tau(\tb^{t})\right)-\textsf{SOFT}_{\eta\lambda}\left(\tb_R-\eta\nabla\cL_\tau(\tb_R)\right)\right\|_2  \nonumber\\
    &\leq \left\|\tb^{t}-\eta\nabla\cL_\tau(\tb^{t})-\left(\tb_R-\eta\nabla\cL_\tau(\tb_R)\right)\right\|_2. \label{pgd1eq1}
\end{align}
Consider $\tb(\gamma) = \tb_R+\gamma\left(\tb^t-\tb_R\right)$ for $\gamma\in [0,1]$. By the fundamental theorem of calculus we have
\begin{align}
     \tb^{t}-\eta\nabla\cL_\tau(\tb^{t})-\left(\tb_R-\eta\nabla\cL_\tau(\tb_R)\right)
     = \left\{\boldsymbol{I}_{n+d}-\eta\int^1_0\nabla^2\cL_\tau(\tb(\gamma))d\tau\right\}\left(\tb^t-\tb_R\right).\label{pgd1eq2}
\end{align}
Let $\displaystyle\boldsymbol{A} = \int^1_0\nabla^2\cL_\tau(\tb(\gamma))d\gamma$. By Lemma \ref{ub} and the definition of $\cL_\tau(\cdot)$ we know that $(\tau+0.5c_1np)\boldsymbol{I}_{n+d}\succeq\boldsymbol{A}\succeq\tau\boldsymbol{I}_{n+d}$. Therefore, it holds that
\begin{align}
    \left\Vert (\boldsymbol{I}_{n+d}-\eta\boldsymbol{A})(\tb^t-\tb_R)\right\Vert_2^2&=\left\|\tb^t-\tb_R\right\|_2^2-2\eta\left(\tb^t-\tb_R\right)^\top\boldsymbol{A}\left(\tb^t-\tb_R\right)+\eta^2\left(\tb^t-\tb_R\right)^\top\boldsymbol{A}^2\left(\tb^t-\tb_R\right) \nonumber \\
    &\leq \left(1-\eta\tau\right)^2\left\Vert \tb^t-\tb^*\right\Vert_2^2  \label{pgd1eq3}
\end{align}
Combine Eq.~\eqref{pgd1eq1}, Eq.~\eqref{pgd1eq2} with Eq.~\eqref{pgd1eq3}, we know that
\begin{align*}
    \left\|\tb^{t+1}-\tb_R\right\|_2\leq \left(1-\eta\tau\right)\left\Vert\tb^t-\tb^*\right\Vert_2 = \rho\left\Vert\tb^t-\tb^*\right\Vert_2.
\end{align*}
Therefore, under event $\mathcal{A}_2$, we have
\begin{align*}
    \left\|\tb^t-\tb_R\right\|_2\leq \rho^t\left\Vert\tb^0-\tb^*\right\Vert_2.
\end{align*}

\end{proof}

\subsection{Proof of Lemma \ref{lem3}}\label{lem3proof}
\begin{proof}
Since $\tb_R$ is the minimizer, we have that $\mathcal{L}_\tau(\tb^*)+\lambda\|\ba^*\|_1\geq \mathcal{L}_\tau(\tb_R)+\lambda\|\widehat\ba_R\|_1\geq \mathcal{L}_\tau(\tb_R)$. By the mean value theorem, for some $\tb'$ between $\tb^*$ and $\tb_R$, we have
\begin{align*}
    \mathcal{L}_\tau(\tb_R) 
     = \mathcal{L}_\tau(\tb^*)+ \nabla \mathcal{L}_\tau(\tb^*)^\top (\tb_R-\tb^*)+\frac{1}{2}(\tb_R-\tb^*)^\top \nabla^2 \mathcal{L}_\tau(\tb')(\tb_R-\tb^*).
\end{align*}
As a result, we have
\begin{align*}
    \mathcal{L}_\tau(\tb^*)+\lambda\left\|\ba^*\right\|_1&\geq \mathcal{L}_\tau(\tb^*)+ \nabla \mathcal{L}_\tau(\tb^*)^\top (\tb_R-\tb^*)+\frac{1}{2}(\tb_R-\tb^*)^\top \nabla^2 \mathcal{L}_\tau(\tb^*)(\tb_R-\tb^*) \\
    & \geq \mathcal{L}_\tau(\tb^*)+ \nabla \mathcal{L}_\tau(\tb^*)^\top (\tb_R-\tb^*)+\frac{\tau}{2}\left\Vert\tb_R-\tb^*\right\Vert_2^2.
\end{align*}
Therefore, we get
\begin{align*}
    \frac{\tau}{2}\left\Vert\tb_R-\tb^*\right\Vert_2^2 & \leq \lambda\left\|\ba^*\right\|_1-\nabla \mathcal{L}_\tau(\tb^*)^\top (\tb_R-\tb^*)\\
    &\leq \lambda\left\|\ba^*\right\|_1+\left\Vert \nabla \mathcal{L}_\tau(\tb^*)\right\Vert_2\left\Vert \tb_R-\tb^*\right\Vert_2.
\end{align*}
As a result, on event $\mathcal{A}_1$ we have
\begin{align*}
    \left\Vert \tb_R-\tb^*\right\Vert_2&\leq \frac{2\left\|\nabla \mathcal{L}_\tau(\tb^*)\right\|_2+\sqrt{2\tau\lambda\left\|\ba^*\right\|_1}}{\tau}\leq \frac{2C_0\sqrt{n^2p
    \log n/L}+\sqrt{2\tau\lambda\left\|\ba^*\right\|_1}}{\tau}  \\
    &\leq \frac{2C_0\sqrt{n}}{c_\tau}\max\left\{\frac{\kappa_2}{\kappa_1},\kappa_3\right\}+ \sqrt{\frac{2c_\lambda\sqrt{d+1}}{c_\tau}\max\left\{\kappa_2^2,\kappa_1\kappa_2\kappa_3\right\}}.
\end{align*}
We conclude the proof of Lemma \ref{lem3}.
\end{proof}

\subsection{Proof of Lemma \ref{lem5}}\label{lem5proof}
\begin{proof}
Combine Lemma \ref{lem2} and Lemma \ref{lem3} we have
\begin{align*}
\max\left\{\left\Vert \tb^{T-1}-\tb_R\right\Vert_2, \left\Vert \tb^T-\tb_R\right\Vert_2\right\}&\leq \rho^{T-1}\left\Vert \tb^0-\tb_R\right\Vert_2 \\
&\lesssim \left(1-\frac{2\tau}{2\tau+c_1 np}\right)^{n^5-1}n\\
&\leq n\exp\left(-\frac{2\tau (n^5-1)}{2\tau+c_1 np}\right) \\
&\leq C_7 \kappa_1\sqrt{\frac{(d+1)\log n}{npL}}
\end{align*}
for $L \leq c_4\cdot n^{c_5}$ and $n$ which is large enough.
\end{proof}

\subsection{Proof of Lemma \ref{induction1}}\label{induction1proof}
\begin{proof}
By definition we know that 
\begin{align*}
\tb^{t+1}-\tb^* &= \textsf{SOFT}_{\eta\lambda}\left(\tb^t-\eta\nabla\mathcal{L}_\tau(\tb^t)\right)-\tb^* \\
&=\textsf{SOFT}_{\eta\lambda}\left(\tb^t-\eta\nabla\mathcal{L}_\tau(\tb^t)\right) - \textsf{SOFT}_{\eta\lambda}\left(\tb^*\right)+ \textsf{SOFT}_{\eta\lambda}\left(\tb^*\right)-\tb^*.
\end{align*}
By triangle inequality as well as the definition of $\textsf{SOFT}$ we know that
\begin{align}
    \left\|\tb^{t+1}-\tb^*\right\|_2&\leq \left\|\textsf{SOFT}_{\eta\lambda}\left(\tb^t-\eta\nabla\mathcal{L}_\tau(\tb^t)\right) - \textsf{SOFT}_{\eta\lambda}\left(\tb^*\right)\right\|_2 +\left\|\textsf{SOFT}_{\eta\lambda}\left(\tb^*\right)-\tb^*\right\|_2 \nonumber\\
    &\leq \left\|\tb^t-\eta\nabla\mathcal{L}_\tau(\tb^t)-\tb^*\right\|_2+\eta\lambda\sqrt{k}.\label{inductionAeq1}
\end{align}
Consider $\tb(\gamma) = \tb^*+\gamma\left(\tb^t-\tb^*\right)$ for $\gamma\in [0,1]$. By the fundamental theorem of calculus, we have
\begin{align*}
     \tb^t-\eta\nabla\mathcal{L}_\tau(\tb^t)-\tb^* 
    & = \tb^t-\eta\nabla\mathcal{L}_\tau(\tb^t)-\left[\tb^*-\eta\nabla\mathcal{L}_\tau(\tb^*)\right]-\eta\nabla\mathcal{L}_\tau(\tb^*) \\
    & = \left\{\boldsymbol{I}_{n+d}-\eta\int^1_0\nabla^2\cL_\tau(\tb(\gamma))d\tau\right\}\left(\tb^t-\tb^*\right)-\eta \nabla\cL_\tau(\tb^*).
\end{align*}
Let $npL$ be large enough such that 
\begin{align*}
    2C_6\kappa_1^2\sqrt{\frac{(d+1)\log n}{npL}}\leq 0.1,\quad 2C_3\kappa_1\sqrt{c_3}\sqrt{\frac{(d+1)\log n}{npL}}\leq 0.1.
\end{align*}
By the assumption of induction, we have 
\begin{align*}
    \Vert \ba(\gamma)-\ba^*\Vert_\infty\leq 0.05,\quad \Vert \bb(\gamma)-\bb^*\Vert_2\leq 0.05\sqrt{\frac{n}{c_3(d+1)}}. 
\end{align*}
Then by Lemma \ref{lb} as well as the induction assumption, we have
\begin{align*}
    \left(\tb^t-\tb^*\right)^\top\nabla^2\cL_\tau(\tb(\gamma))\left(\tb^t-\tb^*\right)\geq \left(\tau+\frac{c_2pn}{8\kappa_1e^{0.2}}\right)\left\|\tb^t-\tb^*\right\|_2^2\geq \left(\tau+\frac{c_2pn}{10\kappa_1}\right)\left\|\tb^t-\tb^*\right\|_2^2
\end{align*}
for all $0\leq \gamma\leq 1$. On the other hand, by Lemma \ref{ub}, we have
\begin{align*}
    \lambda_{\text{max}}\left(\nabla^2\cL_\tau(\tb(\gamma))\right)\leq \tau+\frac{1}{2}c_1pn.
\end{align*}
Let $\displaystyle\boldsymbol{A} = \int^1_0\nabla^2\cL_\tau(\tb(\gamma))d\gamma$, then it holds that
\begin{align}
    \left\Vert (\boldsymbol{I}_{n+d}-\eta\boldsymbol{A})(\tb^t-\tb^*)\right\Vert_2^2&=\left\|\tb^t-\tb^*\right\|_2^2-2\eta\left(\tb^t-\tb^*\right)^\top\boldsymbol{A}\left(\tb^t-\tb^*\right)+\eta^2\left(\tb^t-\tb^*\right)^\top\boldsymbol{A}^2\left(\tb^t-\tb^*\right) \nonumber \\
    &\leq \left(1-2\eta\left(\tau+\frac{c_2pn}{10\kappa_1}\right)+\eta^2\left(\tau+\frac{1}{2}c_1pn\right)^2\right)\left\Vert \tb^t-\tb^*\right\Vert_2^2 \nonumber \\
    &\leq \left(1-\frac{c_2}{20\kappa_1}\eta pn\right)^2\left\Vert\tb^t-\tb^*\right\Vert_2^2.\label{inductionAeq2}
\end{align}
Therefore, plugging Eq.\eqref{inductionAeq1} in Eq.~\eqref{inductionAeq2}, and conditioned on event $\mathcal{A}_1$, we have
\begin{align*}
    \left\Vert\tb^{t+1}-\tb^*\right\Vert_2&\leq \left\Vert \left(\boldsymbol{I}_{n+d}-\eta\boldsymbol{A}\right)(\tb^t-\tb^*)-\eta \nabla\cL_\tau(\tb^*)\right\Vert_2 +\eta\lambda\sqrt{k}  \\
    &\leq \left\Vert \left(\boldsymbol{I}_{n+d}-\eta\boldsymbol{A}\right)(\tb^t-\tb^*)\right\Vert_2+\eta \left\Vert\nabla\cL_\tau(\tb^*)\right\Vert_2 +\eta\lambda\sqrt{k} \\
    &\leq \left(1-\frac{c_2}{20\kappa_1}\eta pn\right)\left\Vert\tb^t-\tb^*\right\Vert_2 +C_0\eta \sqrt{\frac{n^2p\log n}{L}} +\eta\lambda\sqrt{k}\\
    &\leq \left(1-\frac{c_2}{20\kappa_1}\eta pn\right)C_3 \kappa_1\sqrt{\frac{\log n}{pL}} +C_0\eta \sqrt{\frac{n^2p\log n}{L}}+\eta c_\lambda\kappa_1\sqrt{\frac{k(d+1)np\log n}{L}} \\
    &\leq C_3 \kappa_1\sqrt{\frac{\log n}{pL}},
\end{align*}
as long as $\displaystyle C_3\geq \frac{40C_0}{c_2}$ and $\displaystyle k(d+1)\leq \frac{c_2^2 C_3^2 n}{1600 c_\lambda^2\kappa_1^2}$.
\end{proof}

\subsection{Proof of Lemma \ref{induction3}}\label{induction3proof}
\begin{proof}
For any $m\in[n]$, by definition we have
\begin{align*}
    \tb^{t+1}-\tb^{t+1,(m)} &=  \textsf{SOFT}_{\eta\lambda}\left(\tb^t-\eta\nabla\mathcal{L}_\tau(\tb^t)\right)-\textsf{SOFT}_{\eta\lambda}\left(\tb^{t,(m)}-\eta\nabla\mathcal{L}_\tau^{(m)}(\tb^{t,(m)})\right).
\end{align*}
This implies
\begin{align}
    \left\|\tb^{t+1}-\tb^{t+1,(m)}\right\|_2\leq \left\|\tb^t-\eta\nabla\mathcal{L}_\tau(\tb^t)-\left[\tb^{t,(m)}-\eta\nabla\mathcal{L}_\tau^{(m)}(\tb^{t,(m)})\right]\right\|_2. \label{induction3eq1}
\end{align}
We consider $\tb(\tau) = \tb^{t,(m)}+\gamma\left(\tb^t-\tb^{t,(m)}\right)$ for $\gamma\in [0,1]$. By the fundamental theorem of calculus we have
\begin{align}
    & \tb^t-\eta\nabla\mathcal{L}_\tau(\tb^t)-\left[\tb^{t,(m)}-\eta\nabla\mathcal{L}_\tau^{(m)}(\tb^{t,(m)})\right]  \nonumber \\
    =& \tb^t-\eta\nabla\mathcal{L}_\tau(\tb^t)-\left[\tb^{t,(m)}-\eta\nabla\mathcal{L}_\tau(\tb^{t,(m)})\right]-\eta \left(\nabla\mathcal{L}_\tau(\tb^{t,(m)})-\nabla\mathcal{L}_\tau^{(m)}(\tb^{t,(m)})\right) \nonumber \\
    =& \left(\boldsymbol{I}_{n+d}-\eta\int^1_0\nabla^2\cL_\tau(\tb(\gamma))d\gamma\right)\left(\tb^{t}-\tb^{t,(m)}\right)-\eta \left(\nabla\mathcal{L}_\tau(\tb^{t,(m)})-\nabla\mathcal{L}_\tau^{(m)}(\tb^{t,(m)})\right).\label{induction3decompose}
\end{align}
From \eqref{inductionA}$\sim$ \eqref{inductionD} we know that
\begin{align*}
    \Vert \ba^{t,(m)}-\ba^*\Vert_{\infty}&\leq \Vert \ba^{t}-\ba^*\Vert_{\infty}+\max_{1\leq m\leq n}\left\Vert \tb^{t,(m)}-\tb^t\right\Vert_2 \leq (C_4+C_6)\kappa_1^2\sqrt{\frac{(d+1)\log n}{npL}}; \\
    \Vert \bb^{t,(m)}-\bb^*\Vert_{2}&\leq \left\Vert \tb^{t}-\tb^*\right\Vert_{2}+\max_{1\leq m\leq n}\left\Vert \tb^{t,(m)}-\tb^t\right\Vert_2 \leq  (C_3+C_4)\kappa_1\sqrt{\frac{\log n}{pL}};\\
    \Vert \ba^{t}-\ba^*\Vert_{\infty}&\leq C_6\kappa_1^2\sqrt{\frac{(d+1)\log n}{npL}}; \\
    \Vert \bb^{t}-\bb^*\Vert_{2}&\leq \left\Vert \tb^{t}-\tb^*\right\Vert_{2}\leq C_3\kappa_1\sqrt{\frac{\log n}{pL}}.
\end{align*}
Consider $npL$ which is large enough such that
\begin{align*}
     2(C_4+C_6)\kappa_1^2\sqrt{\frac{(d+1)\log n}{npL}},\; 2(C_3+C_4)\kappa_1\sqrt{c_3}\sqrt{\frac{(d+1)\log n}{npL}}\leq 0.1.
\end{align*}
Then we also have
\begin{align*}
    2C_6\kappa_1^2\sqrt{\frac{(d+1)\log n}{npL}},\;2C_3\kappa_1\sqrt{c_3}\sqrt{\frac{(d+1)\log n}{npL}}\leq 0.1.
\end{align*}
Use the same approach when deriving Eq.~\eqref{inductionAeq2}, we have
\begin{align}
    \left\|\left(\boldsymbol{I}_{n+d}-\eta\int^1_0\nabla^2\cL_\tau(\tb(\gamma))d\gamma\right)\left(\tb^{t}-\tb^{t,(m)}\right)\right\|_2\leq \left(1-\frac{c_2}{20\kappa_1}\eta pn\right)\left\Vert\tb^t-\tb^{t,(m)}\right\Vert_2,\label{induction3sc}
\end{align}
as long as $\displaystyle 0<\eta\leq \frac{2}{2\lambda+c_1np}$.

It remains to bound $\left\|  \nabla\mathcal{L}_\tau(\tb^{t,(m)})-\nabla\mathcal{L}_\tau^{(m)}(\tb^{t,(m)})\right\|_2.$ By definition, we have
\begin{align*}
    &\nabla\mathcal{L}_\tau(\tb^{t,(m)})-\nabla\mathcal{L}_\tau^{(m)}(\tb^{t,(m)}) \\
    =& \sum_{i\neq m}\left\{\left(-y_{m,i}+\frac{e^{  \tx_i^\top\tb^{t,(m)}}}{e^{\tx_i^\top\tb^{t,(m)}}+e^{\tx_m^\top\tb^{t,(m)}}} \right)\textbf{1}((i,m)\in\mathcal{E})-p\left(-y_{m,i}^*+\frac{e^{  \tx_i^\top\tb^{t,(m)}}}{e^{\tx_i^\top\tb^{t,(m)}}+e^{\tx_m^\top\tb^{t,(m)}}}\right)\right\}(\tx_i-\tx_m)\\
    =&\underbrace{\sum_{i\neq m}\left\{\left(-\frac{e^{  \tx_i^\top\tb^{*}}}{e^{\tx_i^\top\tb^*}+e^{\tx_m^\top\tb^*}}+\frac{e^{  \tx_i^\top\tb^{t,(m)}}}{e^{\tx_i^\top\tb^{t,(m)}}+e^{\tx_m^\top\tb^{t,(m)}}} \right)\left(\textbf{1}((i,m)\in\mathcal{E})-p\right)\right\}(\tx_i-\tx_m)}_{:=\boldsymbol{u}^m} \\
    &+\underbrace{\frac{1}{L}\sum_{(i,m)\in\mathcal{E}}\sum^L_{l=1}\left(-y^{(l)}_{m,i}+\frac{e^{  \tx_i^\top\tb^*}}{e^{\tx_i^\top\tb^*}+e^{\tx_m^\top\tb^*}}\right)(\tx_i-\tx_m)}_{:=\boldsymbol{v}^m}.
\end{align*}

By definition, we also have
\begin{align*}
    v_{j}^{m}= \begin{cases}\frac{1}{L} \sum_{l=1}^{L}\left(-y_{m, j}^{(l)}+\frac{e^{  \tx_j^\top\tb^*}}{e^{\tx_j^\top\tb^*}+e^{\tx_m^\top\tb^*}}\right), & \text { if }(j, m) \in \mathcal{E} \\ \frac{1}{L} \sum_{i:(i, m) \in \mathcal{E}} \sum_{l=1}^{L}\left(y_{m, i}^{(l)}-\frac{e^{  \tx_i^\top\tb^*}}{e^{\tx_i^\top\tb^*}+e^{\tx_m^\top\tb^*}}\right), & \text { if } j=m ; \\ 
    \frac{1}{L} \sum_{i:(i, m) \in \mathcal{E}} \sum_{l=1}^{L}\left(-y_{m, i}^{(l)}+\frac{e^{  \tx_i^\top\tb^*}}{e^{\tx_i^\top\tb^*}+e^{\tx_m^\top\tb^*}}\right)((\tx_i)_j-(\tx_m)_j), & \text { if } j>n ; \\ 
    0, & \text { else. }\end{cases}
\end{align*}
Consider random variable $M = |\left\{i:(i,m)\in \mathcal{E}\right\}|$. By Chernoff bound \citep{tropp2012user}, we know that
\begin{align*}
\mathbb{P}(M\geq 2pn)\leq (e/4)^{pn}\leq O(n^{-11}),
\end{align*}
as long as $np>c_p\log n$ for some $c_p>0$. As long as $\Vert \bx_i-\bx_m\Vert_2\leq 2\sqrt{c_3(d+1)/n}\leq 1$, we have $|(\tx_i)_j-(\tx_m)_j)|\leq 1$ for $j>n$. Since $\left|-y^{(l)}_{m,i}+\frac{e^{  \tx_i^\top\tb^*}}{e^{\tx_i^\top\tb^*}+e^{\tx_m^\top\tb^*}}\right|\leq 1$, by Hoeffding's inequality and union bound, we get
\begin{align*}
    |v_j^m|&\lesssim \sqrt{\frac{M\log n}{L}}, \text{ if } j=m \text{ or } j>n; \\
    |v_j^m|&\lesssim \sqrt{\frac{\log n}{L}}, \text{ if } (j,m)\in\mathcal{E}.
\end{align*}
with probability exceeding $1-O(n^{-11})$ conditioning on $\mathcal{E}$ as long as $d< n$. On the other hand, since $M\leq 2pn$ with probability exceeding $1-O(n^{-11})$, we have
\begin{align*}
    \Vert\boldsymbol{v}^m\Vert_2^2\lesssim (d+1)\frac{2pn\log n}{L} +2pn\frac{\log n}{L}\lesssim \frac{pn(d+1)\log n}{L}
\end{align*}
with probability exceeding $1-O(n^{-11})$. 

On the other hand, for $\boldsymbol{u}^m$ we have
\begin{align*}
    u_{j}^{m}= \begin{cases}\xi_j(1-p), & \text { if }(j, m) \in \mathcal{E} \\ 
    -\sum_{i:(i, m) \in \mathcal{E}} \xi_i\left(\textbf{1}((i,m)\in\mathcal{E})-p\right), & \text { if } j=m ; \\ 
     \sum_{i:(i, m) \in \mathcal{E}} \xi_i\left(\textbf{1}((i,m)\in\mathcal{E})-p\right)((\tx_i)_j-(\tx_m)_j), & \text { if } j>n ; \\ 
    -\xi_j p, & \text { else, }\end{cases}
\end{align*}
where $$\displaystyle \xi_j = -\frac{e^{  \tx_j^\top\tb^{*}}}{e^{\tx_j^\top\tb^*}+e^{\tx_m^\top\tb^*}}+\frac{e^{  \tx_j^\top\tb^{t,(m)}}}{e^{\tx_j^\top\tb^{t,(m)}}+e^{\tx_m^\top\tb^{t,(m)}}} = -\frac{1}{1+e^{\tx_m^\top\tb^*-\tx_j^\top\tb^*}}+\frac{1}{1+e^{\tx_m^\top\tb^{t,(m)}-\tx_j^\top\tb^{t,(m)}}}.$$
Consider $\displaystyle g(x) = \frac{1}{1+e^x}$. Since $\displaystyle\left|g'(x)\right|\leq 1$, we have that
\begin{align*}
    |\xi_j| &= \left|g(\tx_m^\top\tb^{t,(m)}-\tx_j^\top\tb^{t,(m)})-g(\tx_m^\top\tb^*-\tx_j^\top\tb^*)\right| \\
    &\leq \left|(\tx_m^\top\tb^{t,(m)}-\tx_j^\top\tb^{t,(m)})-(\tx_m^\top\tb^*-\tx_j^\top\tb^*)\right|\\
    &\leq \left|\tx_m^\top\tb^{t,(m)}-\tx_m^\top\tb^*\right|+\left|\tx_j^\top\tb^{t,(m)}-\tx_j^\top\tb^*\right| \\
    &\leq \left|\alpha_m^{t,(m)}-\alpha_m^*\right|+\left|\bx_m^\top\bb^{t,(m)}-\bx_m^\top\bb^*\right|+\left|\alpha_j^{t,(m)}-\alpha_j^*\right|+\left|\bx_j^\top\bb^{t,(m)}-\bx_j^\top\bb^*\right| \\
    &\leq 2 \left\|\ba^{t,(m)}-\ba^*\right\|_\infty + 2\sqrt{c_3(d+1)/n}\left\|\bb^{t,(m)}-\bb^*\right\|_2 \\
    &\leq \left[2(C_4+C_6)\kappa_1^2+2(C_3+C_4)\kappa_1\sqrt{c_3}\right]\sqrt{\frac{(d+1)\log n}{npL}} := \tC_1\sqrt{\frac{(d+1)\log n}{npL}}.
\end{align*}
By Bernstein inequality we know that
\begin{align*}
    |u_j^m|&\lesssim \sqrt{\left(p\sum^n_{i=1}\xi_i^2\right)\log n}+\max_{1\leq i\leq n}|\xi_i|\log n\\
    &\leq \left(\sqrt{np\log n}+\log n\right)\tC_1\sqrt{\frac{(d+1)\log n}{npL}},\text{ if } j=m \text{ or } j>n.
\end{align*}
As a result, for $\boldsymbol{u}^m$ we have
\begin{align*}
    \Vert \boldsymbol{u}^m\Vert_2^2 &= (u^m_m)^2 +\sum_{j>n}(u^m_j)^2+\sum_{j:(j,m)\in \mathcal{E}}(u^m_j)^2 +\sum_{j:(j,m)\notin \mathcal{E},j\neq m, j\leq n}(u^m_j)^2 \\
    &\lesssim (d+1)\left(\sqrt{np\log n}+\log n\right)^2\tC_1^2\frac{(d+1)\log n}{npL}+np\tC_1^2\frac{(d+1)\log n}{npL}+p^2n\tC_1^2\frac{(d+1)\log n}{npL} \\
    &\lesssim pn(d+1)\log n \tC_1^2\frac{(d+1)\log n}{npL}.
\end{align*}
In summary,  there exists constants $D_1, D_2$ which are independent of $C_i, i\geq 0$ such that
\begin{align}
     \Vert\boldsymbol{v}^m\Vert_2\leq D_1\sqrt{\frac{pn(d+1)\log n}{L}},\quad \Vert\boldsymbol{u}^m\Vert_2\leq D_2\tC_1 (d+1)\log n\sqrt{\frac{1}{L}}\label{induction3uv}
\end{align}
with probability exceeding $1-O(n^{-11})$. Plugging   Eq.~\eqref{induction3decompose}, Eq.~\eqref{induction3sc} and Eq.~\eqref{induction3uv} in Eq.~\eqref{induction3eq1} we have
\begin{align}
    \left\Vert \tb^{t+1}-\tb^{t+1,(m)}\right\Vert_2\leq& \left\|\tb^t-\eta\nabla\mathcal{L}_\tau(\tb^t)-\left[\tb^{t,(m)}-\eta\nabla\mathcal{L}_\tau^{(m)}(\tb^{t,(m)})\right]\right\|_2 \label{induction3eq2}\\
    \leq &\left(1-\frac{c_2}{20\kappa_1}\eta pn\right)\left\Vert\tb^t-\tb^{t,(m)}\right\Vert_2   \nonumber \\
    &+ \eta \left(D_1\sqrt{\frac{pn(d+1)\log n}{L}}+D_2\tC_1 (d+1)\log n\sqrt{\frac{1}{L}}\right) \nonumber \\
    \leq& \left(1-\frac{c_2}{20\kappa_1}\eta pn\right)C_4\kappa_1\sqrt{\frac{(d+1)\log n}{npL}} \nonumber \\
    &+ \eta \left(D_1\sqrt{\frac{pn(d+1)\log n}{L}}+D_2\tC_1 (d+1)\log n\sqrt{\frac{1}{L}}\right) \nonumber \\
    \leq& C_4\kappa_1\sqrt{\frac{(d+1)\log n}{npL}},\label{induction3eq3}
\end{align}
as long as $\displaystyle C_4\geq \frac{40D_1}{c_2}$ and $n$ is large enough such that $\displaystyle C_4\geq \frac{40D_2}{c_2}\tC_1\sqrt{\frac{(d+1)\log n}{np}}$.

\end{proof}

\subsection{Proof of Lemma \ref{induction2}}\label{induction2proof}
\begin{proof}
For $m\in[n]$, we have
\begin{align*}
     \alpha_m^{t+1,(m)}-\alpha_m^{*} &= s\left(\alpha_m^{t,(m)}-\eta\left[\nabla\cL_\tau^{(m)}\left(\tb^{t,(m)}\right)\right]_m,\eta\lambda\right)-\alpha_m^* \\
     &=s\left(\alpha_m^{t,(m)}-\eta\left[\nabla\cL_\tau^{(m)}\left(\tb^{t,(m)}\right)\right]_m,\eta\lambda\right) -s(\alpha_m^*)+s(\alpha_m^*)-\alpha_m^*.
\end{align*}
According to the induction assumption, we know that $\cS(\ba^{t,(m)})\subset \cS(\ba^*)$. As a result, we have 
\begin{align}
    \left|\alpha_m^{t+1,(m)}-\alpha_m^{*}\right| \leq
    \begin{cases}
      \left|s\left(\eta\left[\nabla\cL_\tau^{(m)}\left(\tb^{t,(m)}\right)\right]_m,\eta\lambda\right)\right|, \quad & m\notin \cS(\ba^*) \\
      \left|\alpha_m^{t,(m)}-\eta\left[\nabla\cL_\tau^{(m)}\left(\tb^{t,(m)}\right)\right]_m-\alpha_m^*\right|+\eta\lambda , \quad & m\in \cS(\ba^*).\label{induction2eq1}
    \end{cases}  
\end{align}

First, when $m\in \cS(\ba^*)$, we have
\begin{align}
    & \alpha_m^{t,(m)}-\eta\left[\nabla\cL_\tau^{(m)}\left(\tb^{t,(m)}\right)\right]_m-\alpha_m^* \nonumber \\
    =& \alpha_m^{t,(m)}-\alpha_m^*-\eta p\sum_{i\neq m}\left\{\frac{e^{\tx_i^\top\tb^*}}{e^{\tx_i^\top\tb^*}+e^{\tx_m^\top\tb^*}}-\frac{e^{\tx_i^\top\tb^{t, (m)}}}{e^{\tx_i^\top\tb^{t, (m)}}+e^{\tx_m^\top\tb^{t, (m)}}}\right\}-\eta\tau\alpha_m^{t,(m)}. \label{eqa}
\end{align}
Normalizing the numerators below to 1 and by the mean value theorem,  there exists some $c_i$ between $\tx_m^\top\tb^* - \tx_i^\top\tb^*$ and $\tx_m^\top\tb^{t, (m)} - \tx_i^\top\tb^{t, (m)}$ such that
\begin{align}
    \frac{e^{\tx_i^\top\tb^*}}{e^{\tx_i^\top\tb^*}+e^{\tx_m^\top\tb^*}}-\frac{e^{\tx_i^\top\tb^{t, (m)}}}{e^{\tx_i^\top\tb^{t, (m)}}+e^{\tx_m^\top\tb^{t, (m)}}} =& -\frac{e^{c_i}}{(1+e^{c_i})^2}\left[\tx_m^\top\tb^* - \tx_i^\top\tb^*-\tx_m^\top\tb^{t, (m)} + \tx_i^\top\tb^{t, (m)}\right] \nonumber \\
    =&-\frac{e^{c_i}}{(1+e^{c_i})^2}\left[\alpha_m^* - \alpha_i^*-\alpha_m^{t,(m)}+\alpha_i^{t,(m)}\right]  \nonumber \\ &\;-\frac{e^{c_i}}{(1+e^{c_i})^2}\left[\bx_m^\top\bb^* - \bx_i^\top\bb^*-\bx_m^\top\bb^{t, (m)} + \bx_i^\top\bb^{t, (m)}\right].
    \label{eqb}
\end{align}
Combining Eq.~\eqref{eqa} and Eq.~\eqref{eqb}, we have
\begin{align*}
    &\alpha_m^{t,(m)}-\eta\left[\nabla\cL_\tau^{(m)}\left(\tb^{t,(m)}\right)\right]_m-\alpha_m^* \\
    =& \left(1-\eta p\sum_{i\neq m}\frac{e^{c_i}}{(1+e^{c_i})^2}\right)\left(\alpha_m^{t,(m)}-\alpha_m^{*}\right)\\
    &+\eta p\sum_{i\neq m}\frac{e^{c_i}}{(1+e^{c_i})^2}\left[\alpha_i^{t,(m)}-\alpha_i^{*}+(\bx_m-\bx_i)^\top\left(\bb^*-\bb^{t,(m)}\right)\right]-\eta\tau\alpha_m^{t,(m)} \\
    =& \left(1-\eta\lambda-\eta p\sum_{i\neq m}\frac{e^{c_i}}{(1+e^{c_i})^2}\right)\left(\alpha_m^{t,(m)}-\alpha_m^{*}\right)\\
    &+\eta p\sum_{i\neq m}\frac{e^{c_i}}{(1+e^{c_i})^2}\left[\alpha_i^{t,(m)}-\alpha_i^{*}+(\bx_m-\bx_i)^\top\left(\bb^*-\bb^{t,(m)}\right)\right]-\eta\tau\alpha_m^{*}.
\end{align*}
By taking absolute value on both side, we get
\begin{align*}
    &\left|\alpha_m^{t,(m)}-\eta\left[\nabla\cL_\tau^{(m)}\left(\tb^{t,(m)}\right)\right]_m-\alpha_m^*\right| \\
    \leq& \left| 1-\eta\lambda-\eta p\sum_{i\neq m}\frac{e^{c_i}}{(1+e^{c_i})^2}\right||\alpha_m^{t,(m)}-\alpha_m^{*}| \\
    &+ \frac{\eta p}{4} \sum_{i\neq m}\left[|\alpha_i^{t,(m)}-\alpha_i^{*}|+\Vert\bx_m-\bx_i\Vert_2\Vert\bb^*-\bb^{t,(m)}\Vert_2\right]+\eta\tau |\alpha_m^*| \\
    \leq& \left| 1-\eta\lambda-\eta p\sum_{i\neq m}\frac{e^{c_i}}{(1+e^{c_i})^2}\right||\alpha_m^{t,(m)}-\alpha_m^{*}| \\
    &+ \frac{\eta p}{4}\left[\sqrt{n}\|\ba^{t,(m)}-\ba^{*}\|_2+n\cdot 2\sqrt{\frac{c_3(d+1)}{n}}\Vert\bb^*-\bb^{t,(m)}\Vert_2\right]+\eta\tau \|\ba^*\|_{\infty}\\
    \leq& \left| 1-\eta\lambda-\eta p\sum_{i\neq m}\frac{e^{c_i}}{(1+e^{c_i})^2}\right||\alpha_m^{t,(m)}-\alpha_m^{*}| \\
    &+ \frac{\eta p}{4}\sqrt{n}\left(1+2\sqrt{c_3(d+1)}\right)\left\Vert\tb^*-\tb^{t,(m)}\right\Vert_2+\eta\tau \|\ba^*\|_{\infty}.
\end{align*}
Since $\displaystyle 1-\eta\lambda-\eta p\sum_{i\neq m}\frac{e^{c_1}}{(1+e^{c_i})^2}\geq 1-\eta\lambda-\eta p\frac{n}{4}\geq 0$, we have
\begin{align*}
    \left| 1-\eta\lambda-\eta p\sum_{i\neq m}\frac{e^{c_i}}{(1+e^{c_i})^2}\right| &= 1-\eta\lambda-\eta p\sum_{i\neq m}\frac{e^{c_i}}{(1+e^{c_i})^2} \\
    &\leq 1-\eta p(n-1)\min_{i\neq m}\frac{e^{c_i}}{(1+e^{c_i})^2}.
\end{align*}
By the defintion of $c_i,$ we have
\begin{align*}
    \max_{i\neq m}|c_i|\leq& \max_{i\neq m} |\tx_m^\top\tb^* - \tx_i^\top\tb^*|+\max_{i\neq m}\left|\tx_m^\top\tb^* - \tx_i^\top\tb^*-\tx_m^\top\tb^{t, (m)} + \tx_i^\top\tb^{t, (m)}\right| \\
    \leq& \log \kappa_1+\max_{i\neq m}|\alpha^*_m-\alpha_i^*-\alpha_m^{t,(m)}+\alpha_i^{t,(m)}|+\max_{i\neq m}\left|(\bx_m-\bx_i)^\top\left(\bb^*-\bb^{t, (m)}\right)\right| \\
    \leq& \log \kappa_1+2\Vert \ba^{t,(m)}-\ba^*\Vert_{\infty}+2\sqrt{\frac{c_3(d+1)}{n}}\Vert \bb^*-\bb^{t, (m)}\Vert_2.
\end{align*}
Consider $npL$ which is large enough such that
\begin{align*}
    2(C_4+C_6)\kappa_1^2\sqrt{\frac{(d+1)\log n}{npL}}\leq 0.1,\quad 2(C_3+C_4)\kappa_1\sqrt{c_3}\sqrt{\frac{(d+1)\log n}{npL}}\leq 0.1.
\end{align*}
Then we have
\begin{align*}
    \max_{i\neq m}|c_i|\leq& \log \kappa_1+2\Vert \ba^{t,(m)}-\ba^*\Vert_{\infty}+2\sqrt{\frac{c_3(d+1)}{n}}\Vert \bb^*-\bb^{t, (m)}\Vert_2 \\
    \leq& \log \kappa_1 + 2\left(\Vert \ba^{t}-\ba^*\Vert_{\infty}+\max_{1\leq m\leq n}\left\Vert \tb^t-\tb^{t,(m)}\right\Vert_2\right) \\
    &+2\sqrt{\frac{c_3(d+1)}{n}}\left(\left\Vert \tb^t-\tb^{*}\right\Vert_2+\max_{1\leq m\leq n}\left\Vert \tb^t-\tb^{t,(m)}\right\Vert_2\right) \\
    \leq &\log \kappa_1+0.2.
\end{align*}
Then it holds that,
\begin{align*}
    \min_{i\neq m}\frac{e^{c_i}}{(1+e^{c_i})^2} = \min_{i\neq m}\frac{e^{-|c_i|}}{(1+e^{-|c_i|})^2}\geq \min_{i\neq m}\frac{e^{-|c_i|}}{4} = \frac{e^{-\max_{i\neq m}|c_i|}}{4}\geq\frac{1}{4\kappa_1e^{0.2}} \geq \frac{1}{5\kappa_1}.
\end{align*}
Using $\displaystyle n-1\geq \frac{n}{2}$ for $n\geq 2$, we have
\begin{align*}
    &\left|\alpha_m^{t,(m)}-\eta\left[\nabla\cL_\tau^{(m)}\left(\tb^{t,(m)}\right)\right]_m-\alpha_m^*\right| \\
    \leq & \left(1-\frac{1}{10\kappa_1}\eta pn\right)|\alpha_m^{t,(m)}-\alpha_m^{*}| \\ &+\frac{1+2\sqrt{c_3(d+1)}}{4}\eta p\sqrt{n}\left(\left\Vert \tb^t-\tb^{*}\right\Vert_2+\max_{1\leq m\leq n}\left\Vert \tb^t-\tb^{t,(m)}\right\Vert_2\right) + \eta\tau \kappa_2 \\
    \leq& \left(1-\frac{1}{10\kappa_1}\eta pn\right)C_5\kappa_1^2\sqrt{\frac{(d+1)\log n}{npL}} \\ &+\frac{1+2\sqrt{c_3(d+1)}}{4}\eta p\sqrt{n}(C_3+C_4)\kappa_1\sqrt{\frac{\log n}{pL}} + \eta\tau \kappa_2.
\end{align*} 
Combine this result with Eq.~\eqref{induction2eq1}, we get
\begin{align}
    \left| \alpha_m^{t+1,(m)} - \alpha_m^*\right| \leq& c_\lambda\eta\kappa_1\sqrt{\frac{(d+1)np\log n}{L}} + \left(1-\frac{1}{10\kappa_1}\eta pn\right)C_5\kappa_1^2\sqrt{\frac{(d+1)\log n}{npL}} \nonumber \\
    &+\frac{1+2\sqrt{c_3(d+1)}}{4}\eta p\sqrt{n}(C_3+C_4)\kappa_1\sqrt{\frac{\log n}{pL}} + \eta\tau \kappa_2 \nonumber \\
    \leq & C_5\kappa_1^2\sqrt{\frac{(d+1)\log n}{npL}}. \label{induction2eq3}
\end{align}
as long as $C_5\geq 30 c_\lambda$, $C_5\geq 7.5(1+2\sqrt{c_3})(C_3+C_4)$ and $C_5\geq 30c_\tau/\sqrt{d+1}$.

Second, let us focus on the case where $m\notin \cS(\ba^*)$. It suffices to control $[\nabla\cL_\tau^{(m)}(\tb^{t,(m)})]_m$, which has been studied before. By Eq.\eqref{eqa}, Eq.\eqref{eqb} as well as the fact that $m\in \cS(\ba^*)^c\subset \cS(\ba^{t,(m)})^c$ we know that
\begin{align*}
    \left[\nabla\cL_\tau^{(m)}\left(\tb^{t,(m)}\right)\right]_m =& p\sum_{i\neq m}\left\{\frac{e^{\tx_i^\top\tb^*}}{e^{\tx_i^\top\tb^*}+e^{\tx_m^\top\tb^*}}-\frac{e^{\tx_i^\top\tb^{t, (m)}}}{e^{\tx_i^\top\tb^{t, (m)}}+e^{\tx_m^\top\tb^{t, (m)}}}\right\}+\tau\alpha_m^{t,(m)}  \\
    =& -p\sum_{i\neq m } \frac{e^{c_i}}{(1+e^{c_i})^2}\left[\alpha_m^* - \alpha_i^*-\alpha_m^{t,(m)}+\alpha_i^{t,(m)}\right]  \nonumber \\ 
    &\;-p\sum_{i\neq m }\frac{e^{c_i}}{(1+e^{c_i})^2}\left[\bx_m^\top\bb^* - \bx_i^\top\bb^*-\bx_m^\top\bb^{t, (m)} + \bx_i^\top\bb^{t, (m)}\right] \\
    =&p\sum_{i\neq m }\frac{e^{c_i}}{(1+e^{c_i})^2}\left[\alpha_i^*-\alpha_i^{t,(m)}+(\bx_m-\bx_i)^\top(\bb^{t,(m)}-\bb^*)\right].
\end{align*}
As a result, the left hand side can be controlled as
\begin{align}
    \left|\left[\nabla\cL_\tau^{(m)}\left(\tb^{t,(m)}\right)\right]_m\right|&\leq \frac{p}{4}\sum_{i\neq m}\left[\left|\alpha_i^*-\alpha_i^{t,(m)}\right| + \left\|\bx_m-\bx_i\right\|_2\left\|\bb^{t,(m)}-\bb^*\right\|_2\right] \nonumber \\
    &\leq \frac{p}{4}\left[\sqrt{n}\|\ba^*-\ba^{t,(m)}\|_2+n\cdot 2\sqrt{\frac{c_3(d+1)}{n}}\|\bb^*-\bb^{t,(m)}\|_2\right]  \nonumber \\
    &\leq \frac{p}{4}\sqrt{n}(1+2\sqrt{c_3(d+1)})\left\|\tb^*-\tb^{t,(m)}\right\|_2.\label{induction2eq2}
\end{align}
Again, since $\|\tb^*-\tb^{t,(m)}\|_2\leq \|\tb^*-\tb^t\|_2+\|\tb^t-\tb^{t,(m)}\|_2$, we have $\|\tb^*-\tb^{t,(m)}\|_2\leq (C_3+C_4)\kappa_1\sqrt{\log n/pL}$. Plugging this in Eq.\eqref{induction2eq2} and using the fact that $c_3\geq 1$, we get
\begin{align}
    \left|\left[\nabla\cL_\tau^{(m)}\left(\tb^{t,(m)}\right)\right]_m\right|&\leq \frac{3\sqrt{c_3}}{4} (C_3+C_4)\kappa_1\sqrt{\frac{(d+1)np\log n}{L}}. \label{induction2eq4}
\end{align}
As a result, as long as $c_\lambda$ satisfies $c_\lambda\geq 0.75\sqrt{c_3}(C_3+C_4)$, we have
\begin{align*}
    \left|\eta\left[\nabla\cL_\tau^{(m)}\left(\tb^{t,(m)}\right)\right]_m\right|\leq \eta\lambda.
\end{align*}
In this case, by Eq.~\eqref{induction2eq1} we know that $|\alpha_m^{t+1,(m)}-\alpha_m^*|=0$. This as well as Eq.~\eqref{induction2eq3} tell us
\begin{align*}
    \max_{1\leq m\leq n}|\alpha_m^{t+1,(m)}-\alpha_m^*|&\leq C_5\kappa_1^2\sqrt{\frac{(d+1)\log n}{npL}}.
\end{align*}

    Next we show $\cS(\ba^{t+1, (m)})\subset\cS(\ba^*)$. Let $k\neq m$ be any index which belongs to $\cS(\ba^*)^c$, it remains to show $k\in\cS(\ba^{t+1, (m)})^c$. By definition we know that
\begin{align*}
    \alpha_k^{t+1, (m)} = s\left(\alpha_k^{t,(m)}-\eta\left[\nabla\cL_\tau^{(m)}\left(\tb^{t,(m)}\right)\right]_k,\eta\lambda\right).
\end{align*}
We write
\begin{align*}
    \alpha_k^{t,(m)}-\eta\left[\nabla\cL_\tau^{(m)}\left(\tb^{t,(m)}\right)\right]_k =& \alpha_k^{t,(k)}-\eta\left[\nabla\cL_\tau^{(k)}\left(\tb^{t,(k)}\right)\right]_k  \\
    &+ \alpha_k^{t,(m)}-\eta\left[\nabla\cL_\tau^{(m)}\left(\tb^{t,(m)}\right)\right]_k - \left(\alpha_k^{t,(k)}-\eta\left[\nabla\cL_\tau^{(k)}\left(\tb^{t,(k)}\right)\right]_k\right).
\end{align*}
According to the induction assumption, we have $\cS(\ba^{t, (k)})\subset\cS(\ba^*)$. This implies that $\alpha_k^{t,(k)}=0$. By triangle inequality we have
\begin{align}
    \left|\alpha_k^{t,(m)}-\eta\left[\nabla\cL_\tau^{(m)}\left(\tb^{t,(m)}\right)\right]_k\right|\leq & \left|\alpha_k^{t,(k)}-\eta\left[\nabla\cL_\tau^{(k)}\left(\tb^{t,(k)}\right)\right]_k\right| \nonumber \\
    &+ \left|\alpha_k^{t,(m)}-\eta\left[\nabla\cL_\tau^{(m)}\left(\tb^{t,(m)}\right)\right]_k - \left(\alpha_k^{t,(k)}-\eta\left[\nabla\cL_\tau^{(k)}\left(\tb^{t,(k)}\right)\right]_k\right)\right| \nonumber \\
    \leq & \eta\left|\left[\nabla\cL_\tau^{(k)}\left(\tb^{t,(k)}\right)\right]_k\right| \nonumber \\
    &+\left\|\tb^{t,(m)}-\eta\nabla\cL_\tau^{(m)}\left(\tb^{t,(m)}\right) - \left(\tb^{t,(k)}-\eta\nabla\cL_\tau^{(k)}\left(\tb^{t,(k)}\right) \right)\right\|_2  \nonumber \\
    \leq& \eta\left|\left[\nabla\cL_\tau^{(k)}\left(\tb^{t,(k)}\right)\right]_k\right| \nonumber \\
    &+2\max_{1\leq i\leq n}\left\|\tb^{t,(i)}-\eta\nabla\cL_\tau^{(i)}\left(\tb^{t,(i)}\right) - \left(\tb^{t}-\eta\nabla\cL_\tau(\tb^{t})\right)\right\|_2.\label{induction2eq5}
\end{align}
From Eq.~\eqref{induction3eq2} to Eq.~\eqref{induction3eq3} we know that
\begin{align*}
    \max_{1\leq i\leq n}\left\|\tb^{t,(i)}-\eta\nabla\cL_\tau^{(i)}\left(\tb^{t,(i)}\right) - \left(\tb^{t}-\eta\nabla\cL_\tau(\tb^{t})\right)\right\|_2\leq C_4\kappa_1\sqrt{\frac{(d+1)\log n}{npL}}.
\end{align*}
Combine this with Eq.~\eqref{induction2eq4} and Eq.~\eqref{induction2eq5} we have
\begin{align*}
    \left|\alpha_k^{t,(m)}-\eta\left[\nabla\cL_\tau^{(m)}\left(\tb^{t,(m)}\right)\right]_k\right|\leq \frac{3\sqrt{c_3}\eta }{4} (C_3+C_4)\kappa_1\sqrt{\frac{(d+1)np\log n}{L}} +2C_4\kappa_1\sqrt{\frac{(d+1)\log n}{npL}}.
\end{align*}
Therefore, as long as $c_\lambda$ is chosen such that
\begin{align*}
    c_\lambda\geq \frac{3\sqrt{c_3}}{4}(C_3+C_4)+\frac{2C_4}{\eta np},
\end{align*}
we have $|\alpha_k^{t,(m)}-\eta[\nabla\cL_\tau^{(m)}(\tb^{t,(m)})]_k|\leq \eta\lambda$. In this case, we know that $\alpha_k^{t+1,(m)}=0$. In other words, $k\in \cS(\ba^{t+1, (m)})^c$. To sum up, it holds that
\begin{align*}
    \cS(\ba^{t+1,(m)})\subset\cS(\ba^*).
\end{align*}
\end{proof}

\subsection{Proof of Lemma \ref{induction4}}\label{induction4proof}
\begin{proof}
For any $m\in[n]$, we have
\begin{align*}
    |\alpha_m^{t+1}-\alpha_m^*|&\leq |\alpha_m^{t+1}-\alpha_m^{t+1,(m)}|+|\alpha_m^{t+1,(m)}-\alpha_m^{*}| \\
    &\leq \left\|\tb_m^{t+1}-\tb_m^{t+1,(m)}\right\|_2+|\alpha_m^{t+1,(m)}-\alpha_m^{*}| \\
    &\leq C_4\kappa_1\sqrt{\frac{(d+1)\log n}{npL}}+C_5\kappa_1^2\sqrt{\frac{(d+1)\log n}{npL}} \\
    &\leq (C_4+C_5)\kappa_1^2\sqrt{\frac{(d+1)\log n}{npL}}.
\end{align*}
As a result, we have
\begin{align*}
    \left\|\ba^{t+1}-\ba^*\right\|_\infty\leq C_6\kappa_1^2\sqrt{\frac{(d+1)\log n}{npL}},
\end{align*}
as long as $C_6\geq C_4+C_5$.

It remains to show that
\begin{align*}
    \cS(\ba^{t+1})\subset\cS(\ba^*).
\end{align*}
For any $k\in \cS(\ba^*)^c$, it suffices to show that $k\in \cS(\ba^{t+1})^c$. By definition we know that
\begin{align*}
    \alpha_k^{t+1} = s\left(\alpha_k^{t}-\eta\left[\nabla\cL_\tau\left(\tb^{t}\right)\right]_k,\eta\lambda\right).
\end{align*}
We write
\begin{align*}
    \alpha_k^{t}-\eta\left[\nabla\cL_\tau\left(\tb^{t}\right)\right]_k =& \alpha_k^{t,(k)}-\eta\left[\nabla\cL_\tau^{(k)}\left(\tb^{t,(k)}\right)\right]_k  \\
    &+ \alpha_k^{t}-\eta\left[\nabla\cL_\tau\left(\tb^{t}\right)\right]_k - \left(\alpha_k^{t,(k)}-\eta\left[\nabla\cL_\tau^{(k)}\left(\tb^{t,(k)}\right)\right]_k\right).
\end{align*}
According to the induction assumption, we have $\cS(\ba^{t, (k)})\subset\cS(\ba^*)$. This implies that $\alpha_k^{t,(k)}=0$. By triangle inequality we have
\begin{align}
    \left|\alpha_k^{t}-\eta\left[\nabla\cL_\tau\left(\tb^{t}\right)\right]_k\right|\leq & \left|\alpha_k^{t,(k)}-\eta\left[\nabla\cL_\tau^{(k)}\left(\tb^{t,(k)}\right)\right]_k\right| \nonumber \\
    &+ \left|\alpha_k^{t}-\eta\left[\nabla\cL_\tau\left(\tb^{t}\right)\right]_k - \left(\alpha_k^{t,(k)}-\eta\left[\nabla\cL_\tau^{(k)}\left(\tb^{t,(k)}\right)\right]_k\right)\right| \nonumber \\
    \leq & \eta\left|\left[\nabla\cL_\tau^{(k)}\left(\tb^{t,(k)}\right)\right]_k\right| \nonumber \\
    &+\left\|\tb^{t}-\eta\nabla\cL_\tau\left(\tb^{t}\right) - \left(\tb^{t,(k)}-\eta\nabla\cL_\tau^{(k)}\left(\tb^{t,(k)}\right) \right)\right\|_2  \nonumber \\
    \leq& \eta\left|\left[\nabla\cL_\tau^{(k)}\left(\tb^{t,(k)}\right)\right]_k\right| \nonumber \\
    &+\max_{1\leq i\leq n}\left\|\tb^{t,(i)}-\eta\nabla\cL_\tau^{(i)}\left(\tb^{t,(i)}\right) - \left(\tb^{t}-\eta\nabla\cL_\tau(\tb^{t})\right)\right\|_2.\label{induction4eq1}
\end{align}
Again, from Eq.~\eqref{induction3eq2} to Eq.~\eqref{induction3eq3} we have
\begin{align*}
    \max_{1\leq i\leq n}\left\|\tb^{t,(i)}-\eta\nabla\cL_\tau^{(i)}\left(\tb^{t,(i)}\right) - \left(\tb^{t}-\eta\nabla\cL_\tau(\tb^{t})\right)\right\|_2\leq C_4\kappa_1\sqrt{\frac{(d+1)\log n}{npL}}.
\end{align*}
Combine this with Eq.~\eqref{induction2eq4} and Eq.~\eqref{induction4eq1} we have
\begin{align*}
    \left|\alpha_k^{t,(m)}-\eta\left[\nabla\cL_\tau^{(m)}\left(\tb^{t,(m)}\right)\right]_k\right|\leq \frac{3\sqrt{c_3}\eta }{4} (C_3+C_4)\kappa_1\sqrt{\frac{(d+1)np\log n}{L}} +C_4\kappa_1\sqrt{\frac{(d+1)\log n}{npL}}.
\end{align*}
Since $c_\lambda$ is already chosen to satisfy that
\begin{align*}
    c_\lambda\geq \frac{3\sqrt{c_3}}{4}(C_3+C_4)+\frac{2C_4}{\eta np}\geq \frac{3\sqrt{c_3}}{4}(C_3+C_4)+\frac{C_4}{\eta np},
\end{align*}
we have $|\alpha_k^{t}-\eta[\nabla\cL_\tau(\tb^{t})]_k|\leq \eta\lambda$. In this case, we know that $\alpha_k^{t+1}=0$. In other words, $k\in \cS(\ba^{t+1})^c$. To sum up, it holds that
\begin{align*}
    \cS(\ba^{t+1})\subset\cS(\ba^*).
\end{align*}

\end{proof}

\subsection{Proof of Lemma \ref{supportlemma}}\label{supportlemmaproof}
\begin{proof}
Since $\tb_R$ is the minimizer of $\cL_R(\cdot)$, we know that 
\begin{align*}
    -\left[\nabla\cL_\tau(\tb_R)\right]_{1:n}\in \partial \left\|\widehat\ba_R\right\|_1, \quad \left[\nabla\cL_\tau(\tb_R)\right]_{n+1:n+d} = 0.
\end{align*}
This implies $\textsf{SOFT}_{\eta\lambda}(\tb_R-\eta\nabla\cL_\tau(\tb_R)) = \tb_R$. For any $k\in \cS(\ba^*)^c$, we want to prove $k\in \cS(\widehat\ba_R)^c$. It suffices to show that 
\begin{align*}
    \left|\left[\tb_R-\eta\nabla\cL_\tau(\tb_R)\right]_k\right| \leq \eta\lambda.
\end{align*}
For the same reason as Eq.\eqref{pgd1eq2} and Eq.~\eqref{pgd1eq3}, we know that 
\begin{align*}
    \left\|\tb^{T-1}-\eta\nabla\cL_\tau(\tb^{T-1}) - \left(\tb_R-\eta\nabla\cL_\tau(\tb_R)\right)\right\|_2\leq \left\|\tb^{T-1}-\tb_R\right\|_2.
\end{align*}
As a result, we can write
\begin{align*}
    \left|\left[\tb_R-\eta\nabla\cL_\tau(\tb_R)\right]_k\right|\leq \left|\left[\tb^{T-1}-\eta\nabla\cL_\tau(\tb^{T-1})\right]_k\right| +\left\|\tb^{T-1}-\tb_R\right\|_2 
\end{align*}
On the other hand, similar to Eq.~\eqref{induction4eq1} we know that
\begin{align*}
    \left|\left[\tb^{T-1}-\eta\nabla\cL_\tau(\tb^{T-1})\right]_k\right|\leq& \eta\left|\left[\nabla\cL_\tau^{(k)}\left(\tb^{T-1,(k)}\right)\right]_k\right| \nonumber \\
    &+\max_{1\leq i\leq n}\left\|\tb^{T-1,(i)}-\eta\nabla\cL_\tau^{(i)}\left(\tb^{T-1,(i)}\right) - \left(\tb^{T-1}-\eta\nabla\cL_\tau(\tb^{T-1})\right)\right\|_2 \\
    \leq & \frac{3\sqrt{c_3}\eta }{4} (C_3+C_4)\kappa_1\sqrt{\frac{(d+1)np\log n}{L}} +C_4\kappa_1\sqrt{\frac{(d+1)\log n}{npL}}.
\end{align*}
Combine this with Lemma \ref{lem5} we know that
\begin{align*}
    \left|\left[\tb_R-\eta\nabla\cL_\tau(\tb_R)\right]_k\right|&\leq \left|\left[\tb^{T-1}-\eta\nabla\cL_\tau(\tb^{T-1})\right]_k\right| +\left\|\tb^{T-1}-\tb_R\right\|_2  \\
    &\leq \frac{3\sqrt{c_3}\eta }{4} (C_3+C_4)\kappa_1\sqrt{\frac{(d+1)np\log n}{L}} +(C_4+C_7)\kappa_1\sqrt{\frac{(d+1)\log n}{npL}}.
\end{align*}
As a result, as long as $c_\lambda$ is chosen to satisfies
\begin{align*}
    c_\lambda\geq \frac{3\sqrt{c_3}}{4}(C_3+C_4)+\frac{C_4+C_7}{\eta np},
\end{align*}
we have $|[\tb_R-\eta\nabla\cL_\tau(\tb_R)]_k| \leq \eta\lambda$. In this case, we know that $[\tb_R]_k = 0$ and thus $k\in \cS(\widehat\ba_R)^c$. To sum up, we have
\begin{align*}
    \cS(\widehat\ba_R)\subset\cS(\ba^*).
\end{align*}
\end{proof}

\begin{lemma}\label{cLinfty}
    With $\tau$ given by \ref{estimationthm}, with probability at least $1-O(n^{-10})$ we have
    \begin{align*}
        \left\|\left[\nabla \cL_\tau(\tb^*)\right]_{1:n}\right\|_\infty\lesssim \sqrt{\frac{np\log n}{L}},\quad \left\|\left[\nabla \mathcal{L}_\tau(\tb^*)\right]_{n+1:n+d}\right\|_2\lesssim \sqrt{\frac{(d+1)np\log n}{L}}.
    \end{align*}
\end{lemma}
\begin{proof}
For any $i\in [n]$, by definition we know that
\begin{align*}
    \left[\nabla \cL_\tau(\tb^*)\right]_{i} &= \sum_{j:j\neq i, (i,j)\in \mathcal{E}}\left\{-y_{j,i}+\frac{e^{\tx_i^\top \tb}}{e^{\tx_i^\top \tb}+e^{\tx_j^\top \tb}}\right\} +\tau \alpha_i \\
    &= \frac{1}{L}\sum_{j:j\neq i, (i,j)\in \mathcal{E}}\sum_{l=1}^L\left\{-y_{j,i}^{(l)}+\frac{e^{\tx_i^\top \tb}}{e^{\tx_i^\top \tb}+e^{\tx_j^\top \tb}}\right\} +\tau\alpha_i.
\end{align*}
Using Bernstein inequality conditioned on the comparison graph $\mathcal{G}$, with probability at least $1-O(n^{-11})$ we have
\begin{align*}
    \left|\sum_{j:j\neq i, (i,j)\in \mathcal{E}}\sum_{l=1}^L\left\{-y_{j,i}^{(l)}+\frac{e^{\tx_i^\top \tb}}{e^{\tx_i^\top \tb}+e^{\tx_j^\top \tb}}\right\}\right|&\lesssim \sqrt{\sum_{j:j\neq i, (i,j)\in \mathcal{E}}\frac{L e^{\tx_i^\top \tb}e^{\tx_j^\top \tb}}{\left(e^{\tx_i^\top \tb}+e^{\tx_j^\top \tb}\right)^2}\log n } + \log n \\
    &\lesssim \sqrt{npL\log n} +\log n \lesssim \sqrt{npL\log n}.
\end{align*}
The last $\lesssim$ holds since $npL\gtrsim \log n $. As a result, we know that
\begin{align*}
    \left|\left[\nabla \cL_\tau(\tb^*)\right]_{i}\right|\lesssim \frac{\sqrt{npL\log n}}{L} +\left|\alpha_i^*\right|\tau\leq \sqrt{\frac{np\log n}{L}}+c_\tau\frac{\left\|\ba^*\right\|_\infty}{\kappa_2}\sqrt{\frac{p\log n}{L}}\lesssim \sqrt{\frac{np\log n}{L}}
\end{align*}
with probability exceeding $1-O(n^{-11})$.

On the other hand, we write 
\begin{align*}
    \left[\nabla \mathcal{L}_\tau(\tb^*)\right]_{n+1:n+d} = \tau\bb^*+\frac{1}{L}\sum_{(i,j)\in\mathcal{E}, i>j}\sum_{l=1}^L\underbrace{\left\{-y_{j,i}^{(l)}+\frac{e^{\tx_i^\top \tb^*}}{e^{\tx_i^\top \tb^*}+e^{\tx_j^\top \tb^*}}\right\}(\bx_i-\bx_j)}_{:=z_{i,j}^{(l)}}.
\end{align*}
Since $\bE [z_{i,j}^{(l)}] = 0, \Vert z_{i,j}^{(l)}\Vert_2\leq\Vert \tx_i-\tx_j\Vert_2\leq 2\sqrt{c_3(d+1)/n}$, we have 
\begin{align*}
    \bE[z_{i,j}^{(l)}z_{i,j}^{(l)\top} ] =  \text{Var}[y_{j,i}^{(l)}]&(\tx_i-\tx_j)(\tx_i-\tx_j)^\top \prec(\tx_i-\tx_j)(\tx_i-\tx_j)^\top\\
    \text{and} &\qquad \bE[z_{i,j}^{(l)\top} z_{i,j}^{(l)} ]\leq \frac{4c_3(d+1)}{n}.
\end{align*}
Thus, with high probability (with respect to the randomness of $\mathcal{G}$), we have
\begin{align*}
    \left\|\sum_{(i, j) \in \mathcal{E}, i>j} \sum_{l=1}^{L} \mathbb{E}\left[z_{i, j}^{(l)} z_{i, j}^{(l) \top}\right]\right\| \leq L\left\|\sum_{(i, j) \in \mathcal{E}, i>j}\left(\bx_{i}-\bx_{j}\right)\left(\bx_{i}-\bx_{j}\right)^{\top}\right\|=L\left\|\boldsymbol{L}_{\mathcal{G}}\right\| \lesssim L n p 
\end{align*}
and 
\begin{align*}
    \left|\sum_{(i, j) \in \mathcal{E}, i>j} \sum_{l=1}^{L} \mathbb{E}\left[\boldsymbol{z}_{i, j}^{(l) \top} \boldsymbol{z}_{i, j}^{(l)}\right]\right| \leq \frac{4c_3(d+1)}{n} L\left|\sum_{(i, j) \in \mathcal{E}, i>j} 1\right| \lesssim (d+1)Ln p.
\end{align*}
Let $V:=\frac{1}{L^{2}} \max \left\{\left\|\sum_{(i, j) \in \mathcal{E}} \sum_{l=1}^{L} \mathbb{E}\left[\boldsymbol{z}_{i, j}^{(l)} \boldsymbol{z}_{i, j}^{(l) \top}\right]\right\|,\left|\sum_{(i, j) \in \mathcal{E}} \sum_{l=1}^{L} \mathbb{E}\left[\boldsymbol{z}_{i, j}^{(l) \top} \boldsymbol{z}_{i, j}^{(l)}\right]\right|\right\}$ and $B:=\max_{i,j,l}\Vert z_{i,j}^{(l)}\Vert/L$. By matrix Bernstein inequality \citep{tropp2015introduction}, with probability at least $1-O(n^{-10})$ we have
\begin{align*}
    \left\|\frac{1}{L}\sum_{(i,j)\in\mathcal{E}, i>j}\sum_{l=1}^L z_{i,j}^{(l)}\right\|_{2} &\lesssim \sqrt{V \log (n+d+1)}+B \log (n+d+1) \\
    &\lesssim \sqrt{\frac{(d+1)n p \log n}{L}}+\sqrt{\frac{d+1}{n}}\frac{\log n}{L}\lesssim  \sqrt{\frac{(d+1)n p \log n}{L}},
\end{align*}
The last $\lesssim$ holds since $np\gtrsim \log n $. Therefore, we get
\begin{align*}
    \left\|\left[\nabla \mathcal{L}_\tau(\tb^*)\right]_{n+1:n+d}\right\|_2&\lesssim \sqrt{\frac{(d+1)n p \log n}{L}} +\left\|\bb^*\right\|\tau\\
    &\leq \sqrt{\frac{(d+1)n p \log n}{L}}+c_\tau\frac{\left\|\tb^*\right\|_\infty}{\kappa_3}\sqrt{\frac{p\log n}{L}}\lesssim \sqrt{\frac{(d+1)n p \log n}{L}}
\end{align*}
with probability exceeding $1-O(n^{-10})$.

\end{proof}

\subsection{Auxiliary Lemma}
\begin{lemma}\label{Lilemma}
	For $i\in [n]$, with probability at least $1-O(n^{-10})$ we have
	\begin{itemize}
		\item $\displaystyle\left|\left(\nabla\cL(\tb^*)\right)_i\right|\lesssim\sqrt{\frac{np\log n}{L}}$;
		\item $\displaystyle\sum_{j\neq i}\left(\nabla^2\cL(\tb^*)\right)_{i,j}^2\lesssim np(1+dp)$, \quad $\displaystyle\sum_{k>n}\left(\nabla^2\cL(\tb^*)\right)_{i,k}^2\lesssim  ndp^2$,  \quad $\displaystyle\sum_{j\in [n],j\neq i}\left|\left(\nabla^2\cL(\tb^*)\right)_{i,j}\right|\lesssim np$.
		\item  $ \left|y_{j,i}-\mathbb{E}y_{j,i}\right|\lesssim\sqrt{\frac{\log n}{L}}$, for any $i,j\in[n], i\neq j$.
	\end{itemize}
\end{lemma}
\begin{proof}
(1) By definition for $i\in [n]$ we have
\begin{align*}
    \left(\nabla\cL(\tb^*)\right)_i &= \sum_{j\neq i, (i,j)\in\mathcal{E}}\left\{-y_{j,i}+\phi(\tx_i^\top\tb^*-\tx_j^\top\tb^*)\right\} \\
    &=\frac{1}{L}\sum_{j\neq i, (i,j)\in\mathcal{E}}\sum_{l=1}^L\left\{-y^{(l)}_{j,i}+\phi(\tx_i^\top\tb^*-\tx_j^\top\tb^*)\right\}.
\end{align*}
Since $\left|-y^{(l)}_{j,i}+\phi(\tx_i^\top\tb^*-\tx_j^\top\tb^*)\right|\leq 1$, by Bernstein inequality we have
\begin{align*}
    \left|\left(\nabla\cL(\tb^*)\right)_i-\mathbb{E}\left[\left(\nabla\cL(\tb^*)\right)_i\bigg|\mathcal{G}\right]\right|&\lesssim\frac{1}{L}\left(\sqrt{\log n \left(\sum_{j\neq i, (i,j)\in\mathcal{E}}1\right)L}+\log n\right) \\
    &\lesssim\sqrt{\frac{np\log n}{L}}
\end{align*}
with probability exceeding $1-O(n^{-10})$, as long as $npL\gtrsim \log n$. On the other hand, since $\mathbb{E}\left[-y^{(l)}_{j,i}+\phi(\tx_i^\top\tb^*-\tx_j^\top\tb^*)\right]=0$, we know that $\mathbb{E}\left[\left(\nabla\cL(\tb^*)\right)_i\bigg|\mathcal{G}\right]=0$. As a result, we have
\begin{align*}
    \left|\left(\nabla\cL(\tb^*)\right)_i\right|\lesssim\sqrt{\frac{np\log n}{L}}.
\end{align*}

(2) By definition we have 
\begin{align*}
    \sum_{j\neq i}\left(\nabla^2\cL(\tb^*)\right)_{i,j}^2 &= \left\Vert\left(\sum_{j\neq i, (i,j)\in \mathcal{E}}\phi'(\tx_i^\top\tb^*-\tx_j^\top\tb^*)\left(\tx_i-\tx_j\right)\right)_{-i} \right\Vert^2_2 \\
    &=\sum_{j\neq i, (i,j)\in \mathcal{E}} 1 +\left\Vert\sum_{j\neq i, (i,j)\in \mathcal{E}}\phi'(\tx_i^\top\tb^*-\tx_j^\top\tb^*)\left(\bx_i-\bx_j\right) \right\Vert^2_2 \\
    &\leq \sum_{j\neq i, (i,j)\in \mathcal{E}} 1 +\left(\sum_{j\neq i, (i,j)\in \mathcal{E}} 1\right)\sum_{j\neq i, (i,j)\in \mathcal{E}} \Vert \bx_i-\bx_j\Vert_2^2 \\
    &\lesssim np+np\cdot dp
\end{align*}
with probability at least $1-O(n^{-10})$. Similarly, we have
\begin{align*}
    \sum_{k>n}\left(\nabla^2\cL(\tb^*)\right)_{i,k}^2 &= \left\Vert\sum_{j\neq i, (i,j)\in \mathcal{E}}\phi'(\tx_i^\top\tb^*-\tx_j^\top\tb^*)\left(\bx_i-\bx_j\right) \right\Vert^2_2 \\
    &\leq \left(\sum_{j\neq i, (i,j)\in \mathcal{E}} 1\right)\sum_{j\neq i, (i,j)\in \mathcal{E}} \Vert \bx_i-\bx_j\Vert_2^2 \\
    &\lesssim np\cdot dp
\end{align*}
and 
\begin{align*}
    \sum_{j\in [n],j\neq i}\left|\left(\nabla^2\cL(\tb^*)\right)_{i,j}\right| = \sum_{j\in [n],j\neq i}\left|\phi'(\tx_i^\top\tb^*-\tx_j^\top\tb^*)\right|\textbf{1}((i,j)\in \mathcal{E})\leq  \sum_{j\in [n],j\neq i}\textbf{1}((i,j)\in \mathcal{E})\lesssim np
\end{align*}
with probability at least $1-O(n^{-10})$.

(3) For $i,j\in [n], i\neq j$, by definition we know that $\displaystyle y_{j,i} = \frac{1}{L}\sum_{l=1}^L y_{j,i}^{(l)}$ is the average of $L$ independent Bernoulli random variables. By Hoeffding's inequality, we know that 
\begin{align*}
    \left|y_{j,i}-\mathbb{E}y_{j,i}\right|\lesssim \sqrt{\frac{\log n}{L}}
\end{align*}
with probability at least $1-O(n^{-12})$. As a result, by union bound we know that 
\begin{align*}
    \left|y_{j,i}-\mathbb{E}y_{j,i}\right|\lesssim \sqrt{\frac{\log n}{L}}
\end{align*}
holds for all $i,j\in [n], i\neq j$ with probability at least $1-O(n^{-10})$.
\end{proof}

\subsection{Proof of Lemma \ref{oaestimation}}\label{oaestimationproof}
\begin{proof}
We apply \cite[Theorem 4.1]{fan2020factor} to prove this statement. Use the notation in \cite{fan2020factor}, we denote by
\begin{align*}
    L_n(\boldsymbol{\theta}) = \ocL(\boldsymbol{\theta}) + \frac{\tau}{2}\left\|\boldsymbol{\theta}\right\|_2^2.
\end{align*}
We can see that \cite[Assumption 4.1]{fan2020factor} holds with $M = 0$ and $A = \infty$. Under our model, we have
\begin{align*}
    &S = \text{supp}(\ba^*)\cup \left\{n+1,n+2,\dots, n+d\right\},\quad S_1 = \text{supp}(\ba^*),\quad S_2 = [n+d]\backslash S = [n]\backslash S_1.
\end{align*}
As a result, the irrepresentable condition can be shown in the following way. By definition we have
\begin{align}
    &\left\|\nabla_{S_2 S}^2 L_n\left(\boldsymbol{\theta}^*\right)\left[\nabla_{S S}^2 L_n\left(\boldsymbol{\theta}^*\right)\right]^{-1}\right\|_{\infty} = \max_{i\in S_2} \left\|\nabla_{i,S}^2 L_n\left(\boldsymbol{\theta}^*\right)\left[\nabla_{S S}^2 L_n\left(\boldsymbol{\theta}^*\right)\right]^{-1}\right\|_1 \nonumber \\
    \leq &\max_{i\in S_2} \sqrt{k+d}\left\|\nabla_{i,S}^2 L_n\left(\boldsymbol{\theta}^*\right)\left[\nabla_{S S}^2 L_n\left(\boldsymbol{\theta}^*\right)\right]^{-1}\right\|_2  \leq \max_{i\in S_2} \sqrt{k+d}\left\|\nabla_{i,S}^2 L_n\left(\boldsymbol{\theta}^*\right)\right\|_2\left\|\left[\nabla_{S S}^2 L_n\left(\boldsymbol{\theta}^*\right)\right]^{-1}\right\|. \label{oasupporteq1}
\end{align}
Take $\tb''= \tb^*, \cS(\tb)\subset \cS(\tb^*), \tb' = \boldsymbol{0}$ in Lemma \ref{lb}, we know that 
\begin{align*}
    \tb^\top \nabla_{S S}^2 L_n\left(\boldsymbol{\theta}^*\right)\tb\gtrsim \frac{np}{\kappa_1}\left\|\tb\right\|_2^2.
\end{align*}
As a result, we have $\|[\nabla_{S S}^2 L_n(\boldsymbol{\theta}^*)]^{-1}\|\lesssim \kappa_1/np$. On the other hand, Since the maximum in \eqref{oasupporteq1} is taken over $S_2$, and $S_2\cap S = \emptyset$, we know that
\begin{align*}
    \left\|\nabla_{i,S}^2 L_n\left(\boldsymbol{\theta}^*\right)\right\|_2 &\leq \sqrt{\sum_{(i,j)\in \mathcal{E}, j\in S}1+ \left\|\nabla_{i,n+1:n+d}^2 L_n\left(\boldsymbol{\theta}^*\right)\right\|_2^2} \lesssim \sqrt{np} + \left\|\nabla_{i,n+1:n+d}^2 L_n\left(\boldsymbol{\theta}^*\right)\right\|_2  \\
    &\lesssim \sqrt{np}+\sum_{(i,j)\in \mathcal{E}}\left\|\bx_i-\bx_j\right\|_2\lesssim \sqrt{np}+np\sqrt{\frac{d+1}{n}} = \sqrt{n}(\sqrt{p}+p\sqrt{d+1})
\end{align*}
with probability at least $1-O(n^{-10})$. Plug these in \eqref{oasupporteq1}, we get
\begin{align*}
    \left\|\nabla_{S_2 S}^2 L_n\left(\boldsymbol{\theta}^*\right)\left[\nabla_{S S}^2 L_n\left(\boldsymbol{\theta}^*\right)\right]^{-1}\right\|_{\infty}\lesssim \kappa_1\sqrt{\frac{k+d}{np}}+\kappa_1\sqrt{\frac{(k+d)(d+1)}{n}}.
\end{align*}
As a result, as long as $np\geq C\kappa_1^2(k+d)$ and $n\geq C\kappa_1^2(k+d)(d+1)$ for some constant $C>0$, the irrepresentable condition \cite[Assumption 4.3]{fan2020factor} holds for $\tau = 0.5$ (for $\tau$ defined in \cite{fan2020factor}). According to Lemma \ref{cLinfty}, since $\lambda$ defined in Theorem \ref{estimationthm} satisfies $\|\nabla L_n(\boldsymbol{\theta^*})\|_\infty \lesssim \lambda$, by \cite[Assumption 4.1]{fan2020factor} we know that $\cS(\oa_R)\subset \cS(\ba^*)$.

On the other hand, one can show that
\begin{align*}
     \left\|\left[\nabla_{S S}^2 L_n\left(\boldsymbol{\theta}^*\right)\right]^{-1}\right\|_{\infty} &= \max_{i\in S}\left\|\left[\nabla_{i, S}^2 L_n\left(\boldsymbol{\theta}^*\right)\right]^{-1}\right\|_{1} \leq \max_{i\in S}\sqrt{k+d}\left\|\left[\nabla_{i, S}^2 L_n\left(\boldsymbol{\theta}^*\right)\right]^{-1}\right\|_{2} \\
     &\leq \max_{i\in S}\sqrt{k+d}\left\|\left[\nabla_{S, S}^2 L_n\left(\boldsymbol{\theta}^*\right)\right]^{-1}\right\| \lesssim \frac{\kappa_1\sqrt{k+d}}{np}, \\
     \left\|\left[\nabla_{S S}^2 L_n\left(\boldsymbol{\theta}^*\right)\right]^{-1}\right\|_{2} &\lesssim \frac{\kappa_1}{np}.
\end{align*}
As a result, by \cite[Theorem 4.1]{fan2020factor}, Lemma \ref{cLinfty} we know that
\begin{align*}
    \left\|\oa_R-\ba^*\right\|_\infty\lesssim \kappa_1^2\sqrt{\frac{(k+d)(d+1)\log n}{npL}}, \quad  \left\|\ob_R-\tb^*\right\|_2\lesssim \kappa_1^2\sqrt{\frac{(k+d)(d+1)\log n}{npL}}.
\end{align*}
\end{proof}

\subsection{Proof of Theorem \ref{residue2norm}}\label{residue2normproof}
\begin{proof}
By the strongly convex property we know that for any $\boldsymbol{v}\in \partial \cL_R(\tb_R)$, we have
\begin{align*}
    \cL_R(\ob_R)\geq \cL_R(\tb_R)+\boldsymbol{v}^\top\left(\ob_R-\tb_R\right)+\frac{1}{2}\left(\ob_R-\tb_R\right)^\top\left(\tau + \nabla^2\cL(\tb_R)\right)\left(\ob_R-\tb_R\right).
\end{align*}
Since $\tb_R$ is the minimizer of $\cL_R(\cdot)$, we know that $\boldsymbol{0}\in \partial \cL_R(\tb_R)$. As a result, we know that
\begin{align}
    \cL_R(\ob_R)\geq \cL_R(\tb_R)+\frac{1}{2}\left(\ob_R-\tb_R\right)^\top\left(\tau + \nabla^2\cL(\tb_R)\right)\left(\ob_R-\tb_R\right). \label{residue2normeq1}
\end{align}
Similarly, for any $\boldsymbol{v}\in \partial \ocL_R(\ob_R)$, we have
\begin{align*}
    \ocL_R(\tb_R)\geq \ocL_R(\ob_R)+\boldsymbol{v}^\top\left(\tb_R- \ob_R\right)+\frac{1}{2}\left(\tb_R - \ob_R\right)^\top\left(\tau + \nabla^2\ocL(\ob_R)\right)\left(\tb_R - \ob_R\right).
\end{align*}
Since $\boldsymbol{0}\in \partial \ocL_R(\ob_R)$ and by the definition of $\ocL(\cdot)$ we have
\begin{align}
    \ocL_R(\tb_R)\geq \ocL_R(\ob_R)+\frac{1}{2}\left(\tb_R - \ob_R\right)^\top\left(\tau + \nabla^2\cL(\tb^*)\right)\left(\tb_R - \ob_R\right).\label{residue2normeq2}
\end{align}
According to Theorem \ref{estimationthm} and Lemma \ref{oaestimation}, we know that $\cS(\ba_R)\subset \cS$ and $\cS(\oa_R)\subset \cS$. Therefore, by Lemma \ref{lb}, \eqref{residue2normeq1} and \eqref{residue2normeq2} we know that 
\begin{align}
    \frac{np}{\kappa_1}\left\|\tb_R-\ob_R\right\|_2^2&\lesssim \left|\cL_R(\tb_R) - \ocL_R(\tb_R)\right|+\left|\cL_R(\ob_R) - \ocL_R(\ob_R)\right| \nonumber \\
    &=\left|\cL(\tb_R) - \ocL(\tb_R)\right|+\left|\cL(\ob_R) - \ocL(\ob_R)\right|.\label{residue2normeq5}
\end{align}
It remains to control the right hand side of the above inequality.

For any $\tb\in \mathbb{R}^{n+d}$, we have
\begin{align}
    \left|\cL(\tb) - \ocL(\tb)\right| &= \left|\int_0^1\left(\nabla\cL(\tb^*+\gamma(\tb-\tb^*)) - \nabla\ocL(\tb^*+\gamma(\tb-\tb^*))\right)^\top(\tb-\tb^*)d\gamma + \cL(\tb^*) - \ocL(\tb^*)\right| \nonumber\\
    &=\left|\int_0^1\left(\nabla\cL(\tb^*+\gamma(\tb-\tb^*)) - \nabla\ocL(\tb^*+\gamma(\tb-\tb^*))\right)^\top(\tb-\tb^*)d\gamma \right| \nonumber\\
    &\leq \max_{\gamma\in [0, 1]}\left\|\nabla\cL(\tb^*+\gamma(\tb-\tb^*)) - \nabla\ocL(\tb^*+\gamma(\tb-\tb^*))\right\|_2\left\|\tb-\tb^*\right\|_2.\label{residue2normeq3}
\end{align}
On the other hand, we also know that
\begin{align}
 &\left\|\nabla\cL(\tb^*+\gamma(\tb-\tb^*)) - \nabla\ocL(\tb^*+\gamma(\tb-\tb^*))\right\|_2  \nonumber \\
=& \left\|\int_0^1\left(\nabla^2\cL(\tb^*+t\gamma(\tb-\tb^*)) - \nabla^2\ocL(\tb^*+t\gamma(\tb-\tb^*))\right)(\tb-\tb^*) dt\right\|_2  \nonumber \\
\leq & \sup_{t\in [0, 1]}\left\|\nabla^2\cL(\tb^*+t\gamma(\tb-\tb^*)) - \nabla^2\cL(\tb^*)\right\|\left\|\tb-\tb^*\right\|_2.\label{residue2normeq4}
\end{align}
\eqref{residue2normeq3} together with \eqref{residue2normeq4} implies
\begin{align}
    \left|\cL(\tb) - \ocL(\tb)\right|\leq \sup_{t\in [0, 1]}\left\|\nabla^2\cL(\tb^*+t(\tb-\tb^*)) - \nabla^2\cL(\tb^*)\right\|\left\|\tb-\tb^*\right\|_2^2. \label{residue2normeq6}
\end{align}
By the definition we can control the difference of hessian as
\begin{align*}
    &\left\Vert\nabla^2\cL(\tb^*+t(\tb-\tb^*))-\nabla^2\cL(\tb^*)\right\Vert  \\
    =&\left\Vert\sum_{(i,j)\in\mathcal{E},i>j}\left(\phi'(t(\tx_i^\top\tb-\tx_j^\top\tb)+(1-t)(\tx_i^\top\tb^*-\tx_j^\top\tb^*))-\phi'(\tx_i^\top\tb^*-\tx_j^\top\tb^*)\right)(\tx_i-\tx_j)(\tx_i-\tx_j)^\top\right\Vert \\
    \leq&\left\Vert\sum_{(i,j)\in\mathcal{E},i>j}\left|\phi'(t(\tx_i^\top\tb-\tx_j^\top\tb)+(1-t)(\tx_i^\top\tb^*-\tx_j^\top\tb^*))-\phi'(\tx_i^\top\tb^*-\tx_j^\top\tb^*)\right|(\tx_i-\tx_j)(\tx_i-\tx_j)^\top\right\Vert \\
    \lesssim&\left\Vert\sum_{(i,j)\in\mathcal{E},i>j}\left|t(\tx_i^\top\tb-\tx_j^\top\tb)-t(\tx_i^\top\tb^*-\tx_j^\top\tb^*)\right|(\tx_i-\tx_j)(\tx_i-\tx_j)^\top\right\Vert \\
    \lesssim &\left\Vert\sum_{(i,j)\in\mathcal{E},i>j}(\tx_i-\tx_j)(\tx_i-\tx_j)^\top\right\Vert \left\Vert\tb-\tb^* \right\Vert_c \lesssim \left\Vert \boldsymbol{L}_\mathcal{G}\right\Vert\left\Vert\tb-\tb^* \right\Vert_c.
\end{align*}
As a result, by Lemma \ref{eigenlem} we know that 
\begin{align}
    \sup_{t\in [0, 1]}\left\|\nabla^2\cL(\tb^*+t(\tb-\tb^*)) - \nabla^2\cL(\tb^*)\right\|\lesssim np\left\Vert\tb-\tb^* \right\Vert_c \label{residue2normeq7}
\end{align}
with probability at least $1-O(n^{-10})$. Plugging \eqref{residue2normeq6} and \eqref{residue2normeq7} in \eqref{residue2normeq5} we get
\begin{align*}
    \frac{np}{\kappa_1}\left\|\tb_R-\ob_R\right\|_2^2&\lesssim np\left\|\tb_R-\tb^*\right\|_2^2\left\Vert\tb_R-\tb^* \right\Vert_c +np \left\|\ob_R-\tb^*\right\|_2^2\left\Vert\ob_R-\tb^* \right\Vert_c.
\end{align*}
According to Theorem \ref{estimationthm} and Lemma \ref{oaestimation}, the estimation error can be controlled as 
\begin{align*}
    \left\|\tb_R-\ob_R\right\|_2&\lesssim \kappa_1^{3.5}\left(\frac{(k+d)(d+1)\log n}{npL}\right)^{3/4}.
\end{align*}
\end{proof}

\subsection{Proof of Lemma \ref{alphadebiasapprox1}}\label{alphadebiasapprox1proof}
\begin{proof}
By \eqref{oadexpansion} we know that
\begin{align*}
    \overline{\alpha}_{R,i}^{\debias} = \alpha_i^* - \frac{\left(\nabla \cL(\tb^*)\right)_i + \sum_{j\neq i}\left(\nabla^2\cL(\tb^*)\right)_{i,j}\left(\ob_{R,j} - \tb^*_j\right)}{\left(\nabla^2\cL(\tb^*)\right)_{i,i}}.
\end{align*}
Similarly, according to \eqref{dotalphadefinition} we know that
\begin{align}
    \dot{\alpha}_{R,i}^{\debias} = \alpha_i^* - \frac{\left(\nabla \cL(\tb^*)\right)_i + \sum_{j\neq i}\left(\nabla^2\cL(\tb^*)\right)_{i,j}\left(\tb_{R,j} - \tb^*_j\right)}{\left(\nabla^2\cL(\tb^*)\right)_{i,i}}. \label{alpharesidue1eq1}
\end{align}
As a result, one can see that
\begin{align*}
    \dot{\alpha}_{R,i}^{\debias} - \overline{\alpha}_{R,i}^{\debias} = \frac{\sum_{j\neq i}\left(\nabla^2\cL(\tb^*)\right)_{i,j}\left(\ob_{R,j} - \tb_{R,j}\right)}{\left(\nabla^2\cL(\tb^*)\right)_{i,i}}.
\end{align*}
By Lemma \ref{Lilemma}, the numerator can be controlled as
\begin{align*}
    \left|\sum_{j\neq i}\left(\nabla^2\cL(\tb^*)\right)_{i,j}\left(\ob_{R,j} - \tb_{R,j}\right)\right| &\leq \sqrt{\sum_{j\neq i}\left(\nabla^2\cL(\tb^*)\right)_{i,j}^2}\left\|\ob_{R} - \tb_{R}\right\|_2 \\
    &\leq \sqrt{np(1+d)}\left\|\ob_{R} - \tb_{R}\right\|_2.
\end{align*}
On the other hand, with probability at least $1-O(n^{-10})$ we have $(\nabla^2\cL(\tb^*))_{i,i}\gtrsim np/\kappa_1$. As a result, by Theorem \ref{residue2norm} we know that
\begin{align*}
    | \dot{\alpha}_{R,i}^{\debias} - \overline{\alpha}_{R,i}^{\debias}|\lesssim \frac{\sqrt{np(1+d)}\left\|\ob_{R} - \tb_{R}\right\|_2}{np/\kappa_1}\lesssim \frac{\kappa_1^{4.5}(d+1)}{np}\sqrt{\frac{k\log n}{L}}\left(\frac{(k+d)(d+1)\log n}{npL}\right)^{1/4}.
\end{align*}
\end{proof}

\subsection{Proof of Lemma \ref{alphadebiasapprox2}}\label{alphadebiasapprox2proof}
\begin{proof}
Since $\widehat{\alpha}_{R, i}$ is the minimizer of $\cL_{R,\tb_{R, -i}}(x)$, we know that
\begin{align*}
    0 &= \cL_{R,\tb_{R, -i}}'(\widehat{\alpha}_{R, i}) =\cL_{\tb_{R, -i}}'(\widehat{\alpha}_{R, i})+\tau\widehat{\alpha}_{R, i}+\lambda\partial |\widehat{\alpha}_{R, i}|\\
    &= \cL_{\tb_{R, -i}}'(\alpha_i^*) + \cL_{\tb_{R, -i}}''(b_1)(\widehat{\alpha}_{R, i}-\alpha_i^*)+\tau\widehat{\alpha}_{R, i}+\lambda\partial |\widehat{\alpha}_{R, i}|
\end{align*}
for some real number $b_1$ between $\alpha_i^*$ and $\widehat{\alpha}_{R, i}$. Reorganizing the terms gives
\begin{align*}
    \widehat{\alpha}_{R, i} = \alpha_i^* - \frac{ \cL_{\tb_{R, -i}}'(\alpha_i^*)+\tau\widehat{\alpha}_{R, i}+\lambda\partial |\widehat{\alpha}_{R, i}|}{\cL_{\tb_{R, -i}}''(b_1)}.
\end{align*}
Combine this with the definition of $ \widehat{\alpha}_{R,i}^{\debias}$ \eqref{debiasedestimator} gives
\begin{align*}
    \widehat{\alpha}_{R,i}^{\debias} = \alpha_i^* - \frac{ \cL_{\tb_{R, -i}}'(\alpha_i^*)}{\cL_{\tb_{R, -i}}''(b_1)}+\left(\tau\widehat{\alpha}_{R, i}+\lambda\partial |\widehat{\alpha}_{R, i}|\right)\left(\frac{1}{\left(\nabla^2\cL(\tb_R)\right)_{i,i}}- \frac{1}{\cL_{\tb_{R, -i}}''(b_1)}\right).
\end{align*}
Recall \eqref{alpharesidue1eq1} that $\dot{\alpha}_{R,i}^{\debias}$ could be written as
\begin{align*}
    \dot{\alpha}_{R,i}^{\debias} = \alpha_i^* - \frac{\left(\nabla \cL(\tb^*)\right)_i + \sum_{j\neq i}\left(\nabla^2\cL(\tb^*)\right)_{i,j}\left(\tb_{R,j} - \tb^*_j\right)}{\left(\nabla^2\cL(\tb^*)\right)_{i,i}}. 
\end{align*}
As a result, the difference $\widehat{\alpha}_{R,i}^{\debias} - \dot{\alpha}_{R,i}^{\debias}$ can be written as
\begin{align}
    \widehat{\alpha}_{R,i}^{\debias} - \dot{\alpha}_{R,i}^{\debias} =& \frac{\left(\nabla \cL(\tb^*)\right)_i + \sum_{j\neq i}\left(\nabla^2\cL(\tb^*)\right)_{i,j}\left(\tb_{R,j} - \tb^*_j\right)}{\left(\nabla^2\cL(\tb^*)\right)_{i,i}} - \frac{\cL_{\tb_{R, -i}}'(\alpha_i^*)}{\cL_{\tb_{R, -i}}''(b_1)} \nonumber \\
    &+ \left(\tau\widehat{\alpha}_{R, i}+\lambda\partial |\widehat{\alpha}_{R, i}|\right)\left(\frac{1}{\left(\nabla^2\cL(\tb_R)\right)_{i,i}}- \frac{1}{\cL_{\tb_{R, -i}}''(b_1)}\right). \label{alpharesidue2eq1}
\end{align}
We begin with controlling several terms in \eqref{alpharesidue2eq1}.

\noindent\textbf{Control $\left|\cL_{\tb_{R, -i}}''(b_1) - \left(\nabla^2\cL(\tb^*)\right)_{i,i}\right|$:} By the definition we know that  
\begin{align*}
    \left|\cL_{\tb_{R,-i}}''(b_1)-\left(\nabla^2\cL(\tb^*)\right)_{i,i}\right| =& \left|\sum_{j:(i,j)\in \mathcal{E}}\phi'(\tx_i^\top\tb_R-\tx_j^\top\tb_R+b_1-\widehat{\alpha}_{R,i})-\phi'(\tx_i^\top\tb^*-\tx_j^\top\tb^*)\right| \\
    \leq &\sum_{j:(i,j)\in \mathcal{E}}\left|(\tx_i^\top\tb_R-\tx_j^\top\tb_R+b_1-\widehat{\alpha}_{R,i})-(\tx_i^\top\tb^*-\tx_j^\top\tb^*)\right| \\
    \lesssim & \sum_{j:(i,j)\in \mathcal{E}}\left(\left\Vert \tb_R-\tb^*\right\Vert_c+\left| b_1-\widehat{\alpha}_{R,i} \right|\right)\lesssim  np\left\Vert \tb_R-\tb^*\right\Vert_c
\end{align*}
with probability at least $1-O(n^{-10})$.

\noindent\textbf{Control $\left|\cL_{\tb_{R, -i}}''(b_1) - \left(\nabla^2\cL(\tb_R)\right)_{i,i}\right|$:} By the definition we know that  
\begin{align*}
    \left|\cL_{\tb_{R,-i}}''(b_1)-\left(\nabla^2\cL(\tb_R)\right)_{i,i}\right| =& \left|\sum_{j:(i,j)\in \mathcal{E}}\phi'(\tx_i^\top\tb_R-\tx_j^\top\tb_R+b_1-\widehat{\alpha}_{R,i})-\phi'(\tx_i^\top\tb_R-\tx_j^\top\tb_R)\right| \\
    \leq &\sum_{j:(i,j)\in \mathcal{E}}\left|(\tx_i^\top\tb_R-\tx_j^\top\tb_R+b_1-\widehat{\alpha}_{R,i})-(\tx_i^\top\tb_R-\tx_j^\top\tb_R)\right| \\
    \lesssim & \sum_{j:(i,j)\in \mathcal{E}}\left| b_1-\widehat{\alpha}_{R,i} \right|\lesssim  np\left\Vert \tb_R-\tb^*\right\Vert_c
\end{align*}
with probability at least $1-O(n^{-10})$.

\noindent\textbf{Control $\left| \cL_{\tb_{R, -i}}'(\alpha_i^*) - \left(\nabla \cL(\tb^*)\right)_i - \sum_{j\neq i}\left(\nabla^2\cL(\tb^*)\right)_{i,j}\left(\tb_{R,j} - \tb^*_j\right)\right|$:} By definition one can write that
\begin{align}
    &\left| \cL_{\tb_{R, -i}}'(\alpha_i^*) - \left(\nabla \cL(\tb^*)\right)_i - \sum_{j\neq i}\left(\nabla^2\cL(\tb^*)\right)_{i,j}\left(\tb_{R,j} - \tb^*_j\right)\right| \nonumber \\
   =&\sum_{j:(i,j)\in \mathcal{E}}\left\{-y_{j,i}+\phi(\tx_i^\top\tb_R-\tx_j^\top\tb_R+\alpha^*_i-\widehat{\alpha}_{R,i})\right\}- \sum_{j:(i,j)\in \mathcal{E}}\left\{-y_{j,i}+\phi(\tx_i^\top\tb^*-\tx_j^\top\tb^*)\right\} \nonumber \\
&-\sum\limits_{j\neq i,j\in[n+d]}\left(\tb_{R,j}-\tb^*_j\right)\left(\nabla^2\cL(\tb^*)\right)_{i,j}  \nonumber \\
=& \sum_{j:(i,j)\in \mathcal{E}}\left\{\phi(\tx_i^\top\tb_R-\tx_j^\top\tb_R+\alpha^*_i-\widehat{\alpha}_{R,i})-\phi(\tx_i^\top\tb^*-\tx_j^\top\tb^*)-\left(\alpha^*_j-\widehat{\alpha}_{R,j}\right)\phi'(\tx_i^\top\tb^*-\tx_j^\top\tb^*)\right\} \nonumber \\
&-\sum_{k\in[d]}\left(\tb_{R,n+k}-\tb^*_{n+k}\right)\left(\sum_{j:(i,j)\in\mathcal{E}}\phi'(\tx_i^\top\tb^*-\tx_j^\top\tb^*)\left(\bx_i-\bx_j\right)_k\right)  \nonumber \\
=&\sum_{j:(i,j)\in \mathcal{E}} r_j, \label{alpharesidue2eq2}
\end{align}
where 
\begin{align*}
    r_j =& \phi(\tx_i^\top\tb_R-\tx_j^\top\tb_R+\alpha^*_i-\widehat{\alpha}_{R,i})-\phi(\tx_i^\top\tb^*-\tx_j^\top\tb^*)-\left(\alpha^*_j-\widehat{\alpha}_{R,j}\right)\phi'(\tx_i^\top\tb^*-\tx_j^\top\tb^*) \\
    &- (\bx_i-\bx_j)^\top(\widehat{\bb}_R-\bb^*)\phi'(\tx_i^\top\tb^*-\tx_j^\top\tb^*).
\end{align*}
On the other hand, by Taylor expansion we know that
\begin{align*}
    \phi(\tx_i^\top\tb_R-\tx_j^\top\tb_R+\alpha^*_i-\widehat{\alpha}_{R,i}) =& \phi(\tx_i^\top\tb^*-\tx_j^\top\tb^*) \\
    &+ \phi'(\tx_i^\top\tb^*-\tx_j^\top\tb^*) \left((\bx_i-\bx_j)^\top(\widehat{\bb}_R-\bb^*)+\alpha^*_j-\widehat{\alpha}_{R,j}\right) \\
    &+ \phi''(b_2)\left((\bx_i-\bx_j)^\top(\widehat{\bb}_R-\bb^*)+\alpha^*_j-\widehat{\alpha}_{R,j}\right)^2,
\end{align*}
where $b_2$ is some real number between $\tx_i^\top\tb^*-\tx_j^\top\tb^*$ and $\tx_i^\top\tb_R-\tx_j^\top\tb_R+\alpha^*_i-\widehat{\alpha}_{R,i}$. As a result, we have
\begin{align}
    |r_j|\leq |\phi''(b_2)|\left((\bx_i-\bx_j)^\top(\widehat{\bb}_R-\bb^*)+\alpha^*_j-\widehat{\alpha}_{R,j}\right)^2\lesssim \left\Vert \tb_R-\tb^*\right\Vert_c^2.\label{alpharesidue2eq3}
\end{align}
Plugging \eqref{alpharesidue2eq2} in \eqref{alpharesidue2eq3} gives us
\begin{align*}
    \left| \cL_{\tb_{R, -i}}'(\alpha_i^*) - \left(\nabla \cL(\tb^*)\right)_i - \sum_{j\neq i}\left(\nabla^2\cL(\tb^*)\right)_{i,j}\left(\tb_{R,j} - \tb^*_j\right)\right|\lesssim np \left\Vert \tb_R-\tb^*\right\Vert_c^2.
\end{align*}

\noindent \textbf{Control $\left|\left(\nabla \cL(\tb^*)\right)_i + \sum_{j\neq i}\left(\nabla^2\cL(\tb^*)\right)_{i,j}\left(\tb_{R,j} - \tb^*_j\right)\right|$:} One can see that 
\begin{align*}
    &\left|\left(\nabla\cL(\tb^*)\right)_i+\sum\limits_{j\neq i}\left(\tb_{R,j}-\tb^*_j\right)\left(\nabla^2\cL(\tb^*)\right)_{i,j}\right| \\
    \leq& \left|\left(\nabla\cL(\tb^*)\right)_i\right|+\left|\sum\limits_{j\neq i}\left(\tb_{R,j}-\tb^*_j\right)\left(\nabla^2\cL(\tb^*)\right)_{i,j}\right| \\
    \leq& \left|\left(\nabla\cL(\tb^*)\right)_i\right|+\Vert\widehat{\ba}_R-\ba^*\Vert_\infty\sum_{j\in [n],j\neq i}\left|\left(\nabla^2\cL(\tb^*)\right)_{i,j}\right|+\Vert\widehat{\bb}_R-\bb^*\Vert_2 \sqrt{\sum_{k>n}\left(\nabla^2\cL(\tb^*)\right)_{i,k}^2}\\
    \lesssim& \sqrt{\frac{np\log n}{L}}+\kappa_1^2\sqrt{\frac{(d+1)\log n}{npL}}np+\kappa_1\sqrt{\frac{\log n}{pL}}\sqrt{dnp^2}\lesssim \kappa_1^2\sqrt{\frac{(d+1)np\log n}{L}}
\end{align*}
with probability at least $1-O(n^{-6})$.

We come back to \eqref{alpharesidue2eq1}. The first term on the right hand side can be controlled as
\begin{align}
    &\left|\frac{\left(\nabla \cL(\tb^*)\right)_i + \sum_{j\neq i}\left(\nabla^2\cL(\tb^*)\right)_{i,j}\left(\tb_{R,j} - \tb^*_j\right)}{\left(\nabla^2\cL(\tb^*)\right)_{i,i}} - \frac{ \cL_{\tb_{R, -i}}'(\alpha_i^*)}{\cL_{\tb_{R, -i}}''(b_1)}\right| \nonumber \\
    \leq & \left|\frac{1}{\left(\nabla^2\cL(\tb^*)\right)_{i,i}} - \frac{1}{\cL_{\tb_{R, -i}}''(b_1)}\right|\left|\left(\nabla \cL(\tb^*)\right)_i + \sum_{j\neq i}\left(\nabla^2\cL(\tb^*)\right)_{i,j}\left(\tb_{R,j} - \tb^*_j\right)\right| \nonumber \\
    &+\frac{1}{\cL_{\tb_{R, -i}}''(b_1)}\left| \cL_{\tb_{R, -i}}'(\alpha_i^*) - \left(\nabla \cL(\tb^*)\right)_i - \sum_{j\neq i}\left(\nabla^2\cL(\tb^*)\right)_{i,j}\left(\tb_{R,j} - \tb^*_j\right)\right| \nonumber \\
    \lesssim & \frac{np\left\Vert \tb_R-\tb^*\right\Vert_c}{\frac{np}{\kappa_1}\left(\frac{np}{\kappa_1} - np\left\Vert \tb_R-\tb^*\right\Vert_c\right)}\left|\left(\nabla \cL(\tb^*)\right)_i + \sum_{j\neq i}\left(\nabla^2\cL(\tb^*)\right)_{i,j}\left(\tb_{R,j} - \tb^*_j\right)\right| \\
    &+ \frac{1}{\frac{np}{\kappa_1} - np\left\Vert \tb_R-\tb^*\right\Vert_c} np\left\Vert \tb_R-\tb^*\right\Vert_c^2 \nonumber \\
    \lesssim & \kappa_1^4\sqrt{\frac{(d+1)\log n}{npL}}\left\Vert \tb_R-\tb^*\right\Vert_c + \kappa_1^2\left\Vert \tb_R-\tb^*\right\Vert_c^2\lesssim \kappa_1^6 \frac{(d+1)\log n}{npL}.\label{alpharesidue2eq4}
\end{align}
On the other hand, we have
\begin{align}
    \left|\left(\tau\widehat{\alpha}_{R, i}+\lambda\partial |\widehat{\alpha}_{R, i}|\right)\left(\frac{1}{\left(\nabla^2\cL(\tb_R)\right)_{i,i}}- \frac{1}{\cL_{\tb_{R, -i}}''(b_1)}\right)\right|&\lesssim \frac{\lambda np\left\Vert \tb_R-\tb^*\right\Vert_c}{\frac{ np}{\kappa_1}\left(\frac{np}{\kappa_1} - np\left\Vert \tb_R-\tb^*\right\Vert_c\right)}  
    \nonumber \\
    &\lesssim \kappa_1^3\sqrt{\frac{(d+1)\log n}{npL}}\left\Vert \tb_R-\tb^*\right\Vert_c  \nonumber \\
    &\lesssim \kappa_1^5 \frac{(d+1)\log n}{npL}.\label{alpharesidue2eq5}
\end{align}
Plugging \eqref{alpharesidue2eq4} and \eqref{alpharesidue2eq5} in \eqref{alpharesidue2eq1} gives us
\begin{align*}
     |\widehat{\alpha}_{R,i}^{\debias} - \dot{\alpha}_{R,i}^{\debias}|\lesssim \kappa_1^6 \frac{(d+1)\log n}{npL}
\end{align*}
with probability at least $1-O(n^{-6})$.
\end{proof}

\subsection{Proof of Lemma \ref{alphadebiasapprox3}}\label{alphadebiasapprox3proof}
\begin{proof}
By definition we know that 
\begin{align*}
    \ocL_{\ob_{R, -i}}'(\alpha_{i}^*)-\left(\nabla\cL(\tb^*)\right)_i = \sum_{j\neq i}\left(\nabla^2\cL(\tb^*)\right)_{i,j}\left(\ob_{R,j} - \tb^*_j\right).
\end{align*}
According to Lemma \ref{oaestimation}, we have $\cS(\oa_R)\subset \cS(\ba^*)$. Therefore, the right hand side can be controlled as
\begin{align*}
    \left|\ocL_{\ob_{R, -i}}'(\alpha_{i}^*)-\left(\nabla\cL(\tb^*)\right)_i\right| &= \left|\sum_{j\in \cS(\ba^*)\cup\{n+1,n+2,\dots,n+d\}\backslash\{i\}}\left(\nabla^2\cL(\tb^*)\right)_{i,j}\left(\ob_{R,j} - \tb^*_j\right)\right| \\
    &\leq \sqrt{\sum_{j\in \cS(\ba^*)\cup\{n+1,n+2,\dots,n+d\}\backslash\{i\}}\left(\nabla^2\cL(\tb^*)\right)_{i,j}^2}\left\|\ob_R-\tb^*\right\|_2  \\
    &\leq \sqrt{np+ndp^2}\left\|\ob_R-\tb^*\right\|_2\lesssim \kappa_1^2(d+1)\sqrt{\frac{(k+d)\log n}{L}}.
\end{align*}
The last line follows from Lemma \ref{Lilemma} and Lemma \ref{oaestimation}. As a result, the approximation error can be controlled as 
\begin{align*}
    \left|\overline{\alpha}_{i,R}^\debias - \left(\alpha^*_i - \left(\nabla\cL(\tb^*)\right)_i/\left(\nabla^2\cL(\tb^*)\right)_{i,i}\right)\right|\lesssim \frac{\kappa_1^3(d+1)}{np}\sqrt{\frac{(k+d)\log n}{L}}.
\end{align*}
\end{proof}

\subsection{Proof of Lemma \ref{betadebiasapprox}}\label{betadebiasapproxproof}
\begin{proof}
According to Lemma \ref{oaestimation}, we have $\cS(\oa_R)\subset \cS(\ba^*)$. As a result, we have
\begin{align}
    \left\|\boldsymbol{B}\left(\oa_R-\ba^*\right)\right\|_\infty = \left\|\boldsymbol{B}_{:, \cS(\ba^*)}\left(\oa_R-\ba^*\right)_{\cS(\ba^*)}\right\|_\infty \leq \left\|\boldsymbol{B}_{:, \cS(\ba^*)}\right\|_{2,\infty}\left\|\oa_R-\ba^*\right\|_2. \label{betaapproxdebiaseq1}
\end{align}
By the definition of $\bB$ we know that
\begin{align*}
    \left\|\boldsymbol{B}_{:, \cS(\ba^*)}\right\|_{2,\infty}&\leq \left\|\boldsymbol{B}_{:, \cS(\ba^*)}\right\|_{F}\leq \sqrt{\sum_{i\in \cS(\ba^*)}\left\|\sum_{j:(i,j)\in \mathcal{E}}\phi'(\tx_i^\top\tb^*-\tx_j^\top\tb^*)(\bx_i-\bx_j)\right\|_2^2} \\
    &\lesssim \sqrt{k}np\sqrt{\frac{d+1}{n}}\lesssim \sqrt{k(d+1)np}.
\end{align*}
Plugging this in \eqref{betaapproxdebiaseq1}, by Lemma \ref{oaestimation} we have
\begin{align*}
    \left\|\boldsymbol{B}\left(\oa_R-\ba^*\right)\right\|_\infty\lesssim \kappa_1^2(d+1)\sqrt{\frac{k(k+d)\log n}{L}}.
\end{align*}
As a result, the approximation error can be controlled as
\begin{align*}
    \left\|\bA^{-1}\boldsymbol{B}\left(\oa_R-\ba^*\right)\right\|_2&\leq \left\|\bA^{-1}\right\|\left\|\boldsymbol{B}\left(\oa_R-\ba^*\right)\right\|_2\leq \sqrt{d}\left\|\bA^{-1}\right\|\left\|\boldsymbol{B}\left(\oa_R-\ba^*\right)\right\|_\infty \\
    &\lesssim \frac{\kappa_1^3(d+1)}{np}\sqrt{\frac{kd(k+d)\log n}{L}}.
\end{align*}
\end{proof}

\subsection{Proof of Theorem \ref{approxthm}}\label{approxthmproof}

\begin{proof}
Combine Lemma \ref{alphadebiasapprox1}, Lemma \ref{alphadebiasapprox2} and Lemma \ref{alphadebiasapprox3} we get
\begin{align}
    &\left|\widehat{\alpha}_{R,i}^{\debias} - \left(\alpha^*_i - \left(\nabla\cL(\tb^*)\right)_i/\left(\nabla^2\cL(\tb^*)\right)_{i,i}\right)\right| \nonumber \\
\leq & \left|\widehat{\alpha}_{R,i}^{\debias} - \dot{\alpha}_{R,i}^{\debias}\right| + \left|\dot{\alpha}_{R,i}^{\debias}-\overline{\alpha}_{R,i}^{\debias}\right| +\left|\overline{\alpha}_{R,i}^{\debias} - \left(\alpha^*_i - \left(\nabla\cL(\tb^*)\right)_i/\left(\nabla^2\cL(\tb^*)\right)_{i,i}\right)\right| \nonumber \\
\lesssim & \frac{\kappa_1^{4.5}(d+1)}{np}\sqrt{\frac{k\log n}{L}}\left(\frac{(k+d)(d+1)\log n}{npL}\right)^{1/4} + \kappa_1^6 \frac{(d+1)\log n}{npL} + \frac{\kappa_1^3(d+1)}{np}\sqrt{\frac{(k+d)\log n}{L}}  \nonumber \\
\lesssim & \frac{\kappa_1^3(d+1)}{np}\left(\frac{\kappa_1^3\log n}{L}+\sqrt{\frac{(k+d)\log n}{L}}\right). \label{approxthmproofeq1}
\end{align}

By Theorem \ref{residue2norm} we know that
\begin{align*}
     \left\|\widehat{\bb}_R^\debias - \overline{\bb}_{R, n+1:n+d}^\debias \right\|_2&= \left\|\bA^{-1}(\bA+\tau \bI_d)\left(\widehat{\bb}_R - \overline{\bb}_{R, n+1:n+d}\right)\right\|_2 \\
     &\leq \left\|\bA^{-1}\right\|\left\|\bA+\tau \bI_d\right\|\left\|\widehat{\bb}_R - \overline{\bb}_{R, n+1:n+d}\right\|_2 \lesssim \kappa_1^{4.5}\left(\frac{(k+d)(d+1)\log n}{npL}\right)^{3/4}.
\end{align*}
Combine this with Lemma \ref{betadebiasapprox} and Lemma \ref{betaapprox} we get 
\begin{align*}
    &\left\|\hat{\bb}_{R} -\left( \bb^*- \bA^{-1}\left(\nabla\cL(\tb^*)\right)_{n+1:n+d}\right)\right\|_2 \\
    \leq&\left\|\hat{\bb}_{R} -\hat{\bb}_{R}^\debias\right\|_2 +\left\|\hat{\bb}_{R}^\debias-\overline{\bb}_{R, n+1:n+d}^\debias\right\|_2+ \left\|\overline{\bb}_{R, n+1:n+d}^\debias -\left( \bb^*- \bA^{-1}\left(\nabla\cL(\tb^*)\right)_{n+1:n+d}\right)\right\|_2\\
    \lesssim& \frac{\kappa_1^3(d+1)}{np}\sqrt{\frac{kd(k+d)\log n}{L}} + \kappa_1^{4.5}\left(\frac{(k+d)(d+1)\log n}{npL}\right)^{3/4}.
\end{align*}
    
\end{proof}

\subsection{Proof of Theorem \ref{uqmainthm}}\label{uqmainthmproof}
\begin{proof}
We begin with the asymptotic distribution of $\widehat{\alpha}_{R,i}^\debias$. We let $\Delta\alpha_i = \left(\nabla\cL(\tb^*)\right)_i/\left(\nabla^2\cL(\tb^*)\right)_{i,i}$. For $j$ such that $(i,j)\in \mathcal{E}$ and $l\in [L]$, we define
\begin{align*}
    X_{j}^{(l)} = \frac{1}{\left(\nabla^2\cL(\tb^*)\right)_{i,i}L}\left\{y_{j,i}^{(l)}-\frac{e^{\tx_i^\top \tb}}{e^{\tx_i^\top \tb}+e^{\tx_j^\top \tb}}\right\}.
\end{align*}
In this case, we have $\Delta\alpha_i = \sum_{l=1}^L\sum_{j:(i,j)\in \mathcal{E}} X_{j}^{(l)}.$ One can see that
\begin{align*}
    \frac{\mathbb{E}\left|X_{j}^{(l)}\right|^3}{\mathbb{E}\left(X_{j}^{(l)}\right)^2} &= \frac{1}{\left(\nabla^2\cL(\tb^*)\right)_{i,i}L} \left(\phi(\tx_i^\top\tb^*-\tx_i^\top\tb^*)^2+(1-\phi(\tx_i^\top\tb^*-\tx_i^\top\tb^*))^2\right)\leq \frac{1}{\left(\nabla^2\cL(\tb^*)\right)_{i,i}L}
\end{align*}
conditioned on graph $\mathcal{G}$. On the other hand, since $\text{Var}[\nabla\cL(\tb^*)\mid \mathcal{G}] = \nabla^2\cL(\tb^*) / L$, we know that 
$\text{Var}[\Delta\alpha_i\mid \mathcal{G}] = 1/(\nabla^2\cL(\tb^*))_{i,i}L.$ As a result, by \cite{berry1941accuracy} we have
\begin{align*}
    \sup_{x\in \mathbb{R}}\left|\mathbb{P}\left(\sqrt{\left(\nabla^2\cL(\tb^*)\right)_{i,i}L}\Delta\alpha_i\leq x\mid\mathcal{G}\right) - \mathbb{P}(\mathcal{N}(0,1)\leq x)\right|\lesssim \frac{1}{\sqrt{\left(\nabla^2\cL(\tb^*)\right)_{i,i}L}}.
\end{align*}
Since $(\nabla^2\cL(\tb^*))_{i,i}\gtrsim np/\kappa_1$ with probability at least $1-O(n^{-10})$, we have
\begin{align*}
    &\sup_{x\in \mathbb{R}}\left|\mathbb{P}\left(\sqrt{\left(\nabla^2\cL(\tb^*)\right)_{i,i}L}\Delta\alpha_i\leq x\right) - \mathbb{P}(\mathcal{N}(0,1)\leq x)\right| \\
=&\sup_{x\in \mathbb{R}}\left|\mathbb{E}_{\mathcal{G}}\left[\mathbb{P}\left(\sqrt{\left(\nabla^2\cL(\tb^*)\right)_{i,i}L}\Delta\alpha_i\leq x\mid\mathcal{G}\right)\right] - \mathbb{P}(\mathcal{N}(0,1)\leq x)\right| \\
\leq & \mathbb{E}_{\mathcal{G}}\sup_{x\in \mathbb{R}}\left|\mathbb{P}\left(\sqrt{\left(\nabla^2\cL(\tb^*)\right)_{i,i}L}\Delta\alpha_i\leq x\mid\mathcal{G}\right) - \mathbb{P}(\mathcal{N}(0,1)\leq x)\right| \\
\lesssim & \sqrt{\frac{\kappa_1}{npL}}\cdot \mathbb{P}\left(\left(\nabla^2\cL(\tb^*)\right)_{i,i}\gtrsim np/\kappa_1\right) + 1\cdot \left(1-\mathbb{P}\left(\left(\nabla^2\cL(\tb^*)\right)_{i,i}\gtrsim np/\kappa_1\right)\right)\lesssim \sqrt{\frac{\kappa_1}{npL}}.
\end{align*}
For simplicity we let $$\Gamma = \frac{\kappa_1^3(d+1)}{\sqrt{np}}\left(\frac{\kappa_1^3\log n}{\sqrt{L}}+\sqrt{(k+d)\log n}\right).$$ Consider event $A = \left\{\left|\sqrt{(\nabla^2\cL(\tb^*))_{i,i}L}(\widehat{\alpha}_{R,i}^{\debias}-\alpha_i^*+\Delta\alpha_i)\right|\leq \Lambda\Gamma\right\}$, where $\Lambda>0$ is some constant such that $\mathbb{P}(A^c) = O(n^{-6})$. Then for any fixed $x\in \mathbb{R}$, we consider the following three events
\begin{align*}
    B_1 &= \left\{\sqrt{(\nabla^2\cL(\tb^*))_{i,i}L}(\widehat{\alpha}_{R,i}^{\debias}-\alpha_i^*)\leq x\right\}, \\
    B_2 &= \left\{\sqrt{(\nabla^2\cL(\tb^*))_{i,i}L}\Delta\alpha_i\leq x-\Lambda\Gamma\right\}, \\
    B_3 &= \left\{\sqrt{(\nabla^2\cL(\tb^*))_{i,i}L}\Delta\alpha_i\leq x+\Lambda\Gamma\right\}.
\end{align*}
Then we have
\begin{align}
&\left|\mathbb{P}\left(\sqrt{\left(\nabla^2\cL(\tb^*)\right)_{i,i}L}(\widehat{\alpha}_{R,i}^{\debias}-\alpha_i^*)\leq x\right) - \mathbb{P}(\mathcal{N}(0,1)\leq x)\right| \nonumber \\
=&\left|\mathbb{P}(B_1\cap A)+\mathbb{P}(B_1\cap A^c) - \mathbb{P}(\mathcal{N}(0,1)\leq x)\right|\lesssim \left|\mathbb{P}(B_1\cap A) - \mathbb{P}(\mathcal{N}(0,1)\leq x)\right| + \frac{1}{n^6}. \label{asymptoticeq1}
\end{align}
On the other hand, since $B_2\cap A \subset B_1\cap A \subset B_3\cap A$, we know that 
\begin{align*}
    \left|\mathbb{P}(B_1\cap A) - \mathbb{P}(\mathcal{N}(0,1)\leq x)\right|\leq\max\{\left|\mathbb{P}(B_2\cap A) - \mathbb{P}(\mathcal{N}(0,1)\leq x)\right|, \left|\mathbb{P}(B_3\cap A) - \mathbb{P}(\mathcal{N}(0,1)\leq x)\right|\}.
\end{align*}
One can see that
\begin{align*}
    &\left|\mathbb{P}(B_2\cap A) - \mathbb{P}(\mathcal{N}(0,1)\leq x)\right| = \left|\mathbb{P}(B_2) - \mathbb{P}(B_2\cap A^c) - \mathbb{P}(\mathcal{N}(0,1)\leq x)\right| \\
    \leq & \left|\mathbb{P}(B_2) - \mathbb{P}(\mathcal{N}(0,1)\leq x)\right| +\mathbb{P}(A^c) \\
    \leq & \left|\mathbb{P}(B_2) - \mathbb{P}(\mathcal{N}(0,1)\leq  x-\Lambda\Gamma)\right| + \left|\mathbb{P}(\mathcal{N}(0,1)\leq x) - \mathbb{P}(\mathcal{N}(0,1)\leq x-\Lambda\Gamma)\right| +\mathbb{P}(A^c) \\
    \lesssim & \Gamma +\sqrt{\frac{\kappa_1}{npL}} +\frac{1}{n^6}\lesssim \Gamma.
\end{align*}
Similarly, one can show that $\left|\mathbb{P}(B_3\cap A) - \mathbb{P}(\mathcal{N}(0,1)\leq x)\right|\lesssim \Gamma$. Therefore, we have $$\left|\mathbb{P}(B_1\cap A) - \mathbb{P}(\mathcal{N}(0,1)\leq x)\right|\lesssim\Gamma.$$ Plugging this back to \eqref{asymptoticeq1}, and by the arbitrariness of $x$, we have
\begin{align*}
    \sup_{x\in \mathbb{R}}\left|\mathbb{P}\left(\sqrt{\left(\nabla^2\cL(\tb^*)\right)_{i,i}L}(\widehat{\alpha}_{R,i}^{\debias}-\alpha_i^*)\leq x\right) - \mathbb{P}(\mathcal{N}(0,1)\leq x)\right|\lesssim \frac{\kappa_1^3(d+1)}{\sqrt{np}}\left(\frac{\kappa_1^3\log n}{\sqrt{L}}+\sqrt{(k+d)\log n}\right).
\end{align*}

Next we focus on the asymptotic distribution of $\widehat{\beta}_{R,k}$. Similarly, we define 
\begin{align*}
    \Delta \beta_k = \left[\bA^{-1}\left(\nabla\cL(\tb^*)\right)_{n+1:n+d}\right]_k.
\end{align*}
For $(i,j)$ such that $(i,j)\in\mathcal{E}, i>j$ and $l\in[L]$, we define 
\begin{align*}
	X_{i,j}^{(l)} = \frac{1}{L}\left\{y_{j,i}^{(l)}-\frac{e^{\tx_i^\top \tb}}{e^{\tx_i^\top \tb}+e^{\tx_j^\top \tb}}\right\}\left(\bA^{-1}(\bx_i-\bx_j)\right)_k.
\end{align*}
In this case, we have $\Delta\beta_k = \sum_{l=1}^L\sum_{(i,j)\in \mathcal{E}} X_{i,j}^{(l)}.$ One can see that
\begin{align*}
    \frac{\mathbb{E}\left|X_{i,j}^{(l)}\right|^3}{\mathbb{E}\left(X_{i,j}^{(l)}\right)^2} &= \frac{\left|\left(\bA^{-1}(\bx_i-\bx_j)\right)_k\right|}{L} \left(\phi(\tx_i^\top\tb^*-\tx_i^\top\tb^*)^2+(1-\phi(\tx_i^\top\tb^*-\tx_i^\top\tb^*))^2\right)\leq \frac{\left|\left(\bA^{-1}(\bx_i-\bx_j)\right)_k\right|}{L}
\end{align*}
conditioned on graph $\mathcal{G}$. On the other hand, since $\text{Var}[(\nabla\cL(\tb^*))_{n+1:n+d}\mid \mathcal{G}] = \bA / L$, we know that 
$\text{Var}[\Delta\alpha_i\mid \mathcal{G}] = (\bA^{-1})_{k,k}/L$. As a result, by \cite{berry1941accuracy} we have
\begin{align*}
    \sup_{x\in \mathbb{R}}\left|\mathbb{P}\left(\frac{\sqrt{L}\Delta\beta_k}{\sqrt{(\bA^{-1})_{k,k}}}\leq x\mid\mathcal{G}\right) - \mathbb{P}(\mathcal{N}(0,1)\leq x)\right|\lesssim \frac{\sup_{(i,j)\in \mathcal{E}}\left|\left(\bA^{-1}(\bx_i-\bx_j)\right)_k\right| / L}{\sqrt{(\bA^{-1})_{k,k}/L}}.
\end{align*}
Since $\left|\left(\bA^{-1}(\bx_i-\bx_j)\right)_k\right|\lesssim \|\bA^{-1}\|\lesssim \kappa_1/np$ and $(\bA^{-1})_{k,k}\gtrsim 1/np$, we know that
\begin{align*}
    \sup_{x\in \mathbb{R}}\left|\mathbb{P}\left(\frac{\sqrt{L}\Delta\beta_k}{\sqrt{(\bA^{-1})_{k,k}}}\leq x\mid\mathcal{G}\right) - \mathbb{P}(\mathcal{N}(0,1)\leq x)\right|\lesssim \frac{\kappa_1}{\sqrt{npL}}
\end{align*}
with probability at least $1-O(n^{-10})$ (randomness comes from $\mathcal{G}$). Similar to the discussion of $\Delta\alpha_i$ before, one can further get 
\begin{align*}
    \sup_{x\in \mathbb{R}}\left|\mathbb{P}\left(\frac{\sqrt{L}\Delta\beta_k}{\sqrt{(\bA^{-1})_{k,k}}}\leq x\right) - \mathbb{P}(\mathcal{N}(0,1)\leq x)\right|\lesssim \frac{\kappa_1}{\sqrt{npL}}.
\end{align*}
Then we mimic the proof of the $\alpha$ counterpart again and one can show that
\begin{align*}
    &\sup_{x\in \mathbb{R}}\left|\mathbb{P}\left(\frac{\sqrt{L}\left(\hat{\beta}_{R, k}-\beta_k^*\right)}{\sqrt{(\bA^{-1})_{k,k}}}\leq x\right) - \mathbb{P}(\mathcal{N}(0,1)\leq x)\right|\\
    \lesssim & \frac{\kappa_1^3(d+1)\sqrt{kd(k+d)\log n}}{\sqrt{np}} + \frac{\kappa_1^{4.5}((k+d)(d+1)\log n)^{3/4}}{(npL)^{1/4}}.
\end{align*}
\end{proof}

\subsection{Proof of Theorem \ref{T1thm}}\label{T1thmproof}

\begin{proof}
    By Lemma \ref{Bootstraplemma1} we have
\begin{align*}
    \sup_{z\in \mathbb{R}}\left|P(\cG_1^\sharp\leq z)-P(\cT_1^\sharp\leq z)\right|  &= \sup_{z\in \mathbb{R}}\left|\mathbb{E}P(\cG_1^\sharp\leq z|\cE)-\mathbb{E}P(\cT_1^\sharp\leq z|\cE)\right|  \nonumber \\
    &\leq \mathbb{E}\sup_{z\in \mathbb{R}}\left|P(\cG_1^\sharp\leq z|\cE)-P(\cT_1^\sharp\leq z|\cE)\right|  \nonumber \\
    &\lesssim \left(\frac{\log^5 n}{np}\right)^{1/4}+\frac{1}{n^{10}}\lesssim \left(\frac{\log^5 n}{np}\right)^{1/4}.
\end{align*}
Combine this with Lemma \ref{Bootstraplemma2} and Lemma \ref{Bootstraplemma3} we get
\begin{align*}
    \sup_{z\in \mathbb{R}}\left|P(\cG_1\leq z)-P(\cT_1\leq z)\right|\lesssim \left(\frac{\log^5 n}{np}\right)^{1/4}  + \frac{\kappa_1^3(d+1)\log n}{\sqrt{np}}\left(\kappa_1^3\sqrt{\frac{\log n}{L}}+\sqrt{k+d}\right).
\end{align*}
Since $P(\cG_1\leq c_{1-\alpha}) = 1-\alpha$, we know that 
\begin{align*}
    \left|P(\cT_1>c_{1-\alpha})-\alpha\right|\lesssim \left(\frac{\log^5 n}{np}\right)^{1/4}  + \frac{\kappa_1^3(d+1)\log n}{\sqrt{np}}\left(\kappa_1^3\sqrt{\frac{\log n}{L}}+\sqrt{k+d}\right).
\end{align*}
\end{proof}
We define 
\begin{align*}
    \mathcal{T}_1^\sharp &= \max_{i\in [n]}\left|\sum_{l=1}^L \sum_{(i, j)\in\mathcal{E}}\sqrt{\frac{1}{\left(\nabla^2\cL(\tb^*)\right)_{i,i}L}}\left(\phi(\tx_i^\top\tb^*-\tx_j^\top\tb^*)-y_{j, i}^{(l)}\right)\right|, \\
    \mathcal{G}_1^\sharp &= \max_{i\in [n]}\left|\sum_{l=1}^L\sum_{(i,j)\in\mathcal{E}}\sqrt{\frac{1}{\left(\nabla^2\cL(\tb^*)\right)_{i,i}L}}\left(\phi(\tx_i^\top\tb^*-\tx_j^\top\tb^*)-y_{j,i}^{(l)}\right)\omega_{j,i}^{(l)}\right|,
\end{align*}
and let $c_{1-\alpha}^\sharp$ be the $(1-\alpha)$-th quantile of $\cG_1^\sharp$ conditioned on $\cE$ and $\{y_{j, i}:1\leq j<i\leq n\}$. Let $\bZ = (Z_1,Z_2,\dots,Z_n)$ be a random vector such that $\bZ|\cE$ is a Gaussian random vector and 
\begin{align*}
    \textbf{cov}(Z_i,Z_j|\cE) = \textbf{cov}\left(\frac{\sqrt{L}\left(\nabla\cL(\tb^*)\right)_i}{\sqrt{\left(\nabla^2\cL(\tb^*)\right)_{i,i}}},\frac{\sqrt{L}\left(\nabla\cL(\tb^*)\right)_j}{\sqrt{\left(\nabla^2\cL(\tb^*)\right)_{j, j}}}\mid \cE\right),\quad \forall i,j\in [n].
\end{align*}

\begin{lemma}\label{Bootstraplemma1}
Under the conditions of Theorem \ref{uqmainthm} and under the event $\cA_2$, we have
\begin{align*}
    &\sup_{z\in \mathbb{R}}\left|P(\cT_1^\sharp\leq z|\cE)-P\left(\max_{i\in [n]}|Z_i|\leq z|\cE\right)\right| \lesssim \left(\frac{\log^5 n}{np}\right)^{1/4},\\
    &\sup_{z\in \mathbb{R}}\left|P(\cT_1^\sharp\leq z|\cE)-P(\cG_1^\sharp \leq z|\cE)\right|\lesssim \left(\frac{\log^5 n}{np}\right)^{1/4}.
\end{align*}
Furthermore, we also have
\begin{align*}
    \sup_{z\in \mathbb{R}}\left|P(\cG_1^\sharp\leq z|\cE)-P\left(\max_{i\in [n]}|Z_i|\leq z|\cE\right)\right| \lesssim \left(\frac{\log^5 n}{np}\right)^{1/4}.
\end{align*}
\end{lemma}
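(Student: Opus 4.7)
The strategy is to express both $\cT_1^\sharp$ and $\cG_1^\sharp$ as maxima of sums (respectively, conditionally Gaussian sums) of conditionally independent vectors, then invoke the sharpened high-dimensional Gaussian approximation and multiplier-bootstrap theorems of \cite{chernozhuokov2022improved}. All three claims in the lemma follow, with the third being a triangle-inequality consequence of the first two.

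\paragraph{Step 1: Reduction to a sum of independent vectors.} For each edge $\{i,j\}\in\cE$ with $i>j$ and each $l\in[L]$, define a vector $\bxi^{(i,j,l)}\in\mathbb{R}^{n}$ supported only on coordinates $i$ and $j$, with entries
\begin{align*}
\bxi^{(i,j,l)}_i = -\frac{y_{j,i}^{(l)}-\phi(\tx_i^\top\tb^*-\tx_j^\top\tb^*)}{\sqrt{(\nabla^2\cL(\tb^*))_{i,i}L}},\quad \bxi^{(i,j,l)}_j = \frac{y_{j,i}^{(l)}-\phi(\tx_i^\top\tb^*-\tx_j^\top\tb^*)}{\sqrt{(\nabla^2\cL(\tb^*))_{j,j}L}},
\end{align*}
so that $\cT_1^\sharp=\max_{i\in[n]}|\sum_{(i,j,l)}\bxi^{(i,j,l)}_i|$. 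Conditionally on $\cE$, the $\bxi^{(i,j,l)}$ are independent, centered, and on the event $\cA_2$ satisfy the envelope bound $\|\bxi^{(i,j,l)}\|_\infty\lesssim\sqrt{\kappa_1/(npL)}$ using $(\nabla^2\cL(\tb^*))_{i,i}\gtrsim np/\kappa_1$. By construction the per-coordinate variance of $S_i:=\sum_{(i,j,l)}\bxi^{(i,j,l)}_i$ equals $1$, so the target Gaussian vector $\bZ=(Z_i)_{i\in[n]}$ matches this variance exactly. Handling the absolute value is standard: replace $\max_i|S_i|$ by $\max_{k\in[2n]}\tilde S_k$ after the usual sign-doubling, which only inflates the dimension by $2$.

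\paragraph{Step 2: Gaussian approximation.} Plugging the envelope $B\lesssim\sqrt{\kappa_1/(npL)}$, dimension $2n$, and the number $|\cE|L\asymp n^2pL$ of independent summands into the improved Gaussian approximation theorem of \cite{chernozhuokov2022improved}, and exploiting the uniform variance lower bound at each coordinate, yields
\begin{align*}
\sup_{z\in\mathbb{R}}\bigl|P(\cT_1^\sharp\leq z\mid\cE)-P(\max_i|Z_i|\leq z\mid\cE)\bigr|\lesssim\left(\frac{\log^5 n}{np}\right)^{1/4}
\end{align*}
on the event $\cA_2$, after simplification of $B^2\log^5(2n)\cdot(|\cE|L)^{-1}$ against the variance normalization.

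\paragraph{Step 3: Multiplier-bootstrap consistency.} Conditionally on $(\cE,\{y_{j,i}^{(l)}\})$, $\cG_1^\sharp$ is the max of a zero-mean Gaussian vector in $\mathbb{R}^{2n}$ whose covariance $\widehat{\boldsymbol\Sigma}$ is the empirical analogue of $\boldsymbol\Sigma:=\mathrm{Cov}(\bZ\mid\cE)$. Applying the bootstrap companion of the CCK theorem, the bound reduces to controlling $\|\widehat{\boldsymbol\Sigma}-\boldsymbol\Sigma\|_\infty$. A Bernstein bound on the $n^2pL$ independent centered summands, combined with the envelope $B^2\lesssim\kappa_1/(npL)$, gives $\|\widehat{\boldsymbol\Sigma}-\boldsymbol\Sigma\|_\infty\lesssim\sqrt{\log n/(np)}$ with probability $1-O(n^{-10})$, which via the CCK Gaussian comparison inequality yields the same $(\log^5 n/np)^{1/4}$ rate for $\sup_z|P(\cG_1^\sharp\leq z\mid\cE)-P(\max_i|Z_i|\leq z\mid\cE)|$. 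The second claim of the lemma then follows from Steps~2 and~3 by the triangle inequality.

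\paragraph{Main obstacle.} The principal technical subtleties are (i) verifying the uniform variance lower bound $\min_i\mathrm{Var}(S_i\mid\cE)\gtrsim 1$, which requires restricting to the high-probability event that every vertex degree of $\cG_{n,p}$ is $\Theta(np)$ and combining with the $\ell_1$/$\ell_2$ row bounds on $\nabla^2\cL(\tb^*)$ from Lemma~\ref{Lilemma}; and (ii) tracking the exact dependence on the envelope $B$ and the effective sample size so that the CCK rate simplifies to the claimed $(\log^5 n/np)^{1/4}$ rather than a cruder bound. Once these are settled, all three displayed inequalities follow by either a direct application of the sharpened CCK theorems or by triangle inequality.
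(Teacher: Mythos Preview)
Your proposal is correct and follows essentially the same route as the paper: verify the envelope and moment conditions of \cite{chernozhuokov2022improved} for the vectorized summands, then invoke their Theorem~2.1 (Gaussian approximation) and Theorem~2.2 (multiplier bootstrap) directly, with the third inequality coming from the triangle inequality. One small remark: the ``main obstacle'' (i) you flag is not actually an obstacle---by construction the per-coordinate variance of $S_i$ is \emph{exactly} $1$ (as you yourself observe in Step~1), so no separate argument involving degree regularity or Lemma~\ref{Lilemma} is needed for the variance lower bound; the paper's proof accordingly bypasses this entirely and just records $b_1=b_2=1$ and $B_n\asymp\sqrt{|\cE|/(\nabla^2\cL(\tb^*))_{i,i}}$ before applying the two theorems.
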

\begin{proof}
The \cite[Condition E,M]{chernozhuokov2022improved} holds with $b_1 = b_2= 1$ and $B_n\asymp \sqrt{|\cE| / (\nabla^2\cL(\tb^*))_{i,i}}$. By \cite[Theorem 2.1, Theorem 2.2]{chernozhuokov2022improved} we know that 
    \begin{align*}
       &\sup_{z\in \mathbb{R}}\left|P(\cT_1^\sharp\leq z|\cE)-P\left(\max_{i\in [n]}|Z_i|\leq z|\cE\right)\right|\lesssim \left(\frac{\log^5(nL|\cE|)}{L|\cE|}\cdot \frac{|\cE|}{\left(\nabla^2\cL(\tb^*)\right)_{i,i}}\right)^{1/4}\lesssim \left(\frac{\log^5 n}{np}\right)^{1/4}, \\
       &\sup_{z\in \mathbb{R}}\left|P(\cT_1^\sharp\leq z|\cE)-P(\cG_1^\sharp \leq z|\cE)\right|\lesssim \left(\frac{\log^5(nL|\cE|)}{L|\cE|}\cdot \frac{|\cE|}{\left(\nabla^2\cL(\tb^*)\right)_{i,i}}\right)^{1/4}\lesssim \left(\frac{\log^5 n}{np}\right)^{1/4}.
    \end{align*}
\end{proof}

\begin{lemma}\label{Bootstraplemma2}
Under the conditions of Theorem \ref{uqmainthm}, we have
\begin{align*}
    \sup_{z\in \mathbb{R}}|P(\cT_1\leq z) - P(\cT_1^\sharp\leq z)|\lesssim \left(\frac{\log^5 n}{np}\right)^{1/4} + \frac{\kappa_1^3(d+1)\log n}{\sqrt{np}}\left(\kappa_1^3\sqrt{\frac{\log n}{L}}+\sqrt{k+d}\right).
\end{align*}
\end{lemma}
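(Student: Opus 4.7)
The plan is to combine the non-asymptotic linear expansion of $\widehat\alpha_{R,i}^\debias$ from Theorem \ref{approxthm} with a Gaussian anti-concentration argument, using the Kolmogorov bounds already established in Lemma \ref{Bootstraplemma1}. Specialising Theorem \ref{approxthm} to the null hypothesis $\balpha^*=\bm 0$, I have, uniformly in $i\in[n]$ on an event of probability at least $1-O(n^{-6})$,
\[
\left|\widehat{\alpha}_{R,i}^\debias + \frac{(\nabla\cL(\tb^*))_i}{(\nabla^2\cL(\tb^*))_{i,i}}\right| \lesssim \frac{\kappa_1^3(d+1)}{np}\left(\frac{\kappa_1^3\log n}{L}+\sqrt{\frac{(k+d)\log n}{L}}\right).
\]
Since on event $\cA_2$ we have $(\nabla^2\cL(\tb^*))_{i,i}\asymp np/\kappa_1$, multiplying both sides by $\sqrt{L(\nabla^2\cL(\tb^*))_{i,i}}\lesssim\sqrt{npL}$ yields
\[
\Bigl|\sqrt{L(\nabla^2\cL(\tb^*))_{i,i}}\,\widehat{\alpha}_{R,i}^\debias + \tfrac{\sqrt{L}\,(\nabla\cL(\tb^*))_i}{\sqrt{(\nabla^2\cL(\tb^*))_{i,i}}}\Bigr|\lesssim \delta_1:=\tfrac{\kappa_1^3(d+1)}{\sqrt{np}}\Bigl(\tfrac{\kappa_1^3\log n}{\sqrt{L}}+\sqrt{(k+d)\log n}\Bigr).
\]

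Next, I need to replace $\sqrt{(\nabla^2\cL(\tb^*))_{i,i}}$ by $\sqrt{(\nabla^2\cL(\tb_R))_{i,i}}$, which is what actually appears in $\cT_1$. Mimicking the argument used in the proof of Lemma \ref{alphadebiasapprox2}, I will bound $|(\nabla^2\cL(\tb_R))_{i,i}-(\nabla^2\cL(\tb^*))_{i,i}|\lesssim np\,\|\tb_R-\tb^*\|_c$, where Theorem \ref{estimationthm} gives $\|\tb_R-\tb^*\|_c=O(\kappa_1^2\sqrt{(d+1)\log n/(npL)})$. This yields a multiplicative error $\sqrt{(\nabla^2\cL(\tb_R))_{i,i}/(\nabla^2\cL(\tb^*))_{i,i}}=1+O(\kappa_1^3\sqrt{(d+1)\log n/(npL)})$; when combined with the typical magnitude of $\sqrt{L(\nabla^2\cL(\tb^*))_{i,i}}\,\widehat\alpha_{R,i}^\debias=O(\sqrt{\log n})$, this contributes at lower order than $\delta_1$. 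A union bound over $i\in[n]$ then delivers $|\cT_1-\cT_1^\sharp|\lesssim\delta_1$ with probability at least $1-O(n^{-6})$.

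Finally, to convert this high-probability approximation into a Kolmogorov bound, I apply the classical anti-concentration argument. For any $z\in\mathbb{R}$,
\[
P(\cT_1\leq z)\leq P(\cT_1^\sharp\leq z+\delta_1)+P(|\cT_1-\cT_1^\sharp|>\delta_1)\leq P(\cT_1^\sharp\leq z+\delta_1)+O(n^{-6}).
\]
By Lemma \ref{Bootstraplemma1}, the Kolmogorov distance between $\cT_1^\sharp$ and $\max_i|Z_i|$ is $O((\log^5 n/np)^{1/4})$, while Nazarov's anti-concentration inequality for the Gaussian maximum gives $\sup_z P(|\max_i|Z_i|-z|\leq\delta_1)\lesssim \delta_1\sqrt{\log n}$. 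Chaining these two estimates shows $P(\cT_1^\sharp\leq z+\delta_1)\leq P(\cT_1^\sharp\leq z)+O(\delta_1\sqrt{\log n}+(\log^5 n/np)^{1/4})$. A symmetric lower bound together with $\delta_1\sqrt{\log n}=\tfrac{\kappa_1^3(d+1)\log n}{\sqrt{np}}\bigl(\kappa_1^3\sqrt{\log n/L}+\sqrt{k+d}\bigr)$ completes the proof.

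The main obstacle I anticipate is the clean execution of Step 2: the substitution $(\nabla^2\cL(\tb^*))_{i,i}\!\to\!(\nabla^2\cL(\tb_R))_{i,i}$ must be controlled uniformly over $i\in[n]$, and one must verify that the multiplicative perturbation, when combined with the (potentially atypically large) debiased estimate at the argmax, still yields a lower-order contribution than $\delta_1$. The factor $\sqrt{\log n}$ loss in Step 3 is unavoidable and is precisely why the bound in the statement carries an extra $\log n$ compared with the raw Theorem \ref{approxthm} error.
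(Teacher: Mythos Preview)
Your proposal is correct and follows essentially the same route as the paper's proof: bound $|\cT_1-\cT_1^\sharp|$ by combining the expansion from Theorem~\ref{approxthm} with a Hessian-diagonal swap controlled via $\|\tb_R-\tb^*\|_c$ and Lemma~\ref{Lilemma}, then convert to a Kolmogorov bound using Lemma~\ref{Bootstraplemma1} and Gaussian anti-concentration. The only cosmetic difference is that the paper places the multiplicative perturbation factor on the gradient term $(\nabla\cL(\tb^*))_i\sqrt{L/(\nabla^2\cL(\tb^*))_{i,i}}$ directly (see its decomposition \eqref{TTdiffeq1}), whereas you place it on $\sqrt{L(\nabla^2\cL(\tb^*))_{i,i}}\,\widehat\alpha_{R,i}^\debias$; since these two quantities differ by at most $\delta_1$ under the null, the resulting bounds coincide.
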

\begin{proof}
By definition of $\cT$ and $\cT^\sharp$ we know that
\begin{align*}
    |\cT_1-\cT_1^\sharp|\leq\max_{i\in [n]}\left|\sqrt{\left(\nabla^2\cL(\tb_R)\right)_{i,i}L}\widehat{\alpha}_{R,i}^{\debias} + \left(\nabla\cL(\tb^*)\right)_i\sqrt{L / \left(\nabla^2\cL(\tb^*)\right)_{i, i}}\right|.
\end{align*}
Under the null hypothesis we have
\begin{align}
&\left|\sqrt{\left(\nabla^2\cL(\tb_R)\right)_{i,i}L}\widehat{\alpha}_{R,i}^{\debias} + \left(\nabla\cL(\tb^*)\right)_i\sqrt{L / \left(\nabla^2\cL(\tb^*)\right)_{i, i}}\right| \nonumber \\
\leq & \sqrt{\left(\nabla^2\cL(\tb_R)\right)_{i,i}L}\left|\widehat{\alpha}_{R,i}^{\debias} - \left(\alpha^*_i - \left(\nabla\cL(\tb^*)\right)_i/\left(\nabla^2\cL(\tb^*)\right)_{i,i}\right)\right| \nonumber  \\
&+\left|\left(\nabla\cL(\tb^*)\right)_i\sqrt{L / \left(\nabla^2\cL(\tb^*)\right)_{i, i}}\right|\left|1-\sqrt{\frac{\left(\nabla^2\cL(\tb_R)\right)_{i, i}}{\left(\nabla^2\cL(\tb^*)\right)_{i, i}}}\right|.\label{TTdiffeq1}
\end{align}
According to \eqref{approxthmproof}, the first term on the right hand side of \eqref{TTdiffeq1} can be bounded as
\begin{align}
&\sqrt{\left(\nabla^2\cL(\tb_R)\right)_{i,i}L}\left|\widehat{\alpha}_{R,i}^{\debias} - \left(\alpha^*_i - \left(\nabla\cL(\tb^*)\right)_i/\left(\nabla^2\cL(\tb^*)\right)_{i,i}\right)\right|  \nonumber \\
\lesssim &  \sqrt{\left(\nabla^2\cL(\tb_R)\right)_{i,i}L}\frac{\kappa_1^3(d+1)}{np}\left(\frac{\kappa_1^3\log n}{L}+\sqrt{\frac{(k+d)\log n}{L}}\right) \nonumber \\
\lesssim & \frac{\kappa_1^3(d+1)}{\sqrt{np}}\left(\frac{\kappa_1^3\log n}{\sqrt{L}}+\sqrt{(k+d)\log n}\right)\label{TTdiffeq2}
\end{align}
with probability at least $1-O(n^{-6})$. When it comes to the second term, by Lemma \ref{Lilemma} we have
\begin{align}
    \left|\left(\nabla\cL(\tb^*)\right)_i\sqrt{L / \left(\nabla^2\cL(\tb^*)\right)_{i, i}}\right|\lesssim \sqrt{\frac{np\log n}{L}}\sqrt{\kappa_1 L/np}\lesssim \sqrt{\kappa_1\log n}\label{TTdiffeq3}
\end{align}
with probability at least $1-O(n^{-10})$. And, one can see that
\begin{align}
    &\left|1-\sqrt{\frac{\left(\nabla^2\cL(\tb_R)\right)_{i, i}}{\left(\nabla^2\cL(\tb^*)\right)_{i, i}}}\right|\leq \left|1-\frac{\left(\nabla^2\cL(\tb_R)\right)_{i, i}}{\left(\nabla^2\cL(\tb^*)\right)_{i, i}}\right| \nonumber \\
    \lesssim & \frac{\sum_{j:(i,j)\in \cE}|\phi'(\tx_i^\top\tb^* - \tx_j^\top\tb^*)-\phi'(\tx_i^\top\tb_R - \tx_j^\top\tb_R)|}{np/\kappa_1}  \nonumber \\
    \lesssim & \frac{\sum_{j:(i,j)\in \cE}|(\tx_i^\top\tb^* - \tx_j^\top\tb^*)-(\tx_i^\top\tb_R - \tx_j^\top\tb_R)|}{np/\kappa_1}\lesssim \kappa_1^3\sqrt{\frac{(d+1)\log n}{npL}}.\label{TTdiffeq4}
\end{align}
with probability at least $1-O(n^{-10})$. Plugging \eqref{TTdiffeq2}, \eqref{TTdiffeq3} and \eqref{TTdiffeq4} in \eqref{TTdiffeq1} we get 
\begin{align*}
    \left|\sqrt{\left(\nabla^2\cL(\tb_R)\right)_{i,i}L}\widehat{\alpha}_{R,i}^{\debias} + \left(\nabla\cL(\tb^*)\right)_i\sqrt{L / \left(\nabla^2\cL(\tb^*)\right)_{i, i}}\right|\lesssim \frac{\kappa_1^3(d+1)}{\sqrt{np}}\left(\frac{\kappa_1^3\log n}{\sqrt{L}}+\sqrt{(k+d)\log n}\right)
\end{align*}
with probability at least $1-O(n^{-6})$. As a result, we know that
\begin{align}
    |\cT_1-\cT_1^\sharp|\lesssim \frac{\kappa_1^3(d+1)}{\sqrt{np}}\left(\frac{\kappa_1^3\log n}{\sqrt{L}}+\sqrt{(k+d)\log n}\right)\label{TTdiffeq5}
\end{align}
with probability at least $1-O(n^{-6})$. We let 
\begin{align*}
    \delta\asymp \frac{\kappa_1^3(d+1)}{\sqrt{np}}\left(\frac{\kappa_1^3\log n}{\sqrt{L}}+\sqrt{(k+d)\log n}\right).
\end{align*}
By Lemma \ref{Bootstraplemma1} we have
\begin{align}
    \sup_{z\in \mathbb{R}}\left|P(\cT_1^\sharp\leq z)-P\left(\max_{i\in [n]}|Z_i|\leq z\right)\right|  &= \sup_{z\in \mathbb{R}}\left|\mathbb{E}P(\cT_1^\sharp\leq z|\cE)-\mathbb{E}P\left(\max_{i\in [n]}|Z_i|\leq z|\cE\right)\right|  \nonumber \\
    &\leq \mathbb{E}\sup_{z\in \mathbb{R}}\left|P(\cT_1^\sharp\leq z|\cE)-P\left(\max_{i\in [n]}|Z_i|\leq z|\cE\right)\right|  \nonumber \\
    &\lesssim \left(\frac{\log^5 n}{np}\right)^{1/4}+\frac{1}{n^{10}}\lesssim \left(\frac{\log^5 n}{np}\right)^{1/4}.\label{TTdiffeq6}
\end{align}
Therefore, by \eqref{TTdiffeq5} and \eqref{TTdiffeq6} we can write
\begin{align}
    &\sup_{z\in \mathbb{R}}|P(\cT_1\leq z) - P(\cT_1^\sharp\leq z)|\leq P(|\cT_1-\cT_1^\sharp|>\delta) + \sup_{z\in \mathbb{R}}P(z < \cT_1^\sharp \leq z+\delta) \nonumber \\
    \lesssim & \frac{1}{n^6}+ \left(\frac{\log^5 n}{np}\right)^{1/4} + \sup_{z\in \mathbb{R}}P\left(z < \max_{i\in [n]}|Z_i|\leq z \leq z+\delta\right). \label{TTdiffeq7}
\end{align}
By \cite[Theorem 3]{chernozhukov2015comparison} the last term on the right hand side can be controlled as 
\begin{align*}
    \sup_{z\in \mathbb{R}}P\left(z < \max_{i\in [n]}|Z_i|\leq z \leq z+\delta\right) &= \sup_{z\in \mathbb{R}}\mathbb{E}P\left(z < \max_{i\in [n]}|Z_i|\leq z \leq z+\delta|\cE\right) \\
    &\lesssim \mathbb{E}\sup_{z\in \mathbb{R}}P\left(z < \max_{i\in [n]}|Z_i|\leq z \leq z+\delta|\cE\right) \\
    &\lesssim \sqrt{\log n}\delta.
\end{align*}
Plugging this in \eqref{TTdiffeq7} we get 
\begin{align*}
    \sup_{z\in \mathbb{R}}|P(\cT_1\leq z) - P(\cT_1^\sharp\leq z)|\lesssim \left(\frac{\log^5 n}{np}\right)^{1/4} + \frac{\kappa_1^3(d+1)\log n}{\sqrt{np}}\left(\kappa_1^3\sqrt{\frac{\log n}{L}}+\sqrt{k+d}\right).
\end{align*}
\end{proof}

\begin{lemma}\label{Bootstraplemma3}  
Under the conditions of Theorem \ref{uqmainthm}, we have
\begin{align*}
     \sup_{z\in \mathbb{R}}|P(\cG_1\leq z) - P(\cG_1^\sharp\leq z)|\lesssim \left(\frac{\log^5 n}{np}\right)^{1/4}+\kappa_1^{3.5}\frac{\sqrt{d+1}\log^{1.5} n}{\sqrt{npL}}.
\end{align*}
\end{lemma}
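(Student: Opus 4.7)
The proof should mirror the strategy used for Lemma \ref{Bootstraplemma2}: bound $|\cG_1 - \cG_1^\sharp|$ with high probability and then pay a $\sqrt{\log n}$ factor via anti-concentration. The key structural observation is that the data $(\cE, \{y_{j,i}^{(l)}\})$ determines $\tb_R$, so conditional on the data both $\cG_1$ and $\cG_1^\sharp$ are (absolute values of) Gaussian maxima in the multipliers $\omega_{j,i}^{(l)}$. This lets me reduce everything to comparing two conditionally Gaussian vectors.

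First I would write $|\cG_1 - \cG_1^\sharp| \leq \max_i |G_i - G_i^\sharp|$ where $G_i, G_i^\sharp$ are the per-index bootstrap sums, and decompose $G_i - G_i^\sharp = A_i + B_i$ where $A_i$ captures replacing $\phi(\tx_i^\top\tb^*-\tx_j^\top\tb^*)$ by $\phi(\tx_i^\top\tb_R-\tx_j^\top\tb_R)$ and $B_i$ captures replacing $(\nabla^2\cL(\tb^*))_{i,i}$ by $(\nabla^2\cL(\tb_R))_{i,i}$ in the normalization. Conditional on $(\cE, y)$ each of $A_i, B_i$ is centered Gaussian. For $A_i$, the $1$-Lipschitzness of $\phi$ together with the pointwise bound
\[
|(\tx_i-\tx_j)^\top(\tb_R-\tb^*)| \;\lesssim\; \|\widehat{\ba}_R - \ba^*\|_\infty + \sqrt{\tfrac{d+1}{n}}\,\|\widehat{\bb}_R - \bb^*\|_2 \;\lesssim\; \kappa_1^2\sqrt{\tfrac{(d+1)\log n}{npL}}
\]
from Theorem \ref{estimationthm}, plus $|\{j:(i,j)\in\cE\}|\lesssim np$ on $\cA_2$ and $(\nabla^2\cL(\tb^*))_{i,i}\gtrsim np/\kappa_1$, yield a conditional variance bound of order $\kappa_1^5(d+1)\log n/(npL)$. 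For $B_i$, I can reuse the perturbation estimate $|1-\sigma_i/\sigma_i^*| \lesssim \kappa_1^3\sqrt{(d+1)\log n/(npL)}$ already derived in the proof of Lemma \ref{Bootstraplemma2}, combined with the crude bound $\sum_{l,j}(\phi - y^{(l)})^2 \lesssim Lnp$, giving conditional variance of order $\kappa_1^7(d+1)\log n/(npL)$.

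Next, applying the Gaussian maximal inequality to the at-most-$n$ conditionally Gaussian variables $(G_i-G_i^\sharp)_i$ gives, on the good event,
\[
|\cG_1 - \cG_1^\sharp| \;\leq\; \max_i|G_i - G_i^\sharp| \;\lesssim\; \kappa_1^{3.5}\sqrt{d+1}\,\frac{\log n}{\sqrt{npL}} \;=:\; \delta,
\]
with (unconditional) probability $1 - O(n^{-10})$ after integrating out $(\cE,y)$. To convert this into a Kolmogorov bound, I would use the standard sandwich
\[
|P(\cG_1\leq z) - P(\cG_1^\sharp\leq z)| \;\leq\; P(|\cG_1-\cG_1^\sharp|>\delta) + \sup_z P(z<\cG_1^\sharp\leq z+\delta),
\]
and handle the anti-concentration term by transferring to the idealized Gaussian maximum $\max_i|Z_i|$ from Lemma \ref{Bootstraplemma1}: Nazarov's inequality gives $\sup_z P(z<\max_i|Z_i|\leq z+\delta\mid\cE) \lesssim \delta\sqrt{\log n}$, and Lemma \ref{Bootstraplemma1} contributes an additional $(\log^5 n/np)^{1/4}$ term. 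Summing yields exactly the claimed rate $(\log^5 n/np)^{1/4} + \kappa_1^{3.5}\sqrt{d+1}\,\log^{1.5}n/\sqrt{npL}$.

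The main obstacle will be the conditional variance computation for $A_i$: the sum over edges of $(i,\cdot)$ couples the random edge structure with the deterministic-looking perturbation $\|\tb_R-\tb^*\|_c$, yet $\tb_R$ itself depends on $(\cE,y)$. I would handle this by conditioning first on $\cE$ (controlled by $\cA_2$), then invoking the already-established high-probability bounds from Theorem \ref{estimationthm} to replace $\|\tb_R-\tb^*\|_c$ by its deterministic rate, being careful that the aggregated failure probability across the estimation event, the Hessian event, and the Gaussian-max event remains $O(n^{-6})$ so that integrating out the data does not degrade the tail.
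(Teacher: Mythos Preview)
Your proposal is correct and follows essentially the same route as the paper: the paper writes $|\cG_1-\cG_1^\sharp|\le\max_i\bigl|\sum_{j,l}\Delta_{j,i}^{(l)}\omega_{j,i}^{(l)}\bigr|$, bounds each $\Delta_{j,i}^{(l)}$ pointwise via the same add--and--subtract decomposition into a $\phi$-perturbation piece and a normalization-perturbation piece, then applies the Gaussian maximal inequality to obtain $|\cG_1-\cG_1^\sharp|\lesssim\kappa_1^{3.5}\sqrt{d+1}\,\log n/\sqrt{npL}$ and finishes with the identical sandwich/anti-concentration argument via Lemma~\ref{Bootstraplemma1} and \cite[Theorem~3]{chernozhukov2015comparison}. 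Your variance-based bookkeeping and the paper's pointwise bound on $\Delta_{j,i}^{(l)}$ are equivalent repackagings of the same computation.
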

\begin{proof}
By definition we have
\begin{align}
    |\cG_1-\cG_1^\sharp|\leq \max_{i\in [n]}\left|\sum_{(i,j)\in \cE}\sum_{l=1}^L\Delta_{j, i}^{(l)}\omega_{j,i}^{(l)}\right|, \label{Bootstraplemma3eq2}
\end{align}
where 
\begin{align*}
    \Delta_{j, i}^{(l)}:= \sqrt{\frac{1}{\left(\nabla^2\cL(\tb_R)\right)_{i,i}L}}\left(\phi(\tx_i^\top\tb_R-\tx_j^\top\tb_R)-y_{j, i}^{(l)}\right) - \sqrt{\frac{1}{\left(\nabla^2\cL(\tb^*)\right)_{i,i}L}}\left(\phi(\tx_i^\top\tb^*-\tx_j^\top\tb^*)-y_{j, i}^{(l)}\right).
\end{align*}
$\Delta_{j,i}^{(l)}$ can be controlled as
\begin{align}
    |\Delta_{j, i}^{(l)}|\leq &\left|\sqrt{\frac{1}{\left(\nabla^2\cL(\tb_R)\right)_{i,i}L}} - \sqrt{\frac{1}{\left(\nabla^2\cL(\tb^*)\right)_{i,i}L}}\right|\left|\phi(\tx_i^\top\tb^*-\tx_j^\top\tb^*)-y_{j, i}^{(l)}\right|  \nonumber \\
    &+\sqrt{\frac{1}{\left(\nabla^2\cL(\tb_R)\right)_{i,i}L}}\left|\phi(\tx_i^\top\tb_R-\tx_j^\top\tb_R) - \phi(\tx_i^\top\tb^*-\tx_j^\top\tb^*)\right|.\label{Bootstraplemma3eq1}
\end{align}
Since $1/\sqrt{a}-1/\sqrt{b} = (b-a)/(b\sqrt{a}+a\sqrt{b} )$, we know that
\begin{align*}
    \left|\sqrt{\frac{1}{\left(\nabla^2\cL(\tb_R)\right)_{i,i}L}} - \sqrt{\frac{1}{\left(\nabla^2\cL(\tb^*)\right)_{i,i}L}}\right|&\lesssim \frac{\left|\left(\nabla^2\cL(\tb^*)\right)_{i,i} - \left(\nabla^2\cL(\tb_R)\right)_{i,i}\right|}{(np/\kappa_1)^{1.5}\sqrt{L}} \\
    &\lesssim \frac{np \kappa_1^2\sqrt{(d+1)\log n/npL}}{(np/\kappa_1)^{1.5}\sqrt{L}}\lesssim \frac{\kappa_1^{3.5}\sqrt{(d+1)\log n}}{npL}.
\end{align*}
Plugging this in \eqref{Bootstraplemma3eq1}, we get
\begin{align*}
    |\Delta_{j, i}^{(l)}|&\lesssim \frac{\kappa_1^{3.5}\sqrt{(d+1)\log n}}{npL}\cdot 1 + \sqrt{\frac{\kappa_1}{npL}}\left|(\tx_i^\top\tb_R-\tx_j^\top\tb_R) - (\tx_i^\top\tb^*-\tx_j^\top\tb^*)\right| \\
    &\lesssim \frac{\kappa_1^{3.5}\sqrt{(d+1)\log n}}{npL} + \sqrt{\frac{\kappa_1}{npL}}\kappa_1^2\sqrt{\frac{(d+1)\log n}{npL}}\lesssim \kappa_1^{3.5}\frac{\sqrt{(d+1)\log n}}{npL}
\end{align*}
for all $(i,j)\in \cE$ with probability at least $1-O(n^{-10})$. Plugging this in \eqref{Bootstraplemma3eq2}, we know that
\begin{align*}
     |\cG_1-\cG_1^\sharp|\lesssim \max_{i\in [n]}\sqrt{\sum_{(i,j)\in \cE}\sum_{l=1}^L\left(\Delta_{j,i}^{(l)}\right)^2\log n}\lesssim \kappa_1^{3.5}\frac{\sqrt{d+1}\log n}{\sqrt{npL}}
\end{align*}
with probability at least $1-O(n^{-10})$.

Next, we let 
\begin{align}
    \delta\asymp \kappa_1^{3.5}\frac{\sqrt{d+1}\log n}{\sqrt{npL}}.\label{BootstrapLemma3eq3}
\end{align}
By Lemma \ref{Bootstraplemma1} we have
\begin{align*}
    \sup_{z\in \mathbb{R}}\left|P(\cG_1^\sharp\leq z)-P\left(\max_{i\in [n]}|Z_i|\leq z\right)\right|  &= \sup_{z\in \mathbb{R}}\left|\mathbb{E}P(\cG_1^\sharp\leq z|\cE)-\mathbb{E}P\left(\max_{i\in [n]}|Z_i|\leq z|\cE\right)\right|  \nonumber \\
    &\leq \mathbb{E}\sup_{z\in \mathbb{R}}\left|P(\cG_1^\sharp\leq z|\cE)-P\left(\max_{i\in [n]}|Z_i|\leq z|\cE\right)\right|  \nonumber \\
    &\lesssim \left(\frac{\log^5 n}{np}\right)^{1/4}+\frac{1}{n^{10}}\lesssim \left(\frac{\log^5 n}{np}\right)^{1/4}.
\end{align*}
Then we write
\begin{align}
    \sup_{z\in \mathbb{R}}|P(\cG_1\leq z) - P(\cG_1^\sharp\leq z)|&\leq P(|\cG_1-\cG_1^\sharp|>\delta) + \sup_{z\in \mathbb{R}}P(z < \cG_1^\sharp \leq z+\delta)  \nonumber\\
    &\lesssim \frac{1}{n^{10}}+\left(\frac{\log^5 n}{np}\right)^{1/4}+\sup_{z\in \mathbb{R}}P\left(z < \max_{i\in [n]}|Z_i| \leq z+\delta\right).\label{BootstrapLemma3eq4}
\end{align}
By \cite[Theorem 3]{chernozhukov2015comparison} the last term on the right hand side can be controlled as 
\begin{align}
    \sup_{z\in \mathbb{R}}P\left(z < \max_{i\in [n]}|Z_i|\leq z +\delta\right) &= \sup_{z\in \mathbb{R}}\mathbb{E}P\left(z < \max_{i\in [n]}|Z_i|\leq z +\delta|\cE\right) \nonumber \\
    &\lesssim \mathbb{E}\sup_{z\in \mathbb{R}}P\left(z < \max_{i\in [n]}|Z_i|\leq z+\delta|\cE\right) \nonumber \\
    &\lesssim \sqrt{\log n}\delta.\label{BootstrapLemma3eq5}
\end{align}
Plugging \eqref{BootstrapLemma3eq3} and \eqref{BootstrapLemma3eq5} in \eqref{BootstrapLemma3eq4} we know that
\begin{align*}
     \sup_{z\in \mathbb{R}}|P(\cG_1\leq z) - P(\cG_1^\sharp\leq z)|\lesssim \left(\frac{\log^5 n}{np}\right)^{1/4}+\kappa_1^{3.5}\frac{\sqrt{d+1}\log^{1.5} n}{\sqrt{npL}}.
\end{align*}
\end{proof}

\subsection{Proof of Theorem \ref{T23thm}}\label{T23thmproof}
\begin{proof}
    Let $(\cT, \cG, \cT^\sharp, \cG^\sharp, \cQ^\sharp)$ be any one of the three pairs: $(\cT_t, \cG_t, \cT^\sharp_t, \cG^\sharp_t, \cQ^\sharp_t),t=2,3$. By Lemma \ref{Bootstraplemma4} we have
\begin{align*}
    \sup_{z\in \mathbb{R}}\left|P(\cG^\sharp\leq z)-P(\cT^\sharp\leq z)\right|  &= \sup_{z\in \mathbb{R}}\left|\mathbb{E}P(\cG^\sharp\leq z|\cE)-\mathbb{E}P(\cT^\sharp\leq z|\cE)\right|  \nonumber \\
    &\leq \mathbb{E}\sup_{z\in \mathbb{R}}\left|P(\cG^\sharp\leq z|\cE)-P(\cT^\sharp\leq z|\cE)\right|  \nonumber \\
    &\lesssim \left(\frac{\kappa_1^3 \log^5 n}{np}\right)^{1/4}+\frac{1}{n^{10}}\lesssim \left(\frac{\kappa_1^3 \log^5 n}{np}\right)^{1/4}.
\end{align*}
Combine this with Lemma \ref{Bootstraplemma5} and Lemma \ref{Bootstraplemma6} we get
\begin{align*}
    &\sup_{z\in \mathbb{R}}\left|P(\cG\leq z)-P(\cT\leq z)\right| \\
    \lesssim &\left(\frac{\kappa_1^3\log^5 n}{np}\right)^{1/4} + \frac{\kappa_1^{3.5}(d+1)\log n}{\sqrt{np}}\left(\kappa_1^3\sqrt{\frac{\log n}{L}}+\kappa_1^4\sqrt{\frac{\log n}{L(d+1)}}+\sqrt{k+d}\right).
\end{align*}
Since $P(\cG\leq c_{1-\alpha}) = 1-\alpha$, we know that 
\begin{align*}
    \left|P(\cT>c_{1-\alpha})-\alpha\right|\lesssim \left(\frac{\kappa_1^3\log^5 n}{np}\right)^{1/4} + \frac{\kappa_1^{3.5}(d+1)\log n}{\sqrt{np}}\left(\kappa_1^3\sqrt{\frac{\log n}{L}}+\kappa_1^4\sqrt{\frac{\log n}{L(d+1)}}+\sqrt{k+d}\right).
\end{align*}
\end{proof}

We define 
\begin{align*}
    \cT_2^\sharp &= \max_{m\in \cM}\max_{k\neq m}\left|\frac{(\tz_m-\tz_k)^\top (\nabla^2\cL(\tb^*))^\diamond \nabla\cL(\tb^*) }{\sigma_{m,k}}\right|, \\
    \cG_2^\sharp &= \max_{m\in \cM}\max_{k\neq m}\Bigg|\sum_{l=1}^L \sum_{(i,j)\in \cE,i>j} \frac{(\tz_m-\tz_k)^\top(\nabla^2\cL(\tb^*))^\diamond (\tx_i-\tx_j)}{\sigma_{m,k}L}(\phi(\tx_i^\top\tb^*-\tx_j^\top\tb^*)-y_{j, i}^{(l)})\omega_{j, i}^{(l)} \Bigg|,  \\
    \cT_3^\sharp &= \max_{m\in \cM}\max_{k\neq m}\frac{(\tz_m-\tz_k)^\top (\nabla^2\cL(\tb^*))^\diamond \nabla\cL(\tb^*) }{\sigma_{m,k}}, \\
    \cG_3^\sharp &= \max_{m\in \cM}\max_{k\neq m}\sum_{l=1}^L \sum_{(i,j)\in \cE,i>j} \frac{(\tz_m-\tz_k)^\top(\nabla^2\cL(\tb^*))^\diamond (\tx_i-\tx_j)}{\sigma_{m,k}L}(\phi(\tx_i^\top\tb^*-\tx_j^\top\tb^*)-y_{j, i}^{(l)})\omega_{j, i}^{(l)},
\end{align*}
where $\sigma_{m,k}^2 = (\tz_m-\tz_k)^\top (\nabla^2\cL(\tb^*))^\diamond \nabla^2\cL(\tb^*) (\nabla^2\cL(\tb^*))^\diamond    (\tz_m-\tz_k) / L$. Given $i\in [n], m\in \cM, k\neq m$, we define $X_i = \sqrt{L}(\nabla\cL(\tb^*))_i / \sqrt{(\nabla^2\cL(\tb^*))_{i,i}}$ and $Y_{m,k} = (\tz_m-\tz_k)^\top (\nabla^2\cL(\tb^*))^\diamond \nabla\cL(\tb^*) /\sigma_{m,k}$. Let $\{W_{m,k}, m\in \cM, k\neq m\}$ be a set of random variables such that $\{W_{m,k}|\cE, m\in \cM, k\neq m\}$ is a set of joint Gaussian random variables and
\begin{align*}
   \textbf{cov}(W_{m_1,k_1}, W_{m_2,k_2}|\cE) = \textbf{cov} (Y_{m_1,k_1}, Y_{m_2,k_2}|\cE),  \quad \forall m_1,m_2\in \cM, k_1\neq m_1, k_2\neq m_2.
\end{align*}
We define 
\begin{align*}
    \cQ_2^\sharp = \max_{m\in \cM}\max_{k\neq m}|W_{m,k}|, \quad \cQ_3^\sharp = \max_{m\in \cM}\max_{k\neq m}W_{m,k}.
\end{align*}
In the following proof, we let $(\cT, \cG, \cT^\sharp, \cG^\sharp, \cQ^\sharp)$ be any one of the three pairs: $(\cT_t, \cG_t, \cT^\sharp_t, \cG^\sharp_t, \cQ^\sharp_t),t=2,3$. Given $\alpha\in (0,1)$, let $c_{1-\alpha}$ be the $(1-\alpha)$-th quantile of $\cG$. We are aiming at showing $|P(\cT>c_{1-\alpha})-\alpha|\to 0$.

We start with the following lemmas.

\begin{lemma}\label{diamondlemma}
    Under conditions of Theorem \ref{estimationthm}, we have
    \begin{align*}
        \left\|(\nabla^2\cL(\tb_R))^\diamond \nabla^2\cL(\tb_R) (\nabla^2\cL(\tb_R))^\diamond - (\nabla^2\cL(\tb^*))^\diamond \nabla^2\cL(\tb^*) (\nabla^2\cL(\tb^*))^\diamond\right\|\lesssim \kappa_1^5\sqrt{\frac{(d+1)\log n}{n^3p^3L}}.
    \end{align*}
\end{lemma}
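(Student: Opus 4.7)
\textbf{Proof proposal for Lemma \ref{diamondlemma}.}

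Write $\bH^* = \nabla^2\cL(\tb^*)$ and $\widehat{\bH}=\nabla^2\cL(\tb_R)$, and use the standard telescoping identity
\[
\widehat{\bH}^{\diamond}\widehat{\bH}\widehat{\bH}^{\diamond}-\bH^{*\diamond}\bH^{*}\bH^{*\diamond}
=(\widehat{\bH}^{\diamond}-\bH^{*\diamond})\widehat{\bH}\widehat{\bH}^{\diamond}+\bH^{*\diamond}(\widehat{\bH}-\bH^{*})\widehat{\bH}^{\diamond}+\bH^{*\diamond}\bH^{*}(\widehat{\bH}^{\diamond}-\bH^{*\diamond}).
\]
The idea is that each of the three pieces already has operator norm of order $\kappa_1^5\sqrt{(d+1)\log n/(n^3p^3L)}$, once we prove the two ingredient bounds: (i) $\|\widehat{\bH}-\bH^*\|\lesssim np\cdot\|\tb_R-\tb^*\|_c$, and (ii) $\|\widehat{\bH}^{\diamond}-\bH^{*\diamond}\|\lesssim\kappa_1^4\sqrt{(d+1)\log n/(n^3p^3L)}$. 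Combined with the direct estimates $\|\widehat{\bH}\|,\|\bH^*\|\lesssim np$ (Lemma \ref{ub}) and $\|\widehat{\bH}^{\diamond}\|,\|\bH^{*\diamond}\|\lesssim \kappa_1/(np)$ (which follow because every diagonal entry $H^*_{ii}$ is $\gtrsim np/\kappa_1$ by Lemma \ref{lb} and because the $d\times d$ block $\bA$ satisfies $\|\bA^{-1}\|\lesssim\kappa_1/(np)$), the three-term triangle inequality produces exactly the target rate.

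For ingredient (i) I plan to reuse the Lipschitz-in-Hessian calculation \eqref{residue2normeq7} from the proof of Theorem~\ref{residue2norm}, combined with the $\|\tb_R-\tb^*\|_c\lesssim\kappa_1^2\sqrt{(d+1)\log n/(npL)}$ bound available from Theorem~\ref{estimationthm}; this yields $\|\widehat{\bH}-\bH^*\|\lesssim\kappa_1^2\sqrt{np(d+1)\log n/L}$. Ingredient (ii) is the main point. Because $\bM^{\diamond}$ is block-diagonal, it splits into two independent estimates. For the top (diagonal) block,
\[
\Bigl|\tfrac{1}{\widehat{H}_{ii}}-\tfrac{1}{H^*_{ii}}\Bigr|=\frac{|H^*_{ii}-\widehat{H}_{ii}|}{H^*_{ii}\widehat{H}_{ii}}\lesssim\frac{\kappa_1^2}{(np)^2}\cdot\max_{i}|H^*_{ii}-\widehat{H}_{ii}|,
\]
and the entrywise control $\max_i|H^*_{ii}-\widehat{H}_{ii}|\lesssim np\cdot\|\tb_R-\tb^*\|_c$ (which follows by the same Lipschitz argument as in \eqref{residue2normeq7} restricted to a single diagonal entry, using the summation $\sum_{j:(i,j)\in\cE}|\phi'(\tx_i^\top\tb_R-\tx_j^\top\tb_R)-\phi'(\tx_i^\top\tb^*-\tx_j^\top\tb^*)|$ that already appeared in \S\ref{alphadebiasapprox2proof}). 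For the bottom $d\times d$ block I will use the perturbation identity $\widehat{\bA}^{-1}-\bA^{-1}=\widehat{\bA}^{-1}(\bA-\widehat{\bA})\bA^{-1}$ and the bound $\|\widehat{\bA}-\bA\|\le\|\widehat{\bH}-\bH^*\|$, giving the same order $\kappa_1^4\sqrt{(d+1)\log n/(n^3p^3L)}$.

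The main technical point to be careful about is the lower bound on the diagonal entries $\widehat{H}_{ii}$ and the smallest eigenvalue of $\widehat{\bA}$ (needed to control the inverses). Under the high-probability event from Theorem~\ref{estimationthm} we have $\|\tb_R-\tb^*\|_c=o(1)$, so a perturbation argument combined with Lemma~\ref{lb} and the entrywise Hessian-difference bound already established gives $\widehat{H}_{ii}\gtrsim np/\kappa_1$ and $\lambda_{\min}(\widehat{\bA})\gtrsim np/\kappa_1$ on the same event; this is routine but is the step where one must verify that the perturbation of the Hessian does not overwhelm its base lower bound, and it is the only non-algebraic piece of the argument. Once those bounds are in place, assembling the three norms yields $\|\widehat{\bH}^{\diamond}\widehat{\bH}\widehat{\bH}^{\diamond}-\bH^{*\diamond}\bH^{*}\bH^{*\diamond}\|\lesssim\kappa_1^5\sqrt{(d+1)\log n/(n^3p^3L)}$ on an event of probability at least $1-O(n^{-6})$, as claimed.
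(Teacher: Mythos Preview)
Your proposal is correct and follows essentially the same approach as the paper: the same three-term telescoping of $\widehat{\bH}^{\diamond}\widehat{\bH}\widehat{\bH}^{\diamond}-\bH^{*\diamond}\bH^{*}\bH^{*\diamond}$, the same Lipschitz bound $\|\widehat{\bH}-\bH^*\|\lesssim\kappa_1^2\sqrt{np(d+1)\log n/L}$, and the same perturbation of $\bM^{\diamond}$ giving $\|\widehat{\bH}^{\diamond}-\bH^{*\diamond}\|\lesssim(\kappa_1^2/n^2p^2)\|\widehat{\bH}-\bH^*\|$. You spell out the block-diagonal structure of $\bM^{\diamond}$ and the lower bounds on $\widehat{H}_{ii}$, $\lambda_{\min}(\widehat{\bA})$ more explicitly than the paper (which simply writes ``by definition'' for the $\diamond$-difference), but the argument is the same.
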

\begin{proof}
By triangle inequality we have
\begin{align*}
    &\left\|(\nabla^2\cL(\tb_R))^\diamond \nabla^2\cL(\tb_R) (\nabla^2\cL(\tb_R))^\diamond - (\nabla^2\cL(\tb^*))^\diamond \nabla^2\cL(\tb^*) (\nabla^2\cL(\tb^*))^\diamond\right\|  \\
    \lesssim & \left\|(\nabla^2\cL(\tb_R))^\diamond \nabla^2\cL(\tb_R) (\nabla^2\cL(\tb_R))^\diamond - (\nabla^2\cL(\tb^*))^\diamond \nabla^2\cL(\tb_R) (\nabla^2\cL(\tb_R))^\diamond\right\| \\
    &+\left\|(\nabla^2\cL(\tb^*))^\diamond \nabla^2\cL(\tb_R) (\nabla^2\cL(\tb_R))^\diamond - (\nabla^2\cL(\tb^*))^\diamond \nabla^2\cL(\tb^*) (\nabla^2\cL(\tb_R))^\diamond\right\|  \\
    &+\left\|(\nabla^2\cL(\tb^*))^\diamond \nabla^2\cL(\tb^*) (\nabla^2\cL(\tb_R))^\diamond - (\nabla^2\cL(\tb^*))^\diamond \nabla^2\cL(\tb^*) (\nabla^2\cL(\tb^*))^\diamond\right\|  \\
    \lesssim & \kappa_1\left\|(\nabla^2\cL(\tb_R))^\diamond  - (\nabla^2\cL(\tb^*))^\diamond \right\| + \frac{\kappa_1^2}{n^2p^2}\left\|\nabla^2\cL(\tb_R)  - \nabla^2\cL(\tb^*)\right\| .
\end{align*}
On one hand, we know that 
\begin{align*}
    \left\|\nabla^2\cL(\tb_R)  - \nabla^2\cL(\tb^*)\right\|&\lesssim \max_{(i,j)\in \cE}|\phi'(\tx_i^\top\tb_R-\tx_j^\top\tb_R) - \phi'(\tx_i^\top\tb^*-\tx_j^\top\tb^*)|\left\|\bL_\cG\right\|  \\
    &\lesssim \kappa_1^2\sqrt{\frac{(d+1)np\log n}{L}}.
\end{align*}
On the other hand, by definition we have
\begin{align*}
    \left\|(\nabla^2\cL(\tb_R))^\diamond  - (\nabla^2\cL(\tb^*))^\diamond \right\|\lesssim \frac{\kappa_1^2}{n^2p^2}\kappa_1^2\sqrt{\frac{(d+1)np\log n}{L}}.
\end{align*}
Therefore, we know that 
\begin{align*}
    \left\|(\nabla^2\cL(\tb_R))^\diamond \nabla^2\cL(\tb_R) (\nabla^2\cL(\tb_R))^\diamond - (\nabla^2\cL(\tb^*))^\diamond \nabla^2\cL(\tb^*) (\nabla^2\cL(\tb^*))^\diamond\right\|\lesssim \kappa_1^5\sqrt{\frac{(d+1)\log n}{n^3p^3L}}.
\end{align*}
    
\end{proof}

\begin{lemma}\label{Bootstraplemma4}
Assume $k\geq 2$. Under the conditions of Theorem \ref{uqmainthm} and under the event $\cA_2$, we have
\begin{align*}
    &\sup_{z\in \mathbb{R}}\left|P(\cT^\sharp\leq z|\cE)-P(\cQ^\sharp\leq z|\cE)\right| \lesssim \left(\frac{\kappa_1^3\log^5 n}{np}\right)^{1/4},\\
    &\sup_{z\in \mathbb{R}}\left|P(\cT^\sharp\leq z|\cE)-P(\cG^\sharp \leq z|\cE)\right|\lesssim \left(\frac{\kappa_1^3\log^5 n}{np}\right)^{1/4}.
\end{align*}
Furthermore, we also have
\begin{align*}
    \sup_{z\in \mathbb{R}}\left|P(\cG^\sharp\leq z|\cE)-P(\cQ^\sharp\leq z|\cE)\right| \lesssim \left(\frac{\kappa_1^3\log^5 n}{np}\right)^{1/4}.
\end{align*}
\end{lemma}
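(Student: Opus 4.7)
The strategy is identical in spirit to the proof of Lemma \ref{Bootstraplemma1}: I will represent each $Y_{m,k}$ as a normalised partial sum of $L|\cE|$ independent, mean-zero, bounded summands and then invoke the improved high-dimensional Gaussian-and-bootstrap approximations of Chernozhuokov et al.\ (2022, Theorems 2.1 and 2.2), this time indexed over the pair set $\{(m,k):m\in\cM,\,k\neq m\}$ rather than over single vertices as in Lemma \ref{Bootstraplemma1}.

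First I would unroll the linear statistic: for each $(m,k)$,
\[
Y_{m,k}=\sum_{l=1}^{L}\sum_{(i,j)\in\cE,\,i>j}Z_{m,k,i,j}^{(l)},\qquad Z_{m,k,i,j}^{(l)}=\frac{(\tz_m-\tz_k)^{\top}(\nabla^{2}\cL(\tb^{*}))^{\diamond}(\tx_i-\tx_j)}{\sigma_{m,k}L}\bigl(\phi(\tx_i^{\top}\tb^{*}-\tx_j^{\top}\tb^{*})-y_{j,i}^{(l)}\bigr).
\]
Stacking the $Y_{m,k}$'s across the $|\cM|(n-1)$ admissible pairs (and, in the two-sided case $\cT_2^\sharp$, appending the negatives) exhibits $\cT^{\sharp}$ as the $\ell_\infty$-norm of a sum of $L|\cE|$ independent bounded random vectors whose population covariance is standardised by construction ($\mathrm{Var}(Y_{m,k}\mid\cE)=1$), which immediately secures the variance lower bound (Condition M) in Chernozhuokov et al.

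Next I would verify the envelope condition. Using $\|\tz_m-\tz_k\|_2\lesssim 1$, $\|\tx_i-\tx_j\|_2\lesssim 1$, and the operator-norm estimate $\|(\nabla^{2}\cL(\tb^{*}))^{\diamond}\|\lesssim \kappa_1/np$ (which holds on $\cA_2$ by Lemma \ref{ub} and Lemma \ref{lb}, given $n\gtrsim(d+1)^{2}k$ and $k\geq 2$), the Cauchy--Schwarz bound
\[
\bigl|(\tz_m-\tz_k)^{\top}(\nabla^{2}\cL(\tb^{*}))^{\diamond}(\tx_i-\tx_j)\bigr|\lesssim \kappa_1/(np)
\]
combined with the lower bound $\sigma_{m,k}^{2}\gtrsim 1/(\kappa_1 n p L)$ yields a uniform bound $|Z_{m,k,i,j}^{(l)}|\lesssim \sqrt{\kappa_1^{3}/(npL^{2}|\cE|)}$. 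In the Chernozhuokov et al.\ normalisation this amounts to an envelope $B_n^{2}\lesssim \kappa_1^{3}$. Plugging into Theorems 2.1 and 2.2 and using $|\cE|\asymp n^{2}p$ gives
\[
\left(\frac{B_n^{2}\log^{5}(n|\cE|L)}{L|\cE|}\right)^{1/4}\lesssim \left(\frac{\kappa_1^{3}\log^{5}n}{np}\right)^{1/4},
\]
which is exactly the rate stated, for both the Gaussian approximation bound ($\cT^\sharp$ vs $\cQ^\sharp$) and the multiplier-bootstrap bound ($\cT^\sharp$ vs $\cG^\sharp$). The third inequality then follows from the triangle inequality.

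\textbf{Expected obstacle.} The routine step is the unrolling and the invocation of the off-the-shelf Chernozhuokov--Chetverikov--Koike theorems; the delicate quantitative point is the variance lower bound $\sigma_{m,k}^{2}\gtrsim 1/(\kappa_1 n p L)$. Because $(\nabla^{2}\cL(\tb^{*}))^{\diamond}$ is not the genuine inverse but the block-diagonal hybrid -- diagonal truncation on the $n\times n$ $\alpha$-block and true inverse on the $d\times d$ $\beta$-block -- the sandwich matrix $(\nabla^{2}\cL(\tb^{*}))^{\diamond}\nabla^{2}\cL(\tb^{*})(\nabla^{2}\cL(\tb^{*}))^{\diamond}$ must be shown to be uniformly nondegenerate along the directions $\tz_m-\tz_k$, and it is precisely here that the hypotheses $k\geq 2$ and $n\gtrsim(d+1)^{2}k$ are used to rule out pathological cancellations between the $\alpha$-block and the projection onto $\bar\bX$.
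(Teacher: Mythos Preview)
Your plan matches the paper's: unroll $Y_{m,k}$ into $L|\cE|$ independent bounded summands, verify Conditions E and M of Chernozhuokov--Chetverikov--Koike, and invoke their Theorems 2.1--2.2; the paper establishes $\sigma_{m,k}^2\gtrsim 1/(\kappa_1 npL)$ exactly as you anticipate, by noting that $((\nabla^2\cL(\tb^*))^\diamond(\tz_m-\tz_k))_{1:n}$ is $2$-sparse (the $\diamond$-operator is diagonal on the $\alpha$-block), so Lemma~\ref{lb} applies once $k\geq 2$, and then using $\lambda_{\min}((\nabla^2\cL(\tb^*))^\diamond)\gtrsim 1/(np)$. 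Two minor corrections: the per-summand bound is $|Z_{m,k,i,j}^{(l)}|\lesssim\kappa_1^{3/2}/\sqrt{npL}$, which gives $B_n\asymp\kappa_1^{3/2}\sqrt{|\cE|/(np)}$ rather than $B_n^2\lesssim\kappa_1^3$ (your final rate is nonetheless correct), and the condition $n\gtrsim(d+1)^2k$ is not actually used here --- it enters only in Lemma~\ref{Bootstraplemma5}.
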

\begin{proof}
The \cite[Condition E,M]{chernozhuokov2022improved} holds with $b_1 = b_2= 1$ and $B_n\asymp\sqrt{|\cE|L} \max_{m\in \cM}\max_{k\neq m}|(\tz_m-\tz_k)^\top(\nabla^2\cL(\tb^*))^\diamond (\tx_i-\tx_j)/(\sigma_{m,k}L)|$. On one hand, we know that
\begin{align*}
    (\tz_m-\tz_k)^\top\left(\nabla^2\cL(\tb^*)\right)^\diamond (\tx_i-\tx_j)\lesssim \frac{\kappa_1}{np} + \frac{\kappa_1}{np}\frac{d+1}{n}\lesssim \frac{\kappa_1}{np}.
\end{align*}
On the other hand, since $\|((\nabla^2\cL(\tb_R))^\diamond    (\tz_m-\tz_k))_{1:n}\|_0\leq 2$, by Lemma \ref{lb} we know that
\begin{align}
    \sigma_{m,k}^2 &= \frac{1}{L}(\tz_m-\tz_k)^\top \left(\nabla^2\cL(\tb^*)\right)^\diamond \nabla^2\cL(\tb^*) \left(\nabla^2\cL(\tb^*)\right)^\diamond    (\tz_m-\tz_k) \nonumber \\
    &\gtrsim \frac{np}{\kappa_1 L}\left\|\left(\nabla^2\cL(\tb^*)\right)^\diamond    (\tz_m-\tz_k)\right\|^2_2\geq \frac{np}{\kappa_1 L}\left\|\left(\nabla^2\cL(\tb^*)\right)^\diamond \right\|^2\left\| \tz_m-\tz_k\right\|^2_2  \nonumber \\
    &\gtrsim \frac{1}{\kappa_1 npL}\left\| \tz_m-\tz_k\right\|^2_2\gtrsim \frac{1}{\kappa_1 npL}.  \label{sigmamklb}
\end{align}
As a result, we know that 
\begin{align*}
    B_n\lesssim \frac{\kappa_1^{1.5}\sqrt{|\cE|}}{\sqrt{np}}.
\end{align*}
Therefore, by \cite[Theorem 2.1, Theorem 2.2]{chernozhuokov2022improved} we know that 
    \begin{align*}
       &\sup_{z\in \mathbb{R}}\left|P(\cT^\sharp\leq z|\cE)-P(\cQ^\sharp\leq z|\cE)\right|\lesssim \left(\frac{\log^5(nL|\cE|)}{L|\cE|}\cdot \frac{\kappa_1^3|\cE|}{np}\right)^{1/4}\lesssim \left(\frac{\kappa_1^3\log^5 n}{np}\right)^{1/4}, \\
       &\sup_{z\in \mathbb{R}}\left|P(\cT^\sharp\leq z|\cE)-P(\cG^\sharp \leq z|\cE)\right|\lesssim \left(\frac{\log^5(nL|\cE|)}{L|\cE|}\cdot \frac{\kappa_1^3|\cE|}{np}\right)^{1/4}\lesssim \left(\frac{\kappa_1^3\log^5 n}{np}\right)^{1/4}.
    \end{align*}

\end{proof}

\begin{lemma}\label{Bootstraplemma5}
Under the conditions of Theorem \ref{uqmainthm}, as long as $n\gtrsim (d+1)^2 k$, we have
\begin{align*}
    &\sup_{z\in \mathbb{R}}|P(\cT\leq z) - P(\cT^\sharp\leq z)| \\
    \lesssim &\left(\frac{\kappa_1^3\log^5 n}{np}\right)^{1/4} + \frac{\kappa_1^{3.5}(d+1)\log n}{\sqrt{np}}\left(\kappa_1^3\sqrt{\frac{\log n}{L}}+\kappa_1^4\sqrt{\frac{\log n}{L(d+1)}}+\sqrt{k+d}\right).
\end{align*}
\end{lemma}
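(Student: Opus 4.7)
The plan is to follow the blueprint of the one-dimensional analogue Lemma \ref{Bootstraplemma2}, adapted to the max-over-pairs statistic. The structure has three movements: (i) deterministically control $|\cT - \cT^\sharp|$, (ii) use Lemma \ref{Bootstraplemma4} to couple $\cT^\sharp$ with its Gaussian surrogate $\cQ^\sharp$, and (iii) close the loop by Gaussian anti-concentration.

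First, I would write $\widehat{\theta}_k-\widehat{\theta}_m-(\tilde{\theta}_k^*-\tilde{\theta}_m^*) = (\tz_k-\tz_m)^\top(\widehat{\tb}^{\debias}-\tb^*)$, where $\widehat{\tb}^{\debias}$ stacks $\widehat{\alpha}_R^{\debias}$ and $\widehat{\bb}_R$, and recognize that up to sign the ``ideal'' numerator of $\cT^\sharp$ equals $(\tz_k-\tz_m)^\top\,(\nabla^2\cL(\tb^*))^\diamond\nabla\cL(\tb^*)$. Since the first $n$ coordinates of $\tz_k-\tz_m$ are supported on $\{k,m\}$, I would bound $|(\tz_k-\tz_m)^\top\epsilon|\le 2\max_i|\epsilon_i^\alpha|+\|\bz_k-\bz_m\|_2\|\epsilon^{\bb}\|_2$, and then plug in the two estimates from Theorem \ref{approxthm}. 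Using the scaling $\|\bz_k-\bz_m\|_2\lesssim\sqrt{(d+1)/n}$, the $\bb$-contribution acquires a factor $\sqrt{(d+1)/n}$; the hypothesis $n\gtrsim(d+1)^2k$ is precisely what lets the $\sqrt{kd(k+d)}$ that appears in the $\bb$-error be absorbed into the $\sqrt{k+d}$ that will appear in the final bound, and likewise controls the residual $(npL)^{-3/4}$ term.

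Second, I would account for the denominator. Using \eqref{sigmamklb} for a lower bound $\sigma_{m,k}\gtrsim (\kappa_1 npL)^{-1/2}$, dividing the numerator error by $\sigma_{m,k}$ multiplies it by $\sqrt{\kappa_1 npL}$; for the ratio $\widehat{\sigma}_{m,k}/\sigma_{m,k}$ I would invoke Lemma \ref{diamondlemma}, which after normalization yields $|\widehat{\sigma}_{m,k}/\sigma_{m,k}-1|\lesssim\kappa_1^6\sqrt{(d+1)\log n/(npL)}$. Since $\cT^\sharp=O(\sqrt{\log n})$ with high probability, this produces an additional contribution of order $\kappa_1^{7.5}\sqrt{d+1}(\log n)^{3/2}/\sqrt{npL}$ to $|\cT-\cT^\sharp|$, matching the middle summand in the stated bound.

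Finally, with $\delta$ taken to be the resulting high-probability bound on $|\cT-\cT^\sharp|$, I would use the standard smoothing inequality
\[
\sup_z|P(\cT\le z)-P(\cT^\sharp\le z)|\;\le\;P(|\cT-\cT^\sharp|>\delta)+\sup_z P(z<\cT^\sharp\le z+\delta),
\]
control the second piece by passing $\cT^\sharp\leadsto\cQ^\sharp$ via Lemma \ref{Bootstraplemma4} (contributing the $(\kappa_1^3\log^5 n/np)^{1/4}$ term), and then apply the Nazarov/Chernozhukov--Chetverikov--Kato anti-concentration inequality to the normalized Gaussian maximum $\cQ^\sharp$, giving the $\sqrt{\log n}\cdot\delta$ factor that inflates each error by one more $\sqrt{\log n}$. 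I expect the main obstacle to be the bookkeeping in step two: reconciling the sparse ($\ell_\infty$) control of $\widehat{\alpha}^{\debias}_R$ with the dense ($\ell_2$) control of $\widehat{\bb}_R$ against the two-sparse-plus-dense weight vector $\tz_k-\tz_m$, and verifying that every cross term is dominated either by the $\sqrt{k+d}$ summand (using $n\gtrsim(d+1)^2 k$) or by one of the two $\sqrt{\log n/L}$ summands.
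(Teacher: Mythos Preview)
Your proposal is correct and follows essentially the same three-step structure as the paper's proof: bound $|\cT-\cT^\sharp|$ via the numerator/denominator decomposition \eqref{TT2diffeq1}, pass to the Gaussian surrogate $\cQ^\sharp$ via Lemma~\ref{Bootstraplemma4}, and close with Nazarov anti-concentration. You are actually more explicit than the paper about why the extra hypothesis $n\gtrsim(d+1)^2k$ is needed---the paper's display \eqref{TT2diffeq2} simply cites the $\alpha$-expansion bound from Theorem~\ref{approxthm} without spelling out that the $\widehat{\bb}_R$ contribution, after picking up the factor $\|\bz_k-\bz_m\|_2\lesssim\sqrt{(d+1)/n}$, is dominated by the $\alpha$-term under this condition; one minor slip is that the denominator-mismatch contribution to $|\cT-\cT^\sharp|$ is of order $\kappa_1^{7.5}\sqrt{d+1}\,\log n/\sqrt{npL}$ (the extra $\sqrt{\log n}$ you wrote there only appears after the anti-concentration step).
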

\begin{proof}
By definition of $\cT$ and $\cT^\sharp$ we know that
\begin{align*}
    |\cT-\cT^\sharp|\leq\max_{m\in \cM}\max_{k\neq m}\left|\frac{\widehat{\theta}_k - \widehat{\theta}_m - (\theta_k^* - \theta_m^*)}{\widehat{\sigma}_{m,k}} - \frac{(\tz_m-\tz_k)^\top (\nabla^2\cL(\tb^*))^\diamond \nabla\cL(\tb^*) }{\sigma_{m,k}}\right|.
\end{align*}
We write
\begin{align}
&\left|\frac{\widehat{\theta}_k - \widehat{\theta}_m - (\theta_k^* - \theta_m^*)}{\widehat{\sigma}_{m,k}} - \frac{(\tz_m-\tz_k)^\top (\nabla^2\cL(\tb^*))^\diamond \nabla\cL(\tb^*) }{\sigma_{m,k}}\right| \nonumber \\
\leq & \left|\frac{\widehat{\theta}_k - \widehat{\theta}_m - (\theta_k^* - \theta_m^*)-(\tz_m-\tz_k)^\top (\nabla^2\cL(\tb^*))^\diamond \nabla\cL(\tb^*)}{\widehat{\sigma}_{m,k}} \right|  \nonumber  \\
&+\left|\frac{(\tz_m-\tz_k)^\top (\nabla^2\cL(\tb^*))^\diamond \nabla\cL(\tb^*) }{\sigma_{m,k}}\right|\left|1-\frac{\sigma_{m,k}}{\widehat{\sigma}_{m,k}}\right|.\label{TT2diffeq1}
\end{align}
According to \eqref{approxthmproof}, the first term on the right hand side of \eqref{TT2diffeq1} can be bounded as
\begin{align}
&\left|\frac{\widehat{\theta}_k - \widehat{\theta}_m - (\theta_k^* - \theta_m^*)-(\tz_m-\tz_k)^\top (\nabla^2\cL(\tb^*))^\diamond \nabla\cL(\tb^*)}{\widehat{\sigma}_{m,k}} \right|  \nonumber \\
\lesssim &  \sqrt{\kappa_1 npL}\frac{\kappa_1^3(d+1)}{np}\left(\frac{\kappa_1^3\log n}{L}+\sqrt{\frac{(k+d)\log n}{L}}\right) \nonumber \\
\lesssim & \frac{\kappa_1^{3.5}(d+1)}{\sqrt{np}}\left(\frac{\kappa_1^3\log n}{\sqrt{L}}+\sqrt{(k+d)\log n}\right)\label{TT2diffeq2}
\end{align}
with probability at least $1-O(n^{-6})$. When it comes to the second term, by Lemma \ref{Lilemma} we have
\begin{align}
    \left|\frac{(\tz_m-\tz_k)^\top (\nabla^2\cL(\tb^*))^\diamond \nabla\cL(\tb^*) }{\sigma_{m,k}}\right|\lesssim \frac{\kappa_1}{np}\sqrt{\frac{np\log n}{L}}\sqrt{\kappa_1 npL}\lesssim \sqrt{\kappa_1^3\log n}\label{TT2diffeq3}
\end{align}
with probability at least $1-O(n^{-10})$. And, by Lemma \ref{diamondlemma} we have
\begin{align}
    &\left|1-\frac{\sigma_{m,k}}{\widehat{\sigma}_{m,k}}\right|\leq \left|1-\frac{\sigma_{m,k}^2}{\widehat{\sigma}_{m,k}^2}\right| \lesssim \kappa_1^5\sqrt{\frac{(d+1)\log n}{n^3p^3L^3}}\kappa_1 npL\lesssim  \kappa_1^6\sqrt{\frac{(d+1)\log n}{npL}}.\label{TT2diffeq4}
\end{align}
with probability at least $1-O(n^{-10})$. Plugging \eqref{TT2diffeq2}, \eqref{TT2diffeq3} and \eqref{TT2diffeq4} in \eqref{TT2diffeq1} we get 
\begin{align*}
   & \left|\frac{\widehat{\theta}_k - \widehat{\theta}_m - (\theta_k^* - \theta_m^*)}{\widehat{\sigma}_{m,k}} - \frac{(\tz_m-\tz_k)^\top (\nabla^2\cL(\tb^*))^\diamond \nabla\cL(\tb^*) }{\sigma_{m,k}}\right| \\
    \lesssim &\frac{\kappa_1^{3.5}(d+1)}{\sqrt{np}}\left(\frac{\kappa_1^3\log n}{\sqrt{L}}+\frac{\kappa_1^4\log n}{\sqrt{L(d+1)}}+\sqrt{(k+d)\log n}\right)
\end{align*}
with probability at least $1-O(n^{-6})$. As a result, we know that
\begin{align}
    |\cT-\cT^\sharp|\lesssim \frac{\kappa_1^{3.5}(d+1)}{\sqrt{np}}\left(\frac{\kappa_1^3\log n}{\sqrt{L}}+\frac{\kappa_1^4\log n}{\sqrt{L(d+1)}}+\sqrt{(k+d)\log n}\right)\label{TT2diffeq5}
\end{align}
with probability at least $1-O(n^{-6})$. We let 
\begin{align*}
    \delta\asymp \frac{\kappa_1^{3.5}(d+1)}{\sqrt{np}}\left(\frac{\kappa_1^3\log n}{\sqrt{L}}+\frac{\kappa_1^4\log n}{\sqrt{L(d+1)}}+\sqrt{(k+d)\log n}\right).
\end{align*}
By Lemma \ref{Bootstraplemma4} we have
\begin{align}
    \sup_{z\in \mathbb{R}}\left|P(\cT^\sharp\leq z)-P(\cQ^\sharp\leq z)\right|  &= \sup_{z\in \mathbb{R}}\left|\mathbb{E}P(\cT^\sharp\leq z|\cE)-\mathbb{E}P(\cQ^\sharp\leq z|\cE)\right|  \nonumber \\
    &\leq \mathbb{E}\sup_{z\in \mathbb{R}}\left|P(\cT^\sharp\leq z|\cE)-P(\cQ^\sharp\leq z|\cE)\right|  \nonumber \\
    &\lesssim \left(\frac{\kappa_1^3\log^5 n}{np}\right)^{1/4}+\frac{1}{n^{10}}\lesssim \left(\frac{\kappa_1^3\log^5 n}{np}\right)^{1/4}.\label{TT2diffeq6}
\end{align}
Therefore, by \eqref{TT2diffeq5} and \eqref{TT2diffeq6} we can write
\begin{align}
    &\sup_{z\in \mathbb{R}}|P(\cT\leq z) - P(\cT^\sharp\leq z)|\leq P(|\cT-\cT^\sharp|>\delta) + \sup_{z\in \mathbb{R}}P(z < \cT^\sharp \leq z+\delta) \nonumber \\
    \lesssim & \frac{1}{n^6}+ \left(\frac{\kappa_1^3\log^5 n}{np}\right)^{1/4} + \sup_{z\in \mathbb{R}}P\left(z < \cQ^\sharp\leq z \leq z+\delta\right). \label{TT2diffeq7}
\end{align}
By \cite[Theorem 3]{chernozhukov2015comparison} the last term on the right hand side can be controlled as 
\begin{align*}
    \sup_{z\in \mathbb{R}}P\left(z < \cQ^\sharp\leq z \leq z+\delta\right) &= \sup_{z\in \mathbb{R}}\mathbb{E}P\left(z < \cQ^\sharp\leq z \leq z+\delta|\cE\right) \\
    &\lesssim \mathbb{E}\sup_{z\in \mathbb{R}}P\left(z < \cQ^\sharp\leq z \leq z+\delta|\cE\right) \\
    &\lesssim \sqrt{\log n}\delta.
\end{align*}
Plugging this in \eqref{TT2diffeq7} we get 
\begin{align*}
    &\sup_{z\in \mathbb{R}}|P(\cT\leq z) - P(\cT^\sharp\leq z)| \\
    \lesssim &\left(\frac{\kappa_1^3\log^5 n}{np}\right)^{1/4} + \frac{\kappa_1^{3.5}(d+1)\log n}{\sqrt{np}}\left(\kappa_1^3\sqrt{\frac{\log n}{L}}+\kappa_1^4\sqrt{\frac{\log n}{L(d+1)}}+\sqrt{k+d}\right).
\end{align*}
\end{proof}

\begin{lemma}\label{Bootstraplemma6}
Under the conditions of Theorem \ref{uqmainthm}, we have
\begin{align*}
     \sup_{z\in \mathbb{R}}|P(\cG\leq z) - P(\cG^\sharp\leq z)|\lesssim \left(\frac{\log^5 n}{np}\right)^{1/4}+\kappa_1^{7.5}\frac{\sqrt{d+1}\log^{1.5} n}{\sqrt{npL}}.
\end{align*}
\end{lemma}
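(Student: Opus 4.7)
The strategy mirrors the proof of Lemma \ref{Bootstraplemma3}: decompose $\cG-\cG^\sharp$ as a maximum over $(m,k)$ of a weighted sum of Gaussian multipliers $\omega_{j,i}^{(l)}$ whose coefficients are small because the plug-in quantities $(\tb_R,\widehat\sigma_{m,k},(\nabla^2\cL(\tb_R))^\diamond)$ are close to the oracle quantities $(\tb^*,\sigma_{m,k},(\nabla^2\cL(\tb^*))^\diamond)$; then transfer that closeness through Lemma \ref{Bootstraplemma4} (which relates $\cG^\sharp$ to the Gaussian proxy $\cQ^\sharp$) and through anti-concentration of $\cQ^\sharp$.

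First, I would write $\cG-\cG^\sharp$ as $\max_{m,k}|\sum_{l,(i,j)}\Delta^{(l)}_{m,k,i,j}\,\omega_{j,i}^{(l)}|$ (with absolute values dropped for $\cG_3$), where
\[
\Delta^{(l)}_{m,k,i,j}=\frac{(\tz_m-\tz_k)^\top(\nabla^2\cL(\tb_R))^\diamond(\tx_i-\tx_j)}{\widehat\sigma_{m,k}L}\bigl(\phi(\tx_i^\top\tb_R-\tx_j^\top\tb_R)-y^{(l)}_{j,i}\bigr)
\]
minus the corresponding oracle term. Splitting this difference into three pieces (one replacing the quadratic form, one replacing $1/\widehat\sigma_{m,k}$, one replacing $\phi(\cdot)$), I would bound each using: (i) Lemma \ref{diamondlemma} and the bound $\|(\nabla^2\cL(\tb_R))^\diamond-(\nabla^2\cL(\tb^*))^\diamond\|\lesssim \kappa_1^{4}\sqrt{(d+1)\log n/(n^3p^3L)}$, together with $\|\tz_m-\tz_k\|_2\lesssim 1$ and $\|\tx_i-\tx_j\|_2\lesssim 1$; (ii) the identity $|1/\widehat\sigma_{m,k}-1/\sigma_{m,k}|\leq \sigma_{m,k}^{-1}|1-\sigma_{m,k}/\widehat\sigma_{m,k}|$ combined with the lower bound $\sigma_{m,k}\gtrsim 1/\sqrt{\kappa_1 npL}$ from \eqref{sigmamklb} and the estimate \eqref{TT2diffeq4}; and (iii) Lipschitz continuity of $\phi$ together with Theorem \ref{estimationthm}. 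Combining these yields $\max_{(m,k,i,j,l)}|\Delta^{(l)}_{m,k,i,j}|\lesssim \kappa_1^{7.5}\sqrt{(d+1)\log n}/(npL)$ on the usual high-probability event.

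Second, conditional on the data (the graph, the $y$'s, and $\tb_R$), $\sum_{l,(i,j)}\Delta^{(l)}_{m,k,i,j}\omega_{j,i}^{(l)}$ is a mean-zero Gaussian with variance $\sum_{l,(i,j)}(\Delta^{(l)}_{m,k,i,j})^2\lesssim L|\cE|\cdot\bigl(\max|\Delta^{(l)}_{m,k,i,j}|\bigr)^2$. Applying the Gaussian maximal inequality across the $O(n^2)$ indexing pairs $(m,k)$ yields, with probability at least $1-O(n^{-10})$,
\[
|\cG-\cG^\sharp|\;\lesssim\;\sqrt{npL\log n}\cdot \kappa_1^{7.5}\frac{\sqrt{(d+1)\log n}}{npL}\;\asymp\;\kappa_1^{7.5}\sqrt{\frac{(d+1)\log^{2}n}{npL}}.
\]
I then set $\delta$ equal to this bound, and by standard bootstrap comparison
\[
\sup_{z}|P(\cG\le z)-P(\cG^\sharp\le z)|\;\le\;P(|\cG-\cG^\sharp|>\delta)+\sup_{z}P(z<\cG^\sharp\le z+\delta).
\]

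Third, by Lemma \ref{Bootstraplemma4}, $\sup_z|P(\cG^\sharp\le z)-P(\cQ^\sharp\le z)|\lesssim(\kappa_1^{3}\log^5 n/(np))^{1/4}$, so the anti-concentration term becomes $\sup_z P(z<\cQ^\sharp\le z+\delta)$ up to that error. For $\cQ^\sharp$, which is (the max or max-abs of) a centered Gaussian vector indexed by $(m,k)$ with unit variances, Nazarov's inequality (or \cite[Theorem 3]{chernozhukov2015comparison}, as in Lemma \ref{Bootstraplemma3}) gives $\sup_z P(z<\cQ^\sharp\le z+\delta)\lesssim \delta\sqrt{\log n}$. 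Putting everything together produces the stated rate. The main obstacle is the bookkeeping of $\kappa_1$ powers in step one: the factor $\sigma_{m,k}^{-1}\lesssim \sqrt{\kappa_1 npL}$, the quadratic-form perturbation contributing $\kappa_1^{4}$, and the Lipschitz bound on $\phi$ interacting with the $\tb_R$-error from Theorem \ref{estimationthm}, must be combined carefully so that no extra factor appears beyond $\kappa_1^{7.5}$, matching the rate promised in the statement.
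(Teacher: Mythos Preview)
Your outline follows the paper's proof almost exactly: bound the coefficient differences $\Delta^{(l)}_{m,k,i,j}$, apply Gaussian concentration, then transfer via Lemma~\ref{Bootstraplemma4} and anti-concentration of $\cQ^\sharp$. There is, however, one genuine slip in your second step (and the paper's written display, visibly copy-pasted from the $\cG_1$ case in Lemma~\ref{Bootstraplemma3}, glosses over the same point). You bound the Gaussian variance by $L|\cE|\cdot(\max|\Delta|)^2$; since here the inner sum ranges over \emph{all} edges, $|\cE|\asymp n^2p$, and this crude bound gives $|\cG-\cG^\sharp|\lesssim \sqrt{n^2pL\log n}\cdot\max|\Delta|\asymp \kappa_1^{7.5}\sqrt{d+1}\log n/\sqrt{pL}$, a factor $\sqrt{n}$ too large for the claimed rate. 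Your displayed formula then silently substitutes $\sqrt{npL}$ for $\sqrt{L|\cE|}$, which is exactly the gap.

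The fix is to avoid the uniform count. For the dominant part of $\Delta^{(l)}_{m,k,i,j}$, set $\psi_{i,j}:=(\tz_m-\tz_k)^\top(\nabla^2\cL(\tb^*))^\diamond(\tx_i-\tx_j)$ and note
\[
\sum_{(i,j)\in\cE,\,i>j}\psi_{i,j}^2 \;=\; (\tz_m-\tz_k)^\top(\nabla^2\cL(\tb^*))^\diamond\,\bL_\cG\,(\nabla^2\cL(\tb^*))^\diamond(\tz_m-\tz_k)\;\lesssim\; \frac{\kappa_1^2}{np},
\]
by Lemma~\ref{eigenlem} and $\|(\nabla^2\cL(\tb^*))^\diamond\|\lesssim\kappa_1/(np)$. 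Equivalently, the first $n$ coordinates of $(\nabla^2\cL(\tb^*))^\diamond(\tz_m-\tz_k)$ are supported on $\{m,k\}$, so only the $O(np)$ edges incident to $m$ or $k$ carry the leading contribution; this is what makes ``$npL$ terms'' the right effective count. The same quadratic-form identity with $(\nabla^2\cL(\tb_R))^\diamond-(\nabla^2\cL(\tb^*))^\diamond$ in place of $(\nabla^2\cL(\tb^*))^\diamond$ handles $\sum_{(i,j)}(\psi_{i,j}-\widehat\psi_{i,j})^2$. With this refinement the conditional variance is $\lesssim \kappa_1^{15}(d+1)\log n/(npL)$, and the rest of your argument (Gaussian max, then anti-concentration at scale $\delta\sqrt{\log n}$) goes through verbatim to give the stated $\kappa_1^{7.5}\sqrt{d+1}\log^{1.5}n/\sqrt{npL}$.
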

\begin{proof}
By definition we have
\begin{align}
    |\cG-\cG^\sharp|\leq \max_{i\in [n]}\left|\sum_{(i,j)\in \cE}\sum_{l=1}^L\Delta_{j, i}^{(l)}\omega_{j,i}^{(l)}\right|, \label{Bootstraplemma6eq2}
\end{align}
where 
\begin{align*}
    \Delta_{j, i}^{(l)}:=& \frac{(\tz_m-\tz_k)^\top(\nabla^2\cL(\tb_R))^\diamond (\tx_i-\tx_j)}{\widehat{\sigma}_{m,k}L}(\phi(\tx_i^\top\tb_R-\tx_j^\top\tb_R)-y_{j, i}^{(l)})  \\
    & - \frac{(\tz_m-\tz_k)^\top(\nabla^2\cL(\tb^*))^\diamond (\tx_i-\tx_j)}{\sigma_{m,k}L}(\phi(\tx_i^\top\tb^*-\tx_j^\top\tb^*)-y_{j, i}^{(l)}).
\end{align*}
Define $\psi = (\tz_m-\tz_k)^\top(\nabla^2\cL(\tb^*))^\diamond (\tx_i-\tx_j)$ and $\widehat{\psi} = (\tz_m-\tz_k)^\top(\nabla^2\cL(\tb_R))^\diamond (\tx_i-\tx_j)$, $\Delta_{j,i}^{(l)}$ can be controlled as
\begin{align}
    |\Delta_{j, i}^{(l)}|\leq &\left|\frac{\psi}{\sigma_{m,k}L} - \frac{\widehat{\psi}}{\widehat{\sigma}_{m,k}L}\right|\left|\phi(\tx_i^\top\tb^*-\tx_j^\top\tb^*)-y_{j, i}^{(l)}\right|  +\frac{\widehat{\psi}}{\widehat{\sigma}_{m,k}L}\left|\phi(\tx_i^\top\tb_R-\tx_j^\top\tb_R) - \phi(\tx_i^\top\tb^*-\tx_j^\top\tb^*)\right|.\label{Bootstraplemma6eq1}
\end{align}
Since $\psi/\sigma_{m,k}-\widehat{\psi}/\widehat{\sigma}_{m,k} = (\psi - \widehat{\psi})/\widehat{\sigma}_{m,k}+ \psi/\sigma_{m,k}(1 - \sigma_{m,k}/\widehat{\sigma}_{m,k})$, we know that
\begin{align*}
    \left|\frac{\psi}{\sigma_{m,k}L} - \frac{\widehat{\psi}}{\widehat{\sigma}_{m,k}L}\right|&\lesssim \frac{|\psi - \widehat{\psi}|}{\sigma_{m,k}L}+\frac{\psi}{\sigma_{m, k}L} \left|1-\frac{\sigma_{m,k}}{\widehat{\sigma}_{m,k}}\right|.
\end{align*}
Combine this with \eqref{sigmamklb} and \eqref{TT2diffeq4} we have 
\begin{align*}
    \left|\frac{\psi}{\sigma_{m,k}L} - \frac{\widehat{\psi}}{\widehat{\sigma}_{m,k}L}\right|&\lesssim \left( \left\|(\nabla^2\cL(\tb_R))^\diamond  - (\nabla^2\cL(\tb^*))^\diamond \right\|+\left\| (\nabla^2\cL(\tb^*))^\diamond \right\| \left|1-\frac{\sigma_{m,k}}{\widehat{\sigma}_{m,k}}\right|\right)\sqrt{\frac{\kappa_1 np}{L}}   \\
    &\lesssim \left(\frac{\kappa_1^2}{n^2p^2}\kappa_1^2\sqrt{\frac{(d+1)np\log n}{L}}+ \frac{\kappa_1}{np}\kappa_1^6\sqrt{\frac{(d+1)\log n}{npL}}\right)\sqrt{\frac{\kappa_1 np}{L}}  \\
    &\lesssim \kappa_1^{7.5}\frac{\sqrt{(d+1)\log n}}{npL}.
\end{align*}
Plugging this in \eqref{Bootstraplemma6eq1}, we get
\begin{align*}
    |\Delta_{j, i}^{(l)}|&\lesssim \kappa_1^{7.5}\frac{\sqrt{(d+1)\log n}}{npL}\cdot 1 + \frac{\left\| (\nabla^2\cL(\tb_R))^\diamond \right\| }{\sigma_{m,k}L}\left|(\tx_i^\top\tb_R-\tx_j^\top\tb_R) - (\tx_i^\top\tb^*-\tx_j^\top\tb^*)\right| \\
    &\lesssim \kappa_1^{7.5}\frac{\sqrt{(d+1)\log n}}{npL} + \frac{\kappa_1}{np}\sqrt{\frac{\kappa_1 np}{L}}\kappa_1^2\sqrt{\frac{(d+1)\log n}{npL}}\lesssim \kappa_1^{7.5}\frac{\sqrt{(d+1)\log n}}{npL}
\end{align*}
for all $(i,j)\in \cE$ with probability at least $1-O(n^{-10})$. Plugging this in \eqref{Bootstraplemma6eq2}, we know that
\begin{align*}
     |\cG-\cG^\sharp|\lesssim \max_{i\in [n]}\sqrt{\sum_{(i,j)\in \cE}\sum_{l=1}^L\left(\Delta_{j,i}^{(l)}\right)^2\log n}\lesssim \kappa_1^{7.5}\frac{\sqrt{d+1}\log n}{\sqrt{npL}}
\end{align*}
with probability at least $1-O(n^{-10})$.

Next, we let 
\begin{align}
    \delta\asymp \kappa_1^{7.5}\frac{\sqrt{d+1}\log n}{\sqrt{npL}}.\label{BootstrapLemma6eq3}
\end{align}
By Lemma \ref{Bootstraplemma4} we have
\begin{align*}
    \sup_{z\in \mathbb{R}}\left|P(\cG^\sharp\leq z)-P(\cQ^\sharp\leq z)\right|  &= \sup_{z\in \mathbb{R}}\left|\mathbb{E}P(\cG^\sharp\leq z|\cE)-\mathbb{E}P(\cQ^\sharp\leq z|\cE)\right|  \nonumber \\
    &\leq \mathbb{E}\sup_{z\in \mathbb{R}}\left|P(\cG^\sharp\leq z|\cE)-P(\cQ^\sharp\leq z|\cE)\right|  \nonumber \\
    &\lesssim \left(\frac{\kappa_1^3\log^5 n}{np}\right)^{1/4}+\frac{1}{n^{10}}\lesssim \left(\frac{\kappa_1^3\log^5 n}{np}\right)^{1/4}.
\end{align*}
Then we write
\begin{align}
    \sup_{z\in \mathbb{R}}|P(\cG\leq z) - P(\cG^\sharp\leq z)|&\leq P(|\cG-\cG^\sharp|>\delta) + \sup_{z\in \mathbb{R}}P(z < \cG^\sharp \leq z+\delta)  \nonumber\\
    &\lesssim \frac{1}{n^{10}}+\left(\frac{\kappa_1^3\log^5 n}{np}\right)^{1/4}+\sup_{z\in \mathbb{R}}P\left(z < \cQ^\sharp \leq z+\delta\right).\label{BootstrapLemma6eq4}
\end{align}
By \cite[Theorem 3]{chernozhukov2015comparison} the last term on the right hand side can be controlled as 
\begin{align}
    \sup_{z\in \mathbb{R}}P\left(z < \cQ^\sharp\leq z +\delta\right) &= \sup_{z\in \mathbb{R}}\mathbb{E}P\left(z < \cQ^\sharp\leq z +\delta|\cE\right) \lesssim \mathbb{E}\sup_{z\in \mathbb{R}}P\left(z < \cQ^\sharp\leq z+\delta|\cE\right) \lesssim \sqrt{\log n}\delta.\label{BootstrapLemma6eq5}
\end{align}
Plugging \eqref{BootstrapLemma6eq3} and \eqref{BootstrapLemma6eq5} in \eqref{BootstrapLemma6eq4} we know that
\begin{align*}
     \sup_{z\in \mathbb{R}}|P(\cG_1\leq z) - P(\cG_1^\sharp\leq z)|\lesssim \left(\frac{\kappa_1^3\log^5 n}{np}\right)^{1/4}+\kappa_1^{7.5}\frac{\sqrt{d+1}\log^{1.5} n}{\sqrt{npL}}.
\end{align*}
\end{proof}

\end{document}